\theoremstyle{plain}
\newtheorem{theorem}{Theorem}[section]
\newtheorem{proposition}[theorem]{Proposition}
\newtheorem{lemma}[theorem]{Lemma}
\newtheorem{corollary}[theorem]{Corollary}
\theoremstyle{definition}
\newtheorem{definition}[theorem]{Definition}
\newtheorem{remark}[theorem]{Remark}
\newtheorem{example}[theorem]{Example}
\theoremstyle{remark}
\numberwithin{equation}{section}
\newtheorem*{empty*}{}
\renewcommand\thmcontinues[1]{\textbf{Continued}}
\DeclareMathOperator*{\Var}{Var}
\DeclareMathOperator*{\esssup}{ess\,sup}
\renewcommand{\theta}{\vartheta}
\renewcommand{\epsilon}{\varepsilon}
\renewcommand{\P}{\mathbb{P}}
\newcommand{\Q}{\mathbb{Q}}
\newcommand{\E}{\mathbb{E}}
\newcommand{\NN}{\mathbb{N}}
\newcommand{\RR}{\mathbb{R}}
\newcommand{\cA}{\mathcal{A}}
\newcommand{\cB}{\mathcal{B}}
\newcommand{\cD}{\mathcal{D}}
\newcommand{\cE}{\mathcal{E}}
\newcommand{\cF}{\mathcal{F}}
\newcommand{\cG}{\mathcal{G}}
\newcommand{\cM}{\mathcal{M}}
\newcommand{\cP}{\mathcal{P}}
\newcommand{\cQ}{\mathcal{Q}}
\newcommand{\cX}{\mathcal{X}}
\newcommand{\diff}{\mathrm{d}}
\newcommand{\dd}{\,\mathrm{d}}
\newcommand{\1}{\mathbf{1}}
\newcommand{\0}{\mathbf{0}}
\newcommand*{\as}[1]{#1\text{-a.s.}}
\newcommand*{\ol}[1]{\bar{#1}}
\newcommand{\ES}{\mathrm{ES}}
\newcommand{\VaR}{\mathrm{VaR}}
\newcommand{\SR}{\mathrm{SR}}
\begin{document}
	
	\title{Mean-$\rho$ portfolio selection and $\rho$-arbitrage \\for coherent risk measures\footnote{We are grateful to John Armstrong and Ruodu Wang for fruitful discussions. We also thank two anonymous referees and an Associate Editor for their pertinent remarks, which have significantly improved the paper.}}
	
\author{Martin Herdegen\thanks{University of Warwick, Department of Statistics, Coventry, CV4 7AL, UK, email \texttt{m.herdegen@warwick.ac.uk}.}
	\and 
Nazem Khan\thanks{University of Warwick, Department of Statistics, Coventry, CV4 7AL, UK, email \texttt{nazem.khan@warwick.ac.uk}.}
	}

	\date{\today}
	
	\maketitle

\begin{abstract}
We revisit mean-risk portfolio selection in a one-period financial market where risk is quantified by a positively homogeneous risk measure $\rho$.  We first show that under mild assumptions, the set of optimal portfolios for a fixed return is nonempty and compact.  However, unlike in classical mean-variance portfolio selection, it can happen that no efficient portfolios exist.  We call this situation $\rho$-arbitrage, and prove that it cannot be excluded -- unless $\rho$ is as conservative as the worst-case risk measure.

After providing a primal characterisation of $\rho$-arbitrage, we focus our attention on coherent risk measures that admit a dual representation and give a necessary and sufficient dual characterisation of  $\rho$-arbitrage.  We show that the absence of $\rho$-arbitrage is intimately linked to the interplay between the set of equivalent martingale measures (EMMs) for the discounted risky assets and the set of absolutely continuous measures in the dual representation of $\rho$.  A special case of our result shows that the market does not admit $\rho$-arbitrage for Expected Shortfall at level $\alpha$ if and only if there exists an EMM $\mathbb{Q} \approx \mathbb{P}$ such that $\Vert \frac{\textnormal{d}\mathbb{Q}}{\textnormal{d}\mathbb{P}} \Vert_\infty < \frac{1}{\alpha}$.
\end{abstract}

\bigskip
\noindent\textbf{Mathematics Subject Classification (2020):} 91G10, 90C46

\bigskip
\noindent\textbf{JEL Classification:}  G11, D81, C61

\bigskip
\noindent\textbf{Keywords:} portfolio selection, coherent risk measures, dual characterisation, $\rho$-arbitrage, fundamental theorem of asset pricing

\section{Introduction}
 It has been widely argued that the financial crisis of 2007-2009 was a result of excessive risk-taking by banks; see e.g.~\cite{erkens2012corporate, williams2010uncontrolled}. Consequently, the financial regulators have tried to impose better risk constraints on financial institutions, which for the banking sector are codified in the Basel accords. One of the key changes from Basel II to Basel III was updating the ‘official’ risk measure from Value at Risk (VaR) to Expected Shortfall (ES) in the hope of better financial regulation; cf.~the discussion in \cite{koch2016unexpected}.
 
While Basel III is probably a step in the right direction, the following questions must be asked:~Does an ES constraint really prevent banks from taking excessive risk?\footnote{For a thorough discussion of other “unexpected shortfalls” of ES we refer to \cite{koch2016unexpected}.}~And if not, are there alternative coherent risk measures that are superior? In this paper we address these questions by revisiting the problem of portfolio optimisation in a one-period financial market -- replacing the variance in the classical theory of Markowitz \cite{Markowitz1952} by a positively homogeneous risk measure $\rho$, with VaR and ES as key examples. We refer to this as mean-$\rho$ portfolio selection in the sequel.

Unlike classical mean-variance portfolio selection, mean-$\rho$ portfolio selection may be \emph{ill-posed} in the sense that there are no efficient portfolios, or even worse, that for each portfolio $\pi$, there is another portfolio $\pi'$ that has \emph{simultaneously} a higher expected return and a lower risk, measured by $\rho$. We refer to these situations as \emph{$\rho$-arbitrage} and \emph{strong $\rho$-arbitrage}, respectively. This terminology is motivated by the fact that $\rho$-arbitrage and strong $\rho$-arbitrage are generalisations of arbitrage of the first kind and arbitrage of the second kind, respectively.\footnote{The term $\rho$-arbitrage was recently coined by Armstrong and Brigo \cite{armstrong2019statistical}. In a more general framework, they say that a payoff $X$ is a $\rho$-arbitrage if $X$ has a non-positive price $\cP(X)$ and non-positive risk $\rho(X)$ but is positive with positive probability. In the special case that the pricing function is given by $\cP(X) = \mathbb{E}[\zeta X]$, where $\zeta$ is a state price density, one can show that a $\rho$-arbitrage in our sense is a $\rho$-arbitrage in their sense. For this reason, we may use the same name without abuse of notation.} 

The occurrence of $\rho$-arbitrage is puzzling at first sight because this is a phenomenon that does not appear in the classical mean-variance framework. The deeper reason for this difference is that the standard deviation (the variance does not have the right dimension) is not a monetary/translation-invariant risk measure but rather a \emph{deviation risk measure}. This latter class of risk measures  has been axiomatically studied by Rockafellar et al.~\cite{rockafellar2006deviation}. They showed in \cite{MasterFundsRockafellar} that if $\cD$ is a deviation risk measure, then mean-$\cD$ portfolio selection is always well posed.\footnote{In the special case that $\mathcal{D}$ is Expected Shortfall Deviation, mean-$\mathcal{D}$ portfolio selection has been studied by Tasche \cite{tasche1999risk} and Bertsimas et al.~\cite{bertsimas2004shortfall}.} 

\medskip

The goal of this paper is to study mean-$\rho$ portfolio selection under minimal assumptions on the (monetary) risk measure $\rho$ and the returns distributions of the risky assets, and to provide necessary and sufficient primal and dual characterisations for the absence of $\rho$-arbitrage.\footnote{The paper focuses on a one-period setting. The extension to a dynamic setting is a challenging question that is left for future research.} 

After describing our model in Section \ref{section:model}, we devote Section \ref{section:portfolio optimisation} to a rigorous study of mean-$\rho$ portfolio selection, which to the best of our knowledge has not been carried out at this level of generality in the literature before: We do not make any assumption on the returns distributions of the risky assets (apart from having first moments), we only require $\rho$ to be positively homogeneous (but not necessarily convex), and we allow $\rho$ to take the value $\infty$. In particular, our results can be applied to VaR, which has been excluded in most of the extant literature.\footnote{\cite{alexander2002economic} is a notable exception but there returns are restricted to be multivariate normal.}
We first study the simplified problem of finding so-called \emph{optimal portfolios}, which have minimal risk given a fixed (excess) return. We show in Theorem \ref{thm:Existence of optimal portfolios} that under mild regularity assumptions on the risk measure, \emph{positive homogeneity alone} (without convexity) is enough to ensure existence of optimal portfolios. We then introduce the notions of $\rho$-arbitrage and strong $\rho$-arbitrage and provide necessary and sufficient primal conditions for the absence of (strong) $\rho$-arbitrage. We explain how $\rho$-arbitrage and strong $\rho$-arbitrage generalise the classical notions of arbitrage of the first and second kind, respectively, and show in Theorem \ref{thm:cannot avoid reg arb} that  (strong) $\rho$-arbitrage cannot be excluded (under standard no-arbitrage) unless $\rho$ is as conservative as the worst-case risk measure. We conclude the primal part of the paper by illustrating our results in the special case of elliptical returns distributions.

In Section \ref{section:dual characterisations}, we restrict $\rho$ to be a \emph{coherent} risk measure that admits a dual representation $\rho(X) = \sup_{Z \in \mathcal{Q}} \mathbb{E}[-ZX]$, where $\cQ$ describes some dual set of probability measures that are absolutely continuous with respect to $\P$.  We then introduce necessary and sufficient conditions for the absence of (strong) $\rho$-arbitrage and show that these conditions are indeed minimal by providing relevant counterexamples that are collected in Appendix \ref{app:counterexamples}. Our main result is Theorem~\ref{thm: no reg arb equivalence} which shows that absence of $\rho$-arbitrage is equivalent to $\cP \cap \tilde \cQ \neq \emptyset$, where $\cP$ describes the set of all equivalent martingale measures (EMMs) for the discounted risky assets and $\tilde \cQ$ is the “interior” of $\cQ$. The precise definition for this “interior” of $\cQ$ is very delicate because both topological and algebraic notions fail. For this reason, we define $\tilde \cQ$ in an abstract way that also gives some additional flexibility. This is worth the effort: As a by-product of our main result, we get a refined version of the fundamental theorem of asset pricing in a one-period market: for returns in $L^1$, we show in Theorem \ref{thm:FTAP} that standard no-arbitrage is equivalent to the existence of an EMM $\Q$ whose Radon-Nikod\'{y}m derivative is uniformly bounded \emph{away from $0$}.

We proceed to apply our dual results to a large variety of examples in Section \ref{sec:Applications}. These examples also highlight an important technical feature of our analysis. In order to achieve a maximum level of generality, we do \emph{not} assume that the set $\cQ$ in the dual representation of $\rho$ (which is not unique) is $L^1$-closed (as for example assumed by \cite{cherny2008pricing}), nor do we assume that it coincides with the maximal dual set. This extra flexibility allows us to get dual characterisations of $\rho$-arbitrage even in cases when $\rho$ might take the value $\infty$ and to explicitly characterise the “interior” set  $\tilde \cQ$ for a large class of examples.

\subsection{Related Works}
Mean-$\rho$ portfolio selection for specific (classes of) risk measures has been well studied in the extant literature.  Alexander and Baptista \cite{alexander2002economic} solved the problem of mean-VaR portfolio selection explicitly for multivariate normal returns distributions. Rockafellar and Uryasev \cite{RockafellarOptimisation} studied mean-ES portfolio selection for continuous returns distributions and showed that the optimisation problem could be reduced to linear programming. Subsequently, the results of \cite{RockafellarOptimisation} were extended to general returns distributions by the same authors  \cite{rockafellar2002conditional} and later generalised to spectral risk measures by Adam et al.~\cite{adam2008spectral}. 

The occurrence of $\rho$-arbitrage was first recognised for VaR by \cite{alexander2002economic} who gave necessary and sufficient conditions for its absence  in the case of multivariate normal returns distributions. For ES, the possibility of $\rho$-arbitrage was first noted in a working paper by De Giorgi \cite{de2002note} in the case of elliptical returns distributions and later observed in a simulation study by Kondor et al.~\cite{kondor2007noise}.\footnote{The working paper \cite{rockafellar2002deviation} also recognised the occurrence of $\rho$-arbitrage for a wide class of coherent risk measures, noting that minimising the risk subject to an inequality constraint on the expected return may fail to have a solution. They called this phenomenon an ``acceptably free lunch''.}  The latter paper led to a more detailed study by Ciliberti et al.~\cite{ciliberti2007feasibility}, who concluded that there is a \emph{phase transition}, i.e., for small values of $\alpha$, mean-ES portfolio selection is well-posed, and from a certain critical value $\alpha^*$ onwards, mean-ES portfolio selection becomes ill-posed. For example, if returns are multivariate normal distributed and the maximal Sharpe ratio that can be attained in the market is $2.5$, then $\alpha^* \approx 0.016$ -- which is below the $2.5\%$ of the Basel III accord; cf.~Example \ref{exa:jointly gaussian returns ES and VaR} for details. More recently, Armstrong and Brigo \cite{armstrong2019risk} showed that VaR and ES constraints may be void for behavioural investors with an $S$-shaped utility. They proceeded to study $\rho$-arbitrage for general coherent risk measures in \cite{armstrong2019statistical}, focusing on multivariate normal returns and looking at the issue from an empirical/statistical perspective, demonstrating that this phenomenon is relevant in practise.

\emph{Strong} $\rho$-arbitrage is closely related to the notion of a \emph{good deal}. For coherent risk measures, good deals were first studied by Jaschke and Küchler \cite{jaschke2001coherent}. At the end, they raise the question: “What parts of Markovitz' theory carry over, what is different in $(\mu, \rho)$-optimisation?” (p. 199). For \emph{expectation bounded} risk measures (which excludes VaR), the two concepts are equivalent if we define a \emph{good deal} as Cherny \cite{cherny2008pricing}, who gives a dual characterisation of \emph{no good deals}. However, Cherny's result only applies to \emph{strong} $\rho$-arbitrage, whereas the main focus of our paper is $\rho$-arbitrage, which is the more important of the two concepts.\footnote{To understand the difference between the two concepts, it is insightful to recall that $\rho$-arbitrage is a generalisation of arbitrage of the first kind whereas strong $\rho$-arbitrage is a generalisation of arbitrage of the second kind. As a concept, arbitrage of the first kind is by far more important than arbitrage of the second kind because (under the existence of a numéraire) absence of the former implies absence of the latter (but not vice versa) and the fundamental theorem of asset pricing provides a necessary and sufficient dual characterisation of arbitrage of the first kind, whereas a necessary and sufficient dual characterisation of arbitrage of the second kind does not exist -- unless the probability space is finite. The same relationship holds between $\rho$-arbitrage and strong $\rho$-arbitrage: The absence of the former implies the absence of latter (but not vice versa), and there are situations where $\rho$-arbitrage, but not strong $\rho$-arbitrage, admits a dual characterisation; an important example is when the returns lie in some Orlicz space $L^\Phi$ but not in the corresponding Orlicz heart $H^\Phi$; see Remark \ref{remark:delta 2} for details.}

Another strand of literature that is not directly related to mean-$\rho$ portfolio selection but is close from a conceptual point of view is mean-variance portfolio selection under \emph{ambiguity aversion}. Here, the idea is that the investor is uncertain about the probabilistic model but otherwise stays in the classical mean-variance framework. Let us just mention two key contributions: Boyle et al.~\cite{boyle:al:12} assume that the investor is uncertain about the mean (but not the variance) of the risky assets and hence first minimises over the expected returns they consider plausible. If the investor has less uncertainty about the returns of some “familiar” assets, they hold -- compared to classical mean-variance portfolio selection -- a higher proportion of “familiar” assets and a lower proportion of “unfamiliar” ones, where they have more uncertainty about the returns. 
Maccheroni et al.~\cite{maccheroni:al:13} consider the Bayesian framework of model uncertainty from \cite{klibanoff:al:05}, where the agent has a prior on plausible models and penalises the mean-variance utility under the so-called ambiguity neutral model by a variance term describing the model uncertainty. In a setting with a riskless and two risky assets (one with and one without model uncertainty), they show that the \emph{alpha} of the ambiguous asset is the key additional statistic in this problem. Both mean-$\rho$ portfolio selection and mean-variance portfolio selection under ambiguity aversion can be seen as a way of making the classical Markowitz problem more robust. In the former, the focus is on making the risk measure more robust (and correcting for the theoretical shortcomings of the variance as a measure of risk). In the latter, the focus is on making the probabilistic model more robust by taking uncertainty on the mean/the probabilistic model into account. Both extensions are important but address different issues. It is an interesting direction for future research to combine both extensions.

Last but not least, $\rho$-arbitrage is conceptually related to the notion of \emph{regulatory arbitrage}. Here the idea is that the risk measure constraint can be interpreted as a regulatory capital requirement imposed by the regulator. If the agent can act in some way to avoid (or weaken) the regulatory constraint, they perform a regulatory arbitrage.\footnote{The term ‘regulatory-arbitrage’ has been emphasised in the literature more intensively since 2004 as explained in \cite{willesson2017and}.  However, there is no universal definition for this concept. The general consensus is that it is a notion that refers to actions performed by financial institutions to avoid unfavourable regulation.} The closest paper to ours in that direction is Wang \cite{wang:16}, who defines regulatory arbitrage quantitatively as the level of \emph{superadditivity} that a risk measure possesses. The larger the latter, the more the agent can weaken the regulatory constraint by splitting up their position. While this definition is somewhat different from our notion of $\rho$-arbitrage, it captures the same idea that risk measure constraints may be (partially) avoided by financial agents. In our case, for certain market environments (“too high Sharpe ratio”), the regulatory constraint becomes void in portfolio optimisation, whereas in \cite{wang:16}, the action of splitting up the position can weaken the regulatory constraint. It is an interesting direction for future research to develop an axiomatic framework of properties that a risk measure should possess to eliminate such situations/actions as far as possible.

\section{Model}
\label{section:model}

We consider a one-period $(1+d)$-dimensional market $(S^0_t, \ldots, S^d_t)_{t \in \{0, 1\}}$ on some probability space $(\Omega, \cF, \P)$.  We assume that $S^0$ is riskless and satisfies
$S^0_0 = 1$ and $S^0_1 = 1 + r$, where $r > -1$ denotes the riskless rate. We further assume that $S^1, \ldots, S^d$ are risky assets, where $S^1_0, \ldots, S^d_0 > 0$ and $S^1_1, \ldots, S^d_1$ are real-valued $\cF$-measurable random variables. We denote the (relative) \emph{return} of asset $i \in \{0, \ldots, d\}$ by 
\begin{equation*}
R^i := \frac{S^i_1 - S^i_0}{S^i_0},
\end{equation*}
and set $S:= (S^1, \ldots, S^d)$ and $R:= (R^1, \ldots, R^d)$ for notational convenience.

We may assume without loss of generality that the market is \emph{nonredundant} in the sense that $\sum_{i =0}^d \theta^i S^i =0$ $\as{\P}$ implies that $\theta^i = 0$ for all $i \in \{0, \ldots, d\}$. We also impose that the risky returns $R^1, \ldots, R^d$ are $\mathbb{P}$-integrable, which is a minimum requirement for mean-$\rho$ portfolio selection.  Thus, each asset $i \in \{0, \ldots,d\}$ has a finite expected return $\mu^i := \mathbb{E}[R^i]$, and we set $\mu:= (\mu^1, \ldots, \mu^d) \in \RR^d$.  Finally, we assume that the risky returns are \emph{nondegenerate} in the sense that for at least one $i \in \{1, \ldots, d\}$, $\mu^i \neq r$.\footnote{If $\mu^i = r$ for all $i \in \{1, \ldots, d\}$, then every portfolio $\pi \in \RR^d$ has zero expected excess return.  There would be no incentive to invest and mean-risk portfolio optimisation becomes meaningless.}  Note that this implies that $\P$ itself is not an equivalent martingale measure for the discounted risky assets $S/S^0$.

\subsection{Portfolios}

As $S^0_0, \ldots, S^d_0 > 0$, we can parametrise trading in \emph{fractions of wealth}, and we assume that trading is frictionless. More precisely, we fix an initial wealth $x_0 > 0$ and describe any portfolio (for this initial wealth) by a vector $\pi =(\pi^1, \ldots, \pi^d) \in \RR^d$, where $\pi^i$ denotes the fraction of wealth invested in asset $i \in \{1, \ldots, d\}$. The fraction of wealth invested in the riskless asset is in turn given by $\pi^0 := 1 - \sum_{i =1}^d \pi^i = 1 - \pi \cdot \1$, where $\1 :=(1, \ldots, 1) \in \RR^d$.  The \emph{return} of a portfolio $\pi \in \RR^d$ can be computed by\footnote{Note that the (relative) return of a portfolio does not depend on the initial wealth $x_0$.}
\begin{equation*}
R_\pi := (1 - \pi \cdot \1)r + \pi \cdot R,
\end{equation*}
and the \emph{excess return} of a portfolio $\pi \in \RR^d$ over the riskless rate $r$ is in turn given by
\begin{equation}
\label{eq:excess return}
X_{\pi} := R_\pi - r = (1 - \pi \cdot \1)r + \pi \cdot R - r =\pi \cdot(R -r\1).
\end{equation}
It follows that $\cX=\{X_\pi:\pi \in \RR^d\}$ is a subspace of $L^1$.  The \emph{expected excess return} of a portfolio $\pi \in \RR^d$ over the riskless rate $r$ can be calculated as
\begin{equation*}
\mathbb{E}[X_{\pi}] = \pi \cdot (\mu - r \1).
\end{equation*}
For fixed $\nu \in \mathbb{R}$, we set 
\begin{equation}
\label{eq:def:Pi nu}
\Pi_{\nu} := \{\pi \in \RR^d: \mathbb{E}[X_{\pi}] = \nu\},
\end{equation}
i.e., $\Pi_{\nu}$ denotes the set of all portfolios with expected excess return $\nu$. By nondegeneracy, $\Pi_{\nu} \neq \emptyset$ for all $\nu \in \RR$.\footnote{Indeed, $\frac{\nu (\mu - r \1)}{(\mu - r \1)\cdot (\mu - r \1)} \in \Pi_\nu$ for each $\nu \in \RR$.}  Moreover, it is easy to check that $\Pi_{\nu}$ is closed and convex for each $\nu \in \RR$. Finally, the definition of $\Pi_{\nu}$ in \eqref{eq:def:Pi nu} implies that
\begin{equation}
\label{eq:Pi k}
\Pi_{k} = 
\begin{cases}
k\Pi_{1} := \{k\pi : \pi \in \Pi_{1} \}, &\text{if } k > 0, \\
(-k)\Pi_{-1} := \{-k\pi : \pi \in \Pi_{-1} \}, &\text{if } k < 0.
\end{cases}
\end{equation}
In the following, we will only focus on nonnegative excess returns.

\section{Mean-$\rho$ portfolio selection and $\rho$-arbitrage}
\label{section:portfolio optimisation}

As $\Pi_{\nu} \neq \emptyset$ for all $\nu \geq 0$, it is clear that in order to study portfolio selection, some kind of risk constraint has to be imposed. In the classical mean-variance approach pioneered by Markowitz~\cite{Markowitz1952}, risk is measured by variance.  Here we adapt the axiomatic approach of Artzner et al.~\cite{artzner1999coherent}.  Assume that $L^\infty \subset L \subset L^1$ is a Riesz space that contains $\cX=\{X_\pi : \pi \in \RR^d\}$.  We focus on a positively homogeneous monetary measure of risk $\rho: L \xrightarrow{} (-\infty, \infty]$, which satisfies the axioms:
\begin{itemize}
	\item \emph{Monotonicity}: For any $X_1, X_2 \in L$ such that $X_1 \leq X_2 \ \mathbb{P}$-a.s., $\rho(X_{1}) \geq \rho(X_{2})$.
	\item \emph{Cash invariance:}  If $X \in L$ and $c \in \mathbb{R}$, then $\rho(X+c) = \rho(X)-c$.
	\item \emph{Positive homogeneity:} For all $X \in L$ and $\lambda \geq 0$, $\rho(\lambda X) = \lambda \rho(X)$.
	\end{itemize}

A couple of remarks are in order.
\begin{remark}
(a) The Riesz space $L$ can be seen as an ambient space of $\cX$.  Key examples for $L$ include $L^p$-spaces, for $p \in [1,\infty]$, or more generally Orlicz spaces (cf.~Appendix \ref{app:subsection:primer on orlicz spaces}).

(b) In some situations, it is useful to allow $\rho$ to take the value $\infty$. For example, if all the returns $R^i$ are bounded from below but unbounded from above and only in $L^1$ (so that $L = L^1$), it makes perfect sense to consider for $\rho$ the worst-case risk measure $\mathrm{WC}$ (cf.~Definition \ref{defn:VaR and ES} below). Then $\mathrm{WC}(X_\pi)$ is finite if $\pi^i \geq 0$ for all $i \in \{1, \ldots,d \}$  but it may take the value $\infty$ if $\pi^i < 0$ for some $i \in \{1, \ldots, d\}$.

(c) Note that \emph{none} of the results in Section \ref{section:portfolio optimisation} makes use of monotonicity of $\rho$.\footnote{Note, however, that the results of Section \ref{section:dual characterisations} and  \ref{sec:Applications} do use monotonicity.} Notwithstanding, we have included it since monotonicity is perhaps the most natural property that a risk measure should possess.
\end{remark}

We recall the definition of the three most prominent examples of risk measures satisfying the above axioms, Value at Risk, Expected Shortfall and the worst-case risk measure.  More examples of risk measures are given in Section \ref{sec:Applications}.  

\begin{definition}
\label{defn:VaR and ES}
Let $X \in L^1$.
\begin{itemize}
\item The \emph{Value at Risk} (VaR) of $X$ at confidence level $\alpha \in (0, 1)$ is given by
\begin{equation*}
\textnormal{VaR}^{\alpha}(X) := \inf \{m \in \mathbb{R} :  \mathbb{P}[m+X < 0] \leq \alpha \}.
\end{equation*}
\item The \emph{Expected Shortfall} (ES) of $X$ at confidence level $\alpha \in (0, 1)$ is given by
\begin{equation*}
\textnormal{ES}^{\alpha}(X) := \frac{1}{\alpha} \int_{0}^{\alpha} \textnormal{VaR}^{u}(X) \dd u.
\end{equation*}
\item The \emph{worst-case risk} of $X$ is given by $\textnormal{WC}(X):=\esssup(-X)$.\footnote{This can be interpreted as the Expected Shortfall at confidence level $\alpha  = 0$.}
\end{itemize}
\end{definition}

We start our discussion on mean-$\rho$ portfolio selection by introducing a partial preference order on the set of portfolios. This preference order formalises the idea that return is “desirable” and risk is “undesirable”.

\begin{definition}
	\label{def:rho preferred}
A portfolio $\pi \in \RR^d$ is \emph{strictly $\rho$-preferred} over another portfolio $\pi^\prime \in \RR^d$ if $\mathbb{E}[X_{\pi}] \geq \mathbb{E}[X_{\pi^\prime}]$ and $\rho(X_{\pi}) \leq \rho(X_{\pi^\prime})$, with at least one inequality being strict.
\end{definition}

\begin{remark}
Linearity of the expectation and cash-invariance of $\rho$ imply that a portfolio $\pi \in \RR^d$ is strictly $\rho$-preferred over another portfolio $\pi^\prime \in \RR^d$ if and only if $\mathbb{E}[R_\pi] \geq \mathbb{E}[R_{\pi^\prime}]$ and $\rho(R_\pi) \leq \rho(R_{\pi^\prime})$, with at least one inequality being strict. This equivalent formulation might seem more natural from an economic perspective. However, it turns out that working with excess returns is mathematically more convenient.\footnote{One might also wonder why we apply $\rho$ to the (relative) return of a portfolio rather than the absolute return $x_0 R_\pi$ or the final cash value $x_0 (R_\pi + 1)$. As $\rho$ is cash-invariant and positive homogeneous, this does not matter since in each case we get exactly the same preference order as in Definition \ref{def:rho preferred}. We note in passing that if we drop the assumption of positive homogeneity, this is no longer true and one has to be much more careful in thinking about which quantity the risk measure should be applied to.}
\end{remark}
There are two versions of mean-$\rho$ portfolio selection:
\begin{enumerate}[label=({\arabic*})] 
\item Given a minimal desired expected excess return $\nu_{\text{min}} \geq 0$, minimise $\rho(X_\pi)$ among all portfolios $\pi \in \RR^d$ that satisfy $\mathbb{E}[X_{\pi}] \geq \nu_{\textnormal{min}}$.
    \item Given a maximal risk threshold $\rho_{\textnormal{max}} \geq 0$, maximise $\mathbb{E}[X_\pi]$ among all portfolios $\pi \in \RR^d$ that satisfy $\rho(X_\pi) \leq \rho_{\textnormal{max}}$.
   
\end{enumerate}
The way to tackle these two problems is to first study problem (1) with an equality constraint, i.e., for fixed $\nu \geq 0$, find the minimal risk among the portfolios in $\Pi_\nu$:
\begin{enumerate} 
    \item [(1')] For $\nu \geq 0$, minimise $\rho(X_\pi)$ among all portfolios $\pi \in \Pi_\nu$.
\end{enumerate}
In classical mean-variance portfolio selection, the solution to (1') exists for all $\nu \geq 0$. It also gives the solution to (1) and provides the so-called efficient frontier, which in turn can be used to derive the solution to (2). In particular, (1) and (2) are always well-posed and equivalent problems. By contrast, we shall see that in the mean-$\rho$ setting, existence in (1') is not guaranteed. Moreover, even if (1') has a solution for all $\nu \geq 0$, (1) and (2) may both be ill-posed, or (1) may be well-posed and (2) ill-posed. This implies in particular that (1) and (2) are no longer equivalent. We shall see that these issues arise exactly when the market admits so-called \emph{$\rho$-arbitrage}.

\subsection{Optimal portfolios}
We approach mean-$\rho$ portfolio selection by first looking at the slightly simplified problem (1') of finding the minimum risk portfolio(s) given a fixed excess return.  Since a negative excess return corresponds to an expected loss, we only focus on portfolios with nonnegative expected excess returns.

\begin{definition}
\label{def:optimal portfolio}
Let $\nu \geq 0$. A portfolio $\pi \in \Pi_\nu$ is called \emph{$\rho$-optimal} for $\nu$ if $\rho(X_\pi) < \infty$ and $\rho(X_\pi) \leq \rho(X_{\pi'})$ for all $\pi' \in \Pi_\nu$. We denote the set of all $\rho$-optimal portfolios for $\nu$ by $\Pi^\rho_\nu$. Moreover, we set
\begin{equation}
\label{eq:def:optimal portfolio:rho nu}
    \rho_\nu := \inf \{ \rho(X_\pi) : \pi \in \Pi_{\nu} \} \in [-\infty,\infty],\footnote{If $\rho_\nu = -\infty$, since $\rho$ can only take values in $(-\infty,\infty]$, for every portfolio in $\Pi_\nu$ there is another portfolio in $\Pi_\nu$ with strictly lower risk.  Thus, $\Pi^\rho_\nu = \emptyset$.  If $\rho_\nu=\infty$,  every portfolio in $\Pi_\nu$ has infinite risk, and so $\Pi^\rho_\nu = \emptyset$.}
\end{equation}
and define the \emph{$\rho$-optimal boundary} by
\begin{equation*}
\mathcal{O}_{\rho} := \{ (\rho_\nu, \nu): \nu \geq 0\} \subset [-\infty,\infty] \times [0,\infty).
\end{equation*}
\end{definition}

As the riskless portfolio has zero risk, $\rho_0 \leq 0$. Positive homogeneity implies that either $\rho_0 =-\infty$ (in which case $\Pi^{\rho}_0 =\emptyset$) or $\rho_0 =0$  (in which case $\0 \in \Pi^{\rho}_0$). For $\nu > 0$, positive homogeneity gives $\Pi^\rho_\nu = \nu \Pi^\rho_1$ and  $\rho_\nu = \nu \rho_1$. Thus, the $\rho$-optimal boundary is given by
\begin{equation}
\mathcal{O}_{\rho} = \{(\rho_{0},0)\} \cup \{ (k\rho_1, k): k > 0\}, \label{eq:opt bound}
\end{equation}
where $\rho_{0} \in \{ -\infty, 0\}$ and $\rho_1 \in [-\infty,\infty]$.  Note that the $\rho$-optimal boundary is nonempty even if $\rho$-optimal portfolios do not exist.  Depending on the sign of $\rho_1$, Figure \ref{optimal boundary picture} gives a graphical illustration of the three different shapes $\mathcal{O}_{\rho}$ can take when $\rho_{0}=0$ and $\rho_1 \in \mathbb{R}$.

\begin{figure}[H]
	\centering
	\includegraphics[width=1.0\textwidth]{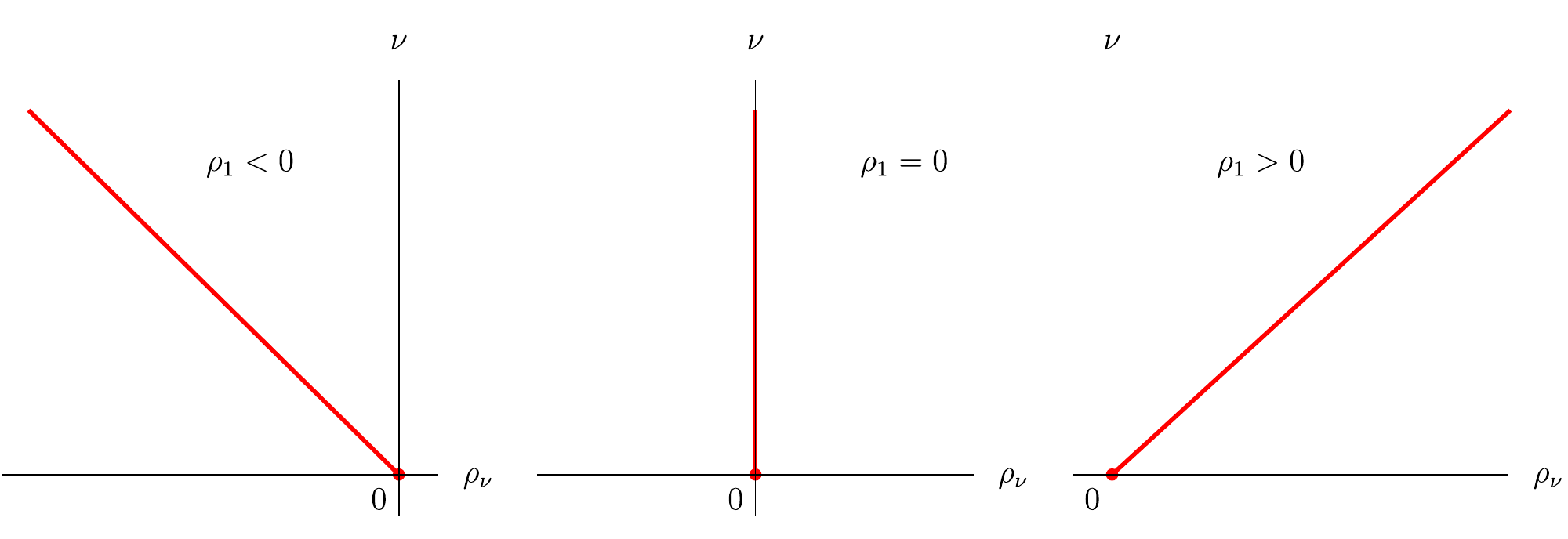}
	\caption{General shapes of the $\rho$-optimal boundary when $\rho_0=0$ and $\rho_1 \in \mathbb{R}$}
	\label{optimal boundary picture}
\end{figure}

We now seek to understand under which conditions $\rho$-optimal portfolios exist and which properties $\rho$-optimal sets have. First, we consider the case $\nu = 0$, which is also of key importance for the case $\nu > 0$.

\begin{proposition}
\label{prop:two cases depending on rho zero}
$\Pi^{\rho}_0  \neq \emptyset$ if and only if $\rho_0 = 0$. Moreover in this case, either $\Pi^{\rho}_0 = \{\0\}$ or $\Pi^{\rho}_0 $ fails to be compact.
\end{proposition}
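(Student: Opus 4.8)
The plan is to reduce everything to the conic structure of the problem. The constraint set $\Pi_0 = \{\pi \in \RR^d : \E[X_\pi] = 0\}$ is a linear subspace of $\RR^d$ (a hyperplane through the origin, by nondegeneracy), and $\pi \mapsto X_\pi$ is linear, so positive homogeneity of $\rho$ makes $\pi \mapsto \rho(X_\pi)$ positively homogeneous while $\Pi_0$ is invariant under positive scaling. Consequently $\Pi^\rho_0$, whenever it is nonempty, is a cone in $\RR^d$, and such a cone is either $\{\0\}$ or unbounded.

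First I would dispose of the equivalence, which is essentially recorded in the text preceding the statement: positive homogeneity forces $\rho_0 \in \{-\infty, 0\}$, since $\0 \in \Pi_0$ gives $\rho_0 \le \rho(0) = 0$, and if $\rho(X_\pi) < 0$ for some $\pi \in \Pi_0$ then $\rho(X_{\lambda\pi}) = \lambda\rho(X_\pi) \to -\infty$ as $\lambda \to \infty$ with $\lambda\pi \in \Pi_0$, contradicting $\rho_0 > -\infty$. If $\rho_0 = -\infty$, then every $\pi \in \Pi_0$ is strictly beaten in risk by another element of $\Pi_0$, so $\Pi^\rho_0 = \emptyset$; if $\rho_0 = 0$, then $\0 \in \Pi_0$ satisfies $\rho(X_\0) = 0 = \rho_0 < \infty$, so $\0 \in \Pi^\rho_0 \neq \emptyset$. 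As these two cases are exhaustive, $\Pi^\rho_0 \neq \emptyset \iff \rho_0 = 0$.

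For the dichotomy, assume $\rho_0 = 0$, so $\0 \in \Pi^\rho_0$ by the above. Suppose $\Pi^\rho_0 \neq \{\0\}$ and fix $\pi \in \Pi^\rho_0$ with $\pi \neq \0$; optimality together with $\rho_0 = 0$ gives $\rho(X_\pi) = 0$. I would then show that the whole ray $\{\lambda\pi : \lambda > 0\}$ lies in $\Pi^\rho_0$: by \eqref{eq:excess return} we have $X_{\lambda\pi} = \lambda X_\pi$, hence $\E[X_{\lambda\pi}] = \lambda\E[X_\pi] = 0$, i.e.\ $\lambda\pi \in \Pi_0$, and $\rho(X_{\lambda\pi}) = \lambda\rho(X_\pi) = 0 = \rho_0 < \infty$ by positive homogeneity, so $\lambda\pi$ is $\rho$-optimal for $\nu = 0$. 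Since $\|\lambda\pi\| = \lambda\|\pi\| \to \infty$ as $\lambda \to \infty$, the set $\Pi^\rho_0$ is unbounded and therefore not compact.

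I do not anticipate a genuine obstacle here: the argument uses nothing beyond positive homogeneity of $\rho$, linearity of $\pi \mapsto X_\pi$, and the fact that $\Pi_0$ is a cone. The only points worth stating with care are that $\rho_0 \in \{-\infty, 0\}$ (already in the text before Proposition~\ref{prop:two cases depending on rho zero}) and that $\0 \in \Pi^\rho_0$ automatically whenever $\rho_0 = 0$ — this is precisely what identifies ``$\Pi^\rho_0 \neq \{\0\}$'' as the branch one must rule out, and it makes the ray construction above available.
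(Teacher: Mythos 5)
Your argument is correct and is essentially the paper's own proof: both establish the equivalence from $\rho_0 \in \{-\infty,0\}$ with $\0 \in \Pi^\rho_0$ when $\rho_0 = 0$, and both obtain non-compactness by showing that any nonzero optimal $\pi$ generates a whole ray $\{\lambda\pi : \lambda \geq 0\} \subset \Pi^\rho_0$ via positive homogeneity. You spell out the intermediate steps in more detail, but there is no difference in substance.
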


\begin{proof}
If $\rho_0 = 0$, then $\0 \in \Pi^\rho_0$. If $\rho_0 \neq 0$, then $\rho_0 = -\infty$ and $\Pi^\rho_0 = \emptyset$. Moreover, if $\rho_0 = 0$ and there is $\pi \neq \0$ with $\rho(X_\pi) = 0$, it follows from positive homogeneity that $\lambda \pi \in \Pi^\rho_0$ for all $\lambda \geq 0$ and hence $\Pi^\rho_0$ fails to be compact.
\end{proof}

We proceed to provide sufficient conditions that guarantee $\rho_0 = 0$ or even  $\Pi^{\rho}_0 = \{\0\}$.

\begin{definition}
The risk measure $\rho$ is called \emph{expectation bounded} if $\rho(X) \geq \mathbb{E}[-X]$ for all $X \in L$. It is called \emph{strictly expectation bounded} if $\rho(X) > \mathbb{E}[-X]$ for all non-constant $X \in L$.
\end{definition}

\begin{remark}
\label{rem:exp bound}
(a) Expectation boundedness is implied by, but strictly weaker than,  \emph{dilatation monotonicity}, i.e., $\rho(X) \geq \rho(\mathbb{E}[X|\cG])$  for all $X \in L$ and all sub-$\sigma$-algebras $\cG \subset \cF$. The latter concept was introduced in \cite{leitner2004balayage} and has far reaching implications. For example, every dilatation monotone convex risk measure on an atomless probability space is law-invariant \cite{cherny2007dilatation}, and every dilatation monotone risk measure that satisfies the Fatou property can be extended to $L^1$ \cite{rahsepar2020extension}.

(b) Strict expectation boundedness -- first introduced in \cite{rockafellar2006deviation} -- is a natural requirement on a risk measure that is satisfied by Expected Shortfall and a large class of coherent risk measures; see Remark~\ref{rem:dual char}(d) and Proposition \ref{prop:P in tilde Q rho strictly exp bdd}.  In fact, when the underlying probability space is atomless and  $\rho$ is law-invariant and coherent, then it is automatically strictly expectation bounded unless $\rho(X) = \mathbb{E}[-X]$; see \cite[Corollary 5.1]{follmer2013convex}.

(c) Value at Risk is \emph{not} expectation bounded (apart from degenerate probability spaces). For example, if $Z$ is a standard normal random variable, then $\VaR^\alpha(Z) < 0 = \mathbb{E}[-Z]$ for $\alpha > 1/2$. This failure of expectation boundedness for Value at Risk has some undesirable consequences like the non-existence of optimal portfolios; cf.~Remark \ref{rem:elliptical reg arbitrage}.

(d) By cash-invariance of $\rho$, it suffices to consider $X \in L$ with $\mathbb{E}[X] = 0$ in the definition of (strict) expectation boundedness.
\end{remark}

We proceed to show that under (strict) expectation boundedness of $\rho$, optimal portfolios for $\nu = 0$ exist (and are unique).

\begin{corollary}
\label{cor:exp bounded}
If $\rho$ is expectation bounded, then $\rho_0 = 0$.  If $\rho$ is even strictly expectation bounded, then $\Pi^\rho_0=\{\mathbf{0}\}$.
\end{corollary}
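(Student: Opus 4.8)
The plan is to play expectation boundedness off against the mean-zero constraint built into $\Pi_0$, and essentially nothing more. For the first statement, note that every $\pi \in \Pi_0$ satisfies $\E[X_\pi] = 0$, so expectation boundedness gives $\rho(X_\pi) \ge \E[-X_\pi] = 0$; taking the infimum over $\Pi_0$ yields $\rho_0 \ge 0$. Combined with $\rho_0 \le 0$ (the riskless portfolio $\0 \in \Pi_0$ has zero risk, as already noted), this gives $\rho_0 = 0$.

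For the second statement, first observe that strict expectation boundedness implies expectation boundedness (for a constant $X = c$ one has $\rho(c) = -c = \E[-c]$ by cash invariance and $\rho(0)=0$), so $\rho_0 = 0$ still holds. By Proposition \ref{prop:two cases depending on rho zero}, it follows that $\Pi^\rho_0 \ne \emptyset$ and $\Pi^\rho_0 = \{\0\}$ unless there is some $\pi \in \Pi_0 \setminus \{\0\}$ with $\rho(X_\pi) = 0$; so it suffices to exclude the latter. The one nontrivial ingredient is that for any $\pi \in \Pi_0 \setminus \{\0\}$ the excess return $X_\pi$ is non-constant: were $X_\pi$ $\P$-a.s.\ constant it would equal $\E[X_\pi] = 0$, i.e.\ $\pi \cdot (R - r\1) = 0$ $\P$-a.s.; rewriting this in terms of the price process shows that $\pi$, read as a number-of-shares vector, and the riskless portfolio have the same time-$0$ and time-$1$ wealth, so their difference is a vector $\theta \in \RR^{d+1}$ with $\sum_{i=0}^d \theta^i S^i = 0$ $\P$-a.s., which by nonredundancy forces $\theta = \0$ and hence $\pi = \0$, a contradiction. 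Given that $X_\pi$ is non-constant, strict expectation boundedness yields $\rho(X_\pi) > \E[-X_\pi] = 0 = \rho_0$, so no such $\pi$ is $\rho$-optimal. Therefore $\Pi^\rho_0 = \{\0\}$.

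The only step requiring care is the implication ``$X_\pi = 0$ $\P$-a.s.\ $\Rightarrow \pi = \0$'': nonredundancy is a statement about the full price process $(S^i_0, S^i_1)$, whereas $X_\pi = 0$ is a priori only an identity between time-$1$ random variables, so one has to use that $X_\pi$ (equivalently $R_\pi$) also encodes the time-$0$ budget identity in order to manufacture a $\theta$ that annihilates the $S^i$ at \emph{both} dates. Everything else is immediate from the definitions, positive homogeneity, and linearity of the expectation.
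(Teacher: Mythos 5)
Your proof is correct and follows essentially the same route as the paper: expectation boundedness forces $\rho(X_\pi)\geq 0$ on $\Pi_0$, and in the strict case one uses nonredundancy to see that $X_\pi$ is non-constant for $\pi \neq \0$, whence $\rho(X_\pi)>0$. The only difference is that you spell out the nonredundancy step (converting $\pi$ into a numbers-of-shares vector $\theta$ annihilating the price process at both dates), which the paper states in one line; that elaboration is accurate and harmless.
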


\begin{proof}
If $\rho$ is expectation bounded, then for any $\pi \in \Pi_0$, $\rho(X_\pi) \geq \mathbb{E}[-X_\pi] = 0$ and we may conclude that $\rho_0 = 0$.  
 If $\rho$ is strictly expectation bounded, fix $\pi \in \Pi_0 \setminus \{\mathbf{0}\}$. Then $X_\pi$ is non-constant by nonredundancy of the financial market. Strict expectation boundedness of $\rho$ gives $\rho(X_\pi) >  \mathbb{E}[-X_\pi] = 0$. We may conclude that $\Pi^\rho_0 = \{ \mathbf{0}\}$.
\end{proof}

We next consider $\rho$-optimal sets for $\nu>0$.  To this end, we recall the Fatou property for $\rho$.

\begin{definition}
\label{def:Fatou simple}
The risk measure $\rho$ is said to satisfy the Fatou property on $\cX$, if $X_n \to X$ $\as{\P}$ for $X_n, X \in \cX$ and $|X_n| \leq Y$ $\as{\P}$ for some $Y \in L$ implies that $\rho(X) \leq \liminf_{n \to \infty} \rho(X_n)$.
\end{definition}

We now come to our main result of this section, which establishes existence of $\rho$-optimal portfolios under very weak assumptions on $\rho$, only requiring that $\rho$ satisfies the Fatou property on $\mathcal{X}=\{X_\pi:\pi \in \RR^d\}$ and $\Pi^\rho_0 = \{\mathbf{0}\}$. In particular, we do \emph{not} require $\rho$ to be convex, which is a key assumption in the extant literature; see e.g.~\cite[Proposition 4]{MasterFundsRockafellar}.

\begin{theorem}
\label{thm:Existence of optimal portfolios}
Assume $\Pi^\rho_0 = \{ \mathbf{0} \}$, $\rho_1 \in \mathbb{R}$ and $\rho$ satisfies the Fatou property on $\mathcal{X}=\{X_\pi:\pi \in \RR^d\}$.  Then for any $\nu \geq 0$, the set $\Pi^\rho_\nu$ of $\rho$-optimal portfolios for $\nu$ is nonempty and compact.
\end{theorem}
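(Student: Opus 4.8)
The plan is as follows. For $\nu = 0$ the assertion is exactly the hypothesis $\Pi^\rho_0 = \{\mathbf{0}\}$, so assume $\nu > 0$. Positive homogeneity gives $\Pi^\rho_\nu = \nu\,\Pi^\rho_1$ and $\rho_\nu = \nu\rho_1$ (as recorded after Definition~\ref{def:optimal portfolio}), so it suffices to treat $\nu = 1$. I would establish two things in turn: first, the key fact that every \emph{minimising sequence} for $\rho_1$ --- i.e.\ every sequence $(\pi_n) \subset \Pi_1$ with $\rho(X_{\pi_n}) \to \rho_1$ --- is bounded in $\RR^d$; second, that nonemptiness and compactness of $\Pi^\rho_1$ then follow by a standard subsequence-extraction argument combined with the Fatou property. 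Throughout I would use that $\Pi_1 = \{\pi \in \RR^d : \pi\cdot(\mu - r\1) = 1\}$ is an affine hyperplane (nonempty by nondegeneracy) whose direction space is the closed linear subspace $\Pi_0 = \{\eta : \eta\cdot(\mu - r\1) = 0\}$, and that $\mathbf{0} \in \Pi^\rho_0$ forces $\rho_0 = \rho(0) = 0$, whence $\rho(X_\eta) > 0$ for every $\eta \in \Pi_0 \setminus \{\mathbf{0}\}$ (otherwise $\eta$ would itself be $\rho$-optimal for $\nu = 0$, or $\rho_0 < 0$, both impossible).

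For the boundedness step, fix $\bar\pi \in \Pi_1$ and suppose, for contradiction, that some minimising sequence $(\pi_n)$ is unbounded. Passing to a subsequence I may assume $s_n := \|\pi_n - \bar\pi\| \to \infty$ and, by compactness of the unit sphere, $\eta_n := (\pi_n - \bar\pi)/s_n \to \eta$ for some $\eta$ with $\|\eta\| = 1$; since $\eta_n \in \Pi_0$ and $\Pi_0$ is closed, $\eta \in \Pi_0 \setminus \{\mathbf{0}\}$. Then $\tfrac1{s_n}X_{\pi_n} = \tfrac1{s_n}X_{\bar\pi} + X_{\eta_n} \to X_\eta$ $\P$-a.s., and for $n$ large enough that $s_n \geq 1$ one has $\bigl|\tfrac1{s_n}X_{\pi_n}\bigr| \leq |X_{\bar\pi}| + \sum_{i=1}^d |R^i - r| =: Y$, which lies in $L$ because $L$ is a Riesz space containing $\cX$ (each $R^i - r$ equals $X_\pi$ for $\pi$ a standard unit vector). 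Since all of $\tfrac1{s_n}X_{\pi_n} = X_{\pi_n/s_n}$ and $X_\eta$ lie in $\cX$, the Fatou property and positive homogeneity give $\rho(X_\eta) \leq \liminf_{n\to\infty}\tfrac1{s_n}\rho(X_{\pi_n}) = 0$, because $\rho(X_{\pi_n}) \to \rho_1 \in \RR$ while $s_n \to \infty$. This contradicts $\rho(X_\eta) > 0$. Hence every minimising sequence is bounded; in particular the set $\Pi^\rho_1$ of minimisers is a bounded subset of $\RR^d$.

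To finish, I would argue nonemptiness and closedness of $\Pi^\rho_1$. Pick a minimising sequence $(\pi_n)$; by the previous step it is bounded, so along a subsequence $\pi_{n_k} \to \pi^\ast$, and $\pi^\ast \in \Pi_1$ since $\Pi_1$ is closed. Then $X_{\pi_{n_k}} \to X_{\pi^\ast}$ $\P$-a.s., dominated by $\bigl(\sup_k \|\pi_{n_k}\|\bigr)\sum_{i=1}^d |R^i - r| \in L$, so the Fatou property yields $\rho(X_{\pi^\ast}) \leq \liminf_k \rho(X_{\pi_{n_k}}) = \rho_1$; since also $\rho(X_{\pi^\ast}) \geq \rho_1$ and $\rho_1 \in \RR$, we get $\pi^\ast \in \Pi^\rho_1$, so $\Pi^\rho_1 \neq \emptyset$. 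The identical reasoning applied to an arbitrary convergent sequence inside $\Pi^\rho_1$ (which is automatically minimising) shows its limit stays in $\Pi^\rho_1$; hence $\Pi^\rho_1$ is closed, and being bounded it is compact. Finally $\Pi^\rho_\nu = \nu\,\Pi^\rho_1$ is nonempty and compact for all $\nu > 0$.

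The crux of the whole argument is the boundedness step of the second paragraph: this is where all three hypotheses enter essentially --- $\Pi^\rho_0 = \{\mathbf{0}\}$ to rule out nonzero directions of zero risk, $\rho_1 \in \RR$ to make the rescaled quantity $\tfrac1{s_n}\rho(X_{\pi_n})$ vanish in the limit, and the Fatou property to pass to the limit along the escaping direction. The remaining compactness manipulations are routine, the only mild subtlety being the choice of an $L$-dominating envelope, which is available precisely because $L$ is a Riesz space containing $\cX$.
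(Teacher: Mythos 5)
Your proof is correct; every step checks out, including the two places where the Fatou property is invoked (both times with sequences and limits genuinely in $\cX$ and with an $L$-dominating envelope built from $|X_{\bar\pi}|$ and the $|R^i-r|=|X_{e_i}|$, which is legitimate since $L$ is a Riesz space containing $\cX$). The route differs from the paper's in its architecture. The paper introduces the auxiliary function $f_{\rho}(\pi)=\rho(X_{\pi})+(|\rho_1|+1)\,\mathbb{E}[X_{\pi}]$ on $\cup_{k\geq 0}\Pi_k$ (and $\infty$ elsewhere), checks that it is nonnegative, positively homogeneous, lower semi-continuous and vanishes only at $\mathbf{0}$, deduces the coercivity bound $f_\rho(\pi)\geq \tfrac{m}{\lVert\pi^*\rVert_2}\lVert\pi\rVert_2$ from the minimum over a sphere, and then obtains $\Pi^\rho_1$ as a nested intersection of compact sublevel sets. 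You instead stay inside the affine hyperplane $\Pi_1$ and run a direct minimising-sequence argument, with the boundedness step handled by normalising an escaping direction $\eta\in\Pi_0\setminus\{\mathbf{0}\}$ and using positive homogeneity plus Fatou to force $\rho(X_\eta)\leq 0$, contradicting $\Pi^\rho_0=\{\mathbf{0}\}$. The two arguments use the three hypotheses in exactly the same roles, but yours dispenses with the penalty term $(|\rho_1|+1)\mathbb{E}[X_\pi]$ (whose only purpose in the paper is to make $f_\rho$ vanish solely at the origin in directions of positive expected return, directions that never arise once one restricts to $\Pi_1$), at the cost of being purely sequential; the paper's version buys a reusable global coercivity statement for $f_\rho$ on all of $\RR^d$, whereas yours is the more economical argument for the theorem as stated.
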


\begin{proof}
The key idea of the proof is to consider the function $f_{\rho}: \mathbb{R}^d \to [0,\infty]$, defined by
\begin{equation*}
f_{\rho}(\pi) = 
    \begin{cases}
    \rho(X_{\pi}) + (|\rho_1|+1)\mathbb{E}[X_{\pi}], &\text{if }   \pi \in \cup_{k \geq 0} \Pi_{k},\\
	\infty, &\text{if } \pi \in \cup_{k < 0} \Pi_{k}.
	
    \end{cases}
\end{equation*}
Then $f_\rho$ is a nonnegative, positively homogeneous function and satisfies $f_\rho^{-1}(\{0\}) = \{ \mathbf{0} \}$. Moreover, if $\pi_n \to \pi$ in $\RR^d$, we have $\mathbb{E}[X_\pi] = \lim_{n \to \infty}\mathbb{E}[X_{\pi_{n}}]$ as well as $\rho(X_{\pi}) \leq \liminf_{n \to \infty} \rho(X_{\pi_{n}})$ because
$\rho$ satisfies the Fatou property on $\mathcal{X}$ (and $L \supset \cX$ is a Riesz space). This implies that $f_\rho$ is lower semi-continuous.

We proceed to show that $f_{\rho}$ has compact sublevel sets. As $\rho_1 < \infty$, there is at least one portfolio $\pi^* \in \Pi_1$ with $\rho(X_{\pi^*})<\infty$.  Let $S = \{ x \in \mathbb{R}^d : \lVert x \rVert_{2} = \lVert \pi^* \rVert_2 \}$. As $S$ is compact and $f_\rho$ lower semi-continuous,
$m:=\min\{ f_\rho(x) : x \in S \}$ is well defined. Note that $m > 0$ since $\lVert \pi^* \rVert_2 > 0$ and $f_\pi^{-1}(\{0\}) = \{ \mathbf{0} \}$. As $f_\rho$ is positively homogeneous, $f_\rho(\pi) \geq \frac{m}{\lVert \pi^* \rVert_2}\lVert \pi \rVert_2$ for any portfolio $\pi \in \RR^d$. Thus, $f_\rho$ has bounded sublevel sets, which are also closed since $f_\rho$ is lower semi-continuous.

We finish by a standard argument. For $\delta \geq 0$, set $A_{\delta} := \{\pi \in \mathbb{R}^d : f_{\rho}(\pi) \leq \delta \} \cap \Pi_{1}$ and $\delta_{1} := \inf \{f_{\rho}(\pi) : \pi \in \Pi_{1} \}$. Note that $\delta_1 < \infty$ since $\rho_1 \in \mathbb{R}$. Moreover, each $A_\delta$ is compact and nonempty for $\delta > \delta_1$. As the $A_\delta$ are nested (i.e., $A_{\delta} \subset A_{\delta^\prime}$ for $\delta \leq \delta^\prime$), it follows that
\begin{equation*}
\Pi^{\rho}_{1} = A_{\delta_{1}} = \bigcap_{\delta > \delta_{1}} A_{\delta}
\end{equation*}
is nonempty and compact.  Whence, so is $\Pi^\rho_\nu = \nu \Pi^\rho_1$ for any $\nu >0$. (For $\nu = 0$, the claim is trivial.)
\end{proof}

\begin{remark}
\label{rmk:existence of rho optimal portfolios}
(a) The requirement that $\rho$ satisfies the Fatou property on $\mathcal{X}=\{X_\pi:\pi \in \RR^d\}$ is a mild assumption, which is satisfied by VaR, ES and WC. Anticipating ourselves a bit, we note that it is satisfied 
by any risk measure $\rho: L \xrightarrow{} (-\infty, \infty]$ admitting a dual representation $\rho(X) = \sup_{Z \in \mathcal{Q}} \mathbb{E}[-Z X]$ for some nonempty set $\cQ$ of Radon-Nikod{\'y}m derivatives satisfying $Z R^i \in L^1$ for all $Z \in \cQ$ and $i \in \{1, \ldots, d\}$; cf.~Proposition \ref{prop:cond I}. 

(b) By Corollary \ref{cor:exp bounded}, the requirement that $\Pi^\rho_0 = \{ \mathbf{0} \}$ is automatically satisfied if $\rho$ is strictly expectation bounded.  By Remark \ref{rem:exp bound}(b), this is not very restrictive.

(c) If $\rho$ is in addition \emph{convex}, i.e., $\rho(\lambda X_1+(1-\lambda)X_2) \leq \lambda \rho(X_1)+ (1-\lambda) \rho(X_2)$ for $X_{1}, X_{2} \in L$ and $\lambda \in [0,1]$, 
then we also have convexity of $\rho$-optimal sets. Indeed, let $\nu \geq 0$, $\pi,\pi^\prime \in \Pi^\rho_\nu$, and $\lambda\in [0,1]$.  Then $\rho(X_{\lambda\pi+(1-\lambda)\pi^{\prime}}) = \rho(\lambda X_{\pi} + (1-\lambda) X_{\pi^\prime}) \leq \lambda\rho(X_{\pi}) + (1-\lambda)\rho(X_{\pi^{\prime}}) = \rho_\nu$.  Therefore, $\lambda\pi+(1-\lambda)\pi^{\prime} \in \Pi^\rho_\nu$. 
    
(d) If $|\rho_1| = \infty$, then $\Pi^\rho_\nu = \emptyset$ for all $\nu > 0$.  If $\rho_1 \in \mathbb{R}$ and $\{ \mathbf{0} \} \subsetneq\Pi^\rho_0 $, then boundedness of the sublevel sets is lost (since $f_\rho^{-1}(\{0\}) $ is unbounded) and $\Pi^\rho_\nu$ can be empty for all $\nu > 0$; see Example~\ref{example:rho one not attained but finite} for a concrete counterexample.
\end{remark}

\subsection{Efficient portfolios}
We proceed to study the notion of $\rho$-efficient portfolios, which are defined in analogy to efficient portfolios in the classical mean-variance sense.

\begin{definition}
	\label{def:rho efficient portfolio}
A portfolio $\pi \in \RR^d$ is called \emph{$\rho$-efficient} if $\mathbb{E}[X_\pi] \geq 0$ and there is no other portfolio $\pi' \in \RR^d$ that is strictly $\rho$-preferred over $\pi$. We denote the \emph{$\rho$-efficient frontier} by
	\begin{equation*}
	\mathcal{E}_{\rho} := \{ (\rho(X_{\pi}),\mathbb{E}[X_{\pi}]): \pi \text{ is $\rho$-efficient} \} \subset \mathbb{R}^2.
\end{equation*}
\end{definition}

\begin{remark}
\label{rmk:efficient portfolios are optimal}
(a) If $\pi \in \RR^d$ is $\rho$-efficient, it follows that $\rho(X_\pi) < \infty$. Indeed, if $\mathbb{E}[X_\pi] = 0$ and $\rho(X_\pi)  = \infty$, then $\0$ is strictly $\rho$-preferred over $\pi$, and if 
$\mathbb{E}[X_\pi] > 0$ and $\rho(X_\pi) = \infty$, then $\lambda \pi$ is strictly $\rho$-preferred over $\pi$ for $\lambda > 1$.

(b) It follows from (a) that every $\rho$-efficient portfolio is $\rho$-optimal.

(c) If $\rho$ is expectation bounded, we may drop the assumption that $\mathbb{E}[X_\pi] \geq 0$ for $\pi$ to be efficient since under expectation boundedness, for any portfolio $\pi$ with $\mathbb{E}[X_\pi] < 0$, we have $\rho(X_\pi) \geq \mathbb{E}[-X_\pi] > 0$, and so the riskless portfolio $\0$ is strictly $\rho$-preferred over $\pi$.
\end{remark}

The mean-$\rho$ portfolio selection problems (1) and (2) from the beginning of Section \ref{section:portfolio optimisation} are both well-posed and admit solutions when $\rho$-efficient portfolios exist, i.e., when $\cE_\rho \neq \emptyset$.  Remark \ref{rmk:efficient portfolios are optimal}(b) implies that $\mathcal{E}_{\rho} \subset \mathcal{O}_{\rho}$.  However, unlike in the case of mean-variance portfolio optimisation, it can happen that there are no $\rho$-efficient portfolios -- even if $\rho$-optimal portfolios exist for all $\nu \geq 0$.  The following result shows that when $\Pi^{\rho}_{\nu} \neq \emptyset$ for all $\nu \geq 0$ (which is satisfied under the conditions of Theorem \ref{thm:Existence of optimal portfolios}), then the existence of the $\rho$-efficient frontier depends only on the sign of $\rho_{1}$.  

\begin{proposition}
\label{prop:efficient portfolios}
Assume $\Pi^{\rho}_{\nu} \neq \emptyset$ for all $\nu \geq 0$.  
Then the following are equivalent:
\begin{enumerate}
    \item $\rho_1>0$.
    \item $\mathcal{E}_{\rho} \neq \emptyset$.
\end{enumerate}
Moreover, if $\rho_1>0$, the $\rho$-efficient frontier is given by 
\begin{equation*}
	\mathcal{E}_{\rho} = 
	\{(k\rho_1,k) : k\geq 0\}.
\end{equation*}
\end{proposition}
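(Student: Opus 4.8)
The plan is to exploit the fact (Remark \ref{rmk:efficient portfolios are optimal}(b)) that every $\rho$-efficient portfolio is $\rho$-optimal, so that the only candidates for efficiency are the portfolios in $\bigcup_{\nu \ge 0}\Pi^\rho_\nu$, and then to reduce the whole statement to a sign analysis of $\rho_1$. First I would collect the ingredients already at hand: since $\Pi^\rho_0 \neq \emptyset$, Proposition \ref{prop:two cases depending on rho zero} forces $\rho_0 = 0$; since $\Pi^\rho_1 \neq \emptyset$, the infimum $\rho_1$ is attained and finite, so $\rho_1 \in \RR$; and for $\nu > 0$ positive homogeneity gives $\Pi^\rho_\nu = \nu\Pi^\rho_1$ and $\rho_\nu = \nu\rho_1$, hence $\rho_\nu = \nu\rho_1$ for \emph{all} $\nu \ge 0$ (using $\rho_0 = 0 = 0\cdot\rho_1$). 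Thus every candidate $\pi \in \Pi^\rho_\nu$ carries the pair $(\rho(X_\pi),\mathbb{E}[X_\pi]) = (\nu\rho_1,\nu)$, and the task is to decide which of these points genuinely lie on $\mathcal{E}_\rho$.

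For the implication (b) $\Rightarrow$ (a) I would argue the contrapositive: assume $\rho_1 \le 0$. Let $\pi$ be any portfolio with $\mathbb{E}[X_\pi] =: \nu \ge 0$; if $\pi$ is efficient it is optimal, so $\rho(X_\pi) = \rho_\nu = \nu\rho_1$. Pick $\pi^\ast \in \Pi^\rho_{\nu+1}$, which is nonempty by hypothesis. Then $\mathbb{E}[X_{\pi^\ast}] = \nu+1 > \nu = \mathbb{E}[X_\pi]$ while $\rho(X_{\pi^\ast}) = (\nu+1)\rho_1 \le \nu\rho_1 = \rho(X_\pi)$ (the difference is $\rho_1 \le 0$ when $\nu > 0$, and $\rho_1 \le 0 = \rho_0$ when $\nu = 0$). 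So $\pi^\ast$ is strictly $\rho$-preferred over $\pi$, no candidate is efficient, and $\mathcal{E}_\rho = \emptyset$.

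For (a) $\Rightarrow$ (b) together with the explicit formula, assume $\rho_1 > 0$ and prove the converse containment: every $\pi \in \Pi^\rho_\nu$, $\nu \ge 0$, is efficient. Suppose some $\pi'$ were strictly $\rho$-preferred over such a $\pi$, and set $\nu' := \mathbb{E}[X_{\pi'}] \ge \nu \ge 0$. Since $\pi' \in \Pi_{\nu'}$ we get $\rho(X_{\pi'}) \ge \rho_{\nu'} = \nu'\rho_1 \ge \nu\rho_1 = \rho(X_\pi)$, the last step using $\rho_1 > 0$ and $\nu' \ge \nu$. Combined with $\rho(X_{\pi'}) \le \rho(X_\pi)$, this forces $\rho(X_{\pi'}) = \rho(X_\pi)$ and hence $\nu' = \nu$, so both defining inequalities of strict $\rho$-preference are equalities, a contradiction. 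Consequently the set of efficient portfolios is exactly $\bigcup_{\nu \ge 0}\Pi^\rho_\nu$, which is nonempty because $\Pi^\rho_0 \neq \emptyset$; and running $\nu$ over $[0,\infty)$ gives $\mathcal{E}_\rho = \{(\rho_\nu,\nu) : \nu \ge 0\} = \{(k\rho_1,k) : k \ge 0\}$.

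I do not anticipate a serious obstacle here; the conceptual step is the observation that under $\Pi^\rho_\nu \neq \emptyset$ for all $\nu$, optimality pins the risk of any efficient portfolio to $\rho_\nu = \nu\rho_1$, collapsing the whole problem onto $\mathcal{O}_\rho$. The only care needed is to handle the boundary value $\nu = 0$ uniformly with $\nu > 0$ via $\rho_0 = 0$, and — in the $\rho_1 > 0$ direction — to chain the inequality $\rho(X_{\pi'}) \ge \rho_{\nu'} \ge \rho_\nu$ so that the ``at least one strict'' clause in the definition of strict $\rho$-preference is genuinely contradicted rather than merely tightened.
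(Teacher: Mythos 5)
Your proof is correct and follows essentially the same route as the paper's: reduce efficiency to optimality via Remark \ref{rmk:efficient portfolios are optimal}(b), show that for $\rho_1\le 0$ an optimal portfolio at a strictly higher return level is always strictly $\rho$-preferred, and for $\rho_1>0$ derive a contradiction from the chain $\rho(X_{\pi'})\ge\rho_{\nu'}=\nu'\rho_1\ge\nu\rho_1=\rho(X_\pi)$. The only (cosmetic) difference is that you squeeze both cases $\nu'=\nu$ and $\nu'>\nu$ into a single equality-forcing chain, whereas the paper treats them separately.
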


\begin{proof}
First assume that $\rho_1 >0$. We proceed to show that  any $\rho$-optimal portfolio is $\rho$-efficient. It then follows 
from Remark \ref{rmk:efficient portfolios are optimal}(b) and Proposition \ref{prop:two cases depending on rho zero} that
\begin{equation*}
	\mathcal{E}_{\rho} = \mathcal{O}_{\rho} = 
	\{(k\rho_1,k) : k\geq 0\}.
\end{equation*}
Seeking a contradiction, let $\pi \in \Pi^\rho_\nu$ for some $\nu \geq 0$ and assume that there is $\pi' \in \RR^d$ such that $\mathbb{E}[X_{\pi^{\prime}}] \geq \mathbb{E}[X_\pi] =\nu$ and $\rho(X_{\pi^{\prime}}) \leq \rho(X_\pi) =\nu \rho_1$, with one inequality being strict. Set $\nu' := \mathbb{E}[X_{\pi^{\prime}}]$. If $\nu' =\nu$, then $\rho(X_{\pi^{\prime}}) < \rho(X_\pi)$ and we arrive at a contradiction as $\pi \in \Pi^\rho_\nu$. Otherwise, if $\nu'>\nu$, let $\pi^* \in \Pi^\rho_{\nu'}$. Then  $ \nu' \rho_1 = \rho(X_{\pi^{*}}) \leq \rho(X_{\pi^{\prime}}) \leq \rho(X_\pi) =\nu \rho_1$. Since $\rho_1 > 0$, we arrive at the contradiction that $\nu' > \nu$ and $\nu' \leq \nu$.

Now assume that $\rho_1 \leq 0$.  We proceed to show that there does not exist any $\rho$-efficient portfolio, even though $\Pi^\rho_\nu \neq \emptyset$ for all $\nu \geq 0$. Seeking a contradiction, suppose that $\pi \in \RR^d$ is $\rho$-efficient. Then by Remark \ref{rmk:efficient portfolios are optimal}(b), $\pi \in \Pi^\rho_\nu$ for some $\nu \geq 0$. Pick $\nu' > \nu$ and let $\pi' \in \Pi^\rho_{\nu'}$. Then $\mathbb{E}[X_{\pi^{\prime}}] = \nu' > \nu = \mathbb{E}[X_\pi]$ and $\rho(X_{\pi^{\prime}}) = \nu' \rho_1 \leq  \nu \rho_1 = \rho(X_\pi)$  by positive homogeneity of $\rho$ and $\rho_1 \leq 0$. Hence, $\pi'$ is strictly $\rho$-preferred over $\pi$ and we arrive at a contradiction.
\end{proof}

\begin{remark}
	\label{rem:prop:efficient portfolios}
A close inspection of the proof of Proposition \ref{prop:efficient portfolios} reveals that the equivalence between (a) and (b) remains true if we only require that $\Pi^{\rho}_{\nu} \neq \emptyset$ for all $\nu > 0$. However, if $\Pi^{\rho}_{0} = \emptyset$, the $\rho$-efficient frontier is given by 	$\mathcal{E}_{\rho} = 
\{(k\rho_1,k) : k> 0\}.$\footnote{It is an open question if there exists a risk measure satisfying $\Pi^{\rho}_{\nu} \neq \emptyset$ for all $\nu > 0$ but $\Pi^{\rho}_{0} =\emptyset$. It is clear that if it exists, $\rho$ fails to be convex.}
\end{remark}

The following figure gives a graphical illustration of Proposition \ref{prop:efficient portfolios}.
\begin{figure}[H]
	\centering
	\includegraphics[width=1.0\textwidth]{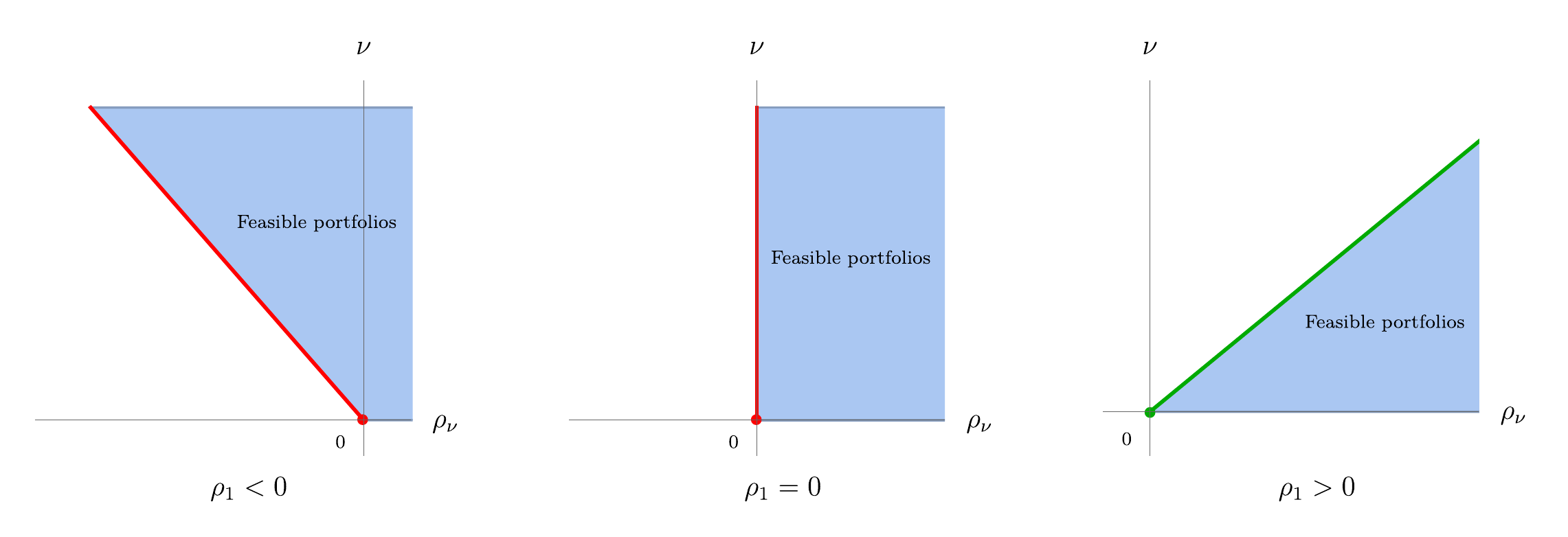}
	\caption{$\rho$-optimal boundary (red) and $\rho$-efficient frontier (green) when $\Pi^{\rho}_{\nu} \neq \emptyset$ for all $\nu \geq 0$}
	\label{efficient frontier picture}
\end{figure}

\subsection{$\rho$-arbitrage}

We have seen above that mean-$\rho$ portfolio selection is not always well defined as it can happen that there are no $\rho$-efficient portfolios. 
We call this situation \emph{$\rho$-arbitrage}.

\begin{definition} The market $(S^0, S)$ is said to satisfy \emph{$\rho$-arbitrage} if there are no $\rho$-efficient portfolios. It is said to satisfy \emph{strong $\rho$-arbitrage} if for any portfolio $\pi \in \RR^d$, there exists another portfolio $\pi^\prime$  such that 
\begin{equation*}
\mathbb{E}[X_{\pi'}] > \mathbb{E}[X_{\pi}]  \quad \text{and} \quad \rho(X_{\pi'}) < \rho(X_\pi).
\end{equation*}
\end{definition}

It is clear that strong $\rho$-arbitrage implies $\rho$-arbitrage but not vice versa. The following two theorem give primal characterisations.  Whereas strong $\rho$-arbitrage is fully characterised by the sign of~$\rho_{1}$, defined in \eqref{eq:def:optimal portfolio:rho nu}, the case of  $\rho$-arbitrage is more subtle.

\begin{theorem}
\label{thm:strong reg arb first characterisation}
The market $(S^0, S)$ admits strong $\rho$-arbitrage if and only if $\rho_{1}<0$.
\end{theorem}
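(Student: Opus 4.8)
The plan is to prove both implications directly, leaning on the relation $\rho_\nu = \nu\rho_1$ for $\nu>0$ recorded just before the statement (see \eqref{eq:opt bound}) together with the fact that $\rho$ takes values in $(-\infty,\infty]$, so that $\rho(X_\pi)>-\infty$ for every $\pi\in\RR^d$.

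For the \emph{if} direction, assume $\rho_1<0$ and fix an arbitrary portfolio $\pi\in\RR^d$; the goal is to exhibit $\pi'$ with $\mathbb{E}[X_{\pi'}]>\mathbb{E}[X_\pi]$ and $\rho(X_{\pi'})<\rho(X_\pi)$. The idea is to travel far out along the $\rho$-optimal boundary: since $\rho_\nu=\nu\rho_1$ tends to $-\infty$ as $\nu\to\infty$ (and already equals $-\infty$ for every $\nu>0$ when $\rho_1=-\infty$), while $\rho(X_\pi)>-\infty$, I can choose $\nu>\max\{0,\mathbb{E}[X_\pi]\}$ large enough that also $\nu\rho_1<\rho(X_\pi)$. (If $\rho_1=-\infty$ or $\rho(X_\pi)=\infty$ any such $\nu$ works; if both are finite one needs in addition $\nu>\rho(X_\pi)/\rho_1$.) Then $\rho_\nu=\nu\rho_1<\rho(X_\pi)$, and since $\rho_\nu$ is the infimum of $\rho(X_{\pi'})$ over $\pi'\in\Pi_\nu$, there is some $\pi'\in\Pi_\nu$ with $\rho(X_{\pi'})<\rho(X_\pi)$; moreover $\mathbb{E}[X_{\pi'}]=\nu>\mathbb{E}[X_\pi]$. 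As $\pi$ was arbitrary, the market admits strong $\rho$-arbitrage.

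For the \emph{only if} direction I argue the contrapositive: assuming $\rho_1\geq 0$, I show the riskless portfolio $\mathbf{0}$ is never strictly dominated in the strong sense, which rules out strong $\rho$-arbitrage. Here $\mathbb{E}[X_{\mathbf{0}}]=0$ and $\rho(X_{\mathbf{0}})=\rho(0)=0$ by positive homogeneity. If there were $\pi'$ with $\mathbb{E}[X_{\pi'}]>0$ and $\rho(X_{\pi'})<0$, then writing $\nu':=\mathbb{E}[X_{\pi'}]>0$ we would get $\rho(X_{\pi'})\geq\rho_{\nu'}=\nu'\rho_1\geq 0$, contradicting $\rho(X_{\pi'})<0$ (this also covers $\rho_1=\infty$, where $\rho_{\nu'}=\infty$). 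Hence $\mathbf{0}$ is not strongly dominated and the market does not admit strong $\rho$-arbitrage.

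The argument is essentially immediate once $\rho_\nu=\nu\rho_1$ is in hand; the only mildly delicate point is the extended-real bookkeeping in the first part, i.e.\ checking that a suitable $\nu$ can always be chosen given the possible values $\rho_1\in\{-\infty\}\cup(-\infty,0)$ and $\rho(X_\pi)\in(-\infty,\infty]$, which the case split above resolves.
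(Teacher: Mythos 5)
Your proof is correct and follows essentially the same route as the paper: both directions rest on the scaling relation $\rho_\nu=\nu\rho_1$ and on testing the strong-arbitrage property against the riskless portfolio $\mathbf{0}$. The only cosmetic differences are that you prove the "only if" direction by contraposition rather than directly, and in the "if" direction you extract a near-optimal portfolio from the infimum defining $\rho_\nu$ instead of fixing one $\pi\in\Pi_1$ with $\rho(X_\pi)<0$ and scaling it along the ray $k\pi$.
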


\begin{proof}
First, assume that the market satisfies strong $\rho$-arbitrage.  As the riskless portfolio has zero risk and zero return, by definition of strong $\rho$-arbitrage, there is a portfolio $\pi \in \RR^d$ with $\mathbb{E}[X_\pi]=:\nu>0$ and $\rho(X_{\pi}) < 0$.  Let $\pi^\prime := \tfrac{1}{\nu}\pi$.  Then $\pi^\prime \in \Pi_1$, and
\begin{equation*}
    \rho_{1} \leq \rho(X_{\pi^{\prime}}) = \tfrac{1}{\nu}\rho(X_\pi) < 0.
\end{equation*}

Conversely, assume that $\rho_1 < 0$. Then there exists a portfolio $\pi \in \Pi_1$ with $\rho(X_\pi) < 0$.  Thus,  $\mathbb{E}[X_{k\pi}] \to \infty$ and $\rho(X_{k\pi}) \to -\infty$ as $k \to \infty$.  Therefore, for any portfolio $\pi^{\prime}$ (recalling that $\mathbb{E}[X_{\pi^{\prime}}] \in \mathbb{R}$ and $\rho(X_{\pi^{\prime}}) \in (-\infty,\infty]$), there exists $k \in \NN$ such that $\mathbb{E}[X_{k\pi}] > \mathbb{E}[X_{\pi^{\prime}}] $ and $\quad \rho(X_{k\pi})  < \rho(X_{\pi^{\prime}})$. Hence, the market satisfies strong $\rho$-arbitrage.
\end{proof}

\begin{remark}
It follows directly from Theorem \ref{thm:strong reg arb first characterisation} and its proof that 
the market $(S^0, S)$ admits strong $\rho$-arbitrage if and only if there exists a sequence of portfolios $(\pi_{n})_{n \in \NN} \subset \RR^d$ with
\begin{equation*}
	\mathbb{E}[X_{\pi_{n}}] \uparrow \infty \quad \textnormal{and} \quad \rho(X_{\pi_{n}}) \downarrow -\infty.
\end{equation*}
This alternative characterisation of strong $\rho$-arbitrage shows in a particular striking way how undesirable this property is from a regulatory perspective.
\end{remark}

\begin{theorem}
	\label{thm:reg arb:first characterisation}
We have the following three cases:
\begin{enumerate}
    \item If $\Pi^{\rho}_{1} \neq \emptyset$, then the market $(S^0, S)$ admits $\rho$-arbitrage if and only if $\rho_{1} \leq 0$. 
    \item If $\Pi^{\rho}_{1} = \emptyset$ and $\Pi^{\rho}_{0} \neq \emptyset$, then the market $(S^0, S)$ admits $\rho$-arbitrage if and only if $\rho_{1} < 0$. 
    \item If $\Pi^{\rho}_{1} = \emptyset$ and $\Pi^{\rho}_{0} = \emptyset$, then the market $(S^0, S)$ admits $\rho$-arbitrage.
\end{enumerate}
\end{theorem}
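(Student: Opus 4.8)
The plan is to dispatch the three cases separately, reducing the presence or absence of $\rho$-arbitrage (i.e.\ of $\rho$-efficient portfolios) to the sign of $\rho_1$ using only results already in hand: Proposition~\ref{prop:two cases depending on rho zero} (so $\Pi^\rho_0\neq\emptyset\iff\rho_0=0$, whence $\0\in\Pi^\rho_0$), Theorem~\ref{thm:strong reg arb first characterisation} (strong $\rho$-arbitrage $\iff\rho_1<0$, and strong $\rho$-arbitrage implies $\rho$-arbitrage), Remark~\ref{rmk:efficient portfolios are optimal}(b) (a $\rho$-efficient portfolio is $\rho$-optimal for its own, necessarily nonnegative, expected excess return), Proposition~\ref{prop:efficient portfolios} together with Remark~\ref{rem:prop:efficient portfolios}, the scaling identities $\Pi^\rho_\nu=\nu\Pi^\rho_1$ and $\rho_\nu=\nu\rho_1$ for $\nu>0$, and $\rho(X_{\0})=0$. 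Case (a) is then immediate: if $\Pi^\rho_1\neq\emptyset$ then $\rho_1\in\RR$ and $\Pi^\rho_\nu=\nu\Pi^\rho_1\neq\emptyset$ for every $\nu>0$, which is precisely the hypothesis of Remark~\ref{rem:prop:efficient portfolios}; hence $\cE_\rho\neq\emptyset\iff\rho_1>0$, and negating gives that the market admits $\rho$-arbitrage $\iff\rho_1\leq0$.

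For case (b), assume $\Pi^\rho_1=\emptyset$ and $\Pi^\rho_0\neq\emptyset$, so $\rho_0=0$ and $\0\in\Pi^\rho_0$. If $\rho_1<0$, Theorem~\ref{thm:strong reg arb first characterisation} gives strong $\rho$-arbitrage, hence $\rho$-arbitrage. For the converse I will argue that $\rho_1\geq0$ forces $\0$ to be $\rho$-efficient, so that $\cE_\rho\neq\emptyset$: if some $\pi'$ were strictly $\rho$-preferred over $\0$, then with $\nu:=\E[X_{\pi'}]\geq0$ and $\rho(X_{\pi'})\leq\rho(X_{\0})=0$ (one inequality strict), the case $\nu=0$ contradicts $\rho(X_{\pi'})\geq\rho_0=0$ and strictness, while for $\nu>0$ positive homogeneity gives $\rho(X_{\pi'})=\nu\,\rho(X_{\pi'/\nu})\geq\nu\rho_1\geq0$, forcing $\rho(X_{\pi'})=0$, then $\rho_1=0$, and $\rho(X_{\pi'/\nu})=\rho_1$, i.e.\ $\pi'/\nu\in\Pi^\rho_1$ --- contradicting $\Pi^\rho_1=\emptyset$. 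Thus in case (b) the market admits $\rho$-arbitrage iff $\rho_1<0$.

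For case (c), assume $\Pi^\rho_1=\emptyset$ and $\Pi^\rho_0=\emptyset$. The scaling identity gives $\Pi^\rho_\nu=\nu\Pi^\rho_1=\emptyset$ for all $\nu>0$, so $\Pi^\rho_\nu=\emptyset$ for every $\nu\geq0$. By Remark~\ref{rmk:efficient portfolios are optimal}(b) any $\rho$-efficient portfolio must be $\rho$-optimal, hence lie in $\Pi^\rho_\nu$ for some $\nu\geq0$; since all these sets are empty, no $\rho$-efficient portfolio exists and the market admits $\rho$-arbitrage unconditionally.

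The only genuinely delicate step is "$\rho_1\geq0\Rightarrow\0$ is $\rho$-efficient" in case (b): one has to invoke $\Pi^\rho_1=\emptyset$ --- i.e.\ that the infimum defining $\rho_1$ is not attained --- precisely to exclude the borderline value $\rho_1=0$ (the subcases $\rho_1>0$ and $\rho_1=\infty$ being settled directly by the same inequality chain). Everything else amounts to routine bookkeeping over the structural results listed above.
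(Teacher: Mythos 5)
Your proof is correct and follows essentially the same route as the paper: case (a) via Proposition~\ref{prop:efficient portfolios} and Remark~\ref{rem:prop:efficient portfolios}, case (b) by combining Theorem~\ref{thm:strong reg arb first characterisation} with the observation that $\rho_1\geq 0$ and $\Pi^\rho_1=\emptyset$ force every portfolio with positive expected excess return to have strictly positive risk (so that $\0$ is $\rho$-efficient), and case (c) via Remark~\ref{rmk:efficient portfolios are optimal}(b). Your phrasing of the converse in (b) as a contradiction argument, rather than the paper's direct inequality $\rho(X_\pi)>\nu\rho_1\geq 0$, is only a cosmetic difference.
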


\begin{proof}
(a) This follows from Proposition \ref{prop:efficient portfolios} and Remark \ref{rem:prop:efficient portfolios}.

(b) If $\rho_1<0$, by Theorem \ref{thm:strong reg arb first characterisation} the market admits strong $\rho$-arbitrage and a fortiori $\rho$-arbitrage.  Conversely, if $\rho_1 \geq 0$, any portfolio $\pi \in \RR^d$ with $\mathbb{E}[X_\pi]=:\nu > 0$ has $\rho(X_\pi) > \nu \rho_1 = 0$ because $\Pi^\rho_\nu = \nu \Pi^\rho_1 = \emptyset$. Thus, any portfolio in $\Pi^\rho_0$ is 
$\rho$-efficient because $\Pi^\rho_0 \neq \emptyset$ (and therefore $\rho_0 = 0$). Thus, the market does not admit $\rho$-arbitrage.  

(c) This follows from Remark \ref{rmk:efficient portfolios are optimal}(b).
\end{proof}

The following corollary relates the absence of (strong) $\rho$-arbitrage to the existence of the mean-$\rho$ portfolio selection problems (1) and (2) from the beginning of Section \ref{section:portfolio optimisation}. The proof is straightforward and hence omitted.

\begin{corollary}
	Assume that $\Pi^{\rho}_{\nu} \neq \emptyset$ for all $\nu \geq 0$, so that the problem 	{\normalfont(1')} is well posed.
	\begin{enumerate}
\item The mean-$\rho$ portfolio selection problem 	{\normalfont(1)} is well posed if any only if the market $(S^0,S)$ does not satisfy strong $\rho$-arbitrage. In this case, the portfolios that solve {\normalfont(1)} are in
\begin{equation*}
    \begin{cases}
\Pi^\rho_{\nu_{\min}},&\text{if } \rho_1>0,\\ \cup_{\nu \geq \nu_{\min}} \Pi^\rho_{\nu},&\text{if } \rho_1=0.
\end{cases}
\end{equation*}
\item The mean-$\rho$ portfolio selection problem 	{\normalfont(2)} is well posed if and only if the market  $(S^0,S)$  does not satisfy $\rho$-arbitrage. In this case, the portfolios that solve {\normalfont(2)} are in $\Pi^\rho_{\rho_{\max}/\rho_1}$.
	\end{enumerate}
\end{corollary}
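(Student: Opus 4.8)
The plan is to reduce everything to the sign of $\rho_1$ and then argue by elementary comparison with the portfolios in $\Pi^\rho_\nu$. Since $\Pi^\rho_\nu \neq \emptyset$ for all $\nu \geq 0$, we have $\Pi^\rho_1 \neq \emptyset$, so $\rho_1 \in \RR$ and the infimum $\rho_\nu = \nu \rho_1$ is attained for every $\nu \geq 0$; moreover $\Pi^\rho_0 \neq \emptyset$ forces $\rho_0 = 0$. By Theorem \ref{thm:strong reg arb first characterisation} the market admits strong $\rho$-arbitrage if and only if $\rho_1 < 0$, and by Theorem \ref{thm:reg arb:first characterisation}(a) (which applies since $\Pi^\rho_1 \neq \emptyset$) it admits $\rho$-arbitrage if and only if $\rho_1 \leq 0$. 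Hence ``no strong $\rho$-arbitrage'' is equivalent to $\rho_1 \geq 0$ and ``no $\rho$-arbitrage'' to $\rho_1 > 0$, and it remains to connect these two regimes to well-posedness of (1) and (2) and to describe the solution sets.

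For part (a): if $\rho_1 < 0$, pick $\pi \in \Pi_1$ with $\rho(X_\pi) < 0$; then $\E[X_{k\pi}] = k \to \infty$ while $\rho(X_{k\pi}) = k\rho(X_\pi) \to -\infty$, so for large $k$ the portfolio $k\pi$ is feasible for (1) and the infimum is $-\infty$, i.e.\ (1) is ill posed. If $\rho_1 \geq 0$, any feasible $\pi$ has $\nu := \E[X_\pi] \geq \nu_{\min} \geq 0$, so $\rho(X_\pi) \geq \rho_\nu = \nu\rho_1 \geq \nu_{\min}\rho_1$; the value $\nu_{\min}\rho_1$ is attained by any $\pi \in \Pi^\rho_{\nu_{\min}}$, so (1) is well posed. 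For the solution set, a minimiser $\pi$ satisfies $\rho(X_\pi) = \nu_{\min}\rho_1$ and $\nu := \E[X_\pi] \geq \nu_{\min}$, whence $\nu_{\min}\rho_1 = \rho(X_\pi) \geq \rho_\nu = \nu\rho_1$: if $\rho_1 > 0$ this forces $\nu = \nu_{\min}$ and $\pi \in \Pi^\rho_{\nu_{\min}}$, while if $\rho_1 = 0$ then $\rho(X_\pi) = 0 = \rho_\nu$ gives $\pi \in \Pi^\rho_\nu \subset \cup_{\nu' \geq \nu_{\min}}\Pi^\rho_{\nu'}$; the reverse inclusions are immediate.

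For part (b): if $\rho_1 \leq 0$, then for each $\nu \geq 0$ every $\pi \in \Pi^\rho_\nu$ has $\rho(X_\pi) = \nu\rho_1 \leq 0 \leq \rho_{\max}$, hence is feasible for (2) with $\E[X_\pi] = \nu$; letting $\nu \to \infty$ shows the supremum is $\infty$, so (2) is ill posed. If $\rho_1 > 0$, set $\nu^* := \rho_{\max}/\rho_1 \geq 0$; any feasible $\pi$ with $\E[X_\pi] \geq 0$ satisfies $\E[X_\pi] \leq \rho(X_\pi)/\rho_1 \leq \rho_{\max}/\rho_1 = \nu^*$ (using $\rho(X_\pi) \geq \rho_{\E[X_\pi]} = \E[X_\pi]\rho_1$), and feasible $\pi$ with $\E[X_\pi] < 0$ trivially have $\E[X_\pi] < \nu^*$, so the supremum is at most $\nu^*$; it is attained by any $\pi \in \Pi^\rho_{\nu^*}$, which has $\rho(X_\pi) = \rho_{\max}$ and $\E[X_\pi] = \nu^*$, so (2) is well posed. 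A maximiser $\pi$ has $\E[X_\pi] = \nu^*$ and $\rho_{\max} \geq \rho(X_\pi) \geq \rho_{\nu^*} = \rho_{\max}$, hence $\rho(X_\pi) = \rho_{\nu^*}$ and $\pi \in \Pi^\rho_{\nu^*}$; the reverse inclusion is immediate.

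The only point requiring a little care — the ``main obstacle'', such as it is — is that the constraint in (1) is the inequality $\E[X_\pi] \geq \nu_{\min}$ rather than an equality, so one must rule out that strictly increasing the expected excess return can lower the optimal risk; this is precisely where $\rho_1 \geq 0$ (and $\rho_1 > 0$ for pinning down the optimal $\nu = \nu_{\min}$) is used. Symmetrically, in (2) one must keep in mind that portfolios with negative expected excess return are admissible but irrelevant, since $\mathbf{0}$ is always feasible. Beyond this, the argument is just bookkeeping with positive homogeneity ($\rho_\nu = \nu\rho_1$ and $\Pi^\rho_\nu = \nu\Pi^\rho_1$).
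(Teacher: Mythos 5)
Your proof is correct. The paper omits the proof of this corollary as ``straightforward'', and your argument --- reducing well-posedness to the sign of $\rho_1$ via Theorems \ref{thm:strong reg arb first characterisation} and \ref{thm:reg arb:first characterisation}(a), then using positive homogeneity ($\rho_\nu = \nu\rho_1$, $\Pi^\rho_\nu = \nu\Pi^\rho_1$) to handle the inequality constraints and identify the solution sets --- is exactly the intended one.
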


A natural question that arises is how (strong) $\rho$-arbitrage is related to the ordinary notion of arbitrage.  To this end, recall that the market $(S^0,S)$ is said to satisfy
\begin{itemize}
	\item \emph{arbitrage of the first kind} if there exists a trading strategy $(\theta^0,\theta) \in \mathbb{R}^{1+d}$ such that\footnote{Note that $(\theta^0,\theta)$ parametrises trading in \emph{numbers of shares} rather than in fractions of wealth.}
	\begin{equation*}
	\theta^0 S^0_0 + \theta \cdot S_0 \leq 0, \quad \theta^0 S^0_1 + \theta \cdot S_1 \geq 0 \; \mathbb{P}\textnormal{-a.s.} \quad \textnormal{and} \quad \mathbb{P}[\theta^0 S^0_1 + \theta \cdot S_1 > 0] > 0. 
	\end{equation*}
	\item \emph{arbitrage of the  second kind} if there exists a trading strategy $(\theta^0,\theta) \in \mathbb{R}^{1+d}$ such that 
	\begin{equation*}
	\theta^0 S^0_0 + \theta \cdot S_0 < 0, \quad \textnormal{and} \quad \theta^0 S^0_1 + \theta \cdot S_1 \geq 0 \; \mathbb{P}\textnormal{-a.s.}
	\end{equation*}
	\end{itemize}
The following result shows that if $\rho$ is given by the worst-case risk measure $\textnormal{WC}$, (strong) $\textnormal{WC}$-arbitrage is equivalent to arbitrage of the first (second) kind.  Thus, $\rho$-arbitrage can be seen as an extension of the ordinary notion of arbitrage.

\begin{proposition}
\label{prop:WC arbitrage equivalent to ordinary arbitrage}
    The market $(S^0,S)$ satisfies (strong) \textnormal{WC}-arbitrage if and only if the market satisfies arbitrage of the first (second) kind.
\end{proposition}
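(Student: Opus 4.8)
The plan is to translate the two ordinary notions of arbitrage into conditions on the excess-return space $\cX=\{X_\pi:\pi\in\RR^d\}$ and then read off the conclusion from the primal characterisations already established. Using $S^0_0=1$, $S^0_1=1+r$ and $S^i_1=S^i_0(1+R^i)$, I would first record the key identity: for a strategy $(\theta^0,\theta)\in\RR^{1+d}$ with time-$0$ value $v_0$ and time-$1$ value $v_1$, setting $\pi^i:=\theta^i S^i_0$ gives $v_1=(1+r)v_0+X_\pi$. From this two elementary equivalences drop out. \textbf{(i)} The market admits arbitrage of the first kind iff there is a $\pi\in\RR^d$ with $X_\pi\ge0$ $\P$-a.s.\ and $\P[X_\pi>0]>0$: for ``$\Rightarrow$'' use $X_\pi=v_1-(1+r)v_0\ge v_1$ (as $v_0\le0$), so $\{v_1>0\}\subset\{X_\pi>0\}$; for ``$\Leftarrow$'' take $\theta^i=\pi^i/S^i_0$ and $\theta^0=-\sum_i\pi^i$, so that $v_0=0$ and $v_1=X_\pi$. \textbf{(ii)} The market admits arbitrage of the second kind iff there is a $\pi\in\RR^d$ with $\essinf X_\pi>0$ (equivalently $\textnormal{WC}(X_\pi)<0$): ``$\Rightarrow$'' gives $X_\pi\ge -(1+r)v_0>0$; for ``$\Leftarrow$'' take $\theta^i=\pi^i/S^i_0$ and $\theta^0=-\sum_i\pi^i-\epsilon$ with $(1+r)\epsilon<\essinf X_\pi$. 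A similar ``add enough of the riskless asset'' argument shows that arbitrage of the second kind implies arbitrage of the first kind.

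For the \emph{strong} half, Theorem~\ref{thm:strong reg arb first characterisation} says strong $\textnormal{WC}$-arbitrage holds iff $\textnormal{WC}_1<0$. By positive homogeneity this is equivalent to the existence of $\pi$ with $\E[X_\pi]>0$ and $\textnormal{WC}(X_\pi)<0$, and since $\essinf X_\pi>0$ already forces $\E[X_\pi]>0$, this is precisely condition (ii). Hence strong $\textnormal{WC}$-arbitrage $\iff$ arbitrage of the second kind.

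For $\textnormal{WC}$-arbitrage, the easy direction is ``$\Leftarrow$'': if $\pi_0$ witnesses (i), then $\E[X_{\pi_0}]>0$ and $\textnormal{WC}(X_{\pi_0})=\esssup(-X_{\pi_0})\le0$, so for \emph{every} portfolio $\pi$ the portfolio $\pi+\pi_0$ satisfies $\E[X_{\pi+\pi_0}]>\E[X_\pi]$ and $\textnormal{WC}(X_{\pi+\pi_0})\le\textnormal{WC}(X_\pi)$, i.e.\ is strictly $\textnormal{WC}$-preferred over $\pi$; thus no portfolio is $\textnormal{WC}$-efficient. For the converse I would argue contrapositively. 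First note that $\textnormal{WC}$ is strictly expectation bounded (for non-constant $X$, $\esssup(-X)>\E[-X]$), so $\Pi^{\textnormal{WC}}_0=\{\mathbf0\}$ by Corollary~\ref{cor:exp bounded}, and that $\textnormal{WC}$ satisfies the Fatou property on $\cX$ by Remark~\ref{rmk:existence of rho optimal portfolios}(a). If there is no arbitrage of the first kind, there is none of the second kind, so $\textnormal{WC}_1\ge0$ by the strong half. Moreover $\textnormal{WC}_1=0$ is impossible: Theorem~\ref{thm:Existence of optimal portfolios} would then give $\pi\in\Pi^{\textnormal{WC}}_1$ with $\E[X_\pi]=1$ and $\esssup(-X_\pi)=0$, i.e.\ $X_\pi\ge0$ $\P$-a.s.\ with $\P[X_\pi>0]>0$, which is condition (i). Hence $\textnormal{WC}_1>0$, possibly $+\infty$. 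If $\textnormal{WC}_1\in(0,\infty)$, Theorem~\ref{thm:Existence of optimal portfolios} yields $\Pi^{\textnormal{WC}}_\nu\ne\emptyset$ for all $\nu\ge0$ and Proposition~\ref{prop:efficient portfolios} gives $\cE_{\textnormal{WC}}\ne\emptyset$; if $\textnormal{WC}_1=\infty$, then $\Pi^{\textnormal{WC}}_1=\emptyset$ while $\Pi^{\textnormal{WC}}_0=\{\mathbf0\}\ne\emptyset$, and Theorem~\ref{thm:reg arb:first characterisation}(b) again gives the absence of $\textnormal{WC}$-arbitrage.

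The main obstacle is exactly this last converse. One must keep track of the case $\textnormal{WC}_1=+\infty$, which genuinely occurs (e.g.\ $L=L^1$ with a single risky asset unbounded above and $\mu^1<r$); there Proposition~\ref{prop:efficient portfolios} is inapplicable and one must fall back on Theorem~\ref{thm:reg arb:first characterisation}(b). Ruling out $\textnormal{WC}_1=0$ is the other subtle point, and it is precisely here that one needs existence of optimal portfolios (hence the Fatou property), since a minimising sequence in $\Pi_1$ with vanishing worst-case risk need not converge on its own.
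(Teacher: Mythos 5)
Your proof is correct and follows essentially the same route as the paper: both reduce the two ordinary arbitrage notions to sign conditions on $\mathrm{WC}(X_\pi)$ via the identity $v_1=(1+r)v_0+X_\pi$, and both use the attainment statement from Theorem \ref{thm:Existence of optimal portfolios} (either $\mathrm{WC}_1=\infty$ or $\Pi^{\mathrm{WC}}_1\neq\emptyset$) to upgrade $\mathrm{WC}_1\geq 0$ to $\mathrm{WC}_1>0$ before invoking the primal characterisations in Theorems \ref{thm:strong reg arb first characterisation} and \ref{thm:reg arb:first characterisation}. The only cosmetic difference is that for the implication ``arbitrage of the first kind implies $\mathrm{WC}$-arbitrage'' you give a direct domination argument ($\pi+\pi_0$ is strictly $\mathrm{WC}$-preferred over $\pi$), whereas the paper routes this step through Theorem \ref{thm:reg arb:first characterisation}(a).
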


\begin{proof}
First note that by Theorem \ref{thm:Existence of optimal portfolios} either $\textnormal{WC}_1 = \infty$ or $\Pi^{\textnormal{WC}}_1 \neq \emptyset$.  

Now if $(\theta^0,\theta) \in \mathbb{R}^{1+d}$ is an arbitrage of the first (second) kind, then $\pi := (\theta^1 S^1_0, \ldots, \theta^d S^d_0) \neq \0$ satisfies $X_\pi =  \pi \cdot (R - r\1) = (\theta^0 S^0_1 + \theta \cdot S_1) - (1 + r) (\theta^0 S^0_0 + \theta \cdot S_0) \geq 0 \;(>0)\; \mathbb{P}$-a.s., which implies that $\mathrm{WC}(X_\pi)$ is nonpositive (negative).  Hence, $\textnormal{WC}_1 \leq 0$ $(\textnormal{WC}_1 < 0$) and $\Pi^{\textnormal{WC}}_1 \neq \emptyset$.  It follows that the market satisfies (strong) WC-arbitrage by Theorem \ref{thm:reg arb:first characterisation}(a) (Theorem \ref{thm:strong reg arb first characterisation}). 

Conversely, if the market does not satisfy arbitrage of the first (second) kind, then for all $\pi \in \Pi_1$, $\textnormal{WC}(X_\pi) > 0$ ($\textnormal{WC}(X_\pi) \geq 0$).  Since $\textnormal{WC}_1 = \infty$ or $\Pi^{\textnormal{WC}}_1 \neq \emptyset$, it follows that $\textnormal{WC}_1 > 0$ ($\textnormal{WC}_1 \geq 0$).  Hence, the market does not satisfy (strong) WC-arbitrage by Theorem \ref{thm:reg arb:first characterisation} (Theorem \ref{thm:strong reg arb first characterisation}). 
\end{proof}

We say that the market is arbitrage-free if it does not admit arbitrage of the first kind.  One might wonder if (strong) $\rho$-arbitrage is not just a pathology that disappears for reasonable (i.e., arbitrage-free) markets and risk measures. This is not the case. The following result shows that unless $\rho$ is as conservative as the worst-case risk measure, one can \emph{always} construct a financial market that is arbitrage-free but admits strong $\rho$-arbitrage.

\begin{theorem}
\label{thm:cannot avoid reg arb}  
Assume $\rho:L \to (-\infty,\infty]$ is not as conservative as the worst-case risk measure, \textnormal{WC}. Then there exists a market $(S^0, S)$ that is arbitrage-free but admits strong $\rho$-arbitrage.
\end{theorem}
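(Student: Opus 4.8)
The plan is to construct an explicit one-period market with a single risky asset ($d=1$) whose space of excess returns $\cX$ is the line spanned by a suitably shifted version of a random variable witnessing that $\rho$ is strictly less conservative than $\textnormal{WC}$. First I would record the elementary fact that $\rho \le \textnormal{WC}$ holds on all of $L$: since $X \ge -\textnormal{WC}(X)$ $\P$-a.s., monotonicity and cash-invariance give $\rho(X) \le \rho(-\textnormal{WC}(X)) = \textnormal{WC}(X)$. Hence the hypothesis supplies an $X_0 \in L$ with $\rho(X_0) < \textnormal{WC}(X_0) =: c \in (-\infty,\infty]$; in particular $\rho(X_0) \in \RR$. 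A $\P$-a.s.\ constant random variable would satisfy $\rho = \textnormal{WC}$ by cash-invariance, so $X_0$ is non-constant, and therefore $c = \esssup(-X_0) > \E[-X_0]$. Consequently the interval $\bigl(\max\{\rho(X_0),-\E[X_0]\},\,c\bigr)$ is non-empty, and I would fix a real number $a$ in it, so that $a > \rho(X_0)$, $a < c$ and $\E[X_0]+a > 0$ all hold.

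Next I would set up the market: take $r := 0$, $S^0 \equiv (1,1)$, and one risky asset with $S^1_0 := 1$ and $S^1_1 := 1 + X_0 + a$, so that $R^1 = X_0 + a =: Y \in L^1$, $\cX = \RR Y \subset L$, and $\mu^1 = \E[Y] = \E[X_0] + a > 0 = r$. This is a legitimate market in the sense of Section~\ref{section:model}: it is nonredundant because $Y$ is non-constant, and nondegenerate because $\mu^1 \ne r$. The three properties of $a$ translate into (i) $\E[Y] > 0$, hence $\P[Y>0]>0$; (ii) $a < c = \esssup(-X_0)$, hence $\P[Y<0] = \P[-X_0 > a] > 0$; and (iii) $\rho(Y) = \rho(X_0) - a < 0$ by cash-invariance.

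It then remains to verify the two assertions. For strong $\rho$-arbitrage, the portfolio $\pi^* := 1/\mu^1 \in \Pi_1$ satisfies $\rho(X_{\pi^*}) = \rho(Y)/\mu^1 < 0$ by (iii), so $\rho_1 < 0$ and Theorem~\ref{thm:strong reg arb first characterisation} applies. For the absence of arbitrage of the first kind, I would argue as in the proof of Proposition~\ref{prop:WC arbitrage equivalent to ordinary arbitrage}: any first-kind arbitrage $(\theta^0,\theta^1)$ produces a portfolio $\pi = \theta^1$ with $X_\pi = \theta^1 Y \ge 0$ $\P$-a.s.\ and $\P[\theta^1 Y > 0] > 0$; the case $\theta^1 > 0$ then contradicts (ii), the case $\theta^1 < 0$ contradicts (i), and $\theta^1 = 0$ is impossible.

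I expect the only genuinely delicate step to be the choice of $a$: one must extract from the \emph{strict} inequality $\rho(X_0) < \textnormal{WC}(X_0)$ both the non-constancy of $X_0$ and --- via $\esssup(-X_0) > \E[-X_0]$ --- the non-emptiness of the admissible interval for $a$. Everything else is routine, and, pleasingly, the argument is insensitive to whether $c$ is finite or equals $+\infty$ (equivalently, whether or not $X_0$ is bounded below).
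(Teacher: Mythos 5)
Your proof is correct and follows essentially the same route as the paper: both reduce the problem to constructing a single risky return $R\in L$ with $\mathbb{E}[R]>0$, $\P[R<0]>0$ and $\rho(R)<0$ (whence $\rho_1<0$ and Theorem \ref{thm:strong reg arb first characterisation} applies), and both obtain $R$ as a constant shift of a witness to $\rho\neq\mathrm{WC}$. The only difference is cosmetic: the paper splits into the cases ``$\rho$ not expectation bounded'' and ``$\rho$ expectation bounded but $\rho\neq\mathrm{WC}$'' with a separate shift in each, whereas you collapse both into the single observation that $\max\{\rho(X_0),\mathbb{E}[-X_0]\}<\esssup(-X_0)$ (the second term strictly, by non-constancy of $X_0$), so one admissible shift $a$ exists uniformly — a slightly cleaner bookkeeping of the same idea.
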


\begin{proof}
It is enough to construct a random variable $R \in L$ with $\mathbb{E}[R] > 0$, $\P[R < 0] > 0$ and $\rho(R) < 0$. Indeed, we can then define the market $(S^0, S)$ by $S^0 \equiv 1$ and $S := S^1$, where $S^1_0 = 1$ and $S^1_1 = 1 + R$. This is nonredundant, nondegenerate, and arbitrage-free but admits strong $\rho$-arbitrage by Theorem \ref{thm:strong reg arb first characterisation} since $\rho_1 < 0$.
	
	First, if $\rho$ is not expectation bounded, there exists $X \in L$ such that $ \mathbb{E}[-X] - \rho(X) := \epsilon > 0$.  By cash-invariance of $\rho$, this implies that $X$ cannot be constant so $\esssup(-X+E[X]) > 0$. Set $\delta \in (0, \esssup(-X+E[X]))$ and let $R:=X-\mathbb{E}[X] + \delta$. Then $\mathbb{E}[R]=\delta > 0$, $\P[R < 0] > 0$ and $\rho(R) =- \epsilon - \delta < 0$.
	
	Next, if $\rho$ is expectation bounded but not as conservative as WC, there exists $X \in L$ such that $\rho(X) < \esssup(-X) \leq \infty$.  Let $m \in (\rho(X),\esssup(-X))$ and $R:=X+m$.  Then $\P[R < 0] > 0$, $\rho(R)<0$ and $\mathbb{E}[R] \geq -\rho(R) > 0$ by expectation boundedness of $\rho$. 
\end{proof}

\subsection{$\rho$-arbitrage for elliptical returns}
\label{sec:elliptical}
The primal characterisations of (strong) $\rho$-arbitrage in Theorems \ref{thm:strong reg arb first characterisation} and \ref{thm:reg arb:first characterisation} are particularly useful when returns  are elliptically distributed with finite second moments and the risk measure is law-invariant. We briefly recall both concepts.

\begin{definition}\label{Elliptical Distribution Definition}
An $\mathbb{R}^d$-valued random vector $X = (X_{1},\dots,X_{d})$ has an \emph{elliptical distribution} if there exists a \emph{location vector} $\widetilde{\mu} \in \RR^d$, a $d \times d$ nonnegative definite \emph{dispersion matrix} $\widetilde{\Sigma} \in \RR^{d \times d}$, and a \emph{characteristic generator} $\psi:  \left[0,\infty \right) \rightarrow \mathbb{R}$ such that the characteristic function of $X$, $\phi_{X}$ can be expressed as
\begin{equation*}
\phi_{X}(t) = e^{i t^\top \widetilde{\mu}}\psi(t^{T}\widetilde{\Sigma}t) \quad \textnormal{for all} \ t \in \mathbb{R}^{d}.
\end{equation*}
In this case we write $X \sim \tilde E_{d}(\widetilde{\mu}, \widetilde{\Sigma}, \psi)$. 
\end{definition}

Elliptical distributions are generalisations of the multivariate normal distribution, which allow for heavy tail models while possessing many useful properties.  Indeed, the fat tails of most of their members make them natural candidates in modelling the distribution of speculative returns. Examples of elliptical distributions include the multivariate normal distribution, the multivariate t-distribution and the multivariate symmetric Laplace distribution. For a thorough description of elliptical distributions refer to~\cite{EllipticalDistBook,EllipticalDistPaper}.

\begin{remark}\label{RemarkAboutElliptical} 
If $X$ has an elliptical distribution with finite second moments, $X$ is also characterised by its mean vector $\mu \in \RR^d$, covariance matrix $\Sigma \in \RR^{d \times d}$  and characteristic generator $\psi$. Therefore, we may write $X \sim E_{d}\left(\mu, \Sigma, \psi\right)$; see \cite[Remark 3.27]{QRMbook} for details.
\end{remark}

\begin{definition}
\label{defn:law invariance}
A risk measure $\rho : L \to (-\infty, \infty]$ is called \emph{law-invariant} if $\rho(X_1)=\rho(X_2)$ whenever $X_1,X_2 \in L$ have the same law.
\end{definition}

The following result shows why elliptical distributions and law-invariant risk measures work particularly nicely together.

\begin{lemma}
	\label{lemma:risk of elliptical dist}
	Suppose $\rho$ is law-invariant and the return vector $R$ has an elliptical distribution with mean vector $\mu \in \RR^d$, covariance matrix $\Sigma \in \RR^{d \times d}$ and characteristic generator $\psi$.  Assume $\{X \sim E_{1}(\mu_{X}, \sigma^2_X, \psi):\mu_X \in \RR, \ \sigma^{2}_X \geq 0 \} \subset L$ and let $Z \sim E_{1}\left(0, 1, \psi\right)$.  Then for any $\pi \in \mathbb{R}^d$,\footnote{Note that $\rho(Z) \in (-\infty,\infty]$. We employ the convention that $\infty \times 0 = 0$, so that $\rho(X_\pi) = -\mathbb{E}[X_\pi]$ if $\Var(X_{\pi})=0$.} 	 
	\begin{equation}
	\label{eqn:rho for elliptical dist}
	\rho(X_\pi) = -\mathbb{E}[X_\pi] + \rho(Z)\sqrt{\Var(X_{\pi})} = - \pi^\top (\mu - r\1) + \rho(Z)\sqrt{\pi^\top \Sigma \pi}.
	\end{equation}
Moreover, $\rho(Z)$  is nonnegative (positive) if $\rho$ is (strictly) expectation bounded.
\end{lemma}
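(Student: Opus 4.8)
The plan is to reduce the computation of $\rho(X_\pi)$ to that of $\rho(Z)$ by combining the stability of elliptical laws under affine maps with the three axioms of $\rho$. Write $\mu_\pi := \mathbb{E}[X_\pi]$ and $\sigma_\pi^2 := \Var(X_\pi) \geq 0$; from \eqref{eq:excess return} one reads off at once that $\mu_\pi = \pi^\top(\mu - r\1)$, and since adding the constant $-r(\pi\cdot\1)$ does not change the variance, $\sigma_\pi^2 = \pi^\top\Sigma\pi$. Hence the two right-hand sides in \eqref{eqn:rho for elliptical dist} agree, and it remains to show that $\rho(X_\pi) = -\mu_\pi + \rho(Z)\,\sigma_\pi$.

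The key step is that $X_\pi = \pi\cdot(R - r\1)$ is an affine functional of the elliptically distributed vector $R$, so by the standard closure of elliptical families under affine maps it is itself one-dimensional elliptical with the \emph{same} characteristic generator $\psi$; in the mean/covariance parametrisation of Remark~\ref{RemarkAboutElliptical} this means $X_\pi \sim E_1(\mu_\pi, \sigma_\pi^2, \psi)$ (as one checks in one line at the level of characteristic functions, using $\mathbb{E}[e^{itX_\pi}] = e^{-itr(\pi\cdot\1)}\,\mathbb{E}[e^{it\,\pi\cdot R}]$ together with Definition~\ref{Elliptical Distribution Definition}). The variable $\mu_\pi + \sigma_\pi Z$ has the same mean, variance and generator, hence also lies in $E_1(\mu_\pi,\sigma_\pi^2,\psi)$, so $X_\pi$ and $\mu_\pi + \sigma_\pi Z$ have the same law; note that $\mu_\pi + \sigma_\pi Z \in L$ by the standing assumption $\{X \sim E_1(\mu_X,\sigma_X^2,\psi)\} \subset L$, which in particular gives $Z \in L$, so that $\rho(Z) \in (-\infty,\infty]$ is well defined. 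Applying now, in order, law-invariance, cash-invariance and positive homogeneity of $\rho$ (the last being valid since $\sigma_\pi \geq 0$) gives
\begin{equation*}
\rho(X_\pi) \;=\; \rho(\mu_\pi + \sigma_\pi Z) \;=\; \rho(\sigma_\pi Z) - \mu_\pi \;=\; \sigma_\pi \rho(Z) - \mu_\pi,
\end{equation*}
which is \eqref{eqn:rho for elliptical dist}. The one point needing care is the degenerate case $\sigma_\pi = 0$: there $X_\pi = \mu_\pi$ $\mathbb{P}$-a.s., so $\rho(X_\pi) = -\mu_\pi$ by cash-invariance and $\rho(0) = 0$ (itself a consequence of positive homogeneity), consistently with the convention $\infty \times 0 = 0$ of the footnote.

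For the ``moreover'' assertion, observe that $Z \sim E_1(0,1,\psi)$ has $\mathbb{E}[Z] = 0$ and, since $\Var(Z) = 1 > 0$, is non-constant; hence expectation boundedness gives $\rho(Z) \geq \mathbb{E}[-Z] = 0$, and strict expectation boundedness gives $\rho(Z) > \mathbb{E}[-Z] = 0$. I do not anticipate a genuine obstacle: the proof is short. The only step that has to be executed with a bit of attention is the identification of the law of $X_\pi$ with that of $\mu_\pi + \sigma_\pi Z$ — in particular handling the degenerate subcase $\sigma_\pi = 0$ and verifying the membership $\mu_\pi + \sigma_\pi Z \in L$ that makes $\rho$ applicable.
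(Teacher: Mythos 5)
Your proof is correct and follows essentially the same route as the paper: identify the law of $X_\pi$ as $E_1(\pi^\top(\mu-r\1),\,\pi^\top\Sigma\pi,\,\psi)$ via affine closure of elliptical families, write $X_\pi \stackrel{d}{=} \mu_\pi + \sigma_\pi Z$, and conclude by law-invariance, cash-invariance and positive homogeneity, with the ``moreover'' part following from $\mathbb{E}[Z]=0$. Your treatment is somewhat more careful than the paper's (explicitly handling $\sigma_\pi = 0$ and the membership $\mu_\pi + \sigma_\pi Z \in L$), but the substance is identical.
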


\begin{proof}
Standard properties of elliptical distributions imply that 
$$\pi \cdot (R-r\mathbf{1}) \sim E_{1}\left(\pi \cdot (\mu - r\mathbf{1}), \pi^{\textnormal{T}}\Sigma \pi, \psi\right)$$ for any portfolio $\pi \in \mathbb{R}^d$. This means that $X_\pi \,{\buildrel d \over =}\, \pi^\top (\mu - r\1) + Z\sqrt{\pi^\top \Sigma \pi}$, where $Z \sim E_{1}\left(0, 1, \psi\right)$.  As $\rho$ is a law-invariant,  $\rho(X_{\pi}) = - \pi^\top (\mu - r\1) + \rho(Z)\sqrt{\pi^\top \Sigma \pi}$. The final claim follows from the fact that $\mathbb{E}[Z] = 0$ because $Z \sim E_{1}\left(0, 1, \psi\right)$ has a symmetric distribution.
\end{proof}

With the help of Lemma \ref{lemma:risk of elliptical dist}, we can give a very simple characterisation for the absence of (strong) $\rho$-arbitrage in terms of the maximal \emph{Sharpe ratio}.

\begin{corollary}
	\label{cor:SR}
	Suppose $\rho$ is law-invariant and the return vector $R$ has an elliptical distribution with mean vector $\mu \in \RR^d$ satisfying $\mu \neq r\1$, positive definite covariance matrix $\Sigma \in \RR^{d \times d}$ and characteristic generator $\psi$. Assume $\{X \sim E_{1}(\mu_{X}, \sigma^2_X, \psi):\mu_X \in \RR, \ \sigma^{2}_X \geq 0 \} \subset L$ and let $Z \sim E_{1}\left(0, 1, \psi\right)$.
	Define the maximal Sharpe ratio as
	\begin{equation}
	\label{eq:cor:SR}
	\SR_{\max} := \max_{\pi \in \RR^d \setminus \{\0\}} \frac{\mathbb{E}[X_\pi]}{\sqrt{\Var(X_{\pi})}} = \sqrt{(\mu - r\1)^\top \Sigma^{-1}(\mu - r \1)}.
	\end{equation}
Then we have the following trichotomy:
	\begin{enumerate}
		\item If $\SR_{\max} < \rho(Z)$, the market $(S^0, S)$ does not admit $\rho$-arbitrage.
		\item If $\SR_{\max} = \rho(Z)$, the market $(S^0, S)$ admits $\rho$-arbitrage but not strong $\rho$-arbitrage.
		\item If $\SR_{\max} > \rho(Z)$, the market $(S^0, S)$ admits strong $\rho$-arbitrage.
	\end{enumerate}
In particular, if $\rho(Z) \leq 0$, the market $(S^0, S)$ admits strong $\rho$-arbitrage, independent of $\mu$ or $\Sigma$. Moreover, if $\rho(Z) < 0$ and $d \geq 2$, $\rho$-optimal portfolios fail to exist for any $\nu \geq 0$, independent of $\mu$ or $\Sigma$.
\end{corollary}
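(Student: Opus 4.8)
The plan is to feed the explicit formula of Lemma~\ref{lemma:risk of elliptical dist} into the primal characterisations of Theorems~\ref{thm:strong reg arb first characterisation} and~\ref{thm:reg arb:first characterisation}. By Lemma~\ref{lemma:risk of elliptical dist}, $\rho(X_\pi) = -\pi^\top(\mu - r\1) + \rho(Z)\sqrt{\pi^\top\Sigma\pi}$ for every $\pi \in \RR^d$, where $\rho(Z) \in (-\infty,\infty]$. I would first record the classical minimum-variance identity: since $\Sigma$ is positive definite, a Cauchy--Schwarz argument in the $\Sigma$-inner product (or Lagrange multipliers) shows that $\min\{\pi^\top\Sigma\pi : \pi^\top(\mu - r\1) = 1\} = 1/\SR_{\max}^2$, attained precisely at $\pi^* := \Sigma^{-1}(\mu - r\1)/\SR_{\max}^2$; here $\SR_{\max}^2 = (\mu - r\1)^\top\Sigma^{-1}(\mu - r\1) \in (0,\infty)$ because $\mu \neq r\1$, and $\sqrt{\Var(X_{\pi^*})} = 1/\SR_{\max}$.

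The heart of the argument is then to evaluate $\rho_1$ (and, where needed, $\rho_0$, $\Pi^\rho_1$, $\Pi^\rho_0$) by splitting on the sign of $\rho(Z)$. If $\rho(Z) \in [0,\infty)$, then on $\Pi_1$ we have $\rho(X_\pi) = -1 + \rho(Z)\sqrt{\pi^\top\Sigma\pi}$, which is minimised over $\Pi_1$ at $\pi^*$; hence $\rho_1 = -1 + \rho(Z)/\SR_{\max}$ and $\pi^* \in \Pi^\rho_1 \neq \emptyset$. If $\rho(Z) = \infty$, every $\pi \in \Pi_1$ is nonzero, so $\pi^\top\Sigma\pi > 0$ and $\rho(X_\pi) = \infty$; thus $\rho_1 = \infty$ and $\Pi^\rho_1 = \emptyset$, while on $\Pi_0$ one has $\rho(X_\pi) = \infty$ for $\pi \neq \0$ and $\rho(X_\0) = 0$, so $\rho_0 = 0$ and $\Pi^\rho_0 = \{\0\} \neq \emptyset$. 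If $\rho(Z) < 0$ and $d \geq 2$, then $\Pi_1$ and $\Pi_0$ are affine subspaces of dimension $d-1 \geq 1$ on which $\pi \mapsto \pi^\top\Sigma\pi$ is unbounded, whence $\rho_1 = \rho_0 = -\infty$ and $\Pi^\rho_1 = \Pi^\rho_0 = \emptyset$. If $\rho(Z) < 0$ and $d = 1$, then $\mu \neq r\1$ makes $\Pi_1 = \{1/(\mu - r)\}$ a singleton with $\sqrt{\Var(X_{1/(\mu-r)})} = 1/\SR_{\max}$, so again $\rho_1 = -1 + \rho(Z)/\SR_{\max} < -1$. In summary, $\rho_1 = -1 + \rho(Z)/\SR_{\max}$ (with the convention $\infty/\SR_{\max} = \infty$) whenever $\rho(Z) \geq 0$ or $d = 1$, while $\rho_1 = \rho_0 = -\infty$ whenever $\rho(Z) < 0$ and $d \geq 2$.

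With this, the trichotomy follows by invoking the cited theorems. In case (a), $\SR_{\max} < \rho(Z)$ forces $\rho(Z) > 0$; if $\rho(Z) < \infty$ then $\rho_1 = -1 + \rho(Z)/\SR_{\max} > 0$ with $\Pi^\rho_1 \neq \emptyset$, so Theorem~\ref{thm:reg arb:first characterisation}(a) rules out $\rho$-arbitrage, and if $\rho(Z) = \infty$ then $\Pi^\rho_1 = \emptyset \neq \Pi^\rho_0$ with $\rho_1 = \infty \not< 0$, so Theorem~\ref{thm:reg arb:first characterisation}(b) again rules out $\rho$-arbitrage. In case (b), $\SR_{\max} = \rho(Z)$ forces $\rho(Z) \in (0,\infty)$, hence $\rho_1 = 0$ with $\Pi^\rho_1 \neq \emptyset$, so Theorem~\ref{thm:reg arb:first characterisation}(a) gives $\rho$-arbitrage while Theorem~\ref{thm:strong reg arb first characterisation} (with $\rho_1 = 0 \not< 0$) rules out strong $\rho$-arbitrage. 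In case (c), $\SR_{\max} > \rho(Z)$ gives $\rho_1 < 0$ in every sub-regime ($\rho_1 = -1 + \rho(Z)/\SR_{\max} < 0$ when $\rho(Z) \geq 0$ or $d = 1$, and $\rho_1 = -\infty$ when $\rho(Z) < 0$ and $d \geq 2$), so Theorem~\ref{thm:strong reg arb first characterisation} yields strong $\rho$-arbitrage. For the final two assertions, note that $\SR_{\max} > 0$ always (as $\mu \neq r\1$ and $\Sigma$ is positive definite), so $\rho(Z) \leq 0$ puts us in case (c) independently of $\mu$ and $\Sigma$; and if in addition $\rho(Z) < 0$ and $d \geq 2$, then $\rho_1 = \rho_0 = -\infty$ from the previous paragraph, so $\Pi^\rho_\nu = \nu\Pi^\rho_1 = \emptyset$ for $\nu > 0$ and $\Pi^\rho_0 = \emptyset$, i.e.\ $\rho$-optimal portfolios exist for no $\nu \geq 0$, again regardless of $\mu$ and $\Sigma$.

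I expect the only genuine work to be the bookkeeping of the second paragraph: the three regimes for $\rho(Z)$ (nonnegative and finite, $+\infty$, negative) combined with the split $d = 1$ versus $d \geq 2$ must be handled carefully, because the characterisation theorems branch precisely on whether $\Pi^\rho_1$ and $\Pi^\rho_0$ are empty. Everything else is routine: the minimum-variance computation is standard, and the elliptical structure has already been absorbed into Lemma~\ref{lemma:risk of elliptical dist}.
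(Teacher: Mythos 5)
Your proposal is correct and follows essentially the same route as the paper: feed the elliptical formula of Lemma \ref{lemma:risk of elliptical dist} into the primal characterisations of Theorems \ref{thm:strong reg arb first characterisation} and \ref{thm:reg arb:first characterisation}, compute $\rho_1$ (and $\Pi^\rho_1$, $\Pi^\rho_0$) by splitting on the sign and finiteness of $\rho(Z)$, and read off the trichotomy. The only cosmetic difference is organisational: the paper cases directly on $\rho(Z) \in (0,\infty)$, $=\infty$, $=0$, $<0$ and writes $\rho_\nu = -\nu + \rho(Z)\sup_{\pi\in\Pi_\nu}\sqrt{\Var(X_\pi)}$ in the negative case, whereas you tabulate $\rho_1$ across all regimes (with an explicit $d=1$ versus $d\geq 2$ split) before invoking the theorems.
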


\begin{proof}
For $\pi \in \RR^d \setminus \{\0\}$, set $\textnormal{SR}_{\pi} := \mathbb{E}[X_\pi]/\sqrt{\Var(X_{\pi})}$ and note that this is well defined because $\mu \neq r\mathbf{1}$ and $\Sigma$ is positive definite. It follows from linearity of the expectation and positive homogeneity of the standard deviation that $\SR_{\max} := \max_{\pi \in \Pi_1} \SR_\pi$. It is not difficult to check that the portfolio $\pi^* := \frac{1}{(\mu-r\mathbf{1})^{\textnormal{T}}\Sigma^{-1}(\mu-r\mathbf{1})} \Sigma^{-1}(\mu-r\mathbf{1}) \in \Pi_1$ has maximal Sharpe ratio given by the right-hand side of \eqref{eq:cor:SR}.

If $\rho(Z) \in (0,\infty)$, then by Lemma \ref{lemma:risk of elliptical dist} for any $\pi \in \Pi_1$,
\begin{equation*}
\rho(X_\pi) = -1+\rho(Z) \sqrt{\Var(X_{\pi})} = -1 +  \frac{\rho(Z)}{\textnormal{SR}_{\pi}}.
\end{equation*}
Thus, minimising $\rho(X_\pi)$ over $\pi \in \Pi_1$ is equivalent to maximising $\textnormal{SR}_\pi$ over $\Pi_1$.  Whence 
\begin{equation*}
\rho_{1} := -1 + \frac{\rho(Z)}{\textnormal{SR}_{\textnormal{max}}} = -1 + \frac{\rho(Z)}{\textnormal{SR}_{\pi^*}} = \rho(X_{\pi^*}).
\end{equation*}
Parts (a), (b) and (c) now follow from  Theorems \ref{thm:strong reg arb first characterisation} and Theorem \ref{thm:reg arb:first characterisation}(a).

If $\rho(Z) = \infty$, every portfolio has infinite risk except the riskless portfolio which has zero risk.  Whence $\Pi^{\rho}_{0} = \{ \mathbf{0} \}$, $\Pi^{\rho}_{1}=\emptyset$ and $\rho_1 = \infty$. Now part (a) follow from Theorem \ref{thm:reg arb:first characterisation}(b).

If $\rho(Z) = 0$, $\rho(X_\pi) = -\mathbb{E}[X_\pi]$ for every portfolio $\pi \in \mathbb{R}^d$.  Thus, $\rho_\nu=-\nu$ for any $\nu \geq 0$ and the market admits strong $\rho$-arbitrage by Theorem \ref{thm:strong reg arb first characterisation}.

Finally, if $\rho(Z)< 0$, Lemma \ref{lemma:risk of elliptical dist} gives for $\nu \geq 0$,
\begin{equation*}  \rho_{\nu} = \inf_{\pi \in \Pi_{\nu}} \rho(X_{\pi}) = \inf_{\pi \in \Pi_{\nu}} \{-\nu + \rho(Z) \sqrt{\Var(X_{\pi})} \} = -\nu + \rho(Z) \sup_{\pi \in \Pi_{\nu}} \sqrt{\Var(X_{\pi})} < 0,
\end{equation*}
whence, the market admits strong $\rho$-arbitrage by Theorem \ref{thm:strong reg arb first characterisation}. If $d \geq 2$, it is not difficult to check that $\sup_{\pi \in \Pi_{\nu}} \sqrt{\Var(X_{\pi})} = \infty$, and hence $\rho_{\nu} = -\infty$, which implies that $\Pi^\rho_{\nu} = \emptyset$.
\end{proof}

\begin{remark}
\label{rem:elliptical reg arbitrage}
Corollary \ref{cor:SR} shows that in general it is \emph{not} true that for elliptically distributed returns and a law-invariant  risk measure $\rho$, the $\rho$-optimal portfolios coincide with the Markowitz optimal portfolios.\footnote{ This is for instance claimed in \cite[Theorem 1]{embrechts2002correlation}.} Indeed, Corollary \ref{cor:SR} shows that in \emph{every} elliptical market, $\VaR^\alpha$-optimal portfolios fail to exist if 
$\alpha >   \P[Z \leq 0] = 1/2 + 1/2 \P[Z = 0]$, where $Z \sim E_{1}\left(0, 1, \psi\right)$.\footnote{Note that $Z \sim E_{1}\left(0, 1, \psi\right)$ has a symmetric distribution.} In particular, $\VaR^\alpha$-optimal portfolios fail to exist for $\alpha > 1/2$ in every multivariate Gaussian market. The underlying reason is that Value at Risk fails to be expectation bounded.
\end{remark}

We illustrate the above result by considering the case that $R$ has multivariate Gaussian returns and the risk measure is either Value at Risk or Expected Shortfall.

\begin{example}
\label{exa:jointly gaussian returns ES and VaR}
Assume the return vector $R$ has a multivariate normal distribution with mean vector $\mu \in \RR^d$ satisfying $\mu \neq r\1$ and a positive definite covariance matrix $\Sigma \in \RR^{d \times d}$.  Let $Z \sim N(0,1)$.  Then for $\alpha \in (0,1)$, we have
\begin{equation*}
    \textnormal{VaR}^{\alpha}(Z) = \Phi^{-1}(1-\alpha) \quad \textnormal{and} \quad \textnormal{ES}^{\alpha}(Z) = \frac{\phi(\Phi^{-1}(\alpha))}{\alpha},
\end{equation*}
where $\phi$ and $\Phi$ denote the pdf and cdf of a standard normal distribution, respectively.  By Corollary \ref{cor:SR}, we can fully characterise (strong) $\rho$-arbitrage in this market for both risk measures by looking at the maximal Sharpe ratio.  Figure \ref{phase transition picture} gives a graphical illustration. 

\begin{figure}[H]
	\centering
	\includegraphics[width=0.65\textwidth]{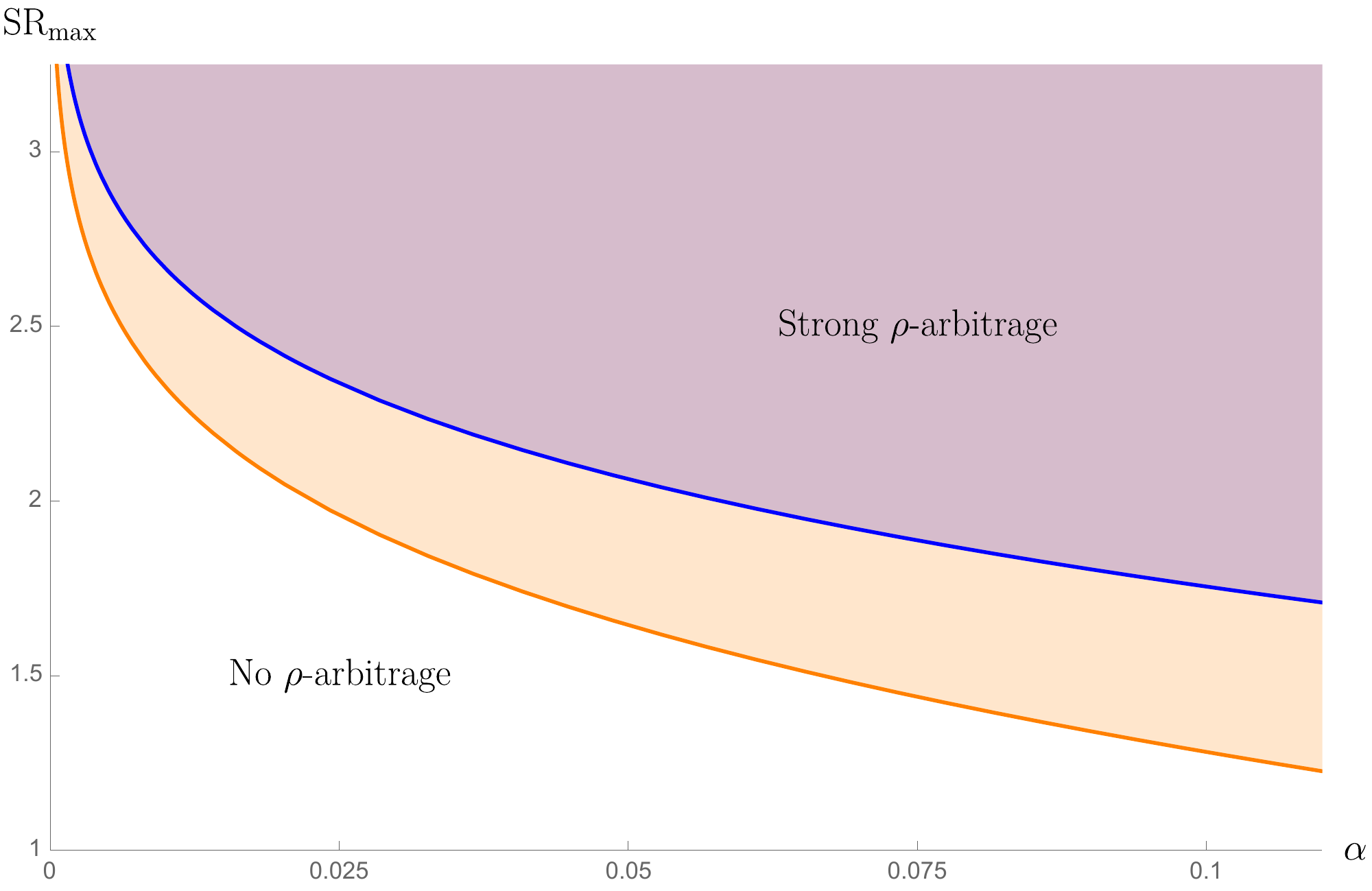}
	\caption{$\rho$-arbitrage for ES (blue) and VaR (orange), for multivariate normal returns}
	\label{phase transition picture}
\end{figure}

If $\textnormal{SR}_{\textnormal{max}}$ lies above the blue (orange) curve, then this Gaussian market admits strong $\ES^\alpha$($\textnormal{VaR}^\alpha$)-arbitrage.  If it lies below the blue (orange) curve then the market does not admit $\ES^\alpha$($\textnormal{VaR}^\alpha$)-arbitrage.  And in the intermediate case, the market admits $\ES^\alpha$($\VaR^\alpha$)-arbitrage, but not strong $\ES^\alpha$($\VaR^\alpha$)-arbitrage.

Also note that for  Value at Risk, if $\alpha > 1/2$, then $\Phi^{-1}(1-\alpha) < 0$. Hence, in this case we always have strong $\VaR^\alpha$-arbitrage and  $\textnormal{VaR}^\alpha$-optimal portfolios fail to exist for $d \geq 2$, independent of $\mu$ or $\Sigma$. 

\end{example}

\section{Dual characterisation of (strong) $\rho$-arbitrage}
\label{section:dual characterisations}

Theorems  \ref{thm:strong reg arb first characterisation} and \ref{thm:reg arb:first characterisation} provide a full characterisation of strong $\rho$-arbitrage and $\rho$-arbitrage, respectively. However, the criterion is rather indirect as it requires to calculate $\rho_1$, which relies on a nontrivial optimisation problem. In this section, we consider the case that $\rho$ is in addition convex (and hence coherent), expectation bounded and has a dual representation. We then derive a \emph{dual characterisation} of (strong) $\rho$-arbitrage.

Let $\mathcal{D} := \{Z \in L^1 : Z \geq 0 \ \mathbb{P}\textnormal{-a.s.} \textnormal{ and }  \mathbb{E}[Z]=1  \}$ be the set of all Radon-Nikod{\'y}m derivatives of probability measures that are absolutely continuous with respect to $\P$. Throughout this section, we assume that $\rho: L \xrightarrow{} (-\infty, \infty]$ is an expectation bounded, coherent risk measure and admits a dual representation
\begin{equation} \label{dual_char_of_rho}
    \rho(X) = \sup_{Z \in \mathcal{Q}} (\mathbb{E}[-ZX]),
\end{equation}
for some $\mathcal{Q} \subset \mathcal{D}$. Since $\rho$ is expectation bounded, we may assume without loss of generality that $1 \in \cQ$. Moreover, 
taking the supremum over $\mathcal{Q}$ is equivalent to taking the supremum over its convex hull, and therefore, we may assume without loss of generality that $\cQ$ is convex.

\begin{remark}
	\label{rem:dual char}
(a) Since $-ZX$ may not be integrable, we define $\mathbb{E}[-ZX] := \mathbb{E}[ZX^-] - \mathbb{E}[ZX^+]$, with the conservative convention that if $\mathbb{E}[ZX^{-}] = \infty$, then $\mathbb{E}[-ZX]=\infty$. 

(b) Apart from the (natural) assumption that $\rho$ is expectation bounded, this is the most general class of coherent risk measures on $L$ that admit a dual representation. For instance, we do not impose $L^1$-closedness or uniformly integrability of $\mathcal{Q}$ (which is for instance assumed in \cite{cherny2008pricing}). A wide range of examples of risk measures satisfying \eqref{dual_char_of_rho} are given in Section \ref{sec:Applications}.

(c) The representation in \eqref{dual_char_of_rho} is not unique. However, it is not difficult to check that the \emph{maximal} dual set for which \eqref{dual_char_of_rho} is satisfied is given by
\begin{equation}
	\label{eq:Q rho}
\cQ_\rho := \{Z \in \cD: \mathbb{E}[Z X] \geq 0 \text{ and } \mathbb{E}[Z X^-] < \infty \text{ for all } X \in \cA_\rho\},
\end{equation}
where $\cA_\rho : = \{X\in L : \rho(X) \leq 0\}$ is the \emph{acceptance set} of $\rho$.\footnote{Note that in general $\cQ_{\rho}$ is not $L^1$-closed.} However, it turns out that for the dual characterisation of $\rho$-arbitrage it is sometimes useful \emph{not} to consider the maximal dual set; cf.~some of the examples in Section \ref{sec:Applications}.

(d) If we define $\rho$ by \eqref{dual_char_of_rho} for some convex set $\cQ$ containing $1$, it follows that $\rho$ is $(-\infty, \infty]$-valued, expectation bounded and a coherent risk measure (i.e., it is monotone, cash-invariant, positively homogeneous and convex).
\end{remark}

\subsection{Preliminary considerations and conditions}
In this section, we introduce and discuss some additional conditions that are needed (and necessary) for our main results, Theorems \ref{thm: no strong reg arb} and \ref{thm: no reg arb equivalence}.

We start by introducing two conditions concerning the (uniform) integrability of the returns under the probability measures “contained” in the dual set $\cQ$.

\medskip
\noindent\textbf{Condition I.}~For all $i \in \{1,\dots,d \}$ and any $Z \in \mathcal{Q}$, $Z R^i \in L^1$.
\medskip

\noindent\textbf{Condition UI.}~$\mathcal{Q}$ is uniformly integrable, and for all $i \in \{1,\dots,d \}$, $R^i\mathcal{Q}$ is uniformly integrable, where $R^i\mathcal{Q} := \{ R^iZ : Z \in \mathcal{Q} \}.$

\begin{remark}
(a) Condition I may depend on the choice of the dual set $\cQ$ in the dual representation \eqref{dual_char_of_rho} of $\rho$. In particular, it may not be satisfied for the maximal dual set $\cQ_{\rho}$; cf.~Section \ref{subsec:worst case risk} for a concrete example. For this reason, one might want to choose a “small” dual set $\cQ$ for~$\rho$.

(b) By contrast, Condition UI \emph{essentially} does not depend on the choice of the dual set $\cQ$ in the dual representation of $\rho$. More precisely, this statement is true if $\rho$ is such that all representing dual sets have the same $L^1$-closure. One important example is when $L$ is an Orlicz space and $\rho$ is real valued; cf.~Proposition \ref{prop:orlicz dual}.
\end{remark}

While Condition I is quite weak, it has some important consequences.

 \begin{proposition}
	\label{prop:cond I}
	Suppose that Condition I is satisfied. Then the set 
	\begin{equation}
	\label{eq: C Q}
	C_\mathcal{Q} := \{\mathbb{E}[-Z(R-r\mathbf{1})] : Z \in \mathcal{Q} \}
	\end{equation}
	is a convex subset of $\mathbb{R}^d$ and for any portfolio $\pi \in \RR^d$,
	\begin{equation}
	\label{relating Q to C}
	\rho(X_{\pi}) = \sup_{c \in C_{\mathcal{Q}}} (\pi \cdot c).
	\end{equation}
	Moreover, $\rho$ satisfies the Fatou property on  $\cX=\{X_\pi:\pi \in \RR^d\}$.
\end{proposition}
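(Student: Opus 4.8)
The plan is to establish the three assertions in order: convexity of $C_\cQ$, the identity \eqref{relating Q to C}, and finally the Fatou property. The first is immediate: the map $Z \mapsto \E[-Z(R - r\1)]$ is affine in $Z$ (well-defined into $\RR^d$ precisely because Condition I guarantees $ZR^i \in L^1$, hence $Z(R^i - r) \in L^1$ since $Z \in L^1$), so $C_\cQ$ is the affine image of the convex set $\cQ$ and is therefore convex.

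For the identity \eqref{relating Q to C}, I would start from the dual representation \eqref{dual_char_of_rho} and rewrite, for a fixed portfolio $\pi \in \RR^d$,
\begin{equation*}
\rho(X_\pi) = \sup_{Z \in \cQ} \E[-Z X_\pi] = \sup_{Z \in \cQ} \E[-Z \, \pi \cdot (R - r\1)].
\end{equation*}
The key point is that under Condition I, $Z \, \pi \cdot (R - r\1) = \pi \cdot \big(Z(R - r\1)\big)$ is genuinely integrable for every $Z \in \cQ$, so the conservative convention in Remark \ref{rem:dual char}(a) is never triggered and $\E[-Z X_\pi]$ is an honest expectation; linearity of the (finite-valued) expectation then gives $\E[-Z X_\pi] = \pi \cdot \E[-Z(R - r\1)]$. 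Taking the supremum over $Z \in \cQ$ and recognising that $\{\E[-Z(R - r\1)] : Z \in \cQ\} = C_\cQ$ by definition yields $\rho(X_\pi) = \sup_{c \in C_\cQ}(\pi \cdot c)$.

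For the Fatou property on $\cX$, suppose $X_n \to X$ $\P$-a.s. with $X_n, X \in \cX$ and $|X_n| \le Y$ $\P$-a.s. for some $Y \in L$. Writing $X_n = X_{\pi_n}$ and $X = X_\pi$, I want $\rho(X_\pi) \le \liminf_{n} \rho(X_{\pi_n})$. The natural route is: for each fixed $Z \in \cQ$, I claim $\E[-Z X_\pi] \le \liminf_n \E[-Z X_{\pi_n}]$, and then taking the supremum over $Z \in \cQ$ on the left (using that $\sup$ of liminfs is $\le$ liminf only in the wrong direction — so I must be slightly careful and instead argue $\E[-ZX_\pi] \le \liminf_n \E[-ZX_{\pi_n}] \le \liminf_n \rho(X_{\pi_n})$ for each $Z$, then sup over $Z$) gives the result. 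To prove the per-$Z$ claim, note $Z|X_n| \le ZY$; here I need $ZY \in L^1$. This is the one genuine subtlety — Condition I only gives $ZR^i \in L^1$, not $ZY \in L^1$ for an arbitrary dominating $Y \in L$. However, since $X_n, X \in \cX$ and $\cX$ is finite-dimensional, one can replace $Y$ by the explicit dominating function built from the $R^i$: the sequence $(\pi_n)$ must be bounded in $\RR^d$ (otherwise $X_{\pi_n}$ cannot stay a.s.-dominated by an integrable $Y$ while converging — more carefully, one argues that on the finite-dimensional space $\cX$, a.s.-convergence of $X_{\pi_n}$ forces $\pi_n \to \pi$, using nonredundancy so that $\pi \mapsto X_\pi$ is a linear isomorphism onto $\cX$ and a.s.-limits of a convergent-in-coefficients family are controlled). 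Thus $\|\pi_n\| \le K$ for some $K$, and then $|X_{\pi_n}| \le K \sum_{i=1}^d |R^i - r|$, which is $Z$-integrable for every $Z \in \cQ$ by Condition I. Now dominated convergence (with dominating function $K Z \sum_i |R^i - r| \in L^1$) gives $\E[-ZX_{\pi_n}] \to \E[-ZX_\pi]$, in particular $\E[-ZX_\pi] = \lim_n \E[-ZX_{\pi_n}] \le \liminf_n \rho(X_{\pi_n})$. Taking the supremum over $Z \in \cQ$ yields $\rho(X_\pi) \le \liminf_n \rho(X_{\pi_n})$, which is the Fatou property.

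The main obstacle, and the only step requiring real care, is the Fatou argument: one must not blindly invoke the given dominating $Y \in L$ (which need not be $Z$-integrable), but instead exploit the finite-dimensionality of $\cX$ and nonredundancy to upgrade the a.s.-convergence $X_{\pi_n} \to X_\pi$ to convergence $\pi_n \to \pi$ in $\RR^d$, thereby obtaining a uniform bound $\|\pi_n\| \le K$ and hence a $Z$-integrable dominating function $KZ\sum_{i=1}^d|R^i - r|$ to which dominated convergence applies for each $Z \in \cQ$. Everything else is routine linearity and supremum manipulation.
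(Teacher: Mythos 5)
Your proposal is correct and follows essentially the same route as the paper: convexity of $C_\cQ$ as the affine image of the convex set $\cQ$, the identity \eqref{relating Q to C} by linearity of the (genuinely finite) expectation under Condition I, and the Fatou property by upgrading the a.s.\ convergence $X_{\pi_n} \to X_\pi$ to $\pi_n \to \pi$ in $\RR^d$ via nonredundancy and finite-dimensionality of $\cX$. The only (harmless) detour is in the last step: once $\pi_n \to \pi$ is established, the paper concludes directly from $\mathbb{E}[-Z X_{\pi_n}] = \pi_n \cdot \mathbb{E}[-Z(R-r\1)] \to \pi \cdot \mathbb{E}[-Z(R-r\1)]$, a dot product with a fixed vector in $\RR^d$, so your appeal to dominated convergence with the dominating function $KZ\sum_{i=1}^d |R^i - r|$ is correct but unnecessary.
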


\begin{proof}
The set $C_\mathcal{Q}$ is real valued by Condition I and convex by convexity of $\cQ$. This together with linearity of the expectation implies that
\begin{equation*}
\rho(X_{\pi}) = \sup_{Z \in \mathcal{Q}} (\mathbb{E}[-ZX_\pi]) =\sup_{Z \in \mathcal{Q}} (\mathbb{E}[-Z(\pi \cdot (R - r \1))]) = \sup_{c \in C_{\mathcal{Q}}} (\pi \cdot c).
\end{equation*}
Finally, to establish the Fatou property on  $\cX$, assume that $X_{\pi_n} \to X_\pi$ $\as{\P}$ Nondegeneracy of the market implies that $\pi_n \to \pi$. Then for any $Z \in \cQ$, Condition I, linearity of the expectation and the definition of $\rho$ in \eqref{dual_char_of_rho} gives
\begin{equation*}
\mathbb{E}[-Z X_\pi] = \pi \cdot \mathbb{E}[-Z(R-r\mathbf{1})] = \lim_{n \to \infty} \pi_n \cdot \mathbb{E}[-Z(R-r\mathbf{1})] = \lim_{n \to \infty} \mathbb{E}[-Z X_{\pi_{n}}] \leq \liminf_{n \to \infty} \rho(X_{\pi_{n}}).
\end{equation*}
Taking the supremum over $Z \in \mathcal{Q}$ gives $\rho(X_{\pi}) \leq \liminf_{n \to \infty} \rho(X_{\pi_{n}})$.
\end{proof}

\begin{remark}
Example \ref{example: weird example} shows that without Condition I, the set $C_\cQ$ may fail to be convex or $\RR^d$-valued and \eqref{relating Q to C} may break down.
\end{remark}

  Condition UI  is a uniform version of Condition I.\footnote{Note that $Z \in L^1$ for all $Z \in \cQ$ even though this does not appear explicitly in Condition I.} The following result shows that under Condition UI, the supremum in \eqref{relating Q to C} can be replaced by a maximum, if we replace $\cQ$ in \eqref{eq: C Q} by its $L^1$-closure.
  
 \begin{proposition}
 	\label{prop:cond 1b}
  Suppose that Condition UI is satisfied. Denote by $\ol\cQ$ the $L^1$-closure of $\cQ$. Then the set 
  \begin{equation}
  \label{eq: C bar Q}
  C_{\ol\cQ} := \{\mathbb{E}[-Z(R-r\mathbf{1})] : Z \in \ol \cQ \}
  \end{equation}
  is a convex and compact subset of  $\mathbb{R}^d$. Moreover, 
  for any portfolio $\pi \in \RR^d$,
  \begin{equation}
  \label{relating Q bar to C}
  \rho(X_{\pi}) = \max_{c \in C_{\ol\cQ}} (\pi \cdot c).
  \end{equation}
 \end{proposition}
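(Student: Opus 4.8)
The plan is to pass from $\cQ$ to its $L^1$-closure $\ol\cQ$, for which Condition~UI still holds, and then combine weak compactness with a truncation argument. First I would check that $\ol\cQ$ again satisfies Condition~UI. That $\ol\cQ$ itself is uniformly integrable follows from the standard fact that the $L^1$-closure of a uniformly integrable family is uniformly integrable (via the $\epsilon$--$\delta$ criterion). For the second part, take $Z \in \ol\cQ$ and $Z_n \in \cQ$ with $Z_n \to Z$ in $L^1$, hence in probability, so $R^i Z_n \to R^i Z$ in probability; since $R^i\cQ$ is uniformly integrable, the Vitali convergence theorem yields $R^i Z_n \to R^i Z$ in $L^1$. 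In particular $R^i Z \in L^1$, so the map $T \colon \ol\cQ \to \RR^d$, $T(Z) := \E[-Z(R - r\1)]$, is well defined and satisfies $\E[-ZX_\pi] = \pi \cdot T(Z)$ for all $\pi \in \RR^d$; moreover $R^i\ol\cQ$ is contained in the $L^1$-closure of $R^i\cQ$ and hence uniformly integrable. A by-product of this argument is that $T$ is $L^1$-to-norm continuous on $\ol\cQ$.

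Next I would establish two topological facts. First, $\ol\cQ$ is weakly compact in $L^1$: by the Dunford--Pettis theorem the uniformly integrable set $\cQ$ is relatively weakly compact, and the convex, norm-closed set $\ol\cQ$ is weakly closed by Mazur's theorem (also $\ol\cQ \subset \cD$, so its elements are genuine densities). Second, $T$ is continuous from $(\ol\cQ, \sigma(L^1, L^\infty))$ to $\RR^d$: the only delicate point is continuity of $Z \mapsto \E[ZR^i]$, which fails on all of $L^1$ because $R^i$ need not be bounded. To circumvent this I would split $R^i = R^i\mathds{1}_{\{|R^i| \le M\}} + R^i\mathds{1}_{\{|R^i| > M\}}$: the bounded summand is handled directly by weak convergence, while the tail summand is controlled \emph{uniformly} over $\ol\cQ$ using uniform integrability of $R^i\ol\cQ$ applied to the events $\{|R^i| > M\}$, whose probability tends to $0$ as $M \to \infty$ since $R^i \in L^1$. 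This truncation step is the main technical obstacle --- reconciling the unboundedness of the returns with the need for weak-to-norm continuity --- and it is exactly what Condition~UI is designed to supply.

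Finally I would conclude. Since $T$ is affine (indeed linear) and $\ol\cQ$ is convex, $C_{\ol\cQ} = T(\ol\cQ)$ is convex; since $T$ is weakly continuous and $\ol\cQ$ weakly compact, $C_{\ol\cQ}$ is compact. For the representation formula, the dual representation \eqref{dual_char_of_rho} gives $\rho(X_\pi) = \sup_{Z \in \cQ}\E[-ZX_\pi] = \sup_{Z \in \cQ} \pi \cdot T(Z)$; since $\cQ$ is $L^1$-dense in $\ol\cQ$ and $Z \mapsto \pi \cdot T(Z)$ is $L^1$-continuous on $\ol\cQ$, this equals $\sup_{Z \in \ol\cQ} \pi \cdot T(Z)$, which is attained by weak compactness of $\ol\cQ$ together with weak continuity of $Z \mapsto \pi \cdot T(Z)$. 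Hence $\rho(X_\pi) = \max_{Z \in \ol\cQ} \pi \cdot T(Z) = \max_{c \in C_{\ol\cQ}} (\pi \cdot c)$, as claimed.
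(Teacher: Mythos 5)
Your proof is correct and follows the same overall architecture as the paper's: transfer Condition UI to $\ol\cQ$ via an $L^1$-convergence (Vitali) argument, obtain weak compactness of $\ol\cQ$ from Dunford--Pettis, establish weak continuity of $Z \mapsto \E[-Z(R-r\1)]$ on $\ol\cQ$, and then read off convexity, compactness and attainment. The one place where you genuinely diverge is the weak-continuity step. The paper (Proposition \ref{prop: cond 1 implies weak cts map}) exploits that the map is affine: preimages of convex sets are convex, so by Mazur's theorem it suffices to prove \emph{strong} ($L^1$-to-norm) continuity, which follows from convergence in probability together with uniform integrability of $R^i\ol\cQ$. You instead prove weak continuity directly by truncating $R^i$ at level $M$, handling the bounded summand by testing weak convergence against an $L^\infty$ function and controlling the tail uniformly over $\ol\cQ$ via the $\epsilon$--$\delta$ characterisation of uniform integrability of $R^i\ol\cQ$ applied to $\{|R^i|>M\}$. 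Both mechanisms are valid; the paper's is slicker and avoids truncation entirely, while yours is more hands-on and makes explicit exactly where Condition UI enters. A further minor difference: the paper routes attainment through the Eberlein--\v{S}mulian theorem and maximising sequences, whereas you simply invoke that a weakly continuous function on a weakly compact set attains its maximum, which is marginally cleaner.
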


\begin{proof}
	Since Condition UI implies Condition I, \eqref{relating Q to C} gives
	\begin{equation}
	\label{eq: sup over Q leq sup over Q bar}
	\rho(X_\pi) = \sup_{c \in C_{\cQ}} (\pi \cdot c) \leq \sup_{c \in C_{\ol\cQ}} (\pi \cdot c).
	\end{equation}
Since $\cQ$ is UI and convex, $\ol \cQ$ is convex and weakly compact by the Dunford-Pettis theorem. To show that the supremum on the right side of \eqref{eq: sup over Q leq sup over Q bar} is attained, let $(Z_{n})_{n \in \NN}$ be a maximising sequence in $\ol \cQ$. Since $\ol \cQ$ is weak sequentially compact by the Eberlein-\v{S}mulian theorem, after passing to a subsequence, we may assume that $Z_n$ converges weakly to some $Z \in \ol\cQ$. Since the map $\tilde Z \mapsto \E[-\tilde Z(R - r\1)]$ is weakly continuous on $\ol \cQ$ by Proposition \ref{prop: cond 1 implies weak cts map}, $Z$ is a maximiser. The same argument, but now for a maximising sequence in $\cQ \subset \ol \cQ$, shows that we have  have equality in \eqref{eq: sup over Q leq sup over Q bar}. Finally, using again that the map $\tilde Z \mapsto \E[-\tilde Z(R - r\1)]$ is weakly continuous on $\ol \cQ$ and $\ol \cQ$ is weakly compact, it follows that $C_{\ol\cQ}$ is compact.
\end{proof}

\begin{remark}
	Example \ref{example:UI} shows that without Condition UI (even when Condition I is satisfied), the set $C_{\ol \cQ}$ may fail to be convex, compact or a subset of $\RR^d$ and \eqref{relating Q bar to C} may break down.
\end{remark}

\begin{remark}
\label{rmk:Chernys assumptions equivalent to 1b}
  In \cite{cherny2008pricing}, it is assumed that  $\mathcal{Q}$ is uniformly integrable and that $R^i \in L^1(\mathcal{Q})$, where
\begin{equation}
	\label{eq:rem:Cherny}
    L^1(\mathcal{Q}) := \{X \in L^0 : \lim_{a \to \infty} \sup_{Z \in \mathcal{Q}} \mathbb{E}[Z|X|\mathds{1}_{\{{|X|>a}\}}] = 0 \}.
\end{equation}
By Proposition \ref{prop:cond 1b equiv to cherny}, this is equivalent to Condition UI. However, we believe that Condition UI better highlights why this is a uniform version of Condition I.
\end{remark}

We next aim to introduce a notion of “interior” for $\cQ$, which is crucial for the dual characterisation of $\rho$-arbitrage. This turns out to be rather subtle since neither algebraic nor topological notions of interior work in general; cf.~Remark \ref{rem:QL}. Instead, we define our notion of “interior” in an abstract way. More precisely, we look for (nonempty) subsets $\tilde \cQ \subset \cQ$ satisfying

\medskip
\noindent\textbf{Condition POS.}~$\tilde Z > 0$ $\as{\P}$ for all $\tilde Z \in \tilde \cQ$.

\medskip
\noindent\textbf{Condition MIX.}~$\lambda Z + (1-\lambda) \tilde Z \in \tilde \cQ$ for all $Z \in \cQ$, $\tilde Z \in \tilde \cQ$ and $\lambda \in (0, 1)$.
\medskip

\noindent\textbf{Condition INT.}~For all $\tilde Z \in \tilde \cQ$, there is an $L^\infty$-dense subset $\cE$ of $\cD \cap L^\infty$ such that for all $Z \in \cE$,  there is $\lambda \in (0, 1)$ such that $\lambda Z + (1-\lambda) \tilde Z \in \cQ$.
\medskip

A few comments are in order.

\begin{remark}
	\label{rem:QL}
	(a) Condition MIX implies in particular that $\tilde \cQ$ is convex.
	
	(b) Condition INT of $\tilde \cQ$ is inspired by the definition of the core/algebraic interior. Indeed, recall that for a vector space $V$, the algebraic interior of a set $M \subset V$ with respect to a vector subspace $X \subset V$ is defined by\footnote{We refer the reader to \cite{zalinescu:02} for details. The case that $X = V$ is more standard and called the core/algebraic interior of $M$.}
	\begin{equation*}
	\label{eq:core}
	\mathrm{aint}_X M:= \{ m \in M : \text{for all } x \in X,\text{ there is } \lambda > 0 \textnormal{ such that } m +\delta x \in M  \text{ for all } \delta \in [0,\lambda]  \}.
	\end{equation*}
	When $M$ is convex, one can show that 
		\begin{equation*}
	\mathrm{aint}_X M = \{ m \in M : \text{for all } x \in X,\text{ there is } \lambda > 0 \textnormal{ such that  } m +\lambda x \in M \},
	\end{equation*}
	and	any strict convex combination of a point in $M$ and $\mathrm{aint}_X M$  belongs to $\mathrm{aint}_X M$. To see the link to our setup, assume that $\cQ \subset L^\infty$. Set $M := \cQ$,  $V := L^\infty$ and $X := \{Z \in L^\infty : \mathbb{E}[Z] = 0\}$. Then $\mathrm{aint}_{X} M$ satisfies conditions POS, MIX and INT. Moreover, for certain examples (e.g.~Expected Shortfall), $\mathrm{aint}_{X} M \neq \emptyset$. Note, however, that if $\cQ \not\subset L^\infty$, it is not possible to define a nonempty set $\tilde \cQ$ satisfying Conditions POS, MIX and INT via the algebraic interior.
	
	(c) One might wonder if one could define $\tilde \cQ$ as the topological interior of $\mathcal{Q}$ in a suitable subspace topology of $\cD \cap V$, where $L^\infty  \subset V \subset L^1$ is a vector subspace. Again if $\cQ \subset  L^\infty$, for certain examples (e.g.~Expected Shortfall), the topological interior of $\mathcal{Q}$ in the subspace topology of $\cD \cap L^\infty$ is nonempty and satisfies Conditions POS, MIX and INT. However, if $\cQ \not\subset L^\infty$, this approach does not work since the topological interior may fail to satisfy Condition MIX (because $\mathcal{D} \cap V$ is not a vector space).
	
	(d) In light of Propositions \ref{prop:P in tilde Q rho strictly exp bdd} and \ref{prop:app:WC dual}, one could slightly relax Condition INT, by requiring that the sets $\cE$ are only $\sigma(L^\infty, L^1)$-dense in $\cD \cap L^\infty$. However, this additional level of generality does not seem to be useful in concrete examples. On the other hand, considering $L^\infty$-dense subsets of $\cD \cap L^\infty$ is useful; cf.~Section \ref{sec:spectral}.
\end{remark}

We proceed to characterise the maximal subset of $\cQ$ satisfying Conditions POS, MIX and INT. This is surprisingly simple and shows that we can expect $\tilde{\mathcal{Q}}_{\max}$ to be nonempty for most risk measures $\rho$.
\begin{proposition}
\label{prop:tQmax}
Define the set
$\tilde{\mathcal{Q}}_{\max}$ by
\begin{align*}
\tilde{\mathcal{Q}}_{\max} :=\{ \tilde Z > 0 \in \mathcal{Q} :  \,& \text{there is an $L^\infty$-dense subset $\cE$ of $\cD \cap L^\infty$ such that for all } Z \in \cE,\\
& \text{there is } \lambda \in (0,1)  \text{ such that } \lambda Z + (1-\lambda)\tilde Z \in \mathcal{Q} \}. \label{eq:prop:tQmax}
\end{align*}
Then $\tilde{\mathcal{Q}}_{\max}$ satisfies Conditions POS, MIX and INT. Moreover, 	if $\tilde \cQ \subset \cQ$ satisfies Conditions POS, MIX and INT, then $\tilde \cQ \subset \tilde \cQ_{\max}$.
\end{proposition}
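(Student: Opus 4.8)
The plan is to verify Conditions POS, MIX and INT for $\tilde{\mathcal{Q}}_{\max}$ straight from its definition, and then to dispatch the maximality claim; the only step that requires any work is Condition MIX.

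Conditions POS and INT, together with maximality, are essentially tautological. By construction every element of $\tilde{\mathcal{Q}}_{\max}$ is strictly positive $\P$-a.s., which is Condition POS, and the defining clause of $\tilde{\mathcal{Q}}_{\max}$ is word-for-word Condition INT. For maximality, let $\tilde{\mathcal{Q}} \subset \mathcal{Q}$ satisfy Conditions POS, MIX and INT and fix $\tilde Z \in \tilde{\mathcal{Q}}$. Then $\tilde Z > 0$ $\P$-a.s.\ by Condition POS, $\tilde Z \in \mathcal{Q}$ by hypothesis, and Condition INT supplies an $L^\infty$-dense subset $\cE$ of $\cD \cap L^\infty$ with exactly the property appearing in the definition of $\tilde{\mathcal{Q}}_{\max}$; hence $\tilde Z \in \tilde{\mathcal{Q}}_{\max}$, and therefore $\tilde{\mathcal{Q}} \subset \tilde{\mathcal{Q}}_{\max}$.

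For Condition MIX, I would fix $Z \in \mathcal{Q}$, $\tilde Z \in \tilde{\mathcal{Q}}_{\max}$ and $\lambda \in (0,1)$, set $W := \lambda Z + (1-\lambda)\tilde Z$, and check the three defining requirements of $\tilde{\mathcal{Q}}_{\max}$ for $W$. Positivity is immediate: since $Z \geq 0$ and $\tilde Z > 0$ $\P$-a.s., we get $W \geq (1-\lambda)\tilde Z > 0$ $\P$-a.s.; and $W \in \mathcal{Q}$ by convexity of $\mathcal{Q}$. It remains to exhibit an $L^\infty$-dense subset of $\cD \cap L^\infty$ witnessing the INT-clause for $W$, and I would simply re-use the set $\cE$ associated with $\tilde Z$ in the definition of $\tilde{\mathcal{Q}}_{\max}$. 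Given $Z' \in \cE$, choose $\beta \in (0,1)$ with $\beta Z' + (1-\beta)\tilde Z \in \mathcal{Q}$ and put $\mu := \beta(1-\lambda)/(1-\beta\lambda)$. Then $\mu \in (0,1)$, and a coefficient comparison gives the identity
\[
\mu Z' + (1-\mu)W = \frac{\mu}{\beta}\bigl(\beta Z' + (1-\beta)\tilde Z\bigr) + (1-\mu)\lambda\, Z,
\]
where $\mu/\beta$ and $(1-\mu)\lambda$ are nonnegative and sum to $1$. The right-hand side is thus a convex combination of the two elements $\beta Z' + (1-\beta)\tilde Z$ and $Z$ of $\mathcal{Q}$, hence lies in $\mathcal{Q}$ by convexity. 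Since $\cE$ is $L^\infty$-dense in $\cD \cap L^\infty$, this establishes the INT-clause for $W$, so $W \in \tilde{\mathcal{Q}}_{\max}$, completing Condition MIX.

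The only place where there is anything to do is pinning down the weight $\mu = \beta(1-\lambda)/(1-\beta\lambda)$: it is forced by matching the coefficients of $Z'$ and of $\tilde Z$ on the two sides of the displayed identity, and one then checks $\mu \in (0,1)$ using $\beta, \lambda \in (0,1)$. I do not expect any genuine obstacle beyond this bit of bookkeeping.
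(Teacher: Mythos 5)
Your proposal is correct and follows essentially the same route as the paper: POS, INT and maximality are read off the definition, and MIX is handled by reusing the dense set $\cE$ attached to $\tilde Z$ and writing $\mu Z' + (1-\mu)W$ as a convex combination of $\beta Z' + (1-\beta)\tilde Z$ and $Z$ (your weight $\mu = \beta(1-\lambda)/(1-\beta\lambda)$ is exactly the paper's $\lambda'$ after renaming the parameters). The coefficient bookkeeping checks out, so there is nothing to add.
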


\begin{proof}
$\tilde \cQ_{\max}$ satisfies Conditions POS and INT by definition. To establish Condition MIX, let $Z \in \cQ$, $\tilde Z \in \tilde \cQ_{\max}$ and $\mu \in (0, 1)$. Clearly $\mu Z  + (1 -\mu) \tilde Z > 0$ $\as{\P}$ It remains to show that there exists an $L^\infty$-dense subset $\cE'$ of $\cD \cap L^\infty$ such that for all $Z' \in \cE'$, there is $\lambda' > 0$ such that $\lambda' Z' + (1 - \lambda') (\mu Z + (1 -\mu) \tilde Z) \in \cQ$. Let $\cE$ be the $L^\infty$-dense subset of $\cD \cap L^\infty$ for $\tilde Z$ in the definition of $\tilde \cQ_{\max}$. Set $\cE' := \cE$. Let $Z' \in \cE'$. Then there is $\lambda > 0$ such that $\lambda Z' + (1 -\lambda) \tilde Z\in \tilde \cQ_{\max} \subset \cQ$. Set $\mu' := \tfrac{1 -\mu}{1 - \mu \lambda} \in (0,1)$ and $\lambda' := \lambda \mu' \in (0, 1)$. Then by convexity of $\cQ$,
\begin{align*}
\mu' \big(\lambda Z' + (1 -\lambda) \tilde Z\big) + (1 - \mu') Z  = \lambda' Z' + (1 - \lambda') (\mu Z + (1 -\mu) \tilde Z) \in \cQ.
\end{align*}
The additional claim follows immediately from the definition of $\tilde \cQ_{\max}$.
\end{proof}

\begin{remark}
	\label{rem:Qmax}
(a) If $\cQ' \subset \cQ$ are dual sets representing $\rho$, then $\tilde{\mathcal{Q}'}_{\max} \subset \tilde \cQ_{\max}$.

(b) While Proposition \ref{prop:tQmax} is insightful from a theoretical perspective, it is very difficult in practise to compute $\tilde{\mathcal{Q}}_{\max}$. For this reason, it is often easier to find a nonempty subset $\tilde \cQ \in \cQ$ satisfying Conditions POS, MIX and INT directly. This is the approach that we take in virtually all of the examples in Section \ref{sec:Applications}. Since Condition MIX is easier to satisfy if $\cQ$ is smaller, one sometimes might even first have to find a smaller representing dual set $\cQ' \subset \cQ$ for $\rho$ and then a nonempty subset $\tilde \cQ' \in \cQ'$ satisfying Conditions POS, MIX and INT;  see Section \ref{sec:spectral} for a concrete example.
	\end{remark}

We finish this section by explaining the role of Conditions POS and INT for establishing existence of $\rho$-optimal portfolios.

\begin{proposition}
	\label{prop:P in tilde Q rho strictly exp bdd}
Suppose Condition I is satisfied.	Let $\tilde \cQ \subset \cQ$ satisfy Conditions POS and INT.\footnote{Note that $\tilde \cQ$ does not need to satisfy Condition MIX.} If $1\in \tilde{\mathcal{Q}}$, then $\rho$ is strictly expectation bounded and $\Pi^\rho_0 = \{\0\}$. If in addition $\rho_1 < \infty$,  then for all $\nu \geq 0$, $\Pi^\rho_\nu$ is nonempty, compact and convex.
\end{proposition}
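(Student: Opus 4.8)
The plan is to prove that $\rho$ is strictly expectation bounded first; once that is in hand, all remaining assertions follow from results already established. By Remark \ref{rem:exp bound}(d), I would reduce the strict-expectation-boundedness claim to showing that $\rho(X) > 0$ for an arbitrary non-constant $X \in L$ with $\mathbb{E}[X] = 0$.

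So fix such an $X$. Since $X$ is non-constant with zero mean, $\mathbb{P}[X < 0] > 0$, so I can pick $\epsilon > 0$ with $\mathbb{P}[A] > 0$, where $A := \{X < -\epsilon\}$, and consider the bounded density $Z_0 := \mathbf{1}_A/\mathbb{P}[A] \in \mathcal{D} \cap L^\infty$, for which $\mathbb{E}[-Z_0 X] = -\mathbb{E}[X\mathbf{1}_A]/\mathbb{P}[A] \geq \epsilon > 0$. Because $1 \in \tilde{\mathcal{Q}}$, Condition INT applied at $\tilde Z = 1$ supplies an $L^\infty$-dense subset $\mathcal{E}$ of $\mathcal{D} \cap L^\infty$ such that every $Z \in \mathcal{E}$ satisfies $\lambda Z + (1-\lambda) \in \mathcal{Q}$ for some $\lambda = \lambda_Z \in (0,1)$. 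As $X \in L \subset L^1$, for bounded $Z$ one has $|\mathbb{E}[-ZX] - \mathbb{E}[-Z_0 X]| \leq \lVert Z - Z_0 \rVert_\infty\,\mathbb{E}[|X|]$, so I can choose $Z \in \mathcal{E}$ close enough to $Z_0$ in $L^\infty$ to guarantee $\mathbb{E}[-ZX] > 0$. The element $Z' := \lambda_Z Z + (1-\lambda_Z) \in \mathcal{Q}$ is bounded, hence $Z'X \in L^1$ and no integrability convention is needed, and the dual representation \eqref{dual_char_of_rho} gives
\begin{equation*}
\rho(X) \;\geq\; \mathbb{E}[-Z'X] \;=\; \lambda_Z\,\mathbb{E}[-ZX] + (1-\lambda_Z)\,\mathbb{E}[-X] \;=\; \lambda_Z\,\mathbb{E}[-ZX] \;>\; 0.
\end{equation*}
This proves strict expectation boundedness, and then Corollary \ref{cor:exp bounded} yields $\Pi^\rho_0 = \{\mathbf{0}\}$.

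For the last part, suppose additionally $\rho_1 < \infty$. Expectation boundedness gives $\rho(X_\pi) \geq \mathbb{E}[-X_\pi] = -1$ for every $\pi \in \Pi_1$, so $\rho_1 \geq -1$ and hence $\rho_1 \in \mathbb{R}$; moreover Condition I together with Proposition \ref{prop:cond I} shows $\rho$ satisfies the Fatou property on $\mathcal{X} = \{X_\pi : \pi \in \mathbb{R}^d\}$. Thus all hypotheses of Theorem \ref{thm:Existence of optimal portfolios} are in force, so $\Pi^\rho_\nu$ is nonempty and compact for every $\nu \geq 0$, and convexity of $\Pi^\rho_\nu$ follows from convexity of the coherent risk measure $\rho$ by Remark \ref{rmk:existence of rho optimal portfolios}(c). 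I expect the only real obstacle to be the strict-expectation-boundedness step: the natural witness $Z_0$ need not lie in the dense set $\mathcal{E}$, so one is forced into the $L^\infty$-approximation and must check that it is compatible both with the possibly unbounded payoff $X \in L^1$ and with the convex mixing with $1$ that returns one to $\mathcal{Q}$.
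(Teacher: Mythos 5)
Your proof is correct, and the second half (deducing $\Pi^\rho_0=\{\0\}$ from Corollary \ref{cor:exp bounded}, getting $\rho_1\in\RR$ from $\rho_1\ge -1$, and invoking Proposition \ref{prop:cond I}, Theorem \ref{thm:Existence of optimal portfolios} and Remark \ref{rmk:existence of rho optimal portfolios}(c)) is exactly the paper's argument. Where you genuinely diverge is in establishing strict expectation boundedness. The paper delegates this to Lemma \ref{lemma: Z tilde and Z} applied with $\tilde Z=1$: that lemma argues by contradiction, showing that if $\mathbb{E}[-ZX]\le 0$ for all $Z\in\cQ$ then Condition INT forces $\sup_{Z\in\cE}\mathbb{E}[-ZX]\le 0$, which by Proposition \ref{prop:app:WC dual} means $\mathrm{WC}(X)\le 0$, i.e.\ $X\ge 0$ a.s., contradicting non-constancy plus zero mean. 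You instead give a direct construction: the explicit witness $Z_0=\1_{\{X<-\epsilon\}}/\P[X<-\epsilon]$ with $\mathbb{E}[-Z_0X]\ge\epsilon$, an $L^\infty$-approximation of $Z_0$ from the dense set $\cE$ (valid since $X\in L^1$ gives the estimate $|\mathbb{E}[(Z-Z_0)X]|\le\Vert Z-Z_0\Vert_\infty\,\mathbb{E}[|X|]$), and the mixture $\lambda Z+(1-\lambda)\cdot 1\in\cQ$, whose contribution from the constant part vanishes because $\mathbb{E}[X]=0$. In effect you have inlined and specialised the content of Lemma \ref{lemma: Z tilde and Z} and Proposition \ref{prop:app:WC dual} (whose proof builds exactly such indicator densities) into one self-contained, contradiction-free argument. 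Your version is more elementary and arguably more transparent for this particular statement; the paper's version buys reusability, since the lemma is stated for arbitrary $\tilde Z\in\tilde\cQ$ and is needed again in the proof of Theorem \ref{thm: no reg arb equivalence}, and since Proposition \ref{prop:app:WC dual} works under the weaker $\sigma(L^\infty,L^1)$-density contemplated in Remark \ref{rem:QL}(d).
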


\begin{proof}
Strict expectation boundedness of $\rho$ follows from Lemma \ref{lemma: Z tilde and Z} (with $\tilde Z = 1$) and Remark~\ref{rem:exp bound}(c). Corollary \ref{cor:exp bounded} then gives $\Pi^\rho_0 = \{\mathbf{0}\}$. Finally, if $\rho_1 < \infty$, it follows that $\rho_1 \in \RR$ since $\rho_1 \geq -1$ by expectation boundedness of $\rho$.  Now the remaining claim follows from Proposition \ref{prop:cond I}, Theorem \ref{thm:Existence of optimal portfolios}  and Remark \ref{rmk:existence of rho optimal portfolios}(c).
\end{proof}

\subsection{Dual characterisation of strong $\rho$-arbitrage}

In this section, we provide a dual characterisation of strong $\rho$-arbitrage in terms of absolutely continuous martingale measures (ACMMs) for the discounted risky assets $S/S^0$. To this end, set
\begin{equation}
\label{eq:cM}
\mathcal{M} = \{Z \in \mathcal{D} : \mathbb{E}[Z(R^{i}-r)] =0 \textnormal{ for all } i=1,\dots ,d \},
\end{equation}
and note that each $Z \in \cM$ is the Radon-Nikod{\'y}m derivative of an ACMM for $S/S^0$.

A first step towards a dual characterisation is the following equivalent characterisation of strong $\rho$-arbitrage.
\begin{proposition}
\label{prop:reformulation of strong reg arb}
The market $(S^0,S)$ satisfies strong $\rho$-arbitrage if and only if $\rho(X_\pi) < 0$ for some portfolio $\pi \in \mathbb{R}^d$.
\end{proposition}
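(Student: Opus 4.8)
The plan is to prove both directions separately, with the reverse direction being an essentially immediate consequence of Theorem~\ref{thm:strong reg arb first characterisation} and positive homogeneity, and the forward direction requiring a small argument to promote a ``weakly useful'' portfolio to a genuinely risk-negative one.

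First, suppose that there exists a portfolio $\pi \in \RR^d$ with $\rho(X_\pi) < 0$. Since $\rho(X_\0) = \rho(0) = 0$, we must have $\pi \neq \0$. I claim $\nu := \E[X_\pi] > 0$. Indeed, by expectation boundedness of $\rho$ (which holds throughout Section~\ref{section:dual characterisations}), $\E[-X_\pi] \leq \rho(X_\pi) < 0$, so $\E[X_\pi] > 0$. Now set $\pi' := \tfrac{1}{\nu}\pi \in \Pi_1$; by positive homogeneity, $\rho(X_{\pi'}) = \tfrac{1}{\nu}\rho(X_\pi) < 0$, so $\rho_1 \leq \rho(X_{\pi'}) < 0$. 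By Theorem~\ref{thm:strong reg arb first characterisation}, the market admits strong $\rho$-arbitrage.

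Conversely, suppose the market satisfies strong $\rho$-arbitrage. Then by Theorem~\ref{thm:strong reg arb first characterisation}, $\rho_1 < 0$, and by definition of $\rho_1$ as an infimum there is a portfolio $\pi \in \Pi_1$ with $\rho(X_\pi) < 0$, which is exactly a portfolio with negative risk. (Alternatively, one can argue directly from the definition of strong $\rho$-arbitrage: applying it to the riskless portfolio $\0$, which has $\E[X_\0] = 0$ and $\rho(X_\0) = 0$, yields a portfolio $\pi'$ with $\E[X_{\pi'}] > 0$ and $\rho(X_{\pi'}) < 0$.) Either way, the condition holds.

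The argument is short, so there is no serious obstacle; the only mild subtlety is that in the forward direction one needs expectation boundedness to rule out the degenerate possibility $\E[X_\pi] \le 0$ when normalising $\pi$ to lie in $\Pi_1$ — but since expectation boundedness is a standing assumption of this section (and in fact strong $\rho$-arbitrage via $\rho_1 < 0$ already forces the existence of a strictly positive-return portfolio with negative risk anyway), this is automatic. I would phrase the final writeup to lean on Theorem~\ref{thm:strong reg arb first characterisation} for both implications, since that keeps the proof to a couple of lines.
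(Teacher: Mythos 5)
Your proposal is correct and follows essentially the same route as the paper: both directions reduce to the sign of $\rho_1$ via Theorem \ref{thm:strong reg arb first characterisation}, with expectation boundedness (equivalently, $1 \in \cQ$) used to ensure a portfolio with $\rho(X_\pi) < 0$ has strictly positive expected excess return before normalising into $\Pi_1$. The only difference is cosmetic: the paper's converse direction goes straight from $\rho_1 < 0$ to the existence of a negative-risk portfolio without spelling out the infimum argument, which you do.
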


\begin{proof}
If the market admits strong $\rho$-arbitrage, then $\rho_1<0$ by Theorem \ref{thm:strong reg arb first characterisation}. Hence, $\rho(X_\pi)<0$ for some portfolio $\pi \in \mathbb{R}^d$.

Conversely, if $\rho(X_{\pi}) < 0$ for some portfolio $\pi$, $\mathbb{E}[X_\pi] \geq -\rho(X_\pi) >0$ because $1\in \mathcal{Q}$. Thus, $\rho_1 < 0$, and the market satisfies strong $\rho$-arbitrage.
\end{proof}

\begin{remark}
	\label{rem:good deal}
The condition $\rho(X_\pi) < 0$ for some $\pi \in \RR^d$ is referred to as a \emph{Good Deal} in the literature, see e.g.~\cite{cherny2008pricing}.  Note, however, that the equivalence of Proposition \ref{prop:reformulation of strong reg arb} crucially relies on $\rho$ being expectation bounded (via $1 \in \cQ$) since otherwise a portfolio with negative risk may have a negative expected excess return.\footnote{Note that Proposition \ref{prop:reformulation of strong reg arb} holds more generally for positively homogeneous (not necessarily convex) risk measures that are expectation bounded.
Also note that assuming that $\rho$ is expectation bounded is a real restriction as it is not satisfied by Value at Risk.}
\end{remark}
	
Our next result shows that if $\mathcal{Q}$ contains an ACMM, the market does not admit strong $\rho$-arbitrage.  

\begin{proposition}
\label{Q intersect M nonempty}
If $\mathcal{Q} \cap \mathcal{M} \neq \emptyset$, then the market $(S^0,S)$ does not admit strong $\rho$-arbitrage.
\end{proposition}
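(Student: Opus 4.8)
The plan is to use the primal characterisation from Proposition~\ref{prop:reformulation of strong reg arb}: the market admits strong $\rho$-arbitrage if and only if $\rho(X_\pi) < 0$ for some $\pi \in \RR^d$. So it suffices to show that if $\cQ \cap \cM \neq \emptyset$, then $\rho(X_\pi) \geq 0$ for every portfolio $\pi \in \RR^d$. First I would pick $Z^* \in \cQ \cap \cM$. Since $Z^* \in \cM$, we have $\E[Z^*(R^i - r)] = 0$ for all $i \in \{1,\dots,d\}$, and hence by linearity $\E[Z^* X_\pi] = \pi \cdot \E[Z^*(R - r\1)] = 0$ for every $\pi \in \RR^d$. (One small point to check: $Z^* X_\pi$ is genuinely integrable here, since $Z^* X_\pi = \sum_i \pi^i Z^*(R^i - r)$ is a finite linear combination of integrable random variables, so the conservative convention in Remark~\ref{rem:dual char}(a) does not bite.)

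Then, directly from the dual representation \eqref{dual_char_of_rho},
\begin{equation*}
\rho(X_\pi) = \sup_{Z \in \cQ}\,\E[-ZX_\pi] \geq \E[-Z^* X_\pi] = 0,
\end{equation*}
since $Z^* \in \cQ$. Thus $\rho(X_\pi) \geq 0$ for all $\pi$, so no portfolio has strictly negative risk, and by Proposition~\ref{prop:reformulation of strong reg arb} the market does not admit strong $\rho$-arbitrage.

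There is essentially no hard obstacle here; the only thing requiring a moment of care is the integrability/convention issue just mentioned, i.e.\ making sure that $\E[-Z^*X_\pi]$ really equals $0$ and is not assigned the value $+\infty$ by the conservative convention. Since each $R^i \in L^1(\P)$ and $Z^* \in \cM \subset \cD$ with $\E[Z^*(R^i-r)]=0$ finite, the negative part $Z^*X_\pi^-$ has finite expectation, so the convention is not invoked and the computation $\E[-Z^*X_\pi]=0$ is legitimate. Alternatively, one can bypass Proposition~\ref{prop:reformulation of strong reg arb} and argue via $\rho_1$ directly: the above shows $\rho(X_\pi) \geq 0$ for all $\pi \in \Pi_1$, hence $\rho_1 \geq 0$, and then Theorem~\ref{thm:strong reg arb first characterisation} gives the conclusion.
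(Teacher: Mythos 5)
Your proof is correct and follows essentially the same route as the paper: pick $Z^* \in \cQ \cap \cM$, observe $\E[-Z^*X_\pi]=0$ for every $\pi$, deduce $\rho(X_\pi)\geq 0$ from the dual representation, and conclude via Proposition~\ref{prop:reformulation of strong reg arb}. The extra care you take with the integrability convention of Remark~\ref{rem:dual char}(a) is a sensible addition but does not change the argument.
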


\begin{proof}
Let $Z \in \mathcal{Q} \cap \mathcal{M}$.  Then for any portfolio $\pi \in \mathbb{R}^d$,
\begin{equation*}
    \rho(X_\pi) \geq \mathbb{E}[-ZX_\pi] = 0.
\end{equation*}
Therefore, by Proposition \ref{prop:reformulation of strong reg arb} the market does not admit strong $\rho$-arbitrage.
\end{proof}

The converse of Proposition \ref{Q intersect M nonempty} is false.  Example \ref{counterexample cond 1a not enough} shows that even under Condition I, $\mathcal{Q} \cap \mathcal{M} = \emptyset$ is not enough to imply strong $\rho$-arbitrage.  However, under condition UI, the converse of Proposition \ref{Q intersect M nonempty} is essentially true.

\begin{theorem}
\label{thm: no strong reg arb}
Assume $\mathcal{Q}$ satisfies UI.  Denote by $\ol\cQ$ the $L^1$-closure of $\cQ$.  The following are equivalent:
\begin{enumerate}
    \item The market $(S^0,S)$ does not admit strong $\rho$-arbitrage.
    \item $\ol\cQ \cap \mathcal{M} \neq \emptyset$.
\end{enumerate}
\end{theorem}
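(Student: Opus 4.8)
The plan is to route both implications through the identity $\rho(X_\pi) = \max_{c \in C_{\overline{\cQ}}}(\pi \cdot c)$ of Proposition~\ref{prop:cond 1b} --- available precisely because $\cQ$ satisfies Condition~UI --- together with the reformulation ``strong $\rho$-arbitrage holds iff $\rho(X_\pi) < 0$ for some $\pi$'' from Proposition~\ref{prop:reformulation of strong reg arb}. As a preliminary observation, $\overline{\cQ} \subseteq \cD$: the set $\cD$ is $L^1$-closed, since both $Z \geq 0$ $\as{\P}$ and $\mathbb{E}[Z] = 1$ pass to $L^1$-limits. Hence the intersection $\overline{\cQ} \cap \cM$ is well-defined and the statement makes sense.

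For the implication (b) $\Rightarrow$ (a): given $Z \in \overline{\cQ} \cap \cM$ and any portfolio $\pi \in \RR^d$, the vector $\mathbb{E}[-Z(R - r\mathbf{1})]$ lies in $C_{\overline{\cQ}}$ (which is $\RR^d$-valued by Proposition~\ref{prop:cond 1b}), so \eqref{relating Q bar to C} gives $\rho(X_\pi) \geq \pi \cdot \mathbb{E}[-Z(R - r\mathbf{1})] = -\sum_{i=1}^d \pi^i\,\mathbb{E}[Z(R^i - r)] = 0$, the last equality because $Z \in \cM$. Thus $\rho(X_\pi) \geq 0$ for every $\pi$, and Proposition~\ref{prop:reformulation of strong reg arb} gives (a). This is simply Proposition~\ref{Q intersect M nonempty} with $\overline{\cQ}$ in place of $\cQ$; the role of Condition~UI is exactly that \eqref{relating Q bar to C} promotes every element of the $L^1$-closure to an admissible dual element when evaluating $\rho$ on $\cX$.

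For the implication (a) $\Rightarrow$ (b): by Proposition~\ref{prop:reformulation of strong reg arb}, (a) means $\rho(X_\pi) \geq 0$ for all $\pi \in \RR^d$, i.e.\ $\max_{c \in C_{\overline{\cQ}}}(\pi \cdot c) \geq 0$ for all $\pi$. By Proposition~\ref{prop:cond 1b}, $C_{\overline{\cQ}}$ is a nonempty convex \emph{compact} subset of $\RR^d$. I claim $0 \in C_{\overline{\cQ}}$: otherwise strict separation of the point $0$ from the closed convex set $C_{\overline{\cQ}}$ yields $v \in \RR^d$ and $\alpha > 0$ with $v \cdot c \geq \alpha$ for all $c \in C_{\overline{\cQ}}$; taking $\pi := -v$ gives $\rho(X_\pi) = \max_{c \in C_{\overline{\cQ}}}(\pi \cdot c) = -\min_{c \in C_{\overline{\cQ}}}(v \cdot c) \leq -\alpha < 0$, contradicting (a). Hence there exists $Z \in \overline{\cQ}$ with $\mathbb{E}[-Z(R - r\mathbf{1})] = 0$, equivalently $\mathbb{E}[Z(R^i - r)] = 0$ for all $i$; since also $Z \in \overline{\cQ} \subseteq \cD$, this means $Z \in \cM$, so $\overline{\cQ} \cap \cM \neq \emptyset$.

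Nothing here is deep once Propositions~\ref{prop:reformulation of strong reg arb} and~\ref{prop:cond 1b} are available; the substantive content sits in those results (in particular the Dunford--Pettis/Eberlein--\v{S}mulian argument underpinning compactness of $C_{\overline{\cQ}}$). The two points needing care are: (i) the orientation of the separating hyperplane --- one separates $0$ from $C_{\overline{\cQ}}$ and then flips the sign of $\pi$, so that the conclusion is stated for $\max$ and hence matches $\rho(X_\pi)$; and (ii) recording that it is \emph{compactness}, not mere convexity, of $C_{\overline{\cQ}}$ that legitimises strict separation of a point from it --- and correspondingly that Condition~UI is indispensable: without it one is forced to work with $\cQ$ rather than $\overline{\cQ}$, and then the converse of Proposition~\ref{Q intersect M nonempty} genuinely fails, as Example~\ref{counterexample cond 1a not enough} shows.
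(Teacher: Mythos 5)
Your proof is correct and follows essentially the same route as the paper: both directions are routed through Proposition \ref{prop:cond 1b} and Proposition \ref{prop:reformulation of strong reg arb}, with the hard direction resting on strict separation of $\0$ from the compact convex set $C_{\ol\cQ}$ (the paper phrases this contrapositively, assuming $\ol\cQ\cap\cM=\emptyset$ so that $\{\0\}$ and $C_{\ol\cQ}$ are disjoint, but the separation argument and sign bookkeeping are identical to yours).
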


\begin{proof}
First, assume $\ol\cQ \cap \mathcal{M} \neq \emptyset$. Let $Z \in \ol\cQ \cap \mathcal{M}$.  Then Proposition \ref{prop:cond 1b} gives $\rho(X_\pi) \geq \mathbb{E}[-ZX_\pi] = 0$ for any $\pi \in \mathbb{R}^d$. Therefore, the market does not admit strong $\rho$-arbitrage by Proposition \ref{prop:reformulation of strong reg arb}.

Conversely, assume $\ol\cQ \cap \mathcal{M} = \emptyset$.  By Proposition \ref{prop:cond 1b}, $\{ \mathbf{0} \}$ and $C_{\ol\cQ}$ are two nonempty disjoint convex and compact subsets of $\mathbb{R}^d$.  By the strict separation theorem (cf.~\cite[Proposition B.14]{bertsekas1995nonlinear}), there exists $\pi \in \mathbb{R}^{d} \setminus \{\0\}$ with $\pi \cdot c < b < 0$ for all $c \in C_{\ol\cQ}$.  Thus,  Proposition \ref{prop:cond 1b} gives
\begin{equation*}
    \rho(X_\pi) = \max_{c \in C_{\ol\cQ}} (\pi \cdot c)  < 0,
\end{equation*}
and so the market admits strong $\rho$-arbitrage by Proposition \ref{prop:reformulation of strong reg arb}.
\end{proof}

\begin{remark}
(a) By virtue of  Proposition \ref{prop:reformulation of strong reg arb}, Theorem \ref{thm: no strong reg arb} is identical to Cherny's equivalent characterisation of No Good Deals in \cite[Theorem 3.1]{cherny2008pricing}.  However, our proof is simpler since we are working with a finite number of assets. We have included it for the convenience of the reader.

(b) Example \ref{counterexample cond 1a and UI not enough} shows that when $\mathcal{Q}$ is uniformly integrable but $R\mathcal{Q}$ is not, then Theorem \ref{thm: no strong reg arb}  is false.  Example \ref{counterexample strong reg arb Q not UI} shows that Theorem \ref{thm: no strong reg arb} is also false if $R\mathcal{Q}$ is uniformly integrable but $\mathcal{Q}$ is not.  Thus, we need both parts of Condition UI simultaneously.
\end{remark}

Characterising the absence of strong $\rho$-arbitrage is important.  However, it is not enough as the risk constraint also remains void if there is no portfolio with negative risk but a portfolio $\pi \in \Pi_{1}$ with zero risk. This is illustrated by the following example.

\begin{example}
\label{exa:difference between SRA and RA}
Consider a binomial model with one riskless asset and one risky asset with returns $r=0$ and $R$, respectively, where
$\mathbb{P}[R=1] = \mathbb{P}[R=0] = 1/2$.  Set $\pi := 2$. Then for $\alpha \in (0, 1)$,  $\Pi^{\ES^\alpha}_1 = \{\pi\}$ and 
\begin{equation*}
\ES^\alpha _1  = \textnormal{ES}^\alpha(X_{\pi}) = \begin{cases}
    0,&\text{if } \alpha \leq \tfrac{1}{2} \\ \tfrac{1-2\alpha}{\alpha}<0,&\text{if } \alpha > \tfrac{1}{2}.
    \end{cases}
\end{equation*}
Hence, by either Theorem \ref{thm:strong reg arb first characterisation} or Theorem \ref{thm: no strong reg arb}, the market does not admit strong $\ES^\alpha$-arbitrage if $\alpha \leq \tfrac{1}{2}$. However, by Theorem \ref{thm:reg arb:first characterisation}(a),  
the market does admit $\ES^\alpha$-arbitrage if $\alpha \leq \tfrac{1}{2}$. 
\end{example}

Therefore, to see the whole picture, it is important to also have a dual characterisation of $\rho$-arbitrage.

\subsection{Dual characterisation of $\rho$-arbitrage} 
In this section, we provide a dual characterisation of $\rho$-arbitrage in terms of equivalent martingale measures (EMMs) for the discounted risky assets $S/S^0$. To this end, set
\begin{equation*}
\mathcal{P}  = \{Z \in \mathcal{M} :  Z > 0 \ \mathbb{P}\textnormal{-a.s.} \}.
\end{equation*}
and note that each $Z \in \cP$ is the Radon-Nikod{\'y}m derivative of an EMM for $S/S^0$.

As we did for strong $\rho$-arbitrage, we start by providing an equivalent characterisation of $\rho$-arbitrage.  However, for $\rho$-arbitrage, we need to assume that $\0$ is the unique $\rho$-optimal portfolio.

\begin{proposition}
\label{prop:reformulation of reg arb}
Assume $\Pi^{\rho}_{0} = \{\mathbf{0}\}$.  Then the market $(S^0,S)$ satisfies $\rho$-arbitrage if and only if $\rho(X_{\pi}) \leq 0$ for some portfolio $\pi \in \mathbb{R}^d \setminus \{\mathbf{0}\}$.
\end{proposition}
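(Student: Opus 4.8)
The plan is to prove both implications using the primal characterisation of $\rho$-arbitrage from Theorem~\ref{thm:reg arb:first characterisation} together with the assumption $\Pi^{\rho}_0 = \{\mathbf{0}\}$, which in particular forces $\rho_0 = 0$ (by Proposition~\ref{prop:two cases depending on rho zero}). I would first observe that under $\Pi^{\rho}_0 = \{\mathbf{0}\}$, cases (a) and (b) of Theorem~\ref{thm:reg arb:first characterisation} simplify: in case (a) ($\Pi^\rho_1 \neq \emptyset$) we have $\rho$-arbitrage iff $\rho_1 \leq 0$, and in case (b) ($\Pi^\rho_1 = \emptyset$, which given $\rho_0=0$ means $\Pi^\rho_0 = \{\mathbf{0}\} \neq \emptyset$) we have $\rho$-arbitrage iff $\rho_1 < 0$. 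Case (c) is impossible here since $\Pi^\rho_0 \neq \emptyset$. So in all cases, the market admits $\rho$-arbitrage iff either $\rho_1 < 0$, or $\rho_1 = 0$ and $\Pi^\rho_1 \neq \emptyset$.

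For the ``only if'' direction: suppose the market admits $\rho$-arbitrage. If $\rho_1 < 0$, pick $\pi \in \Pi_1$ with $\rho(X_\pi) < 0 \leq 0$ and $\pi \neq \mathbf{0}$ (nonzero because $\mathbf{0} \in \Pi_0$, not $\Pi_1$). If $\rho_1 = 0$ and $\Pi^\rho_1 \neq \emptyset$, pick $\pi \in \Pi^\rho_1$; then $\rho(X_\pi) = \rho_1 = 0$ and $\pi \neq \mathbf{0}$ since $\mathbf{0} \notin \Pi_1$. Either way we obtain $\pi \in \RR^d \setminus \{\mathbf{0}\}$ with $\rho(X_\pi) \leq 0$. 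For the ``if'' direction: suppose $\rho(X_\pi) \leq 0$ for some $\pi \neq \mathbf{0}$. Let $\nu := \mathbb{E}[X_\pi]$. Since $\Pi^\rho_0 = \{\mathbf{0}\}$ and $\pi \neq \mathbf{0}$, we cannot have $\nu = 0$ (otherwise $\pi \in \Pi_0$ with $\rho(X_\pi) \leq 0 = \rho_0$ would put $\pi \in \Pi^\rho_0$, contradiction); and we cannot have $\nu < 0$ either — here I would use that $\Pi^\rho_0 = \{\mathbf{0}\}$ together with $\rho_0 = 0$ to rule it out, or more directly invoke expectation boundedness via Condition~I (every $Z \in \cQ$, in particular $1 \in \cQ$, gives $\rho(X_\pi) \geq \mathbb{E}[-X_\pi] = -\nu > 0$, contradicting $\rho(X_\pi) \leq 0$). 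So $\nu > 0$, and $\pi' := \tfrac{1}{\nu}\pi \in \Pi_1$ satisfies $\rho(X_{\pi'}) = \tfrac{1}{\nu}\rho(X_\pi) \leq 0$, hence $\rho_1 \leq 0$. If $\rho_1 < 0$ we are done by the reduction above. If $\rho_1 = 0$, then $\rho(X_{\pi'}) = 0 = \rho_1$, so $\pi' \in \Pi^\rho_1$, whence $\Pi^\rho_1 \neq \emptyset$, and again the reduction gives $\rho$-arbitrage.

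The main subtlety I anticipate is the handling of the case $\nu < 0$ in the ``if'' direction and, symmetrically, the bookkeeping of the $\rho_1 = 0$ versus $\rho_1 < 0$ dichotomy — this is exactly where the hypothesis $\Pi^{\rho}_0 = \{\mathbf{0}\}$ is essential and why it appears in the statement. Without it, a nonzero $\pi$ with $\rho(X_\pi) = 0$ and $\mathbb{E}[X_\pi] = 0$ could exist (i.e.\ $\Pi^\rho_0 \supsetneq \{\mathbf{0}\}$), and such a $\pi$ need not generate $\rho$-arbitrage, breaking the equivalence. A clean way to package the argument is to note that $\rho(X_\pi) \leq 0$ for some $\pi \neq \mathbf{0}$ is equivalent to ``$\rho_\nu \leq 0$ for some $\nu > 0$'' (using positive homogeneity $\rho_\nu = \nu \rho_1$ and $\Pi^\rho_0 = \{\mathbf{0}\}$ to exclude $\nu \leq 0$), which in turn says $\rho_1 \leq 0$, and then conclude via Theorem~\ref{thm:reg arb:first characterisation}(a)–(b); I would present it in roughly that order.
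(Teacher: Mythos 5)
Your proof is correct and follows essentially the same route as the paper: the ``if'' direction is identical (expectation boundedness via $1 \in \cQ$ forces $\mathbb{E}[X_\pi] \geq 0$, the hypothesis $\Pi^{\rho}_0 = \{\mathbf{0}\}$ upgrades this to $\mathbb{E}[X_\pi] > 0$, and Theorem \ref{thm:reg arb:first characterisation} closes the argument), while your ``only if'' direction detours through Theorem \ref{thm:reg arb:first characterisation} where the paper simply observes that, since $\mathbf{0}$ has zero risk and zero return but is not $\rho$-efficient, some $\pi \neq \mathbf{0}$ strictly $\rho$-preferred over it must satisfy $\rho(X_\pi) \leq 0$. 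One caution: your first suggested way of excluding $\nu < 0$ (from $\Pi^{\rho}_0 = \{\mathbf{0}\}$ and $\rho_0 = 0$ alone) does not go through without expectation boundedness, so the fallback you give --- $1 \in \cQ$ yields $\rho(X_\pi) \geq \mathbb{E}[-X_\pi] = -\nu > 0$ --- is the argument to keep, and it is exactly the paper's.
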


\begin{proof}
First assume the market satisfies $\rho$-arbitrage. As the riskless portfolio $\mathbf{0}$ has zero risk, by definition of $\rho$-arbitrage there must be another portfolio $\pi \in \mathbb{R}^d \setminus \{\mathbf{0}\}$ with $\rho(X_{\pi}) \leq 0$.

Conversely, if $\rho(X_{\pi}) \leq 0$ for some portfolio $\pi \in \mathbb{R}^d \setminus \{ \mathbf{0} \}$, then $\mathbb{E}[X_\pi]\geq 0$ by expectation boundedness of $\rho$, which in turn gives $\mathbb{E}[X_\pi] > 0$ because $\Pi^{\rho}_{0} = \{\mathbf{0}\}$.  It follows that either $\rho_1 < 0$ (in which case $\Pi^\rho_1$ may or may not be empty) or $\rho_1 =0$ (in which case $\Pi^\rho_1 \neq \emptyset$).  In either case the market admits $\rho$-arbitrage by Theorem \ref{thm:reg arb:first characterisation}.
\end{proof}

We proceed to give a preliminary dual characterisation of $\rho$-arbitrage. Note that this characterisation does not rely on the set $\tilde \cQ_{\max}$ to be nonempty.

\begin{proposition}
\label{Q intersect M empty}
Assume $\Pi^{\rho}_0 = \{\mathbf{0}\}$ and $\mathcal{Q}$ satisfies Condition I.  If $\mathcal{Q} \cap \mathcal{M} = \emptyset$, then the market $(S^0,S)$ admits $\rho$-arbitrage.
\end{proposition}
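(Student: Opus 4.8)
The plan is to translate the dual hypothesis $\cQ \cap \cM = \emptyset$ into a separation statement in $\RR^d$ and then feed the resulting portfolio into Proposition~\ref{prop:reformulation of reg arb}. The two earlier results that do all the work are Proposition~\ref{prop:cond I} (which, under Condition~I, turns $\rho$ restricted to $\cX$ into a support function of a convex subset of $\RR^d$) and Proposition~\ref{prop:reformulation of reg arb} (which, under $\Pi^\rho_0 = \{\0\}$, reduces $\rho$-arbitrage to the existence of a nonzero portfolio with nonpositive risk).

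Concretely, I would first record that, since $1 \in \cQ$ and Condition~I holds, Proposition~\ref{prop:cond I} makes $C_\cQ = \{\E[-Z(R - r\1)] : Z \in \cQ\}$ a convex subset of $\RR^d$ with $\rho(X_\pi) = \sup_{c \in C_\cQ}(\pi \cdot c)$ for every $\pi \in \RR^d$; moreover $C_\cQ$ is nonempty because it contains $-(\mu - r\1)$, which is nonzero by nondegeneracy. Next I would observe that $Z \in \cQ \cap \cM$ holds precisely when $Z \in \cQ$ and $\E[-Z(R - r\1)] = \0$, so the hypothesis $\cQ \cap \cM = \emptyset$ is exactly the statement $\0 \notin C_\cQ$.

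The heart of the proof is then a separation step: $\{\0\}$ and $C_\cQ$ are two nonempty disjoint convex subsets of $\RR^d$, so there exists $p \in \RR^d \setminus \{\0\}$ with $0 = p \cdot \0 \leq p \cdot c$ for all $c \in C_\cQ$. Setting $\pi := -p \neq \0$, the support-function representation from Proposition~\ref{prop:cond I} gives $\rho(X_\pi) = \sup_{c \in C_\cQ}(\pi \cdot c) \leq 0$. Since $\Pi^\rho_0 = \{\0\}$ by assumption and $\pi \neq \0$, Proposition~\ref{prop:reformulation of reg arb} immediately yields that the market $(S^0, S)$ admits $\rho$-arbitrage.

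The one place that requires care is the separation step. Because $\cQ$, and hence $C_\cQ$, is not assumed $L^1$-closed, the point $\0$ may well lie in the closure of $C_\cQ$, so the strict separation theorem invoked in the proof of Theorem~\ref{thm: no strong reg arb} (which relied on compactness of $C_{\ol\cQ}$ under Condition~UI) is unavailable here. Instead one must use the weaker, purely finite-dimensional fact that any two disjoint nonempty convex subsets of $\RR^d$ can be separated by a hyperplane (no closedness needed, since disjointness already forces the relative interiors to be disjoint). Everything else in the argument is bookkeeping.
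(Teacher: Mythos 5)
Your proof is correct and follows essentially the same route as the paper: identify $\cQ\cap\cM=\emptyset$ with $\0\notin C_\cQ$, apply a (non-strict) finite-dimensional separation/supporting-hyperplane argument to the convex set $C_\cQ$ furnished by Proposition~\ref{prop:cond I}, and conclude via Proposition~\ref{prop:reformulation of reg arb}. Your remark that strict separation is unavailable here is exactly why the paper invokes the supporting hyperplane theorem rather than the strict separation theorem used in Theorem~\ref{thm: no strong reg arb}.
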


\begin{proof}
Condition I implies that the set $C_\cQ$ in \eqref{eq: C Q} is convex. If $\mathcal{Q} \cap \mathcal{M} = \emptyset$ then $\mathbf{0} \notin C_{\mathcal{Q}}$.  By the supporting hyperplane theorem (cf.~\cite[Proposition B.12]{bertsekas1995nonlinear}), there exists $\pi \in \mathbb{R}^{d} \setminus \{\0\}$ with $\pi \cdot c \leq 0$ for all $c \in C_{\mathcal{Q}}$.  By \eqref{relating Q to C},
\begin{equation*}
    \rho(X_\pi) = \sup_{c \in C_{\mathcal{Q}}}(\pi \cdot c)  \leq 0,
\end{equation*}
and the claim follows from Proposition \ref{prop:reformulation of reg arb}.
\end{proof}

We are now in a position to state and prove the main result of this paper, the dual characterisation of $\rho$-arbitrage.

\begin{theorem}
\label{thm: no reg arb equivalence}
Suppose $\mathcal{Q}$ satisfies Condition, $\Pi^{\rho}_0 = \{\mathbf{0}\}$, and $\tilde \cQ_{\max} \neq \emptyset$. Then the following are equivalent:
\begin{enumerate}
    \item The market $(S^0,S)$ does not admit $\rho$-arbitrage.
    
    \item $\tilde{\mathcal{Q}} \cap \cP \neq \emptyset$ for some $\emptyset \neq \tilde \cQ \subset \cQ$ satisfying Conditions POS, MIX and INT.
        \item $\tilde{\mathcal{Q}} \cap \cP \neq \emptyset$ for all $\emptyset \neq \tilde \cQ \subset \cQ$ satisfying Conditions POS, MIX and INT.
\end{enumerate}
\end{theorem}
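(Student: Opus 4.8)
The plan is to establish the cycle $(a)\Rightarrow(c)\Rightarrow(b)\Rightarrow(a)$. The implication $(c)\Rightarrow(b)$ is free: by Proposition~\ref{prop:tQmax} the set $\tilde\cQ_{\max}$ satisfies Conditions POS, MIX and INT, and it is nonempty by hypothesis, so specialising the ``for all'' statement in (c) to $\tilde\cQ=\tilde\cQ_{\max}$ yields (b).

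For $(a)\Rightarrow(c)$ I would first recast the absence of $\rho$-arbitrage as a geometric statement about the convex set $C_\cQ=\{\E[-Z(R-r\1)]:Z\in\cQ\}\subset\RR^d$ of Proposition~\ref{prop:cond I} (which is $\RR^d$-valued and convex by Condition~I). Since $\Pi^\rho_0=\{\0\}$, Proposition~\ref{prop:reformulation of reg arb} and \eqref{relating Q to C} show that the market does \emph{not} admit $\rho$-arbitrage precisely when $\sup_{c\in C_\cQ}\pi\cdot c>0$ for every $\pi\in\RR^d\setminus\{\0\}$. The next step is to observe that this is equivalent to $\0\in\mathrm{int}(C_\cQ)$: ``$\Leftarrow$'' follows from a ball around $\0$, while for ``$\Rightarrow$'' one argues by contraposition — if $C_\cQ$ has empty interior it lies in a hyperplane $\{x:\pi\cdot x=b\}$ and then $\pi$ or $-\pi$ violates the strict inequality, and if $C_\cQ$ has nonempty interior but $\0\notin\mathrm{int}(C_\cQ)$ the supporting hyperplane theorem applied at $\0\notin\mathrm{int}(\overline{C_\cQ})$ produces $\pi\neq\0$ with $\sup_{c\in C_\cQ}\pi\cdot c\le0$. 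Now fix any nonempty $\tilde\cQ\subset\cQ$ satisfying POS, MIX and INT, pick $\tilde Z_0\in\tilde\cQ$, and put $c_0:=\E[-\tilde Z_0(R-r\1)]$. If $c_0=\0$ then $\tilde Z_0\in\cM$ and, being strictly positive by POS, lies in $\cP\cap\tilde\cQ$. Otherwise $\0\in\mathrm{int}(C_\cQ)$ gives some $t>0$ with $-tc_0\in C_\cQ$, say $-tc_0=\E[-Z(R-r\1)]$ for $Z\in\cQ$; then $\tilde Z:=\tfrac{1}{1+t}Z+\tfrac{t}{1+t}\tilde Z_0$ belongs to $\tilde\cQ$ by MIX (the coefficient $\tfrac{1}{1+t}$ lies in $(0,1)$), is strictly positive by POS, and satisfies $\E[-\tilde Z(R-r\1)]=\tfrac{1}{1+t}(-tc_0)+\tfrac{t}{1+t}c_0=\0$, so $\tilde Z\in\cP\cap\tilde\cQ$. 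This proves (c).

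For $(b)\Rightarrow(a)$, take $\tilde Z\in\tilde\cQ\cap\cP$ for some nonempty $\tilde\cQ\subset\cQ$ satisfying POS, MIX and INT. Since $\tilde Z\in\cM$ we have $\E[-\tilde Z X_\pi]=\pi\cdot\E[-\tilde Z(R-r\1)]=0$, hence $\rho(X_\pi)\ge0$ for every $\pi$; by Proposition~\ref{prop:reformulation of reg arb} it suffices to strengthen this to $\rho(X_\pi)>0$ for every $\pi\neq\0$. Fix such a $\pi$. By nonredundancy $X_\pi$ is non-constant, and from $\E[\tilde Z X_\pi]=0$ with $\tilde Z>0$ (POS) it follows that $X_\pi$ is not $\ge0$ $\P$-a.s., so $\P[X_\pi<0]>0$; hence the normalised indicator $\hat Z$ of $\{X_\pi\le-\delta\}$ for a suitable small $\delta>0$ is a density in $\cD\cap L^\infty$ with $\E[-\hat Z X_\pi]\ge\delta>0$. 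Using Condition~INT for $\tilde Z$, pick $Z'\in\cE$ with $\|Z'-\hat Z\|_\infty$ small enough that $\E[-Z'X_\pi]>0$ still holds (the map $W\mapsto\E[-WX_\pi]$ is $L^\infty$-continuous since $X_\pi\in L^1$), and $\lambda\in(0,1)$ with $\lambda Z'+(1-\lambda)\tilde Z\in\cQ$. Then $\rho(X_\pi)\ge\E[-(\lambda Z'+(1-\lambda)\tilde Z)X_\pi]=\lambda\E[-Z'X_\pi]>0$. (This strict inequality is exactly the perturbation argument behind the strict-expectation-boundedness Lemma~\ref{lemma: Z tilde and Z}, which could be invoked instead.)

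I expect the main obstacle to be the step that turns ``no $\rho$-arbitrage'' into the usable statement $\0\in\mathrm{int}(C_\cQ)$ and the subsequent MIX-perturbation that must remain inside $\tilde\cQ$: one has to treat the degenerate case in which $C_\cQ$ sits in a proper hyperplane, and one has to verify that the convex-combination coefficient used to reach $\0$ genuinely lies in the open interval $(0,1)$ so that Condition~MIX applies. The remaining pieces — the density approximation in $(b)\Rightarrow(a)$ and the trivial $(c)\Rightarrow(b)$ — are routine once Propositions~\ref{prop:cond I}, \ref{prop:tQmax} and \ref{prop:reformulation of reg arb} are in hand.
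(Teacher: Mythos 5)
Your proof is correct and follows essentially the same route as the paper's: the same cycle of implications, the same reduction via Proposition \ref{prop:reformulation of reg arb} and the representation $\rho(X_\pi)=\sup_{c\in C_\cQ}(\pi\cdot c)$, the same supporting-hyperplane argument showing that absence of $\rho$-arbitrage forces $\0\in\mathrm{int}(C_\cQ)$ followed by the same MIX-based convex combination landing exactly at the origin, and the same INT-based perturbation (the content of Lemma \ref{lemma: Z tilde and Z}) for (b)$\Rightarrow$(a). The only differences are presentational: you run (a)$\Rightarrow$(c) directly rather than by contraposition, and you re-derive the strict-positivity lemma inline instead of citing it.
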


\begin{proof}
$(b) \implies (a)$. Let $\emptyset \neq \tilde \cQ \subset \cQ$ satisfying Conditions POS, MIX and INT and $\pi \in \RR^d \setminus \{\0\}$.  By Proposition \ref{prop:reformulation of reg arb}, we have to show that $\rho(X_\pi) > 0$. Let $\tilde{Z} \in \tilde{\mathcal{Q}} \cap \mathcal{P}$. Then $\mathbb{E}[-\tilde{Z}X_\pi]=0$. Since $X_\pi \neq 0$ by nonredundancy of the market, this implies that $X_\pi$ is a non-constant random variable. Now the claim follows from Lemma \ref{lemma: Z tilde and Z}.

$(a) \implies (c)$. We argue by contraposition. So assume that there exists $\emptyset \neq \tilde \cQ \subset \cQ$ satisfying Conditions POS, MIX and INT such that 
$\tilde{\mathcal{Q}} \cap \mathcal{P} = \emptyset$.  This implies that $\tilde{\mathcal{Q}} \cap \mathcal{M} = \emptyset$ by Condition POS. Refining the argument of Proposition \ref{Q intersect M empty}, it suffices to show that $\0$ is not in the interior of $C_\cQ$. Seeking a contradiction, assume that $\mathbf{0} \in C_{\mathcal{Q}}^\mathrm{o}$. Then there is $\epsilon > 0$ such that $B(\mathbf{0},\epsilon) \subset \cQ$, where $B(\mathbf{0},\epsilon)$ 
denotes the open ball of of radius $\epsilon>0$ around $\0$ with respect to some norm $\Vert \cdot \Vert$. Set
\begin{equation*}
    C_{\tilde{\mathcal{Q}}} := \{ \mathbb{E}[-Z(R-r\mathbf{1})] : Z \in \tilde{\mathcal{Q}} \} \subset C_{\mathcal{Q}} \subset \mathbb{R}^d.
\end{equation*}
Then  $C_{\tilde{\mathcal{Q}}}$ is convex by Remark \ref{rem:QL}(a) and does not contain the origin because $\tilde{\mathcal{Q}} \cap \mathcal{M} = \emptyset$. Hence, $B(\mathbf{0},\epsilon) \not\subset C_{\tilde{\mathcal{Q}}}$.  As $\tilde{\mathcal{Q}} \neq \emptyset$, there is $\mathbf{x} \in C_{\tilde{\mathcal{Q}}}$. Set $\mathbf{y} := - \epsilon/(2\lVert \mathbf{x} \rVert) \mathbf{x} \in B(\mathbf{0},\epsilon) $. Then $\lambda \mathbf{x} + (1-\lambda)\mathbf{y} = \mathbf{0}$ for $\lambda:= \epsilon/(2\lVert \mathbf{x} \rVert + \epsilon)$. Letting $Z_{\mathbf{x}} \in \tilde \cQ$ and $Z_{\mathbf{y}} \in \cQ$ denote Radon-Nikod{\'y}m derivatives corresponding to $\mathbf{x}$ and $\mathbf{y}$, respectively, it follows from definition of $\cM$ in \eqref{eq:cM} and Condition MIX that $\lambda Z_{\mathbf{x}} + (1-\lambda) Z_{\mathbf{y}} \in \tilde{\mathcal{Q}} \cap \mathcal{M}$,  in contradiction to $\tilde{\mathcal{Q}} \cap \mathcal{M} = \emptyset$.

$(c) \implies (b)$. This is trivial.
\end{proof}

\begin{remark}
(a) While $\tilde{\mathcal{Q}}_{\max} \neq \emptyset$ is the minimal theoretical condition for Theorem \ref{thm: no reg arb equivalence} to hold (see Example \ref{counterexample empty interior} for a counterexample if is not satisfied), it is difficult to check in practise since we rarely can compute $\tilde{\mathcal{Q}}_{\max}$; cf.~Remark \ref{rem:Qmax}(b). Instead, it is easier (and of course sufficient) to check that $\tilde{\mathcal{Q}} \neq \emptyset$ for some $\tilde \cQ \subset \cQ$ satisfying Conditions POS, MIX and INT. In all our examples, the latter is done by showing that $1\in \tilde{\mathcal{Q}}$, which by Proposition \ref{prop:P in tilde Q rho strictly exp bdd} also implies that $\Pi^{\rho}_0 = \{\mathbf{0}\}$.

(b) If we choose for $\rho$ the worst-case risk measure, we recover a \emph{refined version} of the fundamental theorem of asset pricing in a one-period model; see Theorem \ref{thm:FTAP} below for details. In this case, the proof is particularly simple. To the best of our knowledge, the argument (for the nontrivial direction) is  new, even simpler than any of the existing proofs (cf.~e.g.~\cite[Theorem 1.7]{follmerschied:2016}) and yields a much sharper result.

\end{remark}

\section{Examples}
\label{sec:Applications}
In this section, we apply our main results to various examples of risk measures. Recall that we have already investigated the case of elliptically distributed returns in Section \ref{sec:elliptical}. Here, we do not make any assumptions on the returns, other than our standing assumptions that returns are in $L^1$ (or in some Orlicz space $L^\Phi$) and that the market $(S^0,S)$ is nonredundant and nondegenerate.

\subsection{Worst-case risk measure}
\label{subsec:worst case risk}
We start our discussion by looking at the worst-case risk measure $\mathrm{WC}: L^1 \to (-\infty, \infty]$ given by $\textnormal{WC}(X) := \esssup(-X)$.  It is a coherent risk measure and admits a dual representation with maximal dual set $\cQ_{\rho} = \cD$. However, if the returns do not lie in $L^\infty$, Condition I is not satisfied. Therefore, we look for a smaller dual set, and it turns out that a good choice is
$\mathcal{Q} := \mathcal{D} \cap L^\infty$; see Proposition \ref{prop:app:WC dual}.  Using this $\cQ$, Condition I is always satisfied. By contrast, Condition UI is never satisfied unless $\Omega$ is finite. It is not difficult to check that $\tilde \cQ = \{Z \in \mathcal{D} \cap L^\infty: Z > 0 \; \as{\P}\}$ satisfies conditions POS, MIX and INT. However, it turns out that we get a stronger dual characterisation of $\mathrm{WC}$-arbitrage if we consider the set 
\begin{equation*}
\hat\cQ := \{Z \in \mathcal{D} \cap L^\infty: Z \geq\epsilon \; \as{\P} \text{ for some } \epsilon > 0\},
\end{equation*}
which also satisfies Conditions POS, MIX and INT. Since $1 \in \hat\cQ$, it follows from Proposition \ref{prop:P in tilde Q rho strictly exp bdd} that $\Pi^{\mathrm{WC}}_0 = \{\0\}$. Theorems \ref{thm: no reg arb equivalence} and \ref{thm: no strong reg arb} now give the following result.

\begin{corollary}
\label{cor:WC1}
The market $(S^0, S)$ does not admit \textnormal{WC}-arbitrage if and only if there is $Z \in \cP \cap L^\infty$ with $Z \geq \epsilon$ $\as{\P}$ for some $\epsilon > 0$. Moreover, if $\Omega$ is finite, the market $(S^0, S)$ does not admit strong \textnormal{WC}-arbitrage if and only if $\cM \neq \emptyset$.
\end{corollary}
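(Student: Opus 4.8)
The plan is to specialise Theorems \ref{thm: no reg arb equivalence} and \ref{thm: no strong reg arb} to $\rho = \mathrm{WC}$, using the dual set $\cQ = \cD \cap L^\infty$ (which represents $\mathrm{WC}$ by Proposition \ref{prop:app:WC dual}) and the ``interior'' $\hat\cQ$ introduced just above. First I would check that the hypotheses of Theorem \ref{thm: no reg arb equivalence} are in force. Condition I is immediate, since $R^i \in L^1$ and every $Z \in \cQ$ is bounded, so $ZR^i \in L^1$. The set $\hat\cQ$ satisfies Conditions POS, MIX and INT, as already noted before the corollary: POS is obvious; MIX holds because a strict convex combination of an element $\geq \epsilon$ and an element $\geq 0$ of $\cD \cap L^\infty$ is again bounded away from $0$ and lies in $\cD \cap L^\infty$; and INT holds with $\cE = \cD \cap L^\infty$ because $\cD \cap L^\infty$ is convex. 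Since $1 \in \hat\cQ$, Proposition \ref{prop:P in tilde Q rho strictly exp bdd} gives $\Pi^{\mathrm{WC}}_0 = \{\0\}$, and Proposition \ref{prop:tQmax} gives $\hat\cQ \subset \tilde\cQ_{\max}$, so $\tilde\cQ_{\max} \neq \emptyset$.

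Next I would invoke the equivalence $(a)\Leftrightarrow(b)$ of Theorem \ref{thm: no reg arb equivalence} with the admissible choice $\tilde\cQ = \hat\cQ$: absence of $\mathrm{WC}$-arbitrage is equivalent to $\hat\cQ \cap \cP \neq \emptyset$ (if the latter holds, $(b)$ holds, hence $(a)$; conversely $(a)$ forces $(c)$, which in particular says $\hat\cQ \cap \cP \neq \emptyset$). It then remains to unwind the intersection: $Z \in \hat\cQ \cap \cP$ means $Z \in \cD \cap L^\infty$ with $Z \geq \epsilon$ $\as{\P}$ for some $\epsilon > 0$, and in addition $Z \in \cM$ with $Z > 0$ $\as{\P}$. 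Since $Z \geq \epsilon > 0$ already forces positivity and $\cM \subset \cD$, this is exactly the statement that there exists $Z \in \cP \cap L^\infty$ with $Z \geq \epsilon$ $\as{\P}$ for some $\epsilon > 0$. This proves the first equivalence.

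For the ``moreover'' clause I would specialise to finite $\Omega$. Then $L^\infty = L^1$, so $\cQ = \cD \cap L^\infty = \cD$, which is $L^1$-closed, whence $\ol\cQ = \cD$. Condition UI is automatic, since on a finite probability space every family in $L^1$ — in particular $\cQ$ and each $R^i\cQ$ — is uniformly integrable. Theorem \ref{thm: no strong reg arb} then yields that the market admits no strong $\mathrm{WC}$-arbitrage if and only if $\ol\cQ \cap \cM = \cD \cap \cM \neq \emptyset$, which, since $\cM \subset \cD$, is the same as $\cM \neq \emptyset$.

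I do not expect a genuine obstacle: the corollary is essentially an assembly of earlier results. The only steps needing care are (i) checking that all three hypotheses of Theorem \ref{thm: no reg arb equivalence} (Condition I, $\Pi^{\mathrm{WC}}_0 = \{\0\}$, $\tilde\cQ_{\max} \neq \emptyset$) hold for this particular $\cQ$ and $\hat\cQ$, and (ii) the set-theoretic bookkeeping identifying $\hat\cQ \cap \cP$ with the ``bounded away from $0$'' condition (and $\ol\cQ \cap \cM$ with $\cM$ in the finite case). The sharper conclusion involving $\hat\cQ$ rather than $\{Z \in \cD \cap L^\infty : Z > 0 \ \as{\P}\}$ is available precisely because Theorem \ref{thm: no reg arb equivalence} permits any $\tilde\cQ$ satisfying Conditions POS, MIX and INT.
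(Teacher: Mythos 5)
Your proposal is correct and follows essentially the same route as the paper: choose the dual set $\cQ = \cD \cap L^\infty$ (which represents $\mathrm{WC}$ by Proposition \ref{prop:app:WC dual}), verify Condition I, use $\hat\cQ$ (with $1 \in \hat\cQ$, hence $\Pi^{\mathrm{WC}}_0 = \{\0\}$ by Proposition \ref{prop:P in tilde Q rho strictly exp bdd}) as the admissible ``interior'' in Theorem \ref{thm: no reg arb equivalence}, and apply Theorem \ref{thm: no strong reg arb} for finite $\Omega$ where Condition UI holds trivially. The verification details you supply for Conditions POS, MIX and INT of $\hat\cQ$ are ones the paper leaves as ``not difficult to check,'' and they are correct.
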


Combining Proposition \ref{prop:WC arbitrage equivalent to ordinary arbitrage} with Corollary \ref{cor:WC1} gives a \emph{refined} version of the one-period fundamental theorem of asset pricing for $L^1$-markets (with trivial initial information). The refinement is that we show the existence of an EMM with a \emph{positive lower bound}.
\begin{theorem}
	\label{thm:FTAP}
	Suppose that the market $(S^0,S)$ has finite first moments.\footnote{Note that if $(S^0,S)$ has finite first moments, we may assume without loss of generality that it is nonredundant, nondegenerate and satisfies $S^i_0  > 0$ for all $i \in \{1, \ldots, d\}$.}
	\begin{enumerate}
		\item The market does not admit arbitrage of the first kind if and only if there exists $Z \in \cP \cap L^\infty$ with $Z \geq \epsilon$ $\as{\P}$ for some $\epsilon > 0$. 
		\item If $\Omega$ is finite, the market does not admit arbitrage of the second kind if and only if $\cM \neq \emptyset$.
	\end{enumerate}
\end{theorem}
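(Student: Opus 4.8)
The plan is to obtain Theorem \ref{thm:FTAP} as a direct corollary of the two $\textnormal{WC}$-specific results already established: the identification of (strong) $\textnormal{WC}$-arbitrage with arbitrage of the first (second) kind in Proposition \ref{prop:WC arbitrage equivalent to ordinary arbitrage}, and the dual characterisation of the absence of (strong) $\textnormal{WC}$-arbitrage in Corollary \ref{cor:WC1}. The one preliminary point that requires care is that both of these are formulated under our standing assumptions (nonredundancy, nondegeneracy, $S^i_0 > 0$), whereas Theorem \ref{thm:FTAP} only assumes finite first moments. So I would first reduce to the standing-assumptions setting: redundant assets can be deleted without affecting the existence of arbitrages or the sets $\cM$ and $\cP$; an asset with $S^i_0 \le 0$ is handled by a routine sign flip (or is trivial), as indicated in the footnote to the statement; and if the market is degenerate, i.e.\ $\mu = r\1$, then $1 \in \cP \cap L^\infty$ and one checks directly (exactly as in the proof of Proposition \ref{prop:WC arbitrage equivalent to ordinary arbitrage}) that no arbitrage of the first kind exists, so both sides of (a) hold trivially; hence in (a) we may assume nondegeneracy.

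Once we are in the standing-assumptions setting, part (a) follows by immediate chaining: by Proposition \ref{prop:WC arbitrage equivalent to ordinary arbitrage} the market admits arbitrage of the first kind if and only if it admits $\textnormal{WC}$-arbitrage, and by Corollary \ref{cor:WC1} the market fails to admit $\textnormal{WC}$-arbitrage if and only if there is $Z \in \cP \cap L^\infty$ with $Z \ge \epsilon$ $\as{\P}$ for some $\epsilon > 0$. Part (b) is the same argument with ``first'' replaced by ``second'', ``$\textnormal{WC}$-arbitrage'' by ``strong $\textnormal{WC}$-arbitrage'', and invoking the second assertion of Corollary \ref{cor:WC1}, which requires $\Omega$ finite and yields the condition $\cM \ne \emptyset$. (Recall that Corollary \ref{cor:WC1} itself rests on Theorems \ref{thm: no reg arb equivalence} and \ref{thm: no strong reg arb} applied to $\textnormal{WC}$ with the dual set $\cQ = \cD \cap L^\infty$, for which Condition I holds, $\hat\cQ$ is a nonempty subset satisfying Conditions POS, MIX and INT, and $\Pi^{\textnormal{WC}}_0 = \{\0\}$ by Proposition \ref{prop:P in tilde Q rho strictly exp bdd}; all of this is already verified before the statement of Corollary \ref{cor:WC1}.)

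There is thus no real obstacle in this final step beyond the bookkeeping of the reduction above; the substance lies entirely in the earlier machinery. It is worth emphasising where the refinement over the classical one-period fundamental theorem comes from: a textbook argument only produces an EMM whose density is strictly positive $\as{\P}$, whereas here the abstract ``interior'' $\tilde\cQ$ of the dual set, instantiated for $\textnormal{WC}$ by $\hat\cQ = \{Z \in \cD \cap L^\infty : Z \ge \epsilon \ \as{\P} \text{ for some } \epsilon > 0\}$, forces the density to be uniformly bounded away from $0$. This is precisely the payoff of not insisting that the representing dual set $\cQ$ be $L^1$-closed or maximal and of formulating $\tilde\cQ$ via Conditions POS, MIX and INT rather than via a topological interior.
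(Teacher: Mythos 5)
Your proposal is correct and is essentially the paper's own proof: the paper obtains Theorem \ref{thm:FTAP} precisely by combining Proposition \ref{prop:WC arbitrage equivalent to ordinary arbitrage} with Corollary \ref{cor:WC1}, relegating the reduction to the standing assumptions to the footnote of the statement. Your additional remarks on the degenerate case and on where the uniform lower bound on the density comes from are accurate but not part of the paper's argument.
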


\begin{remark}
To the best of our knowledge, a simple proof for the existence of an EMM with positive lower bound for arbitrage-free $L^1$-markets (with trivial initial information) has not been given before. In fact, the only extant result that we are aware of that gives this lower bound for  $L^1$-markets is \cite[Corollary 2]{rokhlin:schachermayer:06}, which uses very heavy machinery from functional analysis.\footnote{Under stronger integrability conditions on the market, the result has also been established by \cite[Remark~7.5]{rasonyi:stettner:05}.} By contrast our proof is elementary and short, and might even be given in a classroom setting.
\end{remark}

\subsection{Value at Risk and Expected Shortfall}
\label{subsec:VaR and ES}
We have already introduced VaR and ES in Definition \ref{defn:VaR and ES}.  Since VaR has no dual representation, we cannot apply the results from Section \ref{section:dual characterisations}.  However, using the inequality $\textnormal{VaR}^\alpha(X) \leq \textnormal{ES}^\alpha(X)$ for $\alpha \in (0, 1)$ and $X \in L^1$, it follows that if there is (strong) $\ES^\alpha$-arbitrage, then there is (strong) $\VaR^\alpha$-arbitrage.

Unlike VaR, ES is coherent and admits for $\alpha \in (0,1)$ the following dual representation:\footnote{See e.g.~\cite[Theorem 4.47]{follmerschied:2016} (which extends to the case $X \in L^1$).}
\begin{equation}
\label{Eq:ES dual rep}
\textnormal{ES}^{\alpha}(X) = \sup_{Z \in \mathcal{Q^{\alpha}}} \mathbb{E}[-ZX]  = \max_{Z \in \mathcal{Q^{\alpha}}} \mathbb{E}[-ZX], \quad \textnormal{where} \quad
\mathcal{Q}^{\alpha} := 
\{ Z \in \mathcal{D} : \lVert Z \rVert_{\infty} \leq \tfrac{1}{\alpha} \}.
\end{equation}
This can be extended to include $\alpha \in \{0,1\}$, where $\mathcal{Q}^{0} := \mathcal{D} \cap L^\infty$ and $\mathcal{Q}^{1} :=\{1\}$ only ``contains'' the real-world measure $\P$.  Note that $\textnormal{ES}^{1}(X) = \mathbb{E}[-X]$;\footnote{For this risk measure, it is clear that the set of optimal portfolios for any $\nu \geq 0$ is given by $\Pi_\nu$, and $\textnormal{ES}_\nu = -\nu$.  Hence by Theorem \ref{thm:strong reg arb first characterisation}, the market admits strong $\rho$-arbitrage for $\textnormal{ES}^{1}$.}  $\textnormal{ES}^{0}$ corresponds to the worst-case risk measure considered in Section \ref{subsec:worst case risk}, where the supremum in \eqref{Eq:ES dual rep} is no longer attained.

For $\alpha \in (0,1)$,  Conditions I and UI are satisfied for $\ES^\alpha$ and $\cQ^\alpha$ is closed in $L^1$. Moreover, Proposition \ref{prop:ES Q tilde} shows that
\begin{equation}
\label{eq:Q tilde alpha for ES}
\tilde{\mathcal{Q}}^{\alpha}:=\{ Z \in \mathcal{D} : Z > 0 \ \mathbb{P}\textnormal{-a.s.~and } \lVert Z \rVert_{\infty} < \tfrac{1}{\alpha}  \}
\end{equation}
satisfies Conditions POS, MIX and INT. Note that $1 \in \tilde{\mathcal{Q}}^\alpha$. Using Proposition \ref{prop:P in tilde Q rho strictly exp bdd} together with Theorems \ref{thm: no strong reg arb} and \ref{thm: no reg arb equivalence},\footnote{Also note that $\ES^\alpha_1 < \infty$ because $\ES^\alpha$ is real-valued.} we arrive at the following complete description of mean-ES portfolio selection:
\begin{theorem}
	\label{thm:mean-ES portfolio selection}
	Fix $\alpha \in (0,1)$.  Then $\Pi^{\textnormal{ES}^\alpha}_\nu$ is nonempty, compact and convex for $\nu \geq 0$.  Moreover:
	\begin{enumerate}
		\item The market $(S^0,S)$ does not admit strong $\ES^\alpha$-arbitrage if and only if there exists $Z \in \mathcal{M}$ such that $\left\Vert Z \right\Vert_\infty \leq \frac{1}{\alpha}$. 
		\item The market $(S^0,S)$ does not admit $\ES^\alpha$-arbitrage if and only if there exists $Z \in \mathcal{P}$ such that $\left\Vert Z \right\Vert_\infty < \frac{1}{\alpha}$.
	\end{enumerate}
\end{theorem}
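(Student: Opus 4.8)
The plan is to verify that the risk measure $\ES^\alpha$ together with the dual set $\cQ^\alpha$ from \eqref{Eq:ES dual rep} meets all the hypotheses of the general results of Section~\ref{section:dual characterisations}, and then to read off the three assertions directly; essentially all of the work has already been done in the propositions and theorems preceding it.

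First I would establish the existence, compactness and convexity of $\Pi^{\ES^\alpha}_\nu$. Condition~I holds for $\cQ^\alpha$, and $1 \in \tilde\cQ^\alpha$, where by Proposition~\ref{prop:ES Q tilde} the set $\tilde\cQ^\alpha$ defined in \eqref{eq:Q tilde alpha for ES} satisfies Conditions POS and INT. Hence Proposition~\ref{prop:P in tilde Q rho strictly exp bdd} applies: it gives that $\ES^\alpha$ is strictly expectation bounded and $\Pi^{\ES^\alpha}_0 = \{\0\}$, and --- since $\ES^\alpha$ is real-valued, so that $\ES^\alpha_1 < \infty$ --- that $\Pi^{\ES^\alpha}_\nu$ is nonempty, compact and convex for every $\nu \geq 0$.

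Next, for part~(a), I would invoke the dual characterisation of strong $\rho$-arbitrage. Since Condition~UI holds for $\cQ^\alpha$ and $\cQ^\alpha$ is $L^1$-closed, $\ol{\cQ^\alpha} = \cQ^\alpha$, so Theorem~\ref{thm: no strong reg arb} yields that the market does not admit strong $\ES^\alpha$-arbitrage if and only if $\cQ^\alpha \cap \cM \neq \emptyset$; unwinding the definition of $\cQ^\alpha$, this is exactly the existence of $Z \in \cM$ with $\lVert Z \rVert_\infty \leq 1/\alpha$. For part~(b), I would use Theorem~\ref{thm: no reg arb equivalence}: the set $\tilde\cQ^\alpha \subset \cQ^\alpha$ is nonempty (it contains $1$) and satisfies Conditions POS, MIX and INT, so $\tilde\cQ_{\max} \neq \emptyset$ by Proposition~\ref{prop:tQmax}; together with Condition~I and $\Pi^{\ES^\alpha}_0 = \{\0\}$ established above, this puts us in the setting of that theorem. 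Applying it with the witness $\tilde\cQ = \tilde\cQ^\alpha$ (using $(a)\Rightarrow(c)$ in one direction and $(b)\Rightarrow(a)$ in the other), absence of $\ES^\alpha$-arbitrage is equivalent to $\tilde\cQ^\alpha \cap \cP \neq \emptyset$; since the positivity constraint defining $\tilde\cQ^\alpha$ is automatic on $\cP$, the latter reads as the existence of $Z \in \cP$ with $\lVert Z \rVert_\infty < 1/\alpha$, which is the claim.

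I do not expect a genuine obstacle here, as the substantive content sits in Proposition~\ref{prop:ES Q tilde} (checking POS, MIX, INT for $\tilde\cQ^\alpha$) and in Theorems~\ref{thm: no strong reg arb} and \ref{thm: no reg arb equivalence}. The only points requiring a little care are the bookkeeping of hypotheses --- confirming that $\cQ^\alpha$ is $L^1$-closed, so that no spurious limit points enter through $\ol{\cQ^\alpha}$ in part~(a), and that $\ES^\alpha$ is finite-valued, so that the condition $\rho_1 < \infty$ needed for the compactness statement in Proposition~\ref{prop:P in tilde Q rho strictly exp bdd} is in force.
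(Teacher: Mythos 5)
Your proposal is correct and follows essentially the same route as the paper: verify Conditions I and UI for $\cQ^\alpha$ together with its $L^1$-closedness, use Proposition \ref{prop:ES Q tilde} and Proposition \ref{prop:P in tilde Q rho strictly exp bdd} (with $1\in\tilde\cQ^\alpha$ and $\ES^\alpha$ real-valued) for the first assertion, and then apply Theorems \ref{thm: no strong reg arb} and \ref{thm: no reg arb equivalence} for parts (a) and (b). The bookkeeping points you flag (the $L^1$-closedness of $\cQ^\alpha$ and finiteness of $\ES^\alpha_1$) are exactly the ones the paper relies on.
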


\begin{remark}
	It straightforward to check that
	\begin{equation}
	\hat{\mathcal{Q}}^{\alpha} :=\{ Z \in \mathcal{D} : \textnormal{there exists $\epsilon > 0$ such that } Z \geq \epsilon \ \mathbb{P}\textnormal{-a.s.~and } \lVert Z \rVert_{\infty} < \tfrac{1}{\alpha}  \}
	\end{equation}
is nonempty and	satisfies Conditions POS, MIX  and INT. Thus, Theorem \ref{thm:mean-ES portfolio selection}(b) can be strengthened:~The market $(S^0,S)$ does not admit $\ES^\alpha$-arbitrage if and only if there exists $Z \in \mathcal{P}$ with $Z \geq \epsilon \ \mathbb{P}$-a.s.~for some $\epsilon > 0$ and  $\left\Vert Z \right\Vert_\infty < \frac{1}{\alpha}$.
\end{remark}

\subsection{Spectral risk measures}
\label{sec:spectral}
Spectral risk measures are mixtures of Expected Shortfall risk measures that were introduced by Acerbi in \cite{acerbi2002spectral}. Here, we follow the definition of Cherny \cite{cherny2006weighted}, who has studied their finer properties in great detail. For a probability measure $\mu$ on $([0, 1], \cB_{[0, 1]})$, the \emph{spectral risk measure} $\rho^\mu : L^1 \to (-\infty, \infty]$ with respect to $\mu$ is given by
\begin{equation*}
\rho^{\mu}(X) := \int_{[0,1]}\textnormal{ES}^{\alpha}(X) \, \mu(\textnormal{d}\alpha).
\end{equation*}

\begin{remark}
(a) If $\mu$ does not have an atom at $0$, we can define the  non-increasing function $\phi^\mu : [0, 1] \to \RR_+$ by $\phi^\mu(u) := \int_{[u, 1]}\frac{1}{\alpha} \,\mu (\diff \alpha)$ and write $\rho^\mu(X) :=  \int_{0}^1 \phi^\mu(u) \VaR^u(X) \dd u $. This is the original definition of Acerbi \cite{acerbi2002spectral}. Some explicit examples for the choice of $\mu$ (or more precisely $\phi^\mu$) are given in \cite{dowd2008spectral}.
	
(b) It was shown in \cite[Theorem 7]{kusuoka2001law} for the domain $L^\infty$ that on a standard probability space where $\mathbb{P}$ is non-atomic, spectral risk measures  coincide with law-invariant, comonotone, coherent risk measures that satisfy the Fatou property. It was then shown in \cite{jouni:al:06} that the Fatou property is automatically satisfied by law-invariant coherent risk measures. The result has then been generalised to $L^1$ by \cite[Theorem 2.45]{pflug:romisch:07}.
\end{remark}
Spectral risk measures admit a dual representation. It follows from \cite[Theorem 4.4]{cherny2006weighted} that the maximal dual set $\cQ_{\rho^\mu}$ is $L^1$-closed and given by
\begin{equation*}
\mathcal{Q}_{\rho^\mu} = \bigg \{\int_{[0,1]}\zeta_{\alpha} \, \mu(\textnormal{d}\alpha) : \zeta_\alpha(\omega) \textnormal{ is jointly measurable and } \zeta_{\alpha} \in \mathcal{Q}^{\alpha} \textnormal{ for all } \alpha \in [0,1]   \bigg\},
\end{equation*}
where $\mathcal{Q}^{\alpha}$ is as in \eqref{Eq:ES dual rep}. Here, we are in a situation, were it is useful to consider a smaller dual set $\mathcal{Q}' \subset \mathcal{Q}_{\rho^\mu}$ so that we can explicitly construct a nonempty subset $\tilde \cQ' \subset \cQ'$ satisfying Conditions POS, MIX and INT. It turns out that a good choice is
\begin{align*}
\mathcal{Q}_{\mu} = \bigg \{\int_{[0,1]}\zeta_{\alpha} \, \mu(\textnormal{d}\alpha) : \ &\zeta_\alpha(\omega) \textnormal{ is jointly measurable and there is $1 >\epsilon > 0$} \\  \textnormal{such that } &\zeta_{\alpha} \in \mathcal{Q}^{\alpha} \textnormal{ for } \alpha \in [0,1-\epsilon] \textnormal{ and } \zeta_\alpha \equiv 1 \textnormal{ for } \alpha \in (1-\epsilon,1]   \bigg\},
\end{align*}
which is an $L^1$-dense subset of  $\mathcal{Q}_{\rho^\mu}$. It is shown in Proposition \ref{prop:app:spectral}(a) that $\mathcal{Q}_{\mu}$ also represents $\rho^{\mu}$. If $\mu$ does not have an atom at $0$ and $\int_{(0, 1]}\frac{1}{\alpha} \,\mu (\diff \alpha) < \infty$, it follows that  $\mathcal{Q}_{\mu}$ (and $\mathcal{Q}_{\rho^\mu}$) is bounded in $L^\infty$. Hence, $\rho^{\mu}$ is real-valued and Condition I and UI are satisfied.

If $\mu$ does not have an atom at $1$, it follows from  Proposition \ref{prop:app:spectral}(b) that the set 
\begin{align*}
\tilde \cQ_{\mu} = \bigg \{\int_{[0,1)}\zeta_{\alpha} \, \mu(\textnormal{d}\alpha) : \ &\zeta_\alpha(\omega) \textnormal{ is jointly measurable and there is $0<\epsilon<1$ and $0 < \delta < \tfrac{\epsilon}{1-\epsilon}$} \\  \textnormal{such that } &\zeta_{\alpha} \in \tilde \cQ^{\alpha(1 + \delta)} \textnormal{ for } \alpha  \in [0,1-\epsilon] \textnormal{ and } \zeta_\alpha \equiv 1 \textnormal{ for } \alpha \in (1-\epsilon,1)   \bigg\},
\end{align*}
 where $\tilde \cQ^{\alpha(1+\delta)}$ is as in \eqref{eq:Q tilde alpha for ES}, satisfies Conditions POS, MIX and INT. Note that $1 \in \tilde \cQ_{\mu}$. Using Proposition \ref{prop:P in tilde Q rho strictly exp bdd} together with Theorems \ref{thm: no strong reg arb} and \ref{thm: no reg arb equivalence} we arrive at the following result:
 
 \begin{corollary}
Let $\mu$ be a probability measure on $([0, 1], \cB_{[0, 1]})$ such that $\mu (\{0\}) = 0$ and $\int_{(0, 1]}\frac{1}{\alpha} \,\mu(\diff \alpha)< \infty$. Then 
$\Pi^{\rho^\mu}_\nu$ is nonempty, compact and convex for $\nu \geq 0$.  Moreover:
 	\begin{enumerate}
 		\item The market $(S^0,S)$ does not admit strong $\rho^\mu$-arbitrage if and only if there exists $Z \in \mathcal{M}$ such that $Z = \int_{[0,1]}\zeta_{\alpha} \,\mu(\textnormal{d}\alpha)$, where $\zeta_\alpha(\omega)$ is jointly measurable and satisfies $\zeta_{\alpha} \in \cD$ and $\Vert \zeta_{\alpha} \Vert_\infty \leq \frac{1}{\alpha}$. 
 		\item If $\mu$ does not have an atom at $1$, the market $(S^0,S)$ does not admit $\rho^\mu$-arbitrage if and only if there exists $Z \in \mathcal{P}$, $0 < \epsilon <1$ and $0 < \delta < \tfrac{\epsilon}{1-\epsilon}$ such that $Z = \mu((1-\epsilon, 1)) + \int_{[0,1-\epsilon]}\zeta_{\alpha} \,\mu(\textnormal{d}\alpha)$, where $\zeta_\alpha(\omega)$ is jointly measurable and satisfies $\zeta_{\alpha} \in \cD$ and $\Vert \zeta_{\alpha} \Vert_\infty \leq \frac{1}{\alpha(1+\delta)}$ for $\alpha \in [0, 1-\epsilon]$.
 	\end{enumerate}
 \end{corollary}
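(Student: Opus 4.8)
The plan is to read the corollary off the structural results of Section~\ref{section:dual characterisations} applied to the explicit dual sets $\cQ_\mu$, $\cQ_{\rho^\mu}$ and $\tilde\cQ_\mu$ introduced just above the statement; the only substantive work is checking the hypotheses of those results and then translating the resulting \emph{abstract} intersection conditions into the \emph{concrete} parametrised ones. First I would collect the facts already established in the preceding discussion under $\mu(\{0\})=0$ and $\int_{(0,1]}\tfrac1\alpha\,\mu(\diff\alpha)<\infty$: the set $\cQ_\mu$ is bounded in $L^\infty$, represents $\rho^\mu$, and satisfies Conditions~I and~UI, and $\rho^\mu$ is real valued; by Proposition~\ref{prop:app:spectral}(a), $\cQ_\mu$ is $L^1$-dense in the $L^1$-closed maximal dual set $\cQ_{\rho^\mu}$, so $\ol{\cQ_\mu}=\cQ_{\rho^\mu}$; and, when in addition $\mu(\{1\})=0$, Proposition~\ref{prop:app:spectral}(b) gives that $\tilde\cQ_\mu$ satisfies Conditions~POS, MIX and~INT with $1\in\tilde\cQ_\mu$, whence $\tilde\cQ_{\max}\supseteq\tilde\cQ_\mu\neq\emptyset$ by Proposition~\ref{prop:tQmax}.

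For the ``moreover'' statement, I would argue that $\rho^\mu$ is strictly expectation bounded whenever $\mu\neq\delta_1$: for non-constant $X$ with $\E[X]=0$,
\[
\rho^\mu(X)=\int_{[0,1]}\ES^\alpha(X)\,\mu(\diff\alpha)\ \geq\ \int_{(0,1)}\ES^\alpha(X)\,\mu(\diff\alpha)\ >\ 0,
\]
since $\ES^\alpha(X)\geq\E[-X]=0$ for all $\alpha$, $\ES^\alpha$ is strictly expectation bounded for $\alpha\in(0,1)$ (Remark~\ref{rem:exp bound}(b)), and $\mu((0,1))>0$ (because $\mu(\{0\})=0$ and $\mu\neq\delta_1$). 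Hence $\Pi^{\rho^\mu}_0=\{\0\}$ by Corollary~\ref{cor:exp bounded}; since $\rho^\mu$ is real valued (so $\rho_1<\infty$, in fact $\rho_1\in\RR$ by expectation boundedness) and satisfies the Fatou property on $\cX$ by Proposition~\ref{prop:cond I}, Theorem~\ref{thm:Existence of optimal portfolios} together with Remark~\ref{rmk:existence of rho optimal portfolios}(c) (convexity, using that $\rho^\mu$ is coherent) yields that $\Pi^{\rho^\mu}_\nu$ is nonempty, compact and convex for every $\nu\geq0$. When $\mu(\{1\})=0$ this is also immediate from $1\in\tilde\cQ_\mu$ via Proposition~\ref{prop:P in tilde Q rho strictly exp bdd}; the degenerate case $\mu=\delta_1$, where $\rho^\mu=\E[-\,\cdot\,]$ and $\Pi^{\rho^\mu}_\nu=\Pi_\nu$ is unbounded for $d\geq2$, is tacitly excluded.

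For part~(a) I would apply Theorem~\ref{thm: no strong reg arb}, available since Condition~UI holds: the market admits no strong $\rho^\mu$-arbitrage iff $\ol{\cQ_\mu}\cap\cM=\cQ_{\rho^\mu}\cap\cM\neq\emptyset$. Substituting the explicit description of $\cQ_{\rho^\mu}$ --- elements $\int_{[0,1]}\zeta_\alpha\,\mu(\diff\alpha)$ with $\zeta_\alpha$ jointly measurable and $\zeta_\alpha\in\cQ^\alpha$, i.e.\ $\zeta_\alpha\in\cD$ with $\|\zeta_\alpha\|_\infty\leq\tfrac1\alpha$ --- and noting the endpoints are harmless ($\mu(\{0\})=0$; $\cQ^1=\{1\}$) gives exactly the stated criterion. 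For part~(b), now also using $\mu(\{1\})=0$, I would apply Theorem~\ref{thm: no reg arb equivalence}, whose hypotheses hold (Condition~I; $\Pi^{\rho^\mu}_0=\{\0\}$; $\tilde\cQ_{\max}\neq\emptyset$), with the admissible choice $\tilde\cQ=\tilde\cQ_\mu$: no $\rho^\mu$-arbitrage iff $\tilde\cQ_\mu\cap\cP\neq\emptyset$. Unwinding the definition of $\tilde\cQ_\mu$, any such $Z$ splits off the constant contribution $\mu((1-\epsilon,1))$ coming from the part where $\zeta_\alpha\equiv1$, so $Z=\mu((1-\epsilon,1))+\int_{[0,1-\epsilon]}\zeta_\alpha\,\mu(\diff\alpha)$ with $\zeta_\alpha\in\tilde\cQ^{\alpha(1+\delta)}$ on $[0,1-\epsilon]$; membership in $\cP$ forces $Z\in\cM$, while $Z>0$ $\P$-a.s.\ is automatic by Fubini since each $\zeta_\alpha>0$ $\P$-a.s.\ and $\mu([0,1))=1$.

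The main obstacle --- the only step beyond bookkeeping --- is making precise the equivalence between $\tilde\cQ_\mu\cap\cP\neq\emptyset$ and the weaker concrete condition in part~(b), in which $\zeta_\alpha$ is only required to lie in $\cD$ with $\|\zeta_\alpha\|_\infty\leq\tfrac1{\alpha(1+\delta)}$, rather than in $\tilde\cQ^{\alpha(1+\delta)}$, which also demands $\zeta_\alpha>0$ $\P$-a.s.\ and a strict bound. One inclusion is trivial; for the other one must, starting from a ``weak'' $Z\in\cP$ of the stated form, produce a genuine element of $\tilde\cQ_\mu\cap\cP$. This is delicate because the naive fix --- mixing $Z$ towards the real-world density $\1$ --- leaves $\cM$ (indeed $\1\notin\cM$ by nondegeneracy), so the strictification must be carried out \emph{within} the integrand: passing from $\delta$ to $\delta/2$ converts ``$\leq\tfrac1{\alpha(1+\delta)}$'' into ``$<\tfrac1{\alpha(1+\delta/2)}$'' uniformly on $[0,1-\epsilon]$ (with $0<\delta/2<\tfrac\epsilon{1-\epsilon}$ still), and strict positivity of the individual components must be arranged by redistributing a small amount of ``interior slack'' among the $\zeta_\alpha$ while keeping their $\mu$-average $Z$ --- hence membership in $\cP$ --- unchanged. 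I would devote most of the write-up to this redistribution step (alternatively, one can argue directly via Proposition~\ref{prop:reformulation of reg arb} and Lemma~\ref{lemma: Z tilde and Z} that the mere existence of a weak $Z$ already excludes $\rho^\mu$-arbitrage); everything else reduces to substitution into Theorems~\ref{thm: no strong reg arb} and~\ref{thm: no reg arb equivalence} and Proposition~\ref{prop:P in tilde Q rho strictly exp bdd}, plus careful tracking of the $\mu$-integral over $\{0\}$, $\{1\}$ and the atom $\mu(\{1\})$.
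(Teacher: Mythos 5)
Your proposal follows the same route as the paper: the paper's own ``proof'' is a single sentence assembling Proposition \ref{prop:app:spectral}, Proposition \ref{prop:P in tilde Q rho strictly exp bdd} and Theorems \ref{thm: no strong reg arb} and \ref{thm: no reg arb equivalence}, exactly the ingredients you list, and your verification of their hypotheses (Conditions I and UI from the $L^\infty$-bound on $\cQ_\mu$, $\ol{\cQ_\mu}=\cQ_{\rho^\mu}$ from $L^1$-density plus Cherny's closedness, $1\in\tilde\cQ_\mu$ giving $\Pi^{\rho^\mu}_0=\{\0\}$) is correct. You are in fact more careful than the paper on two points: the degenerate case $\mu=\delta_1$, where compactness of $\Pi^{\rho^\mu}_\nu$ genuinely fails for $d\geq 2$, and the mismatch in part (b) between the stated condition ($\zeta_\alpha\in\cD$ with $\Vert\zeta_\alpha\Vert_\infty\leq\frac{1}{\alpha(1+\delta)}$) and the condition $\tilde\cQ_\mu\cap\cP\neq\emptyset$ delivered by Theorem \ref{thm: no reg arb equivalence}, which additionally demands $\zeta_\alpha>0$ $\as{\P}$ and a strict norm bound; the paper elides this entirely.

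One caution on the step you rightly flag as the main obstacle: both of your sketched repairs need adjustment. The redistribution $\tilde\zeta_\alpha:=(1-\lambda)\zeta_\alpha+\lambda Z$ preserves the $\mu$-average and fixes positivity when $\mu((1-\epsilon,1))=0$, but when $\mu((1-\epsilon,1))=m>0$ any family with $\mu$-average $Z-m$ must vanish wherever $Z=m$, so one must instead borrow slack from the constant tail on some $(1-\epsilon,1-\epsilon']$ of positive $\mu$-mass -- a case distinction your write-up would need. And Lemma \ref{lemma: Z tilde and Z} cannot be invoked directly for the ``weak'' $Z$, since it requires membership in a set satisfying Conditions POS and INT, which is exactly what is in question; the direct route instead goes through the observation that $\rho^\mu(X_\pi)=\E[-ZX_\pi]=0$ forces $\mu$-a.e.\ $\zeta_\alpha$ to be a maximiser for $\ES^{\alpha(1+\delta)}(X_\pi)$ with $\ES^{\alpha(1+\delta)}(X_\pi)=\ES^{\alpha}(X_\pi)$, hence to vanish on $\{X_\pi>\essinf X_\pi\}$, contradicting $Z>0$ $\as{\P}$ Since the paper supplies no argument here at all, this is a refinement of, not a deviation from, its proof.
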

 
\subsection{Coherent risk measures on Orlicz spaces}
We proceed to discuss how our main results can be applied to the case where the returns lie in some Orlicz space $L^\Phi$ and $\rho$ is real-valued on $L^\Phi$.  Risk measures on Orlicz spaces/Orlicz hearts are well studied; see e.g.~\cite{cheridito2009risk, gao:al:19}. Not only do these spaces allow for the inclusion of \emph{unbounded} random variables, there is also an elegant duality theory.  For a brief overview of some key definition and results, see Appendix \ref{app:Orlicz}.  

We consider the following setup: Let $\Phi: [0, \infty) \to [0, \infty]$ be a Young function and $\rho: L^\Phi \to \RR$ a coherent risk measure that is expectation bounded. 

\medskip
We first consider $L^\Phi = L^\infty$, i.e., when $\Phi$ jumps to infinity, which is different from all other Orlicz spaces in that the corresponding Orlicz heart is the null space.  In this case, $\rho$ admits a dual representation if it satisfies the Fatou property (cf.~Theorem \ref{thm:dual:orlicz}(c)) and we have the following result.
\begin{corollary}
	\label{cor:L infty}
	Let $\rho:L^\infty \to \RR$ be an expectation bounded coherent risk measure on $L^\infty$ that satisfies the Fatou property.  Let $\cQ \subset \cQ_\rho$ be a convex subset with $1 \in \cQ$ and $\ol \cQ = \cQ_\rho$. Suppose that $R^i\in L^\infty$. If $\rho$ is strictly expectation bounded, then $\Pi^\rho_\nu$ is nonempty, compact and convex for all $\nu \geq 0$. Moreover:
	\begin{enumerate}
		\item If $\rho$ is continuous from below $($that is $\rho(X_n) \searrow \rho(X)$ whenever $X_n \nearrow X \ \mathbb{P}$-a.s.$)$, the market $(S^0,S)$ does not admit strong $\rho$-arbitrage if and only if $\cQ_\rho \cap \mathcal{M} \neq \emptyset$.
		\item If  there exists $\tilde{\mathcal{Q}} \subset \mathcal{Q}$ satisfying Conditions POS, MIX and INT with $1 \in \tilde \cQ$, then the market $(S^0,S)$ does not admit $\rho$-arbitrage if and only if $\tilde{\mathcal{Q}} \cap \mathcal{P} \neq \emptyset$.
	\end{enumerate}
\end{corollary}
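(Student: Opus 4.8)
The plan is to verify that the general dual results of Section~\ref{section:dual characterisations} apply with the dual set $\cQ$ (and, for part~(a), with its $L^1$-closure $\ol\cQ = \cQ_\rho$), and then simply quote them. First I would record the common reductions. Since $\rho$ satisfies the Fatou property, Theorem~\ref{thm:dual:orlicz}(c) yields the dual representation $\rho(X) = \sup_{Z\in\cQ_\rho}\E[-ZX]$ on $L^\infty$; as $\cQ$ is $L^1$-dense in $\cQ_\rho$ and $X\in L^\infty$ gives $|\E[-Z_nX]-\E[-ZX]|\le \|X\|_\infty\|Z_n-Z\|_1$, the supremum over $\cQ$ coincides with the supremum over $\cQ_\rho$, so $\cQ$ (which is convex and contains $1$) also represents $\rho$ and the standing assumptions of Section~\ref{section:dual characterisations} hold. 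Because $R^i\in L^\infty$ and every $Z\in\cQ_\rho\subset\cD$ lies in $L^1$, we have $ZR^i\in L^1$, so Condition~I holds for $\cQ$, and by Proposition~\ref{prop:cond I} $\rho$ then satisfies the Fatou property on $\cX$. Finally, $X_\pi=\pi\cdot(R-r\1)\in L^\infty$ for every $\pi$, so $\rho(X_\pi)\in\RR$, and together with expectation boundedness this gives $\rho_1\in[-1,\infty)\subset\RR$.

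Granting these, the first claim and part~(b) are quick. If $\rho$ is strictly expectation bounded, Corollary~\ref{cor:exp bounded} gives $\Pi^\rho_0=\{\0\}$; since moreover $\rho_1\in\RR$ and $\rho$ has the Fatou property on $\cX$, Theorem~\ref{thm:Existence of optimal portfolios} shows $\Pi^\rho_\nu$ is nonempty and compact for every $\nu\ge 0$, and it is convex by Remark~\ref{rmk:existence of rho optimal portfolios}(c) because $\rho$ is coherent, hence convex. For part~(b), $1\in\tilde\cQ$ together with Condition~I lets Proposition~\ref{prop:P in tilde Q rho strictly exp bdd} deliver that $\rho$ is strictly expectation bounded and $\Pi^\rho_0=\{\0\}$; moreover Proposition~\ref{prop:tQmax} gives $\tilde\cQ\subset\tilde\cQ_{\max}$, so $\tilde\cQ_{\max}\neq\emptyset$. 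Theorem~\ref{thm: no reg arb equivalence} then states that the market does not admit $\rho$-arbitrage iff $\tilde\cQ'\cap\cP\neq\emptyset$ for every (equivalently, for some) $\tilde\cQ'\subset\cQ$ satisfying Conditions POS, MIX and INT; specialising to $\tilde\cQ'=\tilde\cQ$ gives the claimed equivalence.

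For part~(a) the remaining point is Condition~UI. Continuity from below of $\rho$ is equivalent to $\sigma(L^1,L^\infty)$-compactness of the maximal dual set $\cQ_\rho$ (cf.~Theorem~\ref{thm:dual:orlicz}), which by the Dunford--Pettis theorem means $\cQ_\rho$ is uniformly integrable; since $R^i\in L^\infty$, multiplication by $R^i$ preserves uniform integrability, so $R^i\cQ_\rho$ is uniformly integrable as well. As $\cQ\subset\cQ_\rho$, Condition~UI therefore holds for $\cQ$, and Theorem~\ref{thm: no strong reg arb} yields that the market does not admit strong $\rho$-arbitrage iff $\ol\cQ\cap\cM\neq\emptyset$, i.e.~iff $\cQ_\rho\cap\cM\neq\emptyset$. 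I expect the only real obstacle to be the bookkeeping around the several dual sets: confirming that $\cQ$ (and not merely $\cQ_\rho$) represents $\rho$, that continuity from below upgrades to Condition~UI for $\cQ$ in part~(a), and that $\tilde\cQ_{\max}\neq\emptyset$ in part~(b); beyond that, everything is a direct appeal to Theorems~\ref{thm: no strong reg arb} and~\ref{thm: no reg arb equivalence}.
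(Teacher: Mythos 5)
Your proposal is correct and follows essentially the same route as the paper: verify Conditions I (trivially, since $R^i\in L^\infty$) and UI (via the equivalence between continuity from below and uniform integrability of the dual set), note $\ol\cQ=\cQ_\rho$, and then invoke Theorem~\ref{thm:Existence of optimal portfolios}, Corollary~\ref{cor:exp bounded}, and Theorems~\ref{thm: no strong reg arb} and~\ref{thm: no reg arb equivalence}. Your additional bookkeeping (checking that $\cQ$ itself represents $\rho$ via $L^1$-density, and that $1\in\tilde\cQ$ yields $\Pi^\rho_0=\{\0\}$ and $\tilde\cQ_{\max}\neq\emptyset$ through Propositions~\ref{prop:P in tilde Q rho strictly exp bdd} and~\ref{prop:tQmax}) is exactly what the paper leaves implicit or delegates to Proposition~\ref{prop:orlicz dual}.
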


\begin{proof}
The first assertion follows from Theorem \ref{thm:Existence of optimal portfolios} and Corollary \ref{cor:exp bounded}.  Next, since $R^i \in L^\infty$, Condition UI is satisfied if and only if the dual set $\cQ$  is uniformly integrable, which by \cite[Corollary 4.35]{follmerschied:2016} is equivalent to $\rho$ being continuous from below. Since $\bar \cQ =\cQ_\rho$, part (a) follows from Proposition \ref{prop:orlicz dual} and Theorem \ref{thm: no strong reg arb}. Finally, Condition I is trivially satisfied and so part (b) follows from Theorem \ref{thm: no reg arb equivalence}.
\end{proof}

We now consider the case of Orlicz spaces for a finite Young function.  See Theorem \ref{thm:dual:orlicz} for conditions under which $\rho$ admits a dual representation.
\begin{corollary}
	\label{cor:Orlicz}
	Let $\Phi: [0, \infty) \to [0, \infty)$ be a finite Young function with conjugate $\Psi$ and $\rho: L^\Phi \to \RR$ an expectation bounded coherent risk measure that admits a dual representation.  Let $\cQ \subset \cQ_\rho$ be a convex subset with $1 \in \cQ$ and whose closure in $L^\Psi$ is $\cQ_\rho$. Suppose that $R^i\in L^\Phi$. If $\rho$ is strictly expectation bounded, then $\Pi^\rho_\nu$ is nonempty, compact and convex for all $\nu \geq 0$. Moreover:
	\begin{enumerate}
		\item If $R^i \in H^\Phi$, the market does not admit strong $\rho$-arbitrage if and only if $\ol \cQ_\rho \cap \mathcal{M} \neq \emptyset$.\footnote{Here $\ol \cQ_\rho$ denotes the closure of $\cQ_\rho$ in $L^1$.}
		\item If  there exists $\tilde{\mathcal{Q}} \subset \mathcal{Q}$ satisfying Conditions POS, MIX and INT with $1 \in \tilde \cQ$, then the market $(S^0,S)$ does not admit $\rho$-arbitrage if and only if $\tilde{\mathcal{Q}} \cap \mathcal{P} \neq \emptyset$.
	\end{enumerate}
\end{corollary}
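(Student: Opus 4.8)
The plan is to transfer the proof of Corollary~\ref{cor:L infty} to the Orlicz setting, replacing the $L^\infty$--$L^1$ duality by the $L^\Phi$--$L^\Psi$ duality of Appendix~\ref{app:Orlicz}. For the claim that $\Pi^\rho_\nu$ is nonempty, compact and convex for all $\nu \geq 0$, I first note that $R^i \in L^\Phi$ and $\cQ \subset \cQ_\rho \subset L^\Psi$ imply $ZR^i \in L^1$ for every $Z \in \cQ$ by the Orlicz--H\"older inequality, so Condition~I holds; hence $\rho$ satisfies the Fatou property on $\cX$ by Proposition~\ref{prop:cond I}. Strict expectation boundedness gives $\Pi^\rho_0 = \{\0\}$ via Corollary~\ref{cor:exp bounded}, while $\rho_1 \geq -1$ (expectation boundedness) and $\rho_1 < \infty$ ($\rho$ real-valued) give $\rho_1 \in \RR$. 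Theorem~\ref{thm:Existence of optimal portfolios} then yields nonemptiness and compactness of $\Pi^\rho_\nu$, and Remark~\ref{rmk:existence of rho optimal portfolios}(c) gives convexity since $\rho$ is convex.

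For part~(a) the crux is to check Condition~UI. Since $\rho$ is real-valued on all of $L^\Phi$, the maximal dual set $\cQ_\rho$ is norm-bounded in $L^\Psi$; because $R^i \in H^\Phi$, the characterisation of uniform integrability via the Orlicz heart (cf.~Appendix~\ref{app:Orlicz}) shows that $\cQ_\rho$ and $R^i\cQ_\rho$, and a fortiori $\cQ$ and $R^i\cQ$, are uniformly integrable, so Condition~UI holds. Moreover, by Proposition~\ref{prop:orlicz dual} all dual sets representing the real-valued $\rho$ have the same $L^1$-closure, and since $L^\Psi$ embeds continuously into $L^1$ the hypothesis on $\cQ$ forces $\ol\cQ = \ol\cQ_\rho$. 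Theorem~\ref{thm: no strong reg arb} then gives that the market does not admit strong $\rho$-arbitrage if and only if $\ol\cQ \cap \cM = \ol\cQ_\rho \cap \cM \neq \emptyset$, which is the claim.

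For part~(b), Condition~I holds as above, and the hypothesised $\tilde{\cQ} \subset \cQ$ is nonempty (it contains $1$), so $\tilde{\cQ}_{\max} \supset \tilde{\cQ} \neq \emptyset$ by Proposition~\ref{prop:tQmax} and $\Pi^\rho_0 = \{\0\}$ by Proposition~\ref{prop:P in tilde Q rho strictly exp bdd}. Thus all hypotheses of Theorem~\ref{thm: no reg arb equivalence} are met, and its equivalences, specialised to this particular $\tilde{\cQ}$ (using $(b)\Rightarrow(a)$ for one direction and $(a)\Rightarrow(c)$ for the other), give that the market does not admit $\rho$-arbitrage if and only if $\tilde{\cQ} \cap \cP \neq \emptyset$. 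The main obstacle is the verification of Condition~UI in part~(a): it is exactly here that one needs $R^i \in H^\Phi$ rather than merely $R^i \in L^\Phi$, since for returns in $L^\Phi \setminus H^\Phi$ Condition~UI may fail and the dual characterisation of strong $\rho$-arbitrage breaks down, cf.~Remark~\ref{remark:delta 2}.
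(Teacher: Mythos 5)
Your proposal is correct and follows essentially the same route as the paper's proof: Condition I via the generalised H\"older inequality, Condition UI via the Orlicz-heart result (Proposition \ref{prop:app:Orlicz}(b)), the identification $\ol\cQ = \ol\cQ_\rho$ via Proposition \ref{prop:orlicz dual}, and then Theorems \ref{thm: no strong reg arb} and \ref{thm: no reg arb equivalence} together with Theorem \ref{thm:Existence of optimal portfolios} and Corollary \ref{cor:exp bounded} for the first assertion. Your write-up is in fact somewhat more explicit than the paper's (e.g.\ in spelling out why the Fatou property on $\cX$ holds and why $\ol\cQ=\ol\cQ_\rho$), but there is no substantive difference in approach.
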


\begin{proof}
	The first assertion follows from Theorem \ref{thm:Existence of optimal portfolios} and Corollary \ref{cor:exp bounded}.  Next, since $R^i \in H^\Phi$, Condition UI is satisfied by Proposition \ref{prop:app:Orlicz}(b), and (a) follows from Proposition \ref{prop:orlicz dual} and Theorem \ref{thm: no strong reg arb}. Finally, Condition I follows from $R^i \in L^\Phi$ and the generalised Hölder inequality \eqref{eq:generalised holder}. Now part (b) follows from Theorem \ref{thm: no reg arb equivalence}.
\end{proof}

\begin{remark}
	\label{remark:delta 2}
If $\Phi$ does not satisfy the $\Delta_2$-condition and $R^i \in L^\Phi \setminus H^\Phi$ for some $i \in \{1, \ldots, d\}$, then it is in general not possible to provide a dual characterisation of strong $\rho$-arbitrage since condition UI is not satisfied.  The reason for this is that Proposition \ref{prop:app:Orlicz} does not extend to $L^\Phi$. However, we can often provide a dual characterisation of $\rho$-arbitrage since finding $\tilde{\mathcal{Q}} \subset \mathcal{Q}$ satisfying Conditions POS, MIX and INT with $1 \in \tilde \cQ$ is possible in many cases; cf.~Corollary \ref{cor:g ent}.
\end{remark}

\subsubsection{$g$-entropic risk measures}
We proceed to apply the above results to the class of $g$-entropic risk measures. The class of  $g$-entropic risk measures was introduced by Ahmadi-Javid \cite[Definition 5.1]{ahmadi2012entropic}. It is best understood when presented in the context of Orlicz spaces. Let $\Phi: [0, \infty) \to \RR$ be a finite superlinear Young function and $\Psi$ its conjugate. Let $g :[0, \infty) \to [0, \infty)$ be a convex function that dominates $\Psi$. For $\beta  > g(1)$, define the risk measure $\rho^{g, \beta} :L^\Phi \to \RR$ by\footnote{Note that our definition slightly differs from the definition in \cite{ahmadi2012entropic}, who considers the domain $L^\infty$ and assumes that $g$ is convex, $(-\infty, \infty]$-valued and satisfies $g(1)  =0$.}
\begin{equation*}
\rho^{g, \beta}(X) = \sup_{Z \in \mathcal{Q}^{g,\beta}} \mathbb{E}[-ZX], \quad \textnormal{where} \quad
\mathcal{Q}^{g,\beta} := 
\{ Z \in \mathcal{D} : \mathbb{E}[g(Z)] \leq \beta \}, 
\end{equation*}
and call it the \emph{$g$-entropic risk measure with divergence level $\beta$}. By convexity and nonnegativity of $g$ and the fact that $g$ dominates $\Psi$, it follows that $\mathcal{Q}^{g,\beta}$ is convex, $L^\Psi$-bounded and $L^1$-closed.\footnote{More precisely, $\Vert Z \Vert_\Psi \leq \max(1, \beta)$ for all $Z \in \cQ^{g,\beta}$ and $L^1$-closedness follow from Fatou's lemma.} By Proposition \ref{prop:orlicz dual}, we may deduce that  $\mathcal{Q}^{g,\beta} = \cQ_{\rho^{g, \beta}}$. Moreover, Proposition \ref{prop:Interior of Q g beta} shows that
\begin{align*}
\tilde{\mathcal{Q}}^{g,\beta} & := \{ Z \in \mathcal{D} : Z>0 \ \mathbb{P}\textnormal{-a.s.\ and } \mathbb{E}[g(Z)] < \beta \}
\end{align*}
satisfies Conditions POS, MIX and INT. Note that $1 \in \tilde \cQ^{g, \beta} \subset \cQ^{g, \beta}$. Applying Corollary \ref{cor:Orlicz}, we get the following result:

\begin{corollary}
	\label{cor:g ent}
Let $\Phi: [0, \infty) \to \RR$ be a superlinear finite Young function with conjugate $\Psi$, $g :[0, \infty) \to [0, \infty)$ a convex function that dominates $\Psi$ and $\beta  > g(1)$.  Suppose that $R^i \in L^\Phi$.
Then $\Pi^{\rho^{g, \beta}}_\nu$ is nonempty, compact and convex for all $\nu \geq 0$. Moreover:
\begin{enumerate}
	\item If $R^i\in H^\Phi$, the market $(S^0,S)$ does not admit strong $\rho^{g, \beta}$-arbitrage if and only if there is $Z \in \cM$ with $\mathbb{E}[g(Z)] \leq \beta$.
	\item The market $(S^0,S)$ does not admit $\rho^{g, \beta}$-arbitrage if and only if there is $Z \in \cP$ with $\mathbb{E}[g(Z)] < \beta$.
\end{enumerate}
	
\end{corollary}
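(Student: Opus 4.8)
The plan is to obtain every assertion from Corollary~\ref{cor:Orlicz}, applied with the dual set $\cQ := \cQ^{g,\beta}$ and with $\tilde\cQ := \tilde\cQ^{g,\beta}$ in the role of the ``interior'' set, supplemented by Proposition~\ref{prop:P in tilde Q rho strictly exp bdd} for the existence statement. The structural facts that this needs have already been assembled just before the statement: $\cQ^{g,\beta}$ is convex, $L^\Psi$-bounded and $L^1$-closed; $\cQ^{g,\beta} = \cQ_{\rho^{g,\beta}}$ by Proposition~\ref{prop:orlicz dual}; and $\tilde\cQ^{g,\beta}$ satisfies Conditions POS, MIX and INT by Proposition~\ref{prop:Interior of Q g beta}, with $1 \in \tilde\cQ^{g,\beta} \subset \cQ^{g,\beta}$ since $g(1) < \beta$.

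First I would dispatch the hypotheses of Corollary~\ref{cor:Orlicz}. Because $\Phi$ is a finite Young function, $L^\infty \subset L^\Phi$, and the generalised Hölder inequality together with $L^\Psi$-boundedness of $\cQ^{g,\beta}$ shows that $\sup_{Z \in \cQ^{g,\beta}} \E[-ZX]$ is finite for every $X \in L^\Phi$; hence $\rho^{g,\beta}$ is a real-valued coherent risk measure on $L^\Phi$, and it is expectation bounded since $1 \in \cQ^{g,\beta}$. As $\cQ^{g,\beta} = \cQ_{\rho^{g,\beta}}$, which is $\sigma(L^\Psi, L^\Phi)$-closed (being the maximal dual set), its closure in $L^\Psi$ equals $\cQ_{\rho^{g,\beta}}$, so $\cQ^{g,\beta}$ is an admissible choice of dual set. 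Condition I holds from $R^i \in L^\Phi$ and Hölder. Since $\tilde\cQ^{g,\beta}$ satisfies Conditions POS and INT and contains $1$, Proposition~\ref{prop:P in tilde Q rho strictly exp bdd} gives that $\rho^{g,\beta}$ is strictly expectation bounded with $\Pi^{\rho^{g,\beta}}_0 = \{\0\}$, and---using that $\rho^{g,\beta}$ is real-valued, so that $\rho_1 < \infty$---that $\Pi^{\rho^{g,\beta}}_\nu$ is nonempty, compact and convex for all $\nu \geq 0$.

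It then remains to rewrite the two dual criteria into the stated form. For (a), when $R^i \in H^\Phi$, Corollary~\ref{cor:Orlicz}(a) gives that strong $\rho^{g,\beta}$-arbitrage is absent iff $\ol\cQ_{\rho^{g,\beta}} \cap \cM \neq \emptyset$, where $\ol\cQ_{\rho^{g,\beta}}$ is the $L^1$-closure of $\cQ_{\rho^{g,\beta}}$; but $\cQ_{\rho^{g,\beta}} = \cQ^{g,\beta}$ is already $L^1$-closed, so $\ol\cQ_{\rho^{g,\beta}} = \{Z \in \cD : \E[g(Z)] \le \beta\}$ and the criterion becomes ``there is $Z \in \cM$ with $\E[g(Z)] \le \beta$''. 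For (b), Corollary~\ref{cor:Orlicz}(b) applied with $\tilde\cQ = \tilde\cQ^{g,\beta}$ (legitimate since $1 \in \tilde\cQ^{g,\beta}$) gives that $\rho^{g,\beta}$-arbitrage is absent iff $\tilde\cQ^{g,\beta} \cap \cP \neq \emptyset$; since every $Z \in \cP$ is automatically $>0$ $\as{\P}$, this intersection equals $\{Z \in \cP : \E[g(Z)] < \beta\}$, which is the asserted condition.

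The genuine mathematical content has been outsourced to the auxiliary results---principally Proposition~\ref{prop:Interior of Q g beta}, whose verification of Condition INT for $\tilde\cQ^{g,\beta}$ is the delicate step, and the $L^1$-closedness and $L^\Psi$-boundedness of $\cQ^{g,\beta}$ coming from Fatou's lemma and convexity of $g$. What is left here is essentially bookkeeping; the only points requiring a little care are checking that $\rho^{g,\beta}$ is real-valued on all of $L^\Phi$ (not merely on $\cX$) so that Corollary~\ref{cor:Orlicz} applies verbatim, and then simplifying the intersections $\ol\cQ_{\rho^{g,\beta}} \cap \cM$ and $\tilde\cQ^{g,\beta} \cap \cP$ using, respectively, the $L^1$-closedness of $\cQ^{g,\beta}$ and the built-in positivity of the elements of $\cP$.
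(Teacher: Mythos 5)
Your proposal is correct and follows exactly the paper's route: the paper also deduces the corollary by applying Corollary \ref{cor:Orlicz} with $\cQ = \cQ^{g,\beta} = \cQ_{\rho^{g,\beta}}$ and $\tilde\cQ = \tilde\cQ^{g,\beta}$, relying on the $L^\Psi$-boundedness and $L^1$-closedness of $\cQ^{g,\beta}$, Proposition \ref{prop:orlicz dual}, and Proposition \ref{prop:Interior of Q g beta}. Your simplifications of $\ol\cQ_{\rho^{g,\beta}} \cap \cM$ and $\tilde\cQ^{g,\beta} \cap \cP$ are exactly the bookkeeping the paper leaves implicit.
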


We finish this section, by providing two specific examples of $g$-entropic risk measures.

\subsubsection*{Transformed norm risk measure}
Let $p \in (1, \infty)$ and  $\alpha \in (0,1)$.\footnote{The case $p=1$ corresponds to Expected Shortfall, see Section \ref{subsec:VaR and ES}.}  Define the \emph{transformed $L^p$-norm risk measure}  with sensitivity parameter $\alpha$ as 
\begin{equation*}
\rho(X) := \min_{s \in \mathbb{R}} \{ \tfrac{1}{\alpha} \lVert(s-X)^{+} \rVert_{p} - s \}, \quad X \in L^p.
\end{equation*}
It is shown in \cite[Section 5.3]{cheridito2009risk}  that this is a real-valued coherent risk measure on $L^p$ and admits the dual representation with
\begin{equation*}
\label{eq:transformed norm risk measure*}
\cQ_{\rho}= \{ Z \in \mathcal{D} : \lVert Z \rVert_{q} \leq \tfrac{1}{\alpha} \},
\end{equation*}
where $q := p/(p-1)$. Hence $\rho = \rho^{g, \beta}$, where $\Phi(x) = x^p/p$, $\Psi(y) = y^q/q$, $g = \Psi$ and $\beta := (\frac{1}{\alpha})^q/q$.

\subsubsection*{Entropic value at risk}
The entropic value at risk (EVaR) was introduced in Ahmadi-Javid \cite{ahmadi2012entropic} and further studied in~\cite{ahmadi2017analytical}. Consider the Young function $\Phi(x)=\exp(x)-1$ and fix $\alpha \in (0,1)$.
Then the \emph{entropic value at risk at level $\alpha$} is a risk measure on $L^\Phi$ given by\footnote{Note that the parametrisation in \cite{ahmadi2017analytical} is different: $\alpha$ is replaced by $1-\alpha$ and $X$ by $-X$.}  
\begin{equation*}
\textnormal{EVaR}^\alpha(X) := \inf_{z>0} \left\{ \frac{1}{z} \log \left(\mathbb{E}\left[\frac{\exp(-zX)}{\alpha}\right]\right)\right\}.
\end{equation*}
It is shown in \cite[Section 4.4]{ahmadi2017analytical} that it admits a dual representation with dual set
\begin{equation*}
\mathcal{Q} := \{ Z \in \cD: \mathbb{E}[Z\log(Z)] \leq -\log(\alpha) \}.
\end{equation*}
Hence, $\textnormal{EVaR}^\alpha = \rho^{g, \beta}$, where $\Psi(y) = (y \log(y) - y + 1)\1_{\{y \geq 1\}}$, $g(y) = y \log(y) - y + 1$ and $\beta := -\log(\alpha)$.

\section{Conclusion and outlook}
\label{section:outlook}

It has been said (cf.~\cite{cherny2008pricing}) that there have been three major revolutions in finance:~the first one was Markowitz' mean-variance analysis \cite{Markowitz1952}, which led to the CAPM of Treynor, Sharpe, Lintner and Mossin; the second revolution was the Black-Scholes-Merton formula; and the third one was the theory of coherent risk measures developed by Artzner, Delbaen, Eber and Heath \cite{artzner1999coherent}.  In this paper, we have endeavoured to link the first and third revolution by substituting the variance in classical portfolio selection with a positively homogeneous risk measure $\rho$.

We have shown that under mild assumptions, $\rho$-optimal portfolios for a fixed return exist.  However, somewhat surprisingly, $\rho$-efficient portfolios may \emph{fail} to exist.  We referred to this situation as $\rho$-arbitrage.  The first aim of this paper has been to make regulators aware of this pitfall, which is a generalisation of arbitrage of the first kind.

The second aim of this paper has been to explain why this complication arises and how to avoid it.  The fundamental theorem of asset pricing states that the market does not satisfy arbitrage of the first kind (i.e., does not admit $\rho$-arbitrage for the the worst-case risk measure) if and only if $\mathcal{P} \cap L^\infty \neq \emptyset$.  Our main result, Theorem \ref{thm: no reg arb equivalence}, extends this.  We have shown that for coherent risk measures under mild assumptions on the dual set $\cQ$, the market does not admit $\rho$-arbitrage if and only if $\mathcal{P} \cap \tilde \cQ \neq \emptyset$ for some nonempty $\tilde \cQ \subset \cQ$ satisfying properties POS, MIX and INT. We have also demonstrated that $\tilde \cQ$ can be computed explicitly for a large variety of risk measures. Furthermore, we have shown that amongst markets that do not admit arbitrage (of the first kind), $\rho$-arbitrage cannot be excluded unless $\rho$ is the worst-case risk measure. Since a worst-case approach to risk is infeasible in practise, this shows that regulators cannot avoid the existence of (strong) $\rho$-arbitrage when imposing a positively homogeneous risk measure.

\medskip
Going back to the two questions posed in the introduction, we see that it is certainly possible for ES constraints to be ineffective. The root of this issue stems specifically from positive homogeneity.  Therefore, if there is an alternative superior risk measure, it \emph{cannot} be coherent. Naturally, this leads to the following questions regarding convex, but \emph{not} positively homogeneous, measures of risk:~Do optimal portfolios exist?~Does $\rho$-arbitrage occur?~If so, can we give a dual characterisation? We intend to return to these questions in a subsequent publication.

\appendix

\section{Counterexamples}
\label{app:counterexamples}

In this appendix, we give several counterexamples to complement the results  in Sections \ref{section:portfolio optimisation} and~\ref{section:dual characterisations}.

\begin{example}
\label{example:rho one not attained but finite}
In this example we show that if all assumptions of Theorem \ref{thm:Existence of optimal portfolios} hold, but $\{\mathbf{0}\} \subsetneq \Pi^\rho_0$, the result fails. 

Let $\Omega = [-5,5] \times [1,7] \subset \mathbb{R}^2$ with the Borel $\sigma$-algebra and the uniform probability measure $\mathbb{P}$.  Let $r=0$  and assume there are two risky assets with returns $R^i(\omega) := \omega_i$ for $\omega = (\omega_1,\omega_2) \in \Omega$ and $i \in \{1, 2\}$.  Let $C$ be the closed ball of radius 2 centred at $(2,4)$, and for each $(x,y) \in C$, let $C_{(x,y)}$ be the closed ball of radius~$1$ centred at $(x,y)$, and $Z_{(x,y)}$ the Radon-Nikod\'{y}m derivative of the uniform probability measure on $C_{(x,y)}$ with respect to $\P$. Define the risk measure $\rho$ via its dual set
\begin{equation*}
   \mathcal{Q} = \{Z_{(x,y)}: (x,y) \in C\},
\end{equation*}
and note that $\mathbb{E}[Z_{(x,y)} R^1] =x$ and $\mathbb{E}[Z_{(x,y)} R^2] =y$. For this financial market (that is nonredundant and nondegenerate), $\mathbb{E}[R^1]=0$, $\mathbb{E}[R^2]=4$, and $\Pi_1 = \{ (\pi^1,\pi^2) : \pi^1 \in \mathbb{R}, \ \pi^2=1/4 \}$.  Thus, for every $\pi \in \Pi_1$ and $(x,y) \in C$,
\begin{equation*}
    \mathbb{E}[-Z_{(x,y)}X_\pi] =  -\pi \cdot (x,y) = -(\pi^1 x + \tfrac{1}{4}y).
\end{equation*}
It follows that for any $\pi \in \Pi_1$,
\begin{equation*}
    \rho(X_\pi) = \sup_{(x,y) \in C} -(\pi^1 x + \tfrac{1}{4}y) = \tfrac{1}{2}\sqrt{16(\pi^1)^2 + 1} - 2\pi^1 - 1 =: g(\pi^1).
\end{equation*}
Therefore, $\rho_1 = \inf \{ \rho(X_\pi) : \pi \in \Pi_1 \} = \inf \{ g(\pi^1) : \pi^1 \in \mathbb{R} \}= -1$ is not attained, since $g$ is strictly decreasing.  Thus, $\Pi^\rho_1$ is empty, even though $\rho$ satisfies the Fatou property on $\{X_\pi:\pi \in \RR^d\}$ and $\rho_1 \in \mathbb{R}$.  The reason Theorem \ref{thm:Existence of optimal portfolios} fails is because $\Pi^\rho_{0} = \{(\pi^1,\pi^2) : \pi^1 \geq 0, \pi^2=0 \} \supsetneq \{\mathbf{0}\}$.  
\end{example}

For the rest of the counterexamples, we take $\Omega = [0,1]$, with the Borel $\sigma$-algebra and the Lebesgue measure $\mathbb{P}$. In each example, the financial market is nonredundant and the risky returns are integrable and nondegenerate.  Moreover, we always have $1 \in \mathcal{Q}$.

\begin{example}
\label{example: weird example}
In this example we show that when Condition I  is not satisfied, the set $C_\cQ$ from \eqref{eq: C Q} may fail to be a convex subset of $\RR^d$ and \eqref{relating Q to C} may break down.

Suppose $r=0$ and there are two risky assets with returns
\begin{equation*}
R^{1}(\omega) := 
\begin{cases}
\tfrac{3}{\sqrt{\omega}}, &\text{if } \omega < \tfrac{1}{16}, \\
-\tfrac{8}{15}, &\text{if } \omega \geq \tfrac{1}{16},
\end{cases}
\quad 
\textnormal{and}
\quad 
R^{2}(\omega) := 
\begin{cases}
-\tfrac{1}{\sqrt{\omega}}, &\text{if } \omega < \tfrac{1}{16}, \\
\tfrac{24}{15}, &\text{if } \omega \geq \tfrac{1}{16}.
\end{cases}
\end{equation*}
Let $\mathcal{Q} := \{ \lambda Z +(1-\lambda) : \lambda \in [0,1] \}$, where
\begin{equation*}
    Z(\omega) := 
\begin{cases}
\tfrac{2}{\sqrt{\omega}}, &\text{if } \omega < \tfrac{1}{16}, \\
0, &\text{if } \omega\geq \tfrac{1}{16}.
\end{cases}
\end{equation*}
Note that $\mathbb{E}[-R^1] = \mathbb{E}[-R^2] = -1$, $\mathbb{E}[-Z R^1] = -\infty$ and $\mathbb{E}[-Z R^{2}] = \infty$. Thus, 
\begin{equation*}
C_{\cQ} = \{(-1, 1), (-\infty, \infty)\},
\end{equation*}
which is neither convex nor a subset of $\RR^2$.

Moreover, the portfolio $\pi = (\tfrac{1}{4}, \frac{3}{4})$ satisfies $\rho(X_\pi) = \mathbb{E}[-Z X_\pi] = 0$ but
\begin{equation*}
    \sup_{c \in C_{\mathcal{Q}}} (\pi \cdot c) = \max \{ -1, -\infty \tfrac{1}{4} + \infty \tfrac{3}{4}\} \neq 0 = \rho(X_\pi),
\end{equation*}
and so \eqref{relating Q to C} does not hold.
\end{example}

\begin{example}
	\label{example:UI}
	In this example we show that when only Condition I is satisfied but Condition UI  is not, the set $C_{\ol \cQ}$ from \eqref{eq: C bar Q} may fail to be a subset of $\RR^d$, whence \eqref{relating Q bar to C} breaks down.
	
	Suppose $r \neq 0$ (so the market is nondegenerate) and there is one risky asset with return
	\begin{equation*}
		R(\omega) := 
		\begin{cases}
			\tfrac{1}{\sqrt{\omega}}, &\text{if } \omega \in (0, \tfrac{1}{2}) ,\\
			-\tfrac{1}{\sqrt{\omega-1/2}}, &\text{if }\omega \in (\tfrac{1}{2}, 1). \\
		\end{cases}
\end{equation*}
Note that $R(\omega) = -R(\omega+\tfrac{1}{2})$ for $\omega \in (0, \tfrac{1}{2})$. Let $\cQ =  \{Z \in \cD \cap L^\infty : Z(\omega) = Z(\omega+ \tfrac{1}{2}) \text{ for all } \omega \in (0, 1/2)\}$. Then Condition I is satisfied and $\mathbb{E}[- Z (R-r)] = r$ for all $Z \in \cQ$, whence $C_\cQ = \{r\}$. Moreover, $\ol \cQ = \{Z \in \cD : Z(\omega) = Z(\omega+ \tfrac{1}{2}) \text{ for all } \omega \in (0, 1/2)\}$ and $\tfrac{|R|}{\sqrt{8}} \in \ol \cQ$. Since $\mathbb{E}[R (R)^-]=+\infty$, it follows that $C_{\ol \cQ} = \{r,\infty\}$, which is neither convex, compact, nor a subset of $\RR$. Finally, for $\pi = 1$, $\sup_{c \in C_{\mathcal{Q}}} (\pi \cdot c) = r \neq \infty = \sup_{c \in C_{\ol \cQ}} (\pi \cdot c)$.
\end{example}

\begin{example}
\label{counterexample cond 1a not enough}
In this example we show the converse of Proposition \ref{Q intersect M empty} fails.

Let $r=0$ and assume there is one risky asset whose return $R^1$ is uniformly distributed on $[0,1]$.  Let $\rho$ be the worst-case risk measure, cf.~Section \ref{subsec:worst case risk}.  Then Condition I is satisfied, $\mathcal{Q} \cap \mathcal{M} = \emptyset$ (because $\mathcal{M} = \emptyset$), but $\rho(X_\pi) \geq 0$ for any portfolio $\pi$. Therefore, by Theorem \ref{thm:strong reg arb first characterisation}, this market does not admit strong $\rho$-arbitrage, even though $\mathcal{Q} \cap \mathcal{M} = \emptyset$.
\end{example}

\begin{example}
\label{counterexample cond 1a and UI not enough}
In this example we show that when $\mathcal{Q}$ is uniformly integrable but $R\mathcal{Q}$ is not, Theorem \ref{thm: no strong reg arb} may fail.

Let the risk-free rate be given by $r = 1 + 12c$, where $c:= \int_{1/4}^{1/3} \log(1/x) \dd x$.  Suppose there is one risky asset whose return is given by
\begin{equation*}
    R(\omega) = 
\begin{cases}
\ln\left(\tfrac{1}{\omega}\right), &\text{if } \omega < \tfrac{1}{3}, \\
0, &\text{if } \omega \in [\tfrac{1}{3}, \tfrac{2}{3}], \\
-1, &\text{if } \omega > \tfrac{2}{3}.
\end{cases}
\end{equation*}
Next, for $n \geq 4$, set
\begin{equation*}
    Z_{n}(\omega) = 
    \begin{cases}
\tfrac{n}{\ln(1/\omega)}, &\text{if } \omega < \tfrac{1}{n}, \\
0, &\text{if } \omega \in [\tfrac{1}{n}, \tfrac{1}{4}], \\
k_{n}, &\text{if } \omega \in (\tfrac{1}{4}, \tfrac{1}{3}], \\
0, &\text{if } \omega > \tfrac{1}{3},
\end{cases}
\end{equation*}
where $k_{n}$ is chosen so that $\mathbb{E}[Z_{n}] = 1$.  Note that $k_{n} \uparrow 12$, and that $Z_{n}$ converges in $L^1$ to $Z = 12\mathds{1}_{(1/4, 1/3]}$.  Therefore, $(\cup_{n \geq 4} \{Z_{n}\}) \cup \{Z\}$ is uniformly integrable, and whence, if we let $\mathcal{Q}$ be the $L^1$-closed convex hull of $(Z_{n})_{n \geq 4}$, $Z$ and $1$, it will also be uniformly integrable.  Moreover,
\begin{equation*}
\label{eq:EZnR does not converge to EZR}
    \mathbb{E}[Z_{n}R] = 1 + k_{n}c \uparrow 1+12c \quad \textnormal{but} \quad \mathbb{E}[ZR] = 12c.
\end{equation*}
It follows that the set $C_{\mathcal{Q}}$ is given by 
\begin{equation*}
    C_{\mathcal{Q}} = \{\mathbb{E}[-Y(R-r)] : Y \in \mathcal{Q} \} = (0, d],
\end{equation*}
where $d:=\mathbb{E}[-(R-r)] > 1$.  Thus Condition I is satisfied, $\mathcal{Q}$ is uniformly integrable and $\ol\cQ \cap \mathcal{M}=\mathcal{Q} \cap \mathcal{M} = \emptyset$, but the market does not admit strong $\rho$-arbitrage:
\begin{equation*}
    \rho(X_{\pi}) = \sup_{c \in C_{\mathcal{Q}}} (\pi \cdot c) \geq 0, \quad \textnormal{for any portfolio } \pi \in \mathbb{R}.
\end{equation*}
\end{example}

\begin{example}
\label{counterexample strong reg arb Q not UI}
In this example we show that when $R\mathcal{Q}$ is uniformly integrable but $\mathcal{Q}$ is not, Theorem \ref{thm: no strong reg arb} may fail.

Let $r = 0$ and  suppose there is one risky asset whose return is given by
\begin{equation*}
R(\omega) = 
\begin{cases}
1, &\text{if } \omega \leq \tfrac{1}{4}, \\ 0, &\text{if } \omega \in (\tfrac{1}{4},\tfrac{3}{4}), \\
-\tfrac{1}{2}, &\text{if } \omega \geq \tfrac{3}{4}.
\end{cases}
\end{equation*}
For $n \geq 2$,  define the intervals $A_n:=(\tfrac{1}{2},\tfrac{1}{2}+\tfrac{1}{2^n})$ and set 
\begin{equation*}
    Z_{n}(\omega) = 
    \begin{cases}
2^n - \tfrac{1}{n}, &\text{if } \omega \in A_n, \\
0, &\text{if } \omega \in (\tfrac{1}{4}, \tfrac{3}{4})\setminus A_n, \\
k_{n}, &\text{if } \omega \in [0,\tfrac{1}{4}] \cup [\tfrac{3}{4},1],
\end{cases}
\end{equation*}
where $k_{n}$ is chosen so that $\mathbb{E}[Z_{n}] = 1$.  Note that $k_{n} \downarrow 0$.  Let $\mathcal{Q}$ be the closed convex hull of $(Z_{n})_{n \geq 2}$ and $1$.  Then $\mathcal{Q}$ is not uniformly integrable but $R\mathcal{Q}$ is.  Moreover,  $\mathbb{E}[R] = \tfrac{1}{8}$, $\mathbb{E}[Z_nR] =  \tfrac{1}{2} k_n \downarrow 0$ and $\tfrac{1}{2}k_2 < \tfrac{1}{8}$.  It follows that
\begin{equation*}
    C_{\mathcal{Q}} = \{\mathbb{E}[-Z(R-r)] : Z \in \mathcal{Q} \} = [-\tfrac{1}{8},0).
\end{equation*}
Thus $\ol\cQ \cap \mathcal{M} = \mathcal{Q} \cap \mathcal{M} = \emptyset$, but the market does not admit strong $\rho$-arbitrage:
\begin{equation*}
    \rho(X_{\pi}) = \sup_{c \in C_{\mathcal{Q}}} (\pi \cdot c) \geq 0, \quad \textnormal{for any portfolio } \pi \in \mathbb{R}.
\end{equation*}
\end{example}

\begin{example}
\label{counterexample empty interior}
In this example we show that when $\tilde \cQ_{\max} = \emptyset$, Theorem \ref{thm: no reg arb equivalence} fails.

Consider the financial market described in Example \ref{counterexample strong reg arb Q not UI}. Let $\mathcal{Q}$ be the convex hull of the two densities $1$ and $Y(\omega)=2\mathds{1}_{(1/2, 1]}(\omega)$,
\begin{equation*}
    \mathcal{Q} = \{ \mu Y +(1-\mu) : \mu \in [0,1] \}.
\end{equation*}
Then $\mathbb{E}[YR] = -\tfrac{1}{4}$ and $\mathbb{E}[R] = \tfrac{1}{8}$, so $C_{\mathcal{Q}} = [-\tfrac{1}{8}, \tfrac{1}{4}]$ and there is no $\rho$-arbitrage.  
\newline 
\indent However, $\tilde{\mathcal{Q}} \cap \mathcal{P} = \emptyset$ because $\tilde \cQ_{\max} = \emptyset$. 
Indeed, any $Z \in \mathcal{Q}$ is of the form 
\begin{equation*}
Z(\omega) = 
    \begin{cases}
    1-\mu, &\text{if } \omega \leq \tfrac{1}{2},\\
	1+\mu, &\text{if } \omega > \tfrac{1}{2},
    \end{cases}
\end{equation*}
for some $\mu \in [0,1]$. Therefore if $Z \in \mathcal{Q}$ and $\lambda > 0$, then $\lambda Z + (1 - \lambda) \tilde Z\in \cQ$ for some $\tilde Z \in \cD \cap L^\infty$ implies that $\tilde Z \in \cQ$ and since $\cQ$ is not dense in $\cD \cap L^\infty$, the result follows. 
\end{example}

\section{Dual representation of coherent risk measures on Orlicz spaces}
\label{app:Orlicz}

The goal of this appendix is to recall some key definitions and results on Orlicz spaces and summarise the main results on the dual representation of a real-valued coherent risk measure defined on an Orlicz space.

\subsection{Key definitions and results on Orlicz spaces}
\label{app:subsection:primer on orlicz spaces}
We begin by recalling some key definitions and results relating to Orlicz spaces and Orlicz hearts; see \cite[Chapter 10]{wilson2008weighted} and \cite[Chapter 2]{edgar1992stopping} for details.
\begin{itemize}
	\item A function $\Phi: [0,\infty) \to [0,\infty]$ is called a {\it Young function} if it is convex and satisfies $\lim_{x \to \infty} \Phi(x) = \infty$ and $\lim_{x \to 0} \Phi(x) = \Phi(0) =0$.  A Young function $\Phi$ is called {\it superlinear} if $\Phi(x)/x \to \infty$ as $x \to \infty$.\footnote{Note that a Young function is  continuous except possibly at a single point, where it jumps to $\infty$.  Thus a finite Young function is continuous.}
	
	\item Given a Young function $\Phi$, the \emph{Orlicz space} corresponding to $\Phi$ is given by  
	\begin{equation*}
	L^{\Phi} := \{ X \in L^{0} : \mathbb{E}[\Phi(c|X|)] < \infty \textnormal{ for some }c>0 \},
	\end{equation*}
	and the \emph{Orlicz heart} is the linear subspace 
	\begin{equation*}
H^{\Phi} := \{ X \in L^{\Phi} : \mathbb{E}[\Phi(c|X|)] < \infty \textnormal{ for all }c>0 \}.
	\end{equation*}
	\item $L^\Phi$ and $H^\Phi$ are Banach spaces under the \emph{Luxemburg norm} given by
	\begin{equation*}
	\lVert X\rVert_{\Phi} := \inf \left\{ \lambda >0 : \mathbb{E}\left[ \Phi\left(\left|\tfrac{X}{\lambda}\right|\right) \right] \leq 1 \right\}. 
	\end{equation*}
	\item For any Young function $\Phi$, its convex conjugate $\Psi :[0, \infty) \to [0, \infty]$ defined by
	\begin{equation*}
	\Psi(y):=\sup_{x \geq 0} \{xy-\Phi(x)\}
	\end{equation*}
	is also a Young function and its conjugate is $\Phi$.
	\item If $X \in L^\Phi$ and $Y \in L^\Psi$, we have the  generalised H{\"o}lder inequality: 
	\begin{equation}
	\label{eq:generalised holder}
	\mathbb{E}[|XY|] \leq 2\lVert X\rVert_{\Phi}\lVert Y\rVert_{\Psi}. 
	\end{equation}
	\item  Using the conjugate $\Psi$ and \eqref{eq:generalised holder}, we may define the \emph{Orlicz norm} on $L^\Phi$ by 
	\begin{equation*}
	\lVert X\rVert_{\Psi}^{*} := \sup \{\mathbb{E}[XY] : Y \in L^\Psi, \ \lVert Y\rVert_{\Psi} \leq 1 \}.
	\end{equation*}
	This norm is equivalent to the Luxemburg norm on  $L^\Phi$.
	\item When $\Phi$ jumps to infinity, then $L^\Phi = L^\infty$ (and $\Vert \cdot \Vert_\Phi$ is equivalent to $\Vert \cdot \Vert_\infty$) and $H^\Phi = \{0\}$.  
	\item When $\Phi$ is finite, the norm dual of the Orlicz heart $(H^\Phi, \lVert \cdot \rVert_{\Phi})$ (with the Luxemburg norm)  is the  Orlicz space $(L^\Psi, \lVert \cdot \rVert_{\Phi}^{*})$ (with the Orlicz norm).
	\item $\Phi$ is said to satisfy the $\Delta_2$-condition if there exists a finite constant $K > 0$ such that $\Phi(2 x) \leq K \Phi(x)$ for all $x \in [0, \infty)$. $\Phi$ satisfies the $\Delta_2$ condition if and only if $L^\Phi = H^\Phi$.
\end{itemize}

\subsection{Dual representation of coherent risk measures on Orlicz spaces}
After these preparations, we consider the following setup: Let $\Phi: [0, \infty) \to [0, \infty]$ be a Young function and $\rho: L^\Phi \to \RR$ a coherent risk measure.  To give a review of when $\rho$ admits a dual representation, we first consider two versions of the Fatou property.

\begin{definition}
	\label{def:Fatou strong}
Let $\Phi: [0, \infty) \to [0, \infty]$ be a Young function and $\rho: L^\Phi \to (-\infty,\infty]$ a map. Then $\rho$ is said to satisfy the
\begin{itemize}
\item \emph{Fatou property} on $L^\Phi$, if $X_n \to X$ $\as{\P}$ for $X_n, X \in L^\Phi$ and $|X_n| \leq Y$ $\as{\P}$ for some $Y \in L^\Phi$ implies that $\rho(X) \leq \liminf_{n \to \infty} \rho(X_n)$.
\item \emph{strong Fatou property} on $L^\Phi$, if $X_n \to X$ $\as{\P}$ for $X_n, X \in L^\Phi$ and $\sup_n \Vert X_n\Vert_\Phi < \infty$ implies that $\rho(X) \leq \liminf_{n \to \infty} \rho(X_n)$.
\end{itemize}
\end{definition}

The strong Fatou property implies the Fatou property but the converse is not true. Note, however, that the two are equivalent if $L^\Phi = L^\infty$.

\begin{remark}
\label{rmk:Fatou property}
The notion of strong Fatou property has been introduced by Gao et al.~\cite{gao:al:19} who noted in \cite{gao:xanthos:18} that for a general normed vector space $L$, the Fatou property for risk measures (which was originally only formulated on $L^\infty$) could either be understood in terms of \emph{order bounded} sequences (giving the Fatou property)  or \emph{norm bounded} sequences (giving the strong Fatou property).
\end{remark}

We proceed to summarise the existing dual representation results for  (finite) coherent risk measures on Orlicz spaces from the literature.\footnote{For Orlicz hearts, the representation result for (finite) coherent risk measures is given in \cite[Corollary 4.2]{cheridito2009risk}.}

\begin{theorem}
	\label{thm:dual:orlicz}
Let $\Phi: [0, \infty) \to [0, \infty]$ be a Young function with conjugate $\Psi$ and $\rho: L^\Phi \to \RR$ a coherent risk measure. Then $\rho$ admits a dual representation under the following conditions:
\begin{enumerate}
\item  $\Phi$ satisfies the $\Delta_2$-condition.
\item $\Psi$ satisfies the $\Delta_2$-condition and $\rho$ satisfies the Fatou property.
\item $\Phi$ is a superlinear Young function and $\rho$ satisfies the strong Fatou property.
\end{enumerate}
\end{theorem}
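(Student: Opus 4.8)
The plan is to derive each representation from a Fenchel--Moreau biconjugation, where the dual pairing is chosen so that the relevant Fatou-type hypothesis supplies precisely the lower semicontinuity Fenchel--Moreau needs, and then to read off that the representing functionals may be normalised to Radon--Nikod\'ym derivatives. Two observations are used in all three cases. First, since $\rho$ is finite, convex and monotone on the Banach lattice $L^\Phi$, it is automatically norm-continuous (extended Namioka--Klee theorem; cf.~\cite{cheridito2009risk}), hence norm-lower-semicontinuous. Second, positive homogeneity makes the convex conjugate $\rho^*$ (with respect to any compatible pairing) the indicator function of a convex set $\cC$, so a representation $\rho(X) = \sup_{Z \in \cC} \E[-ZX]$ falls out of $\rho = \rho^{**}$; monotonicity forces $\cC$ into the positive cone, and cash-invariance together with $\rho(0)=0$ forces $\E[Z] = 1$ for $Z \in \cC$. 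Since $\Psi$, being a Young function, has $\Psi(x)/x$ bounded below by a positive constant near infinity, $L^\Psi$ embeds into $L^1$, so $\cC \subset \cD$ and we may take $\cQ = \cC$. Thus in each case the only real issue is to establish $\rho = \rho^{**}$ for the right pairing.

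In case (a), the $\Delta_2$-condition on $\Phi$ gives $L^\Phi = H^\Phi$, an order-continuous Banach lattice whose norm dual is exactly $L^\Psi$; a norm-lower-semicontinuous convex function is $\sigma(L^\Phi, L^\Psi)$-lower-semicontinuous (norm-closed convex sets are weakly closed), so Fenchel--Moreau in the pairing $\langle L^\Phi, L^\Psi\rangle$ yields $\rho = \rho^{**}$ with no appeal to the Fatou property at all. In case (c), $\Phi$ superlinear is equivalent to $\Psi$ finite, so $H^\Psi$ is an order-continuous Banach lattice with $(H^\Psi)^* = L^\Phi$; here the target is to show that the strong Fatou property makes $\rho$ weak$^*$-lower-semicontinuous, i.e.~$\sigma(L^\Phi, H^\Psi)$-lower-semicontinuous, after which Fenchel--Moreau in $\langle L^\Phi, H^\Psi\rangle$ finishes the argument, the representing set landing in $\cD \cap H^\Psi \subset \cD$. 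Case (b) reduces to case (c): if $\Psi$ satisfies $\Delta_2$ then $\Psi$ is finite, hence $\Phi$ is superlinear, and moreover $L^\Psi = H^\Psi$ is order-continuous, which forces the Fatou property and the strong Fatou property to coincide (as they do when $L^\Phi = L^\infty$, and more generally whenever the K\"othe dual has order-continuous norm), so the hypotheses of (b) imply those of (c).

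The crux is the implication in case (c): the strong Fatou property gives only \emph{sequential} weak$^*$-lower-semicontinuity, and only along $\P$-a.s.~convergent sequences, whereas Fenchel--Moreau needs the convex sublevel sets $\{\rho \le c\}$ to be genuinely weak$^*$-closed. The standard device is a Krein--\v{S}mulian argument: it suffices to check that $\{\rho \le c\} \cap rB_{L^\Phi}$ is weak$^*$-closed for each $r > 0$, and on a norm ball one can pass from an arbitrary weak$^*$-convergent net to an a.s.-convergent subsequence --- using order-continuity of the predual norm on $H^\Psi$ together with the fact that the problem can be localised to a separable sublattice --- and then invoke the strong Fatou property. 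This passage, and in case (b) the equivalence of the two Fatou notions under $\Psi \in \Delta_2$, is where the actual content lies; the rest is routine. Since these facts are already in the literature, I would present the proof as a compilation, citing \cite{cheridito2009risk} for the Orlicz-heart statement underlying case (a) and the work of Gao, Leung, Munari and Xanthos \cite{gao:al:19} (see also \cite{gao:xanthos:18}) for the strong-Fatou dual representation underlying cases (b) and (c).
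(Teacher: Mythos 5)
Your proposal is correct and, at bottom, rests on exactly the same ingredients as the paper's proof, which is itself just a compilation of citations: case (a) is \cite[Corollary 4.2]{cheridito2009risk} (norm-continuity via extended Namioka--Klee plus Fenchel--Moreau in the pairing $\langle H^\Phi, L^\Psi\rangle$), and case (c) is \cite[Theorem 3.2]{delbaen2002coherent} for $L^\infty$ and \cite[Theorem 2.4]{gao:xanthos:18} in general, where the Krein--\v{S}mulian localisation you describe is precisely the content of those results. The one place you genuinely deviate is case (b): the paper cites \cite[Theorem 2.5]{gao:al:19} or \cite[Proposition 2.5]{delbaen:owari:19} directly (Fatou property $\Leftrightarrow$ $\sigma(L^\Phi,L^\Psi)$-lower semicontinuity when $\Psi\in\Delta_2$), whereas you reduce (b) to (c) via the claim that the Fatou and strong Fatou properties coincide when $\Psi\in\Delta_2$. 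That claim is true, but be aware that it is not an elementary fact about order-continuous K\"othe duals: a norm-bounded a.s.\ convergent sequence need not be order bounded, and a direct truncation argument does not close; the equivalence is obtained only by chaining together the two representation theorems ($\mathrm{Fatou}\Leftrightarrow\sigma(L^\Phi,L^\Psi)\text{-lsc}\Leftrightarrow\sigma(L^\Phi,H^\Psi)\text{-lsc}\Leftrightarrow\text{strong Fatou}$, using $L^\Psi=H^\Psi$). So your route for (b) buys nothing over the direct citation and is, if anything, slightly circular; otherwise the two proofs are the same.
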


\begin{proof}
(a) In this case $L^\Phi = H^\Phi$ and the result follows from \cite[Corollary 4.2]{cheridito2009risk}.

(b) This follows from \cite[Theorem 2.5]{gao:al:19} or \cite[Proposition 2.5]{delbaen:owari:19} and Fenchel-Moreau duality.\footnote{Note that for ``(4) $\Rightarrow$ (1)'' in \cite[Theorem 3.7]{gao:al:19}, the assumption of an atomless probability space is not needed.}

(c) This follows from \cite[Theorem 3.2]{delbaen2002coherent} in the case that $L^\Phi = L^\infty$ and from \cite[Theorem~2.4]{gao:xanthos:18} in the general case.
\end{proof}

\begin{remark}
(a) If a coherent risk measure $\rho: L^\Phi \to \RR$ admits a dual representation, it is straightforward to check that it satisfies the Fatou property. The converse is false if both $\Phi$ and $\Psi$ fail to satisfy the $\Delta_2$-condition; see \cite[Theorem 4.2]{gao:al:19} for a generic counterexample.\footnote{Note, however, that $\rho$ in \cite[Theorem 4.2]{gao:al:19} is  $(-\infty, \infty]$-valued.}

(b) A coherent risk measure that admits a dual characterisation does not need to satisfy the strong Fatou property; in fact if $\Phi$ is a superlinear Young function and $\rho: L^\Phi \to \RR$  admits a dual characterisation such that $\cQ_\rho \not\subset H^\Psi$, then $\rho$ fails to  satisfy the strong Fatou property by \cite[Theorem 2.4]{gao:xanthos:18}.
\end{remark}

Finally, we show that all coherent risk measures on Orlicz spaces that satisfy a dual representation (independent of whether one of the conditions of Theorem \ref{thm:dual:orlicz} is satisfied) have a nice maximal dual set.

\begin{proposition}
	\label{prop:orlicz dual}
	Let $\Phi: [0, \infty) \to [0, \infty]$ be a Young function with conjugate $\Psi$ and $\rho: L^\Phi \to \RR$ a coherent risk measure. If $\rho$ admits a dual representation, then the maximal dual set $\cQ_\rho$ is $L^\Psi$-closed and $L^\Psi$-bounded if $\Phi$ is finite. If $\Phi$ satisfies the $\Delta_2$-condition, $\cQ_\rho$ is also $L^1$-closed.
	Moreover, if $\cQ \subset \cQ_\rho$ has $L^\Psi$-closure $\cQ_\rho$, then $\cQ$ represents $\rho$, and if $\cQ \subset \cQ_\rho$ represents $\rho$, then $\cQ_\rho \subset \ol \cQ$, where $\ol \cQ$ denotes the $L^1$-closure.
\end{proposition}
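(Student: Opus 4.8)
The plan is to reduce everything to the standard duality between $L^\Phi$ and $L^\Psi$ (recall $(H^\Phi)^\ast = L^\Psi$ when $\Phi$ is finite, while $L^\Phi = L^\infty$ and $L^\Psi = L^1$ when $\Phi$ jumps to $\infty$), after first rewriting the maximal dual set in a convenient form. First I would note, using cash-invariance of $\rho$ and the convention for $\E[-ZX]$ from Remark~\ref{rem:dual char}(a), that $\cQ_\rho$ from \eqref{eq:Q rho} equals
\[
	\cQ_\rho = \{Z \in \cD : \E[-ZX] \le \rho(X) \text{ for all } X \in L^\Phi\},
\]
and that this set represents $\rho$: every $Z$ in the representing set $\cQ$ of \eqref{dual_char_of_rho} already lies in $\cQ_\rho$, so $\sup_{Z \in \cQ_\rho}\E[-ZX] \ge \sup_{Z \in \cQ}\E[-ZX] = \rho(X)$, the reverse inequality being built into the display. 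In this form $\cQ_\rho$ is visibly convex, since on $\cQ_\rho$ the map $Z \mapsto \E[-ZX]$ is finite and affine; this is the description I would use throughout.

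For $L^\Psi$-boundedness when $\Phi$ is finite: $\rho \colon L^\Phi \to \RR$ is a finite convex function on a Banach space, hence continuous, so there is $\delta > 0$ with $\rho(X) \le 1$ whenever $\lVert X \rVert_\Phi \le \delta$. Applying $\E[-ZX] \le \rho(X)$ to $X$ and $-X$ gives $\sup\{\E[ZX] : \lVert X \rVert_\Phi \le 1\} \le 1/\delta$ for every $Z \in \cQ_\rho$, and this supremum is, up to equivalence of the Luxemburg and Orlicz norms, just $\lVert Z \rVert_\Psi$; so $\cQ_\rho$ is $L^\Psi$-bounded. In particular $\cQ_\rho \subset L^\Psi$; when $\Phi$ is not finite this is trivial since then $L^\Psi = L^1 \supset \cD$. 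For $L^\Psi$-closedness I would use that any Young function $\Psi$ satisfies $\lim_{x\to\infty}\Psi(x)/x > 0$, so the embedding $L^\Psi \hookrightarrow L^1$ is continuous: if $Z_n \in \cQ_\rho$ with $Z_n \to Z$ in $L^\Psi$, then $Z_n \to Z$ in $L^1$, so (passing to an a.s.-convergent subsequence) $Z \ge 0$ and $\E[Z] = 1$, i.e.\ $Z \in \cD$, and for every $X \in L^\Phi$ the generalised Hölder inequality \eqref{eq:generalised holder} gives $\E[-Z_n X] \to \E[-ZX]$, whence $\E[-ZX] \le \rho(X)$ and $Z \in \cQ_\rho$.

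The most delicate assertion is $L^1$-closedness under the $\Delta_2$-condition, because then $L^1$-convergence no longer forces $L^\Psi$-convergence. Here $\Phi$ is automatically finite and $L^\Phi = H^\Phi$. Given $Z_n \in \cQ_\rho$ with $Z_n \to Z$ in $L^1$, the $L^\Psi$-bound just proved plus Fatou's lemma for the Luxemburg norm give $\sup_n \lVert Z_n\rVert_\Psi \le M$ and $\lVert Z\rVert_\Psi \le M$. For $X \in H^\Phi$, the truncations $X_k := X\1_{\{|X| \le k\}}$ satisfy $\lVert X - X_k\rVert_\Phi \to 0$ by order-continuity of the norm on $H^\Phi$; splitting $\E[(Z_n - Z)X] = \E[(Z_n - Z)(X - X_k)] + \E[(Z_n - Z)X_k]$, bounding the first term by $4M\lVert X - X_k\rVert_\Phi$ via \eqref{eq:generalised holder}, and using $\E[(Z_n - Z)X_k] \to 0$ (as $X_k \in L^\infty$), one gets $\E[-Z_n X] \to \E[-ZX]$ and concludes $Z \in \cQ_\rho$ as before.

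The last two claims are separation arguments. If $\cQ \subset \cQ_\rho$ has $L^\Psi$-closure $\cQ_\rho$, then for fixed $X \in L^\Phi$ every $Z \in \cQ_\rho$ is an $L^\Psi$-limit of elements of $\cQ$, so \eqref{eq:generalised holder} gives $\sup_{Z' \in \cQ}\E[-Z'X] \ge \E[-ZX]$; taking the supremum over $Z \in \cQ_\rho$ yields $\sup_{Z' \in \cQ}\E[-Z'X] \ge \rho(X)$, and since $\cQ \subset \cQ_\rho$ also gives the reverse inequality, $\cQ$ represents $\rho$. Conversely, with $\cQ \subset \cQ_\rho$ convex and representing $\rho$, its $L^1$-closure $\ol\cQ$ is a closed convex subset of $L^1$; if some $Z_0 \in \cQ_\rho$ were not in $\ol\cQ$, Hahn--Banach strict separation in $L^1$ (dual $L^\infty \subset L^\Phi$) would give $\xi \in L^\infty$ with $\E[-Z_0\xi] > \sup_{Z \in \cQ}\E[-Z\xi] = \rho(\xi)$, contradicting $Z_0 \in \cQ_\rho$; hence $\cQ_\rho \subset \ol\cQ$. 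I expect the main obstacles to be the $\Delta_2$ step and keeping the reformulation of $\cQ_\rho$ airtight with respect to the $+\infty$ convention for $\E[-ZX]$.
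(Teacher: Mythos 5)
Your overall route is genuinely different from the paper's and, with one repair, it works. The paper reduces everything to the Orlicz heart: it compares $\cQ_\rho$ with the maximal dual set $\cQ_{\rho_H}$ of the restriction $\rho_H := \rho|_{H^\Phi}$, imports $L^\Psi$-boundedness and $L^1$-closedness of $\cQ_{\rho_H}$ from \cite[Corollary 4.2]{cheridito2009risk} and Proposition \ref{prop:app:Orlicz}(a) (noting $\cA_{\rho_H} \subset \cA_\rho$, hence $\cQ_\rho \subset \cQ_{\rho_H}$, with equality under $\Delta_2$), and treats $L^\Phi = L^\infty$ separately via Delbaen's representation theorem. You instead rewrite $\cQ_\rho$ as $\{Z \in \cD : \E[-ZX] \le \rho(X) \text{ for all } X \in L^\Phi\}$ --- a correct reformulation given cash-invariance, finiteness of $\rho$ and the $+\infty$ convention of Remark \ref{rem:dual char}(a) --- and argue everything directly: your $L^\Psi$-closedness step via the continuous embedding $L^\Psi \hookrightarrow L^1$, your $\Delta_2$ truncation argument (which essentially reproves Proposition \ref{prop:app:Orlicz}(a) by hand), and your two separation arguments at the end all go through and match the paper's where they overlap. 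The self-contained, unified treatment of the $L^\infty$ case is a plus.

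The one step that fails as written is the justification of $L^\Psi$-boundedness: ``$\rho$ is a finite convex function on a Banach space, hence continuous'' is false in general --- a discontinuous linear functional on an infinite-dimensional Banach space is finite and convex but not continuous. What saves the argument is that $\rho$ is in addition \emph{monotone} (and cash-invariant): by the extended Namioka--Klee theorem, every real-valued convex monotone functional on a Banach lattice is norm-continuous, so local boundedness of $\rho$ near $0$ does hold and your estimate $\Vert Z \Vert_\Psi \le C/\delta$ then follows as you describe. Alternatively, you could invoke \cite[Corollary 4.2]{cheridito2009risk} for the Lipschitz continuity of $\rho$ restricted to $H^\Phi$, which suffices because for $Z \ge 0$ the supremum of $\E[ZX]$ over the unit ball of $L^\Phi$ equals, by monotone convergence applied to the truncations $|X|\1_{\{|X|\le k\}}$, the supremum over the unit ball of $H^\Phi$. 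With either citation inserted, the proof is complete.
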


\begin{proof}
If $\Phi$ jumps to $\infty$, i.e., $L^\Phi = L^\infty$, then the result follows from \cite[Theorem 3.2]{delbaen2002coherent}. So assume for the rest of the proof that $\Phi$ is finite. Denote by $\rho_H$ the restriction of $\rho$ to $H^\Phi$. Then $\cA_{\rho_H} \subset \cA_{\rho}$ and hence $\cQ_{\rho_H} \supset \cQ_\rho$. It follows from \cite[Corollary 4.2]{cheridito2009risk} and Proposition \ref{prop:app:Orlicz}(a) that  $\cQ_{\rho_H}$ is $L^\Psi$-bounded and $L^1$-closed. Hence, $\cQ_\rho$ is $L^\Psi$-bounded. This together with the definition of $\cQ_\rho$ and the generalised Hölder inequality \eqref{eq:generalised holder} implies that $\cQ_\rho$ is $L^\Psi$-closed. If $\Phi$ satisfies the $\Delta_2$-condition then $\cA_{\rho_H} = \cA_{\rho}$ and so $\cQ_\rho = \cQ_{\rho_H}$ is $L^1$-closed. Moreover, if $\cQ \subset \cQ_\rho$ has $L^\Psi$-closure $\cQ_\rho$, then $\cQ$ represents $\rho$ by the generalised Hölder inequality \eqref{eq:generalised holder} and if  $\cQ \subset \cQ_\rho$ represents $\rho$, then  $\ol \cQ_\rho = \ol \cQ$ because otherwise by the Hahn-Banach separation theorem (for the pairing $(L^\infty, L^1)$), there exists $X \in L^\infty$ such that $\sup_{Z \in \ol \cQ_\rho } \mathbb{E}[- Z X] \neq \sup_{Z \in \ol \cQ} \mathbb{E}[- Z X]$.
\end{proof}

\section{Additional results}
\label{app:additional proofs}

\begin{proposition}
	\label{prop:cond 1b equiv to cherny}
For $\cQ \subset \cD$, set $L^1(\mathcal{Q}):= \{X \in L^0 : \lim_{a \to \infty} \sup_{Z \in \mathcal{Q}} \mathbb{E}[Z|X|\mathds{1}_{\{{|X|>a}\}}] = 0 \}$.
If $\cQ$ is UI and $X \in L^1$, the following are equivalent:
	\begin{enumerate}
		\item $X \in L^1(\mathcal{Q})$
		\item $X\mathcal{Q}$ is UI.
	\end{enumerate}
\end{proposition}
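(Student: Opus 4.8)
The plan is to prove both implications by one symmetric device: to estimate the expectation defining a given ``tail'' by splitting $\Omega$ into the region where the \emph{other} factor is bounded — so that the mixed tail event $\{|XZ|>b\}$ degenerates into a one-factor tail event — and its complement. Throughout I would write $\phi(a):=\sup_{Z\in\cQ}\E[Z|X|\mathds{1}_{\{|X|>a\}}]$ (so that $X\in L^1(\cQ)$ means $\phi(a)\to0$), $\eta(b):=\sup_{Z\in\cQ}\E[|XZ|\mathds{1}_{\{|XZ|>b\}}]$ (so that $X\cQ$ being uniformly integrable means $\eta(b)\to0$, which in particular forces $XZ\in L^1$ for all $Z\in\cQ$), and $\theta(t):=\sup_{Z\in\cQ}\E[Z\mathds{1}_{\{Z>t\}}]$, which tends to $0$ as $t\to\infty$ precisely because $\cQ$ is uniformly integrable and satisfies $\theta(t)\le\E[Z]=1$. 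All three quantities are non-increasing in their argument.

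\emph{Step for (a) $\Rightarrow$ (b).} Fix $a,b>0$ and $Z\in\cQ$, and split $\E[|XZ|\mathds{1}_{\{|XZ|>b\}}]$ over $\{|X|>a\}$ and $\{|X|\le a\}$. On the first set the contribution is at most $\E[Z|X|\mathds{1}_{\{|X|>a\}}]\le\phi(a)$. On the second set $|XZ|\le aZ$, hence $\{|XZ|>b\}\cap\{|X|\le a\}\subseteq\{Z>b/a\}$, so the contribution is at most $a\,\E[Z\mathds{1}_{\{Z>b/a\}}]\le a\,\theta(b/a)$. Thus $\eta(b)\le\phi(a)+a\,\theta(b/a)$ for every $a>0$; given $\epsilon>0$ I would first pick $a$ with $\phi(a)<\epsilon/2$ (using $X\in L^1(\cQ)$) and then $b$ with $a\,\theta(b/a)<\epsilon/2$ (using uniform integrability of $\cQ$), which gives $\eta(b)\to0$, i.e.\ $X\cQ$ is uniformly integrable.

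\emph{Step for (b) $\Rightarrow$ (a).} Fix $a,c>0$ and $Z\in\cQ$, and split $\E[Z|X|\mathds{1}_{\{|X|>a\}}]$ over $\{Z>c\}$ and $\{Z\le c\}$. On $\{Z\le c\}$ the contribution is $\le c\,\E[|X|\mathds{1}_{\{|X|>a\}}]$, which tends to $0$ as $a\to\infty$ since $X\in L^1$. The part on $\{Z>c\}$ is at most $\E[|XZ|\mathds{1}_{\{Z>c\}}]$, which I would split once more over $\{|XZ|>b\}$ and $\{|XZ|\le b\}$: the first piece is $\le\eta(b)$, and on the second piece $Z>c$ together with $|XZ|\le b$ forces $|X|\le b/c$, so it is $\le(b/c)\,\E[Z\mathds{1}_{\{Z>c\}}]\le b/c$. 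Hence $\sup_{Z\in\cQ}\E[Z|X|\mathds{1}_{\{Z>c\}}]\le\eta(b)+b/c$; choosing $b$ large (so $\eta(b)<\epsilon/4$) and then $c$ large (so $b/c<\epsilon/4$) makes this $<\epsilon/2$, and finally choosing $a$ large with $c$ fixed makes the $\{Z\le c\}$-part $<\epsilon/2$. Therefore $\phi(a)\to0$, i.e.\ $X\in L^1(\cQ)$.

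The only genuine point of care — and the closest thing to an obstacle — is the nesting of the three truncation levels ($a$ for $|X|$, $b$ for $|XZ|$, $c$ for $Z$): one must always control the ``outer'' level first by a bound uniform in $Z$ and only then choose the ``inner'' level, which works because $\phi$, $\eta$, $\theta$ are monotone. Beyond that, every estimate is an elementary Markov-type bound combined with the definitions of the three uniform-integrability-type conditions. It is worth noting that (b) $\Rightarrow$ (a) in fact uses only $\cQ\subset\cD$ (through $\E[Z]=1$) and $X\in L^1$; uniform integrability of $\cQ$ is needed only for (a) $\Rightarrow$ (b).
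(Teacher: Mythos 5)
Your proposal is correct and follows essentially the same route as the paper's proof: both directions are handled by elementary Markov-type truncation estimates that split the expectation according to which of the two factors $Z$ and $|X|$ is large, using uniform integrability of $\cQ$ only for (a) $\Rightarrow$ (b), exactly as the paper does. The only difference is bookkeeping — the paper uses the inclusion $\{Z|X|>a^2\}\subset\{Z>a\\}\cup\{|X|>a\}$ and a split at $Z=1$, whereas you nest the three truncation levels $a$, $b$, $c$ explicitly — but the underlying inequalities coincide.
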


\begin{proof}
	First assume that $X\mathcal{Q}$ is uniformly integrable. Then, for any $a>0$ and $Z \in \mathcal{Q}$,
	\begin{align*}
	\mathbb{E}[Z|X|\mathds{1}_{\{|X|>a\}}] & =   \mathbb{E}[Z|X|\mathds{1}_{\{|X| > a\}}\mathds{1}_{\{Z \leq 1\}}] + \mathbb{E}[Z|X|\mathds{1}_{\{|X|>a\}}\mathds{1}_{\{Z > 1\}}] \\ & \leq \mathbb{E}[|X|\mathds{1}_{\{|X| > a\}}] + \mathbb{E}[Z|X|\mathds{1}_{\{Z|X|>a\}}].
	\end{align*}
	Taking the supremum over $\mathcal{Q}$ on both sides, letting $a \to \infty$ and using that $X \in L^1$ and $X\cQ$ is UI yields
	\begin{equation*}
	\lim_{a \to \infty} \sup_{Z \in \mathcal{Q}} \mathbb{E}[Z|X|\mathds{1}_{\{|X|>a\}}] \leq  \lim_{a \to \infty} \mathbb{E}[|X|\mathds{1}_{\{|X| > a\}}] + \lim_{a \to \infty} \sup_{Z \in \mathcal{Q}} \mathbb{E}[Z|X|\mathds{1}_{\{Z|X|>a\}}] = 0.
	\end{equation*}
	
Conversely, assume that $X \in L^1(\mathcal{Q})$.  For any  $a, b> 0$ and $Z \in \cQ$,
	\begin{align*}
	\mathbb{E}[Z|X|\mathds{1}_{\{Z|X|>a^2\}}] &\leq  \mathbb{E}[Z|X|\mathds{1}_{\{Z > a\}}]  + \mathbb{E}[Z|X|\mathds{1}_{\{|X|>a\}}] \\
	&\leq \mathbb{E}[Z|X|\mathds{1}_{\{Z > a\}}\mathds{1}_{\{|X| \leq b\}}] +\mathbb{E}[Z|X|\mathds{1}_{\{Z > a\}}\mathds{1}_{\{|X|> b\}}]  + \mathbb{E}[Z|X|\mathds{1}_{\{|X|>a\}}] \\
	&\leq b \mathbb{E}[Z\mathds{1}_{\{|Z| > a\}}] +\mathbb{E}[Z|X|\mathds{1}_{\{|X|> b\}}]  + \mathbb{E}[Z|X|\mathds{1}_{\{|X|>a\}}].
	\end{align*}
Taking the supremum over $\mathcal{Q}$ on both sides, letting $a \to \infty$ and using that $\cQ$ is UI and $X \in L^1(\cQ)$ yields
	\begin{equation*}
\lim_{a \to \infty}	\sup_{Z \in \mathcal{Q}} \mathbb{E}[Z|X|\mathds{1}_{\{Z|X|>a\}}] \leq  \sup_{Z \in \mathcal{Q}} \mathbb{E}[Z|X|\mathds{1}_{\{|X|>b\}}].
	\end{equation*}
	Now, the result follows when letting $b \to \infty$ and using again that $X \in L^1(\cQ)$.
\end{proof}

\begin{proposition}
	\label{prop: cond 1 implies weak cts map}
	Suppose Condition UI is satisfied. 
	\begin{enumerate}
		\item The set $\ol \cQ$ and $R^i \ol \cQ$ for $i \in \{1, \ldots, d\}$ are uniformly integrable.
		\item The $\mathbb{R}^d$-valued map $F: \ol \cQ \to \mathbb{R}^d$ given by $F(Z) = \mathbb{E}[-Z(R-r\mathbf{1})]$ is weakly continuous.
	\end{enumerate}
\end{proposition}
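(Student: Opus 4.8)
The plan is to treat the two parts separately and to deduce (b) from (a) by truncating the returns. For part (a), I would first use that uniform integrability is preserved under $L^1$-closure: since $\cQ$ is UI, the Dunford--Pettis theorem makes $\cQ$ relatively weakly compact, so the $L^1$-norm closure $\ol\cQ$ is contained in the (weakly compact) weak closure of $\cQ$ and is therefore itself relatively weakly compact, hence UI. For $R^i\ol\cQ$ I would establish the inclusion $R^i\ol\cQ \subseteq \ol{R^i\cQ}$: given $Z\in\ol\cQ$, choose $Z_n\in\cQ$ with $Z_n\to Z$ in $L^1$ and, after passing to a subsequence, $Z_n \to Z$ $\P$-a.s.; then $R^iZ_n\to R^iZ$ $\P$-a.s., the sequence $(R^iZ_n)_n\subseteq R^i\cQ$ is UI, so Vitali's convergence theorem gives $R^iZ_n\to R^iZ$ in $L^1$, whence $R^iZ\in\ol{R^i\cQ}$. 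Applying the first step to the UI set $R^i\cQ$ shows $\ol{R^i\cQ}$ is UI, and hence so is its subset $R^i\ol\cQ$. In particular $R^iZ\in L^1$ for every $Z\in\ol\cQ$, so $F$ is well-defined as an $\mathbb{R}^d$-valued map.

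For part (b), the obstruction is that $R^i$ need not be bounded, so $F$ is not an evaluation against a fixed element of $(L^1)^*=L^\infty$ and $\sigma(L^1,L^\infty)$-continuity is not immediate. I would overcome this by approximating $F$ uniformly by truncated maps. By part (a), $\ol\cQ$ is UI and each $R^i\ol\cQ$ is UI, so Proposition~\ref{prop:cond 1b equiv to cherny} (applied with the UI set $\ol\cQ$ and $X=R^i\in L^1$) yields $R^i\in L^1(\ol\cQ)$, i.e.
\[
\delta_i(M) \;:=\; \sup_{Z\in\ol\cQ}\,\mathbb{E}\!\left[Z\,|R^i|\,\mathds{1}_{\{|R^i|>M\}}\right] \;\longrightarrow\; 0 \qquad (M\to\infty).
\]
For $M>0$ put $R^i_M := R^i\mathds{1}_{\{|R^i|\le M\}}\in L^\infty$, write $R_M := (R^1_M,\dots,R^d_M)$, and define $F^M\colon\ol\cQ\to\mathbb{R}^d$ by $F^M(Z) := \mathbb{E}[-Z(R_M-r\mathbf{1})]$. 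Since every component of $R_M-r\mathbf{1}$ lies in $L^\infty$, each coordinate of $F^M$ is by definition continuous for the weak topology on $\ol\cQ$. Moreover, for all $Z\in\ol\cQ$,
\[
\bigl\| F^M(Z) - F(Z)\bigr\| \;=\; \bigl\|\mathbb{E}[-Z(R_M-R)]\bigr\| \;\le\; \sum_{i=1}^d \mathbb{E}\!\left[Z\,|R^i-R^i_M|\right] \;=\; \sum_{i=1}^d \mathbb{E}\!\left[Z\,|R^i|\,\mathds{1}_{\{|R^i|>M\}}\right] \;\le\; \sum_{i=1}^d \delta_i(M),
\]
so $F^M\to F$ uniformly on $\ol\cQ$. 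As a uniform limit of continuous maps into $\mathbb{R}^d$, $F$ is continuous for the weak topology on $\ol\cQ$, which proves (b).

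The only genuinely delicate point is the transition from (a) to (b): one needs the truncation error $\mathbb{E}[Z|R^i|\mathds{1}_{\{|R^i|>M\}}]$ to vanish \emph{uniformly} in $Z\in\ol\cQ$, which is exactly the ``$L^1(\ol\cQ)$'' reformulation of Condition UI for the closure $\ol\cQ$ (Proposition~\ref{prop:cond 1b equiv to cherny}); alternatively, one can reach this directly from uniform integrability of each $R^i\cQ$ via the split $\{|R^i|>M\}\subseteq\{|R^iZ|>a\}\cup\{Z<a/M\}$ and an appropriate choice of $a$ and $M$. All remaining steps are routine measure-theoretic arguments.
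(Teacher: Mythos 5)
Your proof is correct, and part (a) is essentially the paper's own argument: Dunford--Pettis gives uniform integrability of the closures, and the inclusion $R^i\ol\cQ \subseteq \overline{R^i\cQ}$ is obtained by taking an approximating sequence in $\cQ$, passing to convergence in probability (or a.s.\ along a subsequence), and upgrading to $L^1$-convergence via uniform integrability of $R^i\cQ$. For part (b), however, you take a genuinely different route. The paper exploits that $F$ is affine on the convex set $\ol\cQ$: preimages under $F$ of closed convex sets (in particular closed half-spaces in $\RR^d$) are convex, and a convex, strongly closed subset of $L^1$ is weakly closed, so weak continuity reduces to strong ($L^1$-sequential) continuity of $F$, which follows at once from part (a) by the same convergence-in-probability-plus-UI argument. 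You instead truncate the returns, note that each $F^M$ is weakly continuous because its coordinates are integration against fixed elements of $L^\infty=(L^1)^*$, and show $F^M\to F$ uniformly on $\ol\cQ$ via the quantity $\delta_i(M)$, whose decay to zero you correctly extract from Proposition~\ref{prop:cond 1b equiv to cherny} applied to the UI set $\ol\cQ$ (note $\ol\cQ\subset\cD$, so that proposition does apply). Both arguments are sound; the paper's is slightly slicker in that it needs no truncation bookkeeping and no appeal to Proposition~\ref{prop:cond 1b equiv to cherny}, while yours makes completely explicit where the uniformity in $Z$ enters and avoids invoking the coincidence of weak and strong closures for convex sets.
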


\begin{proof}
	(a) Fix $i \in \{1, \ldots, d\}$. The Dunford-Pettis theorem implies that $\ol \cQ$ and $\overline {R^i \cQ}$ are UI.  It suffices to show that $R^i \ol \cQ \subset \overline {R^i \cQ} $. So let $Z \in \ol \cQ$. Then there exists a sequence $(Z_n)_{n \in \NN} \subset \cQ$ such that $Z_n$ converges to $Z$ in $L^1$ and hence in probability. It follows that $R^i Z_n$ converges to $R^i Z$ in probability and hence also in $L^1$ as $(R^i Z_n)_{n \in \NN} \subset R^i \cQ$ is UI. It follows that $R^i Z \in \overline{R^i \cQ}$.
	
	(b) Since $F(\lambda Z^1 + (1 -\lambda) Z^2) = \lambda F( Z^1) + (1 -\lambda) F(Z^2)$ for $Z^1, Z^2 \in \ol \cQ$ and $\lambda \in [0, 1]$, preimages under $F$ of convex sets are convex. Since $\ol \cQ$ is convex as $\cQ$ is convex, it therefore suffices to show that $F$ is strongly continuous. So let $(Z_n)_{n \in \NN} \subset \ol \cQ$ be a sequence that converges to $Z$ in $L^1$ and hence in probability. Then $-Z_n(R^i-r)$ converges to $-Z(R^i-r)$ in probability and hence also in $L^1$ by part (a) for each $i \in \{1, \ldots, d\}$.
\end{proof}

\begin{lemma}
	\label{lemma: Z tilde and Z}
	Assume $\tilde{\mathcal{Q}} \subset\cD$ satisfies Conditions POS and INT.  Let $\tilde{Z} \in \tilde{\mathcal{Q}}$ and $X \in L^1$ be a non-constant random variable.  If $\mathbb{E}[-\tilde{Z}X] = 0$, then there exists $Z \in \mathcal{Q}$ such that $\mathbb{E}[-ZX] > 0$.
\end{lemma}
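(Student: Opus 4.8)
The plan is to manufacture a bounded dual element that shifts mass onto the region where $X$ is negative, approximate it inside the $L^\infty$-dense family $\cE$ supplied by Condition INT, and then average the approximant against $\tilde Z$ (whose ``price'' of $X$ is zero) so as to land back inside $\cQ$ while keeping the relevant expectation strictly positive.

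First I would record that $\E[-\tilde Z X]=0$, read through the convention of Remark~\ref{rem:dual char}(a), forces $\tilde Z X\in L^1$ with $\E[\tilde Z X]=0$ in the usual sense. Next, using Condition POS, I claim $\P[X<0]>0$: if instead $X\geq 0$ $\P$-a.s., then $\tilde Z X\geq 0$ $\P$-a.s.\ and its expectation vanishes, so $\tilde Z X=0$ $\P$-a.s., and since $\tilde Z>0$ $\P$-a.s.\ this yields $X=0$ $\P$-a.s., contradicting that $X$ is non-constant. Hence there is $\epsilon>0$ with $\P[A]>0$ for $A:=\{X\leq-\epsilon\}$; set $Z^\ast:=\1_A/\P[A]\in\cD\cap L^\infty$, so that $\E[Z^\ast X]=\E[X\1_A]/\P[A]\leq-\epsilon<0$.

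Then I would invoke Condition INT for our fixed $\tilde Z$: it provides an $L^\infty$-dense subset $\cE\subset\cD\cap L^\infty$ such that every member of $\cE$ mixes into $\cQ$ with $\tilde Z$. Choose $Z'\in\cE$ with $\Vert Z'-Z^\ast\Vert_\infty$ so small that $\Vert Z'-Z^\ast\Vert_\infty\,\E[|X|]<\epsilon/2$; since $X\in L^1$ this gives $\E[Z'X]\leq\E[Z^\ast X]+\epsilon/2<0$. By Condition INT there is $\lambda\in(0,1)$ with $Z:=\lambda Z'+(1-\lambda)\tilde Z\in\cQ$, and since $Z'X,\tilde Z X\in L^1$ we obtain $\E[-ZX]=-\lambda\E[Z'X]-(1-\lambda)\E[\tilde Z X]=-\lambda\E[Z'X]>0$, which is exactly the claim.

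The one genuinely delicate step is the first, namely extracting a set of positive $\P$-probability on which $X$ is bounded away from $0$ from below; it is precisely here that strict positivity of $\tilde Z$ (Condition POS) is indispensable, since without it $X$ could be non-constant under $\P$ yet null off the support of $\tilde Z$, leaving no admissible perturbation of $\tilde Z$ that decreases the expectation. The remaining ingredients, namely convexity of the expectation, $L^\infty$-density of $\cE$, and $\E[|X|]<\infty$, are entirely routine.
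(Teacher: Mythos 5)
Your proof is correct and follows essentially the same route as the paper's: the paper argues by contradiction and outsources the key step to its Proposition \ref{prop:app:WC dual} (the worst-case dual representation over dense subsets of $\cD\cap L^\infty$), whereas you run the argument directly by inlining that proposition's construction — the normalised indicator of $\{X\le-\epsilon\}$, an $L^\infty$-approximation inside $\cE$, and the INT-mixing with $\tilde Z$ whose contribution vanishes because $\E[-\tilde ZX]=0$. The underlying idea is identical, so no further comparison is needed.
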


\begin{proof}
Note that $\tilde{Z} > 0$ $\as{\P}$ by Condition POS.	Define $\tilde{\mathbb{Q}} \approx \P$ by $\frac{\diff \tilde{\mathbb{Q}}}{\diff \P} := \tilde{Z}$ and $A:=\{X<0\}$.  Since $X$ is non-constant and $\mathbb{E}^{\tilde{\mathbb{Q}}}[X] = 0$, it follows that $\tilde{\mathbb{Q}}[A] \in (0,1)$.  Seeking a contradiction, suppose that $\mathbb{E}[-ZX] \leq 0$ for all $Z \in \mathcal{Q}$. Let $\cE$ be an $L^\infty$-dense subset of $\cD \cap L^\infty$ corresponding to $\tilde Z$ in Condition INT. Let $Z^\prime \in \cE$. Then there exists $\lambda>0$ such that $\lambda Z^\prime + (1-\lambda) \tilde{Z} \in  \mathcal{Q}$. Thus,
Since $Z^\prime$ was chosen arbitrarily, we may deduce that
	\begin{equation*}
	\sup_{Z \in \cE}(\mathbb{E}[-ZX]) \leq 0,
	\end{equation*}
	which together with Proposition \ref{prop:app:WC dual} below implies that $X \geq 0 \ \mathbb{P}$-a.s. Since $\tilde{\mathbb{Q}} \approx \P$, it follows that $\tilde{\mathbb{Q}}[A] = 0$ and we arrive at a contradiction.
\end{proof}

\begin{proposition}
	\label{prop:app:Fatou}
	Let $\Phi: [0, \infty) \to [0, \infty]$ be a Young function. Let $(Y_n)_{n \in \NN}$ be a sequence in $L^\Phi$ that converges in probability to some random variable $Y$. Then
	\begin{equation*}
	\Vert Y \Vert_\Phi \leq \liminf_{n \to \infty} \Vert Y_n \Vert_\Phi.
	\end{equation*}
\end{proposition}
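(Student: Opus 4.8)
The plan is the standard lower-semicontinuity argument for the Luxemburg norm: subsequence extraction, reduction to almost sure convergence, and Fatou's lemma. Set $c := \liminf_{n\to\infty}\Vert Y_n\Vert_\Phi$. If $c = \infty$ there is nothing to prove, so assume $c < \infty$. First I would pass to a subsequence $(Y_{n_k})_k$ along which $\Vert Y_{n_k}\Vert_\Phi \to c$, and then, using that convergence in probability implies a.s.\ convergence along a further subsequence, refine once more so that in addition $Y_{n_k} \to Y$ $\P$-a.s. It then suffices to show $\Vert Y\Vert_\Phi \le \mu$ for every $\mu > c$, since $\mu$ is arbitrary.

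Fix $\mu > c$ and pick $\lambda \in (c,\mu)$. For all large $k$ we have $\Vert Y_{n_k}\Vert_\Phi < \lambda$; since $\lambda$ strictly exceeds the infimum defining $\Vert Y_{n_k}\Vert_\Phi$ and $\Phi$ is nondecreasing, this gives $\mathbb{E}[\Phi(|Y_{n_k}|/\lambda)] \le 1$. In particular $\Phi(|Y_{n_k}|/\lambda) < \infty$ a.s., so $|Y_{n_k}|/\lambda$ takes values, almost surely, in the finiteness interval of $\Phi$; writing $\bar{x} := \sup\{x \ge 0 : \Phi(x) < \infty\} \in (0,\infty]$, this means $|Y_{n_k}| \le \lambda\bar{x}$ a.s. Letting $k \to \infty$ along the a.s.-convergent subsequence yields $|Y| \le \lambda\bar{x}$ a.s., and hence $|Y|/\mu \le (\lambda/\mu)\bar{x} < \bar{x}$ a.s.

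Now I would apply Fatou's lemma to the nonnegative random variables $\Phi(|Y_{n_k}|/\mu)$. Since $|Y_{n_k}|/\mu \to |Y|/\mu$ a.s.\ and $|Y|/\mu < \bar{x}$ a.s., for $\P$-a.e.\ $\omega$ the arguments eventually lie in $[0,\bar{x})$, on which $\Phi$ is continuous, so $\Phi(|Y_{n_k}|/\mu) \to \Phi(|Y|/\mu)$ a.s. Moreover $\Phi(|Y_{n_k}|/\mu) \le \Phi(|Y_{n_k}|/\lambda)$ because $\mu > \lambda$ and $\Phi$ is nondecreasing, so $\mathbb{E}[\Phi(|Y_{n_k}|/\mu)] \le 1$ for all large $k$. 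Fatou's lemma then gives $\mathbb{E}[\Phi(|Y|/\mu)] \le \liminf_k \mathbb{E}[\Phi(|Y_{n_k}|/\mu)] \le 1$, i.e.\ $\Vert Y\Vert_\Phi \le \mu$. As $\mu > c$ was arbitrary, $\Vert Y\Vert_\Phi \le c$, which is the claim (and in particular shows $Y \in L^\Phi$ when $c < \infty$).

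The main obstacle — and the reason for working with the two scales $\lambda < \mu$ rather than a single one — is the possible jump of $\Phi$ to $\infty$ at the right endpoint $\bar{x}$ of its finiteness interval: a Young function in the sense of this paper need not be left-continuous there, so $\Phi$ may fail to be lower semicontinuous at $\bar{x}$, and a direct Fatou argument at scale $\lambda$ could break down on the event $\{|Y| = \lambda\bar{x}\}$. Dividing by the slightly larger $\mu$ pushes $|Y|/\mu$ strictly into the region $[0,\bar{x})$ where $\Phi$ is genuinely continuous, which is exactly what makes the a.s.\ convergence $\Phi(|Y_{n_k}|/\mu) \to \Phi(|Y|/\mu)$ — and hence the application of Fatou — go through; when $\Phi$ is finite (so $\bar{x} = \infty$) this refinement is unnecessary and one may take $\mu = \lambda$ directly.
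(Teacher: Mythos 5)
Your proof is correct, and its core is the same as the paper's: extract a subsequence realising the liminf and converging almost surely, bound $\mathbb{E}[\Phi(|Y_{n_k}|/\lambda)]$ by $1$ for $\lambda$ slightly above the liminf, apply Fatou's lemma, and let the scale decrease to the liminf. The one place you genuinely diverge is the treatment of Young functions that jump to $\infty$. The paper disposes of that case at the outset: if $\Phi$ jumps to $\infty$ then $L^\Phi = L^\infty$ with $\Vert\cdot\Vert_\Phi$ equivalent to $\Vert\cdot\Vert_\infty$, and the remaining case has $\Phi$ finite, hence continuous, so Fatou applies at a single scale. You instead keep one unified argument and use the two-scale device $\lambda < \mu$ to force $|Y|/\mu$ strictly into the finiteness interval $[0,\bar{x})$, where $\Phi$ is continuous; this correctly neutralises the real obstruction, namely that a Young function in the paper's sense need not be lower semicontinuous at $\bar{x}$ (e.g.\ $\Phi = 0$ on $[0,1)$, $\Phi(1)=1$, $\Phi = \infty$ on $(1,\infty)$ is convex but not l.s.c.\ at $1$), so a one-scale Fatou argument could fail on the event $\{|Y| = \lambda\bar{x}\}$. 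Your version is slightly longer but more self-contained, as it avoids invoking the identification of $L^\Phi$ with $L^\infty$; the paper's is shorter at the price of that external fact. Both are complete proofs.
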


\begin{proof}
Set $K := \liminf_{n \to \infty} \Vert Y_n \Vert_\Phi$. We may assume without loss of generality that $K < \infty$.  After passing to a subsequence, we may assume without loss of generality that $(Y_n)_{n \in \NN}$ converges to $Y$ $\as{\P}$ If $\Phi$ jumps to infinity, then $\Vert \cdot \Vert_\Phi$ is equivalent to $\Vert \cdot \Vert_\infty$ and the result follows. So assume that $\Phi$ is finite and hence continuous.  For any $\epsilon > 0$, we can pass to a further subsequence and assume without loss of generality that $\Vert Y_n \Vert_\Phi \leq K+\epsilon$ for all $n$.  Then by the definition of the Luxemburg norm, $\mathbb{E}[\Phi(|Y_n/(K+\epsilon)|)] \leq 1$ for all $n$.  Fatou's lemma gives 
\begin{equation*}
\mathbb{E}\left[\Phi\left(\left|\tfrac{Y}{K+\epsilon}\right|\right)\right] \leq \liminf_{n \to \infty} \mathbb{E}\left[\Phi\left(\left|\tfrac{Y_{n}}{K+\epsilon}\right|\right)\right] \leq 1.
\end{equation*}
This implies that $\Vert Y \Vert_\Phi \leq K+\epsilon$.  By letting $\epsilon \to 0$, we conclude that $\Vert Y \Vert_\Phi \leq K$.
\end{proof}

\begin{proposition}\label{prop:app:Orlicz}
Let $\Phi: [0, \infty) \to [0, \infty)$ be a finite Young function with conjugate $\Psi$. Let $\rho: H^\Phi \to \RR$ be a coherent risk measure. Denote by $\cQ_{\rho}$ the maximal dual set. Then
\begin{enumerate}
\item $\cQ_\rho$ is $L^1$-closed and $L^\Psi$-bounded. 
\item  If $R \in H^\Phi$, then $R\cQ_\rho$ is uniformly integrable.
\end{enumerate}
\end{proposition}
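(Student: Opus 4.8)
The plan is to prove (a) in two movements -- first that every $Z\in\cQ_\rho$ actually lies in $L^\Psi$ and that $\cQ_\rho$ is $L^\Psi$-bounded, then that $\cQ_\rho$ is $L^1$-closed -- and to deduce (b) from the $L^\Psi$-boundedness together with a single uniform-integrability estimate. Throughout I would use three facts available because $\Phi$ is finite: the constants belong to $H^\Phi$; the norm dual of $(H^\Phi,\Vert\cdot\Vert_\Phi)$ is $(L^\Psi,\Vert\cdot\Vert^*_\Phi)$; and the Luxemburg norm is absolutely continuous on $H^\Phi$, i.e.\ $\Vert X\mathbf{1}_B\Vert_\Phi\to0$ as $\P[B]\to0$ for each $X\in H^\Phi$ (since $\E[\Phi(|X|/\lambda)]<\infty$ for every $\lambda>0$). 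Since $\rho$ is finite-valued and coherent on $H^\Phi$, it admits a dual representation $\rho(X)=\sup_{Z\in\cQ_\rho}\E[-ZX]$ by \cite[Corollary 4.2]{cheridito2009risk}, with $\cQ_\rho$ the maximal dual set as in \eqref{eq:Q rho} and $\cA_\rho=\{X\in H^\Phi:\rho(X)\le0\}$.

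First I would show $\cQ_\rho\subset L^\Psi$ and its $L^\Psi$-boundedness. Fix $Z\in\cQ_\rho$ and $X\in H^\Phi$ with $X\ge0$. By cash invariance, $Y:=\rho(-X)\mathbf{1}-X$ satisfies $\rho(Y)=0$, so $Y\in\cA_\rho$; since $Y\le\rho(-X)\mathbf{1}$ the positive part $Y^+$ is bounded, hence $\E[ZY^+]<\infty$, while $\E[ZY^-]<\infty$ by the defining property of $\cQ_\rho$. Thus $ZY\in L^1$ and $\E[ZX]=\rho(-X)-\E[ZY]<\infty$. Applying this to $|X|$ for arbitrary $X\in H^\Phi$ shows that $Z$ is a positive linear functional that is finite on all of $H^\Phi$, hence bounded, so $Z\in(H^\Phi)^*=L^\Psi$ and every $\E[-ZX]$, $Z\in\cQ_\rho$, $X\in H^\Phi$, is a genuine finite expectation. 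The representation then gives $|\E[ZX]|\le\max\{\rho(X),\rho(-X)\}<\infty$ for each $X\in H^\Phi$, so $\cQ_\rho$ is pointwise bounded on the Banach space $H^\Phi$; the uniform boundedness principle yields $\sup_{Z\in\cQ_\rho}\Vert Z\Vert^*_\Phi<\infty$, hence $\sup_{Z\in\cQ_\rho}\Vert Z\Vert_\Psi=:C<\infty$ by equivalence of the Orlicz and Luxemburg norms. (Alternatively this boundedness is immediate from \cite[Corollary 4.2]{cheridito2009risk}: $\rho$ is then norm-continuous on $H^\Phi$, so its subdifferential at $0$ is weak$^*$-compact, whence $\cQ_\rho$ is $L^\Psi$-bounded.)

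For $L^1$-closedness I would take $(Z_n)_{n\in\NN}\subset\cQ_\rho$ with $Z_n\to Z$ in $L^1$; passing to a subsequence, $Z_n\to Z$ $\as{\P}$, so $Z\in\cD$, and Proposition \ref{prop:app:Fatou} applied with the Young function $\Psi$ gives $\Vert Z\Vert_\Psi\le\liminf_n\Vert Z_n\Vert_\Psi\le C$, hence $Z\in L^\Psi$. Fix $X\in\cA_\rho$. The generalised Hölder inequality \eqref{eq:generalised holder} gives $\E[|Z_nX|\mathbf{1}_B]\le2\Vert Z_n\Vert_\Psi\Vert X\mathbf{1}_B\Vert_\Phi\le2C\Vert X\mathbf{1}_B\Vert_\Phi$ for every measurable $B$; taking $B=\Omega$ bounds $\sup_n\E[|Z_nX|]$, while Markov's inequality bounds $\P[|Z_nX|>a]$ by $2C\Vert X\Vert_\Phi/a$ uniformly in $n$, so absolute continuity of $\Vert\cdot\Vert_\Phi$ on $H^\Phi$ gives $\sup_n\E[|Z_nX|\mathbf{1}_{\{|Z_nX|>a\}}]\to0$ as $a\to\infty$. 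Hence $(Z_nX)_{n\in\NN}$ is uniformly integrable; combined with $Z_nX\to ZX$ in probability this yields $Z_nX\to ZX$ in $L^1$, so $\E[ZX]=\lim_n\E[Z_nX]\ge0$, and $\E[ZX^-]\le\E[Z|X|]\le2C\Vert X\Vert_\Phi<\infty$. Thus $Z\in\cQ_\rho$.

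Finally, for (b), assuming $R\in H^\Phi$, the same estimate with $R$ in place of $X$ and $\sup_{Z\in\cQ_\rho}\Vert Z\Vert_\Psi=C<\infty$ gives $\sup_{Z\in\cQ_\rho}\E[|RZ|]\le2C\Vert R\Vert_\Phi<\infty$ and, via Markov's inequality and absolute continuity of $\Vert\cdot\Vert_\Phi$ on $H^\Phi$, $\sup_{Z\in\cQ_\rho}\E[|RZ|\mathbf{1}_{\{|RZ|>a\}}]\to0$ as $a\to\infty$, so $R\cQ_\rho$ is uniformly integrable. I expect the main obstacle to be the first movement -- handling the conservative convention for $\E[-ZX]$ and verifying that elements of the abstractly defined maximal dual set lie in $L^\Psi$, before the $L^\Psi$-bound and generalised Hölder become usable; once that is in place, $L^1$-closedness and (b) are routine consequences of generalised Hölder and order-continuity of the Orlicz-heart norm.
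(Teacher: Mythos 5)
Your proof is correct, and it reaches the same conclusions by a genuinely different route in two places. For the first movement, the paper does not argue directly that each $Z\in\cQ_\rho$ defines a finite positive functional on $H^\Phi$; instead it imports from Cheridito--Li that $\cQ_\rho\cap L^\Psi$ is $L^\Psi$-bounded and represents $\rho$, proves that $\cQ_\rho\cap L^\Psi$ is $L^1$-closed, and then deduces $\cQ_\rho\subset L^\Psi$ by a Hahn--Banach separation argument for the pairing $(L^1,L^\infty)$ (if some $Z_0\in\cQ_\rho$ lay outside the $L^1$-closed convex set $\cQ_\rho\cap L^\Psi$, a separating $X\in L^\infty$ would make the two suprema disagree, contradicting that both sets represent $\rho$ on $L^\infty$). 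Your positive-functional/uniform-boundedness argument is more self-contained, confronts the conservative convention for $\E[-ZX]$ head-on, and has the structural advantage that you establish $\cQ_\rho\subset L^\Psi$ \emph{before} the closedness step, so you can work with sequences in $\cQ_\rho$ itself rather than in $\cQ_\rho\cap L^\Psi$; the paper's route is shorter on the page because it leans on the cited representation result. The $L^1$-closedness step is essentially identical in both proofs (Proposition \ref{prop:app:Fatou} for the limit's $L^\Psi$-norm, then generalised H\"older with a truncation to pass $\E[Z_nX]\to\E[ZX]$ -- you package the truncation as uniform integrability of $(Z_nX)_n$ plus Vitali, the paper as an explicit $a_n$-splitting with $a_n=\min(n,\Vert Z_n-Z\Vert_1^{-1/2})$). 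For part (b) the paper takes a detour: it first proves that $\cQ_\rho$ itself is uniformly integrable (splitting into the case where $\Psi$ jumps to infinity and the superlinear case, where it rescales $\Psi$ and invokes de la Vall\'ee-Poussin), then shows $R\in L^1(\cQ_\rho)$ and concludes via Proposition \ref{prop:cond 1b equiv to cherny}. Your single uniform estimate $\E[|RZ|\1_{\{|RZ|>a\}}]\le 2C\Vert R\1_{\{|RZ|>a\}}\Vert_\Phi$ combined with Markov and the absolute continuity of $\Vert\cdot\Vert_\Phi$ on $H^\Phi$ gets there directly and also covers the uniform integrability of $\cQ_\rho$ itself by taking $R=1$ (constants lie in $H^\Phi$ since $\Phi$ is finite), which is what the paper actually needs when it invokes Condition UI later.
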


\begin{proof}
(a) It follows from \cite[Corollary 4.2]{cheridito2009risk} that  $\cQ_\rho \cap L^\Psi$ is $L^\Psi$ bounded and represents $\rho$. It suffices to show that  $\cQ_\rho \cap L^\Psi$ is $L^1$-closed. Indeed, this implies that  $\cQ_\rho \subset L^\Psi$ because otherwise,  by the Hahn-Banach separation theorem (for the pairing $(L^1, L^\infty)$), there exists $X \in L^\infty$ such that $\sup_{Z \in \cQ_\rho \cap L^\Psi} \mathbb{E}[- Z X] < \sup_{Z \in \cQ_\rho} \mathbb{E}[- Z X]$, in contradiction to the fact that both $\cQ_\rho \cap L^\Psi$ and $\cQ_\rho $ represent $\rho$ on $L^\infty$. 

Set $K := \sup_{Z \in \mathcal{Q_\rho}} \lVert Z \rVert_{\Psi} < \infty$. Let $(Z_{n})_{n \geq 1}$ be a sequence in $\cQ_\rho \cap L^\Psi$ that converges to $Z \in L^1$. Then $Z \in \cD$ and $\Vert Z \Vert_\Psi \leq K$ by Proposition \ref{prop:app:Fatou}. Let $X \in \cA_{\rho} \subset H^\Phi$. We have to show that $\mathbb{E}[Z  X] \geq 0$. Since $\mathbb{E}[Z_n X] \geq 0$ by the fact that $Z_n \in \cQ_\rho \cap L^\Psi$, it suffices to show that $\mathbb{E}[ZX] = \lim_{n \to \infty} \mathbb{E}[Z_n X]$. For any $n \in \NN$ and $a_n>0$, the generalised H{\"o}lder inequality (\ref{eq:generalised holder}) yields
	\begin{align}
	|\mathbb{E}[Z_{n}X] - \mathbb{E}[ZX]|  &\leq \mathbb{E}[|X||Z_{n}-Z|] = \mathbb{E}[|X||Z_{n}-Z|\mathds{1}_{\{|X|>a_n\}}] +  \mathbb{E}[|X||Z_{n}-Z|\mathds{1}_{\{|X| \leq a_n\}}] \notag \\ 
	&\leq  \mathbb{E}[|X|Z_{n} \mathds{1}_{\{|X|>a_n\}}] +  \mathbb{E}[|X|Z\mathds{1}_{\{|X| > a_n\}}] + a_n\lVert Z_{n}-Z\rVert_{1} \notag \\
	 &\leq (2K+2K)\lVert X \mathds{1}_{\{|X|>a_n\}} \rVert_{\Phi} + a_n\lVert Z_{n}-Z\rVert_{1}. \label{eq:pf:prop:app:Orlicz}
	\end{align} 
	Now if we choose $a_n := \min(n, \tfrac{1}{\sqrt{\lVert Z_{n}-Z\rVert_{1}}})$ and let $n \to \infty$, the right hand side of \eqref{eq:pf:prop:app:Orlicz} converges to $0$ by
	order continuity of of $H^\Phi$ (see e.g.~\cite[Theorem 2.1.14]{edgar1992stopping}).
	
(b) First, consider the case that $R = 1$. If $\Phi$ is not superlinear, then $\Psi$ jumps to infinity, and hence $\cQ_{\rho}$ is $L^\infty$-bounded by part (a) and therefore UI. If $\Phi$ is superlinear (and finite), then $\Psi$ is superlinear and finite. Set $K:= \sup_{Y \in \mathcal{Q}} \lVert Y\rVert_{\Psi} < \infty$ and define the superlinear function $\tilde \Psi$ by $\tilde{\Psi}(y) := \Psi(y/K)$.  By the definition of the Luxemburg norm,
	\begin{equation*}
	\mathbb{E}[\tilde{\Psi}(Y)] = \mathbb{E}[\Psi(Y/K)] \leq 1, \quad \textnormal{for all} \ Y \in \cQ_\rho.
	\end{equation*}
	This implies $\sup_{Y \in \cQ_\rho} \mathbb{E}[\tilde{\Psi}(Y)] \leq 1 < \infty$.  Since $\tilde{\Psi}$ is superlinear, the de la Vall\'ee-Poussin theorem implies that $\cQ_{\rho}$ is UI.
	
	Next, assume that $R \in H^\Phi$. By Proposition \ref{prop:cond 1b equiv to cherny}, it is enough to show that  $R \in L^1(\cQ_\rho)$ where
	\begin{equation*}
	L^1(\cQ_\rho) := \{X \in L^0 : \lim_{a \to \infty} \sup_{Z \in \cQ_\rho} \mathbb{E}[Z|X|\mathds{1}_{\{{|X|>a}\}}] = 0 \}.
	\end{equation*}
	Since $R \in H^\Phi$, the generalised  H{\"o}lder inequality and order continuity of  $H^\Phi$ give
	\begin{equation*}
	\lim_{a \to \infty} \sup_{Z \in \mathcal{Q}} \mathbb{E}[Z|X|\mathds{1}_{\{{|X|>a}\}}]  \leq   \lim_{a \to \infty} 2\sup_{Z \in \mathcal{Q}} \lVert Z \rVert_{\Psi} \lVert X\mathds{1}_{\{{|X|>a}\}} \rVert_{\Phi} = 0. \qedhere
	\end{equation*}
\end{proof}

\begin{proposition}
	\label{prop:app:WC dual}
Let $\cE$ be an $\sigma(L^\infty, L^1)$-dense subset of $\cD \cap L^\infty$.\footnote{Note that $\cD \cap L^\infty$ is $\sigma(L^\infty, L^1)$-closed.} Then for all $X \in L^1$.
\begin{equation*}
\sup_{Z \in \cE} \E[-Z X] = \textnormal{WC}(X).
\end{equation*}
\end{proposition}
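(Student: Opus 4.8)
The plan is to prove the two inequalities $\sup_{Z \in \cE}\E[-ZX] \leq \mathrm{WC}(X)$ and $\sup_{Z \in \cE}\E[-ZX] \geq \mathrm{WC}(X)$ separately, the first being immediate and the second requiring a density argument. Throughout I will use the basic fact that, since $X \in L^1$, the linear functional $Z \mapsto \E[-ZX] = -\E[ZX]$ on $L^\infty$ is $\sigma(L^\infty, L^1)$-continuous --- indeed, it is (up to sign) one of the functionals defining the weak$^\ast$ topology on $L^\infty = (L^1)^\ast$. I also note that for $Z \in \cD \cap L^\infty$ and $X \in L^1$ we have $ZX \in L^1$, so $\E[-ZX]$ is an ordinary finite expectation and the conservative convention of Remark~\ref{rem:dual char}(a) is never triggered.

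For the easy direction, fix $Z \in \cE \subset \cD \cap L^\infty$. Since $-X \leq \esssup(-X) = \mathrm{WC}(X)$ $\P$-a.s.\ and $Z \geq 0$ with $\E[Z] = 1$, we get $\E[-ZX] = \E[Z(-X)] \leq \mathrm{WC}(X)$, trivially also when $\mathrm{WC}(X) = \infty$. Taking the supremum over $\cE$ gives $\sup_{Z \in \cE}\E[-ZX] \leq \mathrm{WC}(X)$.

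For the reverse inequality I would first exhibit, for each real $c < \mathrm{WC}(X)$, an element of $\cD \cap L^\infty$ with expectation at least $c$, and then transfer this to $\cE$ by density. Fix $c \in \RR$ with $c < \mathrm{WC}(X) = \esssup(-X)$. By definition of the essential supremum, the set $A_c := \{-X > c\}$ satisfies $\P[A_c] > 0$, so $Z_c := \1_{A_c}/\P[A_c] \in \cD \cap L^\infty$. Since $(-X)\1_{A_c} \geq c\,\1_{A_c}$ pointwise, we obtain $\E[-Z_c X] = \E[(-X)\1_{A_c}]/\P[A_c] \geq c$. Now, by $\sigma(L^\infty, L^1)$-density of $\cE$ in $\cD \cap L^\infty$, pick a net $(Z_\gamma)_\gamma$ in $\cE$ converging to $Z_c$ in $\sigma(L^\infty, L^1)$; weak$^\ast$-continuity of $Z \mapsto \E[-ZX]$ then gives $\E[-Z_\gamma X] \to \E[-Z_c X]$, so that $\sup_{Z \in \cE}\E[-ZX] \geq \lim_\gamma \E[-Z_\gamma X] = \E[-Z_c X] \geq c$. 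Letting $c \uparrow \mathrm{WC}(X)$ yields $\sup_{Z \in \cE}\E[-ZX] \geq \mathrm{WC}(X)$, and combined with the previous paragraph this establishes equality. Note that the same argument covers both $\mathrm{WC}(X) \in \RR$ and $\mathrm{WC}(X) = \infty$, the only difference being the range of admissible $c$.

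I do not expect a genuine obstacle here. The two points requiring mild care are: (i) identifying the $\sigma(L^\infty, L^1)$-continuous linear functionals, so that net convergence of $(Z_\gamma)$ transfers to convergence of $\E[-Z_\gamma X]$; and (ii) observing that the normalised indicators $\1_{A_c}/\P[A_c]$ of the super-level sets $\{-X > c\}$ already realise the worst case within $\cD \cap L^\infty$, so that density of $\cE$ suffices to close the gap.
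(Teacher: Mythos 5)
Your proof is correct and follows essentially the same route as the paper: normalised indicators of super-level sets of $-X$ realise the essential supremum within $\cD \cap L^\infty$, and $\sigma(L^\infty,L^1)$-continuity of $Z \mapsto \E[-ZX]$ for $X \in L^1$ transfers this to $\cE$ by density. The only (cosmetic) difference is that you parametrise by levels $c < \mathrm{WC}(X)$ and thereby treat the case $\mathrm{WC}(X) = \infty$ uniformly, whereas the paper handles that case separately via the truncation $X_N = \max(X,-N)$ and monotonicity.
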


\begin{proof}
Define the coherent risk measure $\rho: L^1 \to (-\infty, \infty]$ by $\rho(X) := \sup_{Z \in \cE} (\mathbb{E}[-ZX])$. To show that $\rho =  \textnormal{WC}$, let $X \in L^1$ and set $c := \esssup(- X) = \textnormal{WC}(X)$.

First, assume that $c < \infty$. Then monotonicity of the expectation gives $\rho(X) \leq \textnormal{WC}(X)$. For the reverse inequality, let $\epsilon > 0$ and set  $Z := \mathds{1}_{\{-X \geq c -\epsilon\}} / \P[-X \geq c -\epsilon] \in \cD \cap L^\infty$. Then $\E[-Z X] \geq c -\epsilon$. Since $\cE$ is $\sigma(L^\infty, L^1)$-dense in $\cD \cap L^\infty$, there exists a net $(Z_i)_{i \in I}$ in  $\cE$ which converges to $Z$ in  $\sigma(L^\infty, L^1)$. Thus,
\begin{align*}
\rho(X) &\geq \lim_{i \in I}\mathbb{E}[-Z_i X] = \mathbb{E}[-Z X] \geq c - \epsilon = \textnormal{WC}(X) - \epsilon.
\end{align*}
Letting $\epsilon \to 0$ yields $\rho(X) \geq \textnormal{WC}(X)$.

Finally, assume that $c = \infty$. Let $N > 0$ be given and set $X_N := \max (X, -N)$. Then $X_N \geq X$ and $\textnormal{WC}(X_N) = N$. By monotonicity of $\rho$ and the first part,
\begin{equation*}
\rho(X) \geq \rho (X_N) = \textnormal{WC}(X_N) = N.
\end{equation*}
Letting $N \to \infty$ yields $\rho(X) = \infty = \textnormal{WC}(X)$. 
\end{proof}

\begin{proposition}
	\label{prop:ES Q tilde}
	Fix $\alpha \in (0,1)$.  Then $ \tilde{\mathcal{Q}}^\alpha:=\{ Z \in \mathcal{D} : Z > 0 \ \mathbb{P}\textnormal{-a.s.~and } \lVert Z \rVert_{\infty} < \tfrac{1}{\alpha}  \}$ is a nonempty subset of $\mathcal{Q}^\alpha$ satisfying Conditions POS, MIX and INT.
\end{proposition}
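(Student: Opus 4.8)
The plan is to check the four assertions in turn — nonemptiness, the inclusion $\tilde{\mathcal{Q}}^\alpha \subset \mathcal{Q}^\alpha$, and Conditions POS, MIX, INT — all of which reduce to elementary manipulations of the essential supremum. First, since $\alpha \in (0,1)$ we have $\lVert 1 \rVert_\infty = 1 < \tfrac{1}{\alpha}$ and $1 > 0$ $\P$-a.s., so $1 \in \tilde{\mathcal{Q}}^\alpha$ and the set is nonempty; the inclusion $\tilde{\mathcal{Q}}^\alpha \subset \mathcal{Q}^\alpha$ is immediate because $\lVert Z \rVert_\infty < \tfrac{1}{\alpha}$ implies $\lVert Z \rVert_\infty \leq \tfrac{1}{\alpha}$. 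Condition POS holds by the very definition of $\tilde{\mathcal{Q}}^\alpha$.

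For Condition MIX, I would fix $Z \in \mathcal{Q}^\alpha$, $\tilde Z \in \tilde{\mathcal{Q}}^\alpha$, $\lambda \in (0,1)$, and set $W := \lambda Z + (1-\lambda)\tilde Z$. Then $W \geq 0$ $\P$-a.s.\ and $\E[W] = 1$, so $W \in \mathcal{D}$; moreover $W \geq (1-\lambda)\tilde Z > 0$ $\P$-a.s.\ since $1 - \lambda > 0$. Finally, by subadditivity and homogeneity of $\lVert \cdot \rVert_\infty$ together with $\lVert Z \rVert_\infty \leq \tfrac{1}{\alpha}$, $\lVert \tilde Z \rVert_\infty < \tfrac{1}{\alpha}$ and $1 - \lambda > 0$,
\[
\lVert W \rVert_\infty \leq \lambda \lVert Z \rVert_\infty + (1-\lambda)\lVert \tilde Z \rVert_\infty < \tfrac{\lambda}{\alpha} + \tfrac{1-\lambda}{\alpha} = \tfrac{1}{\alpha},
\]
so $W \in \tilde{\mathcal{Q}}^\alpha$.

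For Condition INT, the key point is that every element of $\mathcal{D} \cap L^\infty$ is bounded, so a sufficiently small convex weight always pulls it below $\tfrac{1}{\alpha}$ when mixed with a density that is \emph{strictly} below $\tfrac{1}{\alpha}$; consequently one may simply take $\mathcal{E} := \mathcal{D} \cap L^\infty$, which is trivially $L^\infty$-dense in itself. Concretely, given $\tilde Z \in \tilde{\mathcal{Q}}^\alpha$, set $\epsilon := \tfrac{1}{\alpha} - \lVert \tilde Z \rVert_\infty > 0$; for $Z \in \mathcal{E}$, either $\lVert Z \rVert_\infty \leq \lVert \tilde Z \rVert_\infty$ (in which case any $\lambda \in (0,1)$ works, since then $\lVert \lambda Z + (1-\lambda)\tilde Z \rVert_\infty \leq \lVert \tilde Z \rVert_\infty < \tfrac{1}{\alpha}$), or $\lVert Z \rVert_\infty > \lVert \tilde Z \rVert_\infty$, in which case I pick $\lambda \in (0,1)$ with $\lambda < \epsilon / (\lVert Z \rVert_\infty - \lVert \tilde Z \rVert_\infty)$ — possible because $\lVert Z \rVert_\infty < \infty$ — so that $\lVert \lambda Z + (1-\lambda)\tilde Z \rVert_\infty \leq \lVert \tilde Z \rVert_\infty + \lambda(\lVert Z \rVert_\infty - \lVert \tilde Z \rVert_\infty) < \tfrac{1}{\alpha}$. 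In both cases $\lambda Z + (1-\lambda)\tilde Z \in \mathcal{D}$ with sup-norm at most $\tfrac{1}{\alpha}$, i.e.\ it lies in $\mathcal{Q}^\alpha$, which is exactly what INT requires.

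There is no genuine obstacle; the only thing to keep straight is the interplay between \emph{strict} and non-strict inequalities. MIX must return an element of $\tilde{\mathcal{Q}}^\alpha$, so the strict bound $\lVert W \rVert_\infty < \tfrac{1}{\alpha}$ is needed, and it survives precisely because $1-\lambda > 0$ multiplies the strictly-bounded factor $\tilde Z$; by contrast INT only asks for membership in the larger set $\mathcal{Q}^\alpha$, so the non-strict bound suffices there and one need not worry about preserving strict positivity or the strict norm bound.
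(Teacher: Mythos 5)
Your proof is correct and follows essentially the same route as the paper's: $1\in\tilde{\mathcal{Q}}^\alpha$ gives nonemptiness, POS is definitional, MIX follows from the triangle inequality with the strict bound preserved because $1-\lambda>0$ multiplies $\lVert \tilde Z\rVert_\infty<\tfrac{1}{\alpha}$, and for INT one takes $\cE=\cD\cap L^\infty$ and shrinks the weight on the bounded density $Z$ until the mixture's sup-norm drops to at most $\tfrac{1}{\alpha}$. Your explicit choice of $\lambda$ in the INT step is just a quantitative version of the paper's existence argument; nothing is missing.
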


\begin{proof}
	It is clear that $1 \in \tilde{\mathcal{Q}}^\alpha \subset \mathcal{Q}^\alpha$, and by definition $\tilde{\mathcal{Q}}^\alpha$ satisfies POS.  If $Z \in \mathcal{Q}^\alpha$, $\tilde{Z} \in \tilde{\mathcal{Q}}^\alpha$ and $\lambda \in (0,1)$, then $\lambda Z + (1-\lambda) \tilde{Z} > 0 \ \mathbb{P}$-a.s., and by the triangle inequality
	\begin{equation*}
	\lVert \lambda Z + (1-\lambda) \tilde{Z} \rVert_\infty \leq \lambda \lVert Z \rVert_\infty + (1-\lambda) \lVert \tilde{Z} \rVert_\infty < \tfrac{1}{\alpha},
	\end{equation*}
 so $\tilde{\mathcal{Q}}^\alpha$ satisfies Condition MIX.  To show Condition INT, let $\tilde{Z} \in \tilde{\mathcal{Q}}^{\alpha}$. Set $\cE := \mathcal{D} \cap L^\infty$ and let $Z \in \cE$. Since $\lVert Z \rVert_\infty < \infty$ and $\lVert \tilde{Z} \rVert_\infty < \tfrac{1}{\alpha}$  there is $\lambda \in (0, 1)$ such that $\lambda \lVert \tilde{Z} \rVert_\infty + (1-\lambda) \lVert Z \rVert_\infty \leq \tfrac{1}{\alpha}$.  By the triangle inequality it follows that $\lambda \tilde{Z} + (1-\lambda)Z \in \cQ^{\alpha}$.
\end{proof}

\begin{proposition}
	\label{prop:app:spectral}
	Assume $\mu$ is a probability measure on $([0,1], \cB_{[0, 1]})$ and $\rho^\mu$ the corresponding spectral risk measure. 
	\begin{enumerate}
		\item $\rho^\mu$ is represented by 
		\begin{align*}
		\mathcal{Q}_{\mu} = \bigg \{\int_{[0,1]}\zeta_{\alpha} \, \mu(\textnormal{d}\alpha) : \ &\zeta_\alpha(\omega) \textnormal{ is jointly measurable and there is $1 >\epsilon > 0$} \\ \textnormal{such that } &\zeta_{\alpha} \in \mathcal{Q}^{\alpha} \textnormal{ for } \alpha \in [0,1-\epsilon] \textnormal{ and } \zeta_\alpha \equiv 1 \textnormal{ for } \alpha \in (1-\epsilon,1]   \bigg\}.
		\end{align*}
		\item If $\mu$ does not have an atom at $1$, the set 
			\begin{align*}
		\tilde \cQ_{\mu} = \bigg \{\int_{[0,1)} \tilde \zeta_{\alpha} \, \mu(\textnormal{d}\alpha)  : \ &\tilde \zeta_\alpha(\omega) \textnormal{ is jointly measurable and there is $\epsilon \in (0,1)$, $\delta \in (0, \tfrac{\epsilon}{1-\epsilon})$} \\  \textnormal{such that } &\tilde \zeta_{\alpha} \in \tilde \cQ^{\alpha(1 + \delta)} \textnormal{ for } \alpha \in [0,1-\epsilon] \textnormal{ and } \tilde{\zeta}_\alpha \equiv 1 \textnormal{ for } \alpha \in (1- \epsilon,1)  \bigg\},
		\end{align*}
		 is nonempty and satisfies Conditions POS, MIX and INT.
		 \end{enumerate}
	\end{proposition}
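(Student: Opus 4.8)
The plan is to prove (a) and (b) separately; in both parts the workhorse is the elementary fact that the right-tail mass $\mu\big((1-\epsilon,1)\big)$ decreases to $0$ as $\epsilon\downarrow0$, since $\bigcap_{\epsilon>0}(1-\epsilon,1)=\emptyset$. For part (a), I would start from the fact (cf.~\cite[Theorem 4.4]{cherny2006weighted}) that $\mathcal{Q}_{\rho^\mu}$ is the maximal dual set, so $\rho^\mu(X)=\sup_{Z\in\mathcal{Q}_{\rho^\mu}}\mathbb{E}[-ZX]$ for all $X\in L^1$. Since $\Vert1\Vert_\infty=1\le1/\alpha$ for every $\alpha\in[0,1]$, any family $(\zeta_\alpha)$ admissible for $\mathcal{Q}_\mu$ is also admissible for $\mathcal{Q}_{\rho^\mu}$, hence $\mathcal{Q}_\mu\subset\mathcal{Q}_{\rho^\mu}$ and $\sup_{Z\in\mathcal{Q}_\mu}\mathbb{E}[-ZX]\le\rho^\mu(X)$. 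For the reverse inequality I would approximate: given $Z=\int_{[0,1]}\zeta_\alpha\,\mu(\diff\alpha)\in\mathcal{Q}_{\rho^\mu}$, put $Z^\epsilon:=\int_{[0,1-\epsilon]}\zeta_\alpha\,\mu(\diff\alpha)+\mu\big((1-\epsilon,1]\big)$, which belongs to $\mathcal{Q}_\mu$ (cut-off $\epsilon$); using $\zeta_1=1$ (forced, as $\mathcal{Q}^1=\{1\}$) and $\Vert\zeta_\alpha\Vert_\infty\le1/\alpha$ one obtains the \emph{pointwise} bound $|Z-Z^\epsilon|\le\big(\tfrac{1}{1-\epsilon}+1\big)\mu\big((1-\epsilon,1)\big)=:c_\epsilon$, with $c_\epsilon\to0$. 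Consequently $|\mathbb{E}[-Z^\epsilon X]-\mathbb{E}[-ZX]|\le c_\epsilon\Vert X\Vert_1\to0$ for every $X\in L^1$ (with the convention handled separately when $\mathbb{E}[ZX^-]=\infty$), which yields $\rho^\mu(X)\le\sup_{Z'\in\mathcal{Q}_\mu}\mathbb{E}[-Z'X]$, hence equality.

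For part (b) assume $\mu(\{1\})=0$. That $1\in\tilde{\mathcal{Q}}_\mu$ (so $\tilde{\mathcal{Q}}_\mu\neq\emptyset$) is seen by taking $\tilde\zeta_\alpha\equiv1$ with any $\epsilon\in(0,1)$ and $\delta\in\big(0,\tfrac{\epsilon}{1-\epsilon}\big)$: the constraint $\delta<\tfrac{\epsilon}{1-\epsilon}$ is exactly what forces $\alpha(1+\delta)<1$ for all $\alpha\le1-\epsilon$, so $1\in\tilde{\mathcal{Q}}^{\alpha(1+\delta)}$, while $\int_{[0,1)}1\,\mu(\diff\alpha)=\mu([0,1))=1$. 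The inclusion $\tilde{\mathcal{Q}}_\mu\subset\mathcal{Q}_\mu$ follows from $\tilde{\mathcal{Q}}^{\alpha(1+\delta)}\subset\mathcal{Q}^{\alpha(1+\delta)}\subset\mathcal{Q}^\alpha$. Condition POS follows from a Tonelli argument: $\int_{[0,1)}\mathbb{P}[\tilde\zeta_\alpha=0]\,\mu(\diff\alpha)=0$ forces, for $\mathbb{P}$-a.e.~$\omega$, $\tilde\zeta_\alpha(\omega)>0$ for $\mu$-a.e.~$\alpha$, whence $\tilde Z(\omega)=\int_{[0,1)}\tilde\zeta_\alpha(\omega)\,\mu(\diff\alpha)>0$. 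For Condition MIX I would first note that both $\mathcal{Q}_\mu$ and $\tilde{\mathcal{Q}}_\mu$ are stable under shrinking the cut-off parameter (an element with cut-off $\epsilon$ also has a representation with any cut-off $\epsilon'\in(0,\epsilon]$, replacing $\delta$ by a sufficiently small $\delta'\le\delta$ in the $\tilde{\mathcal{Q}}_\mu$ case), so given $Z\in\mathcal{Q}_\mu$, $\tilde Z\in\tilde{\mathcal{Q}}_\mu$ and $\lambda\in(0,1)$ one may assume a common cut-off $\epsilon$; then $\eta_\alpha:=\lambda\zeta_\alpha+(1-\lambda)\tilde\zeta_\alpha$ satisfies $\Vert\eta_\alpha\Vert_\infty<\tfrac{c}{\alpha}$ on $[0,1-\epsilon]$ with $c:=\tfrac{1+\lambda\delta}{1+\delta}<1$, so $\eta_\alpha\in\tilde{\mathcal{Q}}^{\alpha(1+\delta')}$ with $\delta':=\tfrac{1-c}{c}=\tfrac{(1-\lambda)\delta}{1+\lambda\delta}<\delta<\tfrac{\epsilon}{1-\epsilon}$, while $\eta_\alpha\equiv1$ on $(1-\epsilon,1)$; hence $\lambda Z+(1-\lambda)\tilde Z\in\tilde{\mathcal{Q}}_\mu$.

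The main obstacle is Condition INT, because the rigidity of $\mathcal{Q}_\mu$ (its members must equal $1$ near $\alpha=1$) precludes the choice $\cE=\cD\cap L^\infty$. The plan is to take $\cE:=\{W\in\cD\cap L^\infty:W\ge c_0$ $\mathbb{P}$-a.s.\ for some $c_0>0\}$, which is $L^\infty$-dense in $\cD\cap L^\infty$ (approximate $W$ by $\tfrac{W+1/n}{1+1/n}$). Given such a $W$ and $\tilde Z\in\tilde{\mathcal{Q}}_\mu$ with parameters $(\epsilon,\delta)$, I would pick $\epsilon''\in(0,\epsilon]$ with $\mu\big((1-\epsilon'',1)\big)<c_0$, set $w_\alpha:=\tfrac{W-\mu((1-\epsilon'',1))}{\mu([0,1-\epsilon''])}$ for $\alpha\in[0,1-\epsilon'']$ and $w_\alpha:=1$ for $\alpha\in(1-\epsilon'',1)$ --- so that $w_\alpha\in\cD$ and $\int_{[0,1)}w_\alpha\,\mu(\diff\alpha)=W$ --- and put $\xi_\alpha:=(1-\lambda)\tilde\zeta_\alpha+\lambda w_\alpha$. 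For $\lambda>0$ small enough (depending on $\Vert W\Vert_\infty,c_0,\epsilon'',\delta$) one checks $\xi_\alpha\in\mathcal{Q}^\alpha$ on $[0,1-\epsilon'']$ --- on $[0,1-\epsilon]$ because $\tfrac{1-\lambda}{1+\delta}+\lambda\Vert w_\alpha\Vert_\infty\le1$, and on $(1-\epsilon,1-\epsilon'']$ because $(1-\lambda)+\lambda\Vert w_\alpha\Vert_\infty\le\tfrac{1}{1-\epsilon''}$ --- and $\xi_\alpha\equiv1$ on $(1-\epsilon'',1)$, so $\lambda W+(1-\lambda)\tilde Z=\int_{[0,1)}\xi_\alpha\,\mu(\diff\alpha)\in\mathcal{Q}_\mu$, which is precisely Condition INT. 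The two ingredients that make this work are the vanishing of the tail mass $\mu\big((1-\epsilon'',1)\big)$ as $\epsilon''\downarrow0$ and the freedom to take $\lambda$ arbitrarily small.
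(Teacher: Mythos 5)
Your proof is correct, and for part (a), the nonemptiness/POS step and Condition MIX it follows essentially the same route as the paper: truncate the mixture at level $1-\epsilon$ and use that the tail mass $\mu\big((1-\epsilon,1)\big)$ vanishes for (a), and the same algebra $\delta'=\tfrac{(1-\lambda)\delta}{1+\lambda\delta}$ for MIX (you are in fact slightly more careful than the paper's proof, which tacitly assumes a common cut-off $\epsilon$ for $Z$ and $\tilde Z$ and dispatches positivity of the mixture in one clause; your shrinking-the-cut-off remark and the Tonelli argument fill these in). Where you genuinely diverge is Condition INT. The paper takes for $\cE$ the set of mixtures $\int_{[0,1)}\zeta_\alpha\,\mu(\diff\alpha)$ with $\zeta_\alpha\in\cQ^\gamma$ below the cut-off and $\zeta_\alpha\equiv 1$ above it, asserts its $L^\infty$-density in $\cD\cap L^\infty$ as ``straightforward'', and then mixes level-by-level with the explicit weight $\lambda'=\tfrac{\delta\gamma}{(2+\delta)(1+\delta-\gamma)}$, landing the combination in $\tilde\cQ^{\alpha(1+\delta/2)}$ and hence in $\tilde\cQ_\mu\subset\cQ_\mu$. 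You instead take for $\cE$ the densities bounded away from zero --- whose $L^\infty$-density really is immediate --- and manufacture the required disintegration $w_\alpha$ of $W$ by hand (a renormalised constant family below the cut-off, $\equiv 1$ above), choosing $\epsilon''$ so small that $\mu\big((1-\epsilon'',1)\big)<c_0$ keeps $w_\alpha$ nonnegative; your combination lands only in $\cQ_\mu$ rather than $\tilde\cQ_\mu$, but that is exactly what Condition INT requires. The trade-off is that the paper keeps everything inside the spectral-mixture structure at the price of a density claim that itself needs essentially your renormalisation trick, whereas your choice of $\cE$ makes the density step free and moves the work into the mixing step; both are valid, and yours is arguably the more self-contained of the two.
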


\begin{proof}
(a) It follows from \cite{cherny2006weighted} that $\rho^\mu$ is represented by 
\begin{equation*}
\mathcal{Q}_{\rho^\mu} = \bigg \{\int_{[0,1]}\zeta_{\alpha} \, \mu(\textnormal{d}\alpha) : \zeta_\alpha(\omega) \textnormal{ is jointly measurable and } \zeta_{\alpha} \in \mathcal{Q}^{\alpha} \textnormal{ for all } \alpha \in [0,1]   \bigg\}.
\end{equation*}
Let $Z =  \int_{[0,1]}\zeta_{\alpha} \, \mu(\textnormal{d}\alpha)\in \mathcal{Q}_{\rho^\mu}$. Set $Z_n := \int_{[0,1-1/n]}\zeta_{\alpha} \, \mu(\textnormal{d}\alpha) + \mu((1-1/n, 1]) \in \cQ_\mu$. Then $$\lim_{n \to \infty}\Vert Z_n - Z \Vert_\infty \leq \lim_{n \to \infty}\frac{n}{n -1} \mu((1-1/n, 1))= 0.$$ This implies that $\mathbb{E}[- Z X] = \lim_{n \to \infty} \mathbb{E}[-Z_n X]$ for all $X \in L^1$.

(b) Since $1 \in \tilde{\cQ}^\beta$ for all $\beta \in [0, 1)$ and $\tilde{\cQ}^\beta$ only contains positive random variables, it follows that $1 \in \tilde \cQ_{\mu}$ and Condition POS is satisfied.
	
To show Condition MIX, let $Z \in \mathcal{Q}_\mu$, $\tilde Z \in \tilde \cQ_{\mu}$ and $\lambda \in (0, 1)$.  Then there is $\epsilon \in (0,1)$ and $\delta \in (0,\tfrac{\epsilon}{1-\epsilon})$ such that $Z = \int_{[0, 1- \epsilon] }\zeta_{\alpha}\,\mu(\textnormal{d}\alpha) + \mu((1-\epsilon, 1))$ and $\tilde Z = \int_{[0, 1-\epsilon] }\tilde \zeta_{\alpha}\,\mu(\textnormal{d}\alpha) + \mu((1-\epsilon, 1))$, where $\zeta_{\alpha} \in \cQ^\alpha$ and $\tilde \zeta_{\alpha} \in \tilde \cQ^{\alpha(1 + \delta)}$ for $\alpha \in [0,1-\epsilon]$.  Set $\delta' := \tfrac{ \delta(1 -\lambda)}{1 + \delta \lambda} \in (0, \delta)$. A simple calculation shows that $\lambda \zeta_{\alpha} + (1 - \lambda)\tilde \zeta_\alpha \in \tilde \cQ^{\alpha(1 + \delta')}$ for all $\alpha \in [0,1- \epsilon]$. Thus,
\begin{equation*}
\lambda Z + (1 -\lambda) \tilde Z = \int_{[0, 1- \epsilon] } \lambda \zeta_{\alpha} + (1 -\lambda) \tilde \zeta_{\alpha}\,\mu(\textnormal{d}\alpha) + \mu((1-\epsilon, 1)) \in \tilde \cQ_\mu.
\end{equation*}
	
	Finally, to show Condition INT, let $\tilde Z \in \tilde{\cQ}_\mu$ and set
	\begin{align*}
	\cE := \Big\{\int_{[0, 1)} \zeta_\alpha \, \mu(\textnormal{d}\alpha) : \ &\zeta_\alpha(\omega) \textnormal{ is jointly measurable and there is } 1 > \gamma, \epsilon > 0 \text{ such that} \\
& \zeta_\alpha \in \cQ^\gamma \textnormal{ for } \alpha \in [0,1-\epsilon] \textnormal{ and } \zeta_\alpha \equiv 1 \textnormal{ for } \alpha \in (1-\epsilon,1) \Big\}.
	\end{align*}
It is straightforward to check that $\cE$ is a dense subset of $\cD \cap L^\infty$. Let $Z \in \cE$.  Then there exists $\epsilon, \gamma \in (0,1)$ and $\delta \in (0,\tfrac{\epsilon}{1-\epsilon})$ such that $\tilde Z = \int_{[0,1- \epsilon] }\tilde \zeta_{\alpha}\,\mu(\textnormal{d}\alpha) + \mu((1-\epsilon, 1))$ and $Z = \int_{[0,1- \epsilon] } \zeta_{\alpha}\,\mu(\textnormal{d}\alpha) + \mu((1-\epsilon, 1))$, where  $\tilde \zeta_{\alpha} \in \tilde \cQ^{\alpha(1 + \delta)}$ and $\zeta_{\alpha} \in \cQ^\gamma$ for $\alpha \in [0,1-\epsilon]$. Set $\lambda' := \tfrac{\delta \gamma}{(2 + \delta)(1 + \delta - \gamma)} \in (0, 1)$.
A simple calculation shows that $\lambda' \zeta_{\alpha} + (1 - \lambda')\tilde \zeta_\alpha \in \tilde \cQ^{\alpha(1+\delta/2)}$ for all $\alpha \in [0,1- \epsilon]$. Thus,
\begin{equation*}
\lambda' Z + (1 -\lambda') \tilde Z = \int_{[0,1- \epsilon] } \lambda' \zeta_{\alpha} + (1 -\lambda') \tilde \zeta_{\alpha}\,\mu(\textnormal{d}\alpha) + \mu((1-\epsilon, 1)) \in   \tilde \cQ_\mu \subset \cQ_{\mu}. \qedhere
\end{equation*}
\end{proof}

\begin{proposition}
	\label{prop:Interior of Q g beta}
	Let $g:[0,\infty) \to \mathbb{R}$ be a convex function and $\beta > g(1)$.  Let $\mathcal{Q}^{g,\beta} := \{ Z \in \mathcal{D} : \mathbb{E}[g(Z)] \leq \beta \}$. Then
	$\tilde{\mathcal{Q}}^{g,\beta}  := \{ Z \in \mathcal{D} : Z>0 \ \mathbb{P}\textnormal{-a.s.\ and } \mathbb{E}[g(Z)] < \beta \} $ is nonempty and satisfies Conditions POS, MIX and INT.
\end{proposition}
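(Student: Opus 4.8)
The plan is to verify, in turn, that $\tilde\cQ^{g,\beta}$ is nonempty and satisfies Conditions POS, MIX and INT, following the template of Proposition~\ref{prop:ES Q tilde} with the sublevel constraint $\E[g(Z)] \le \beta$ (resp.\ $< \beta$) playing the role of the $L^\infty$-bound and convexity of $g$ playing the role of the triangle inequality. Before starting, I would record one standing observation that makes all the convexity estimates legitimate: a real-valued convex function $g$ on $[0,\infty)$ has a global affine minorant $g(x) \ge ax+b$ (take a subgradient at an interior point), so $g(Z)^-$ is dominated by $|a|Z + |b| \in L^1$ for every $Z \in \cD$; hence $\E[g(Z)]$ is always well defined in $(-\infty,\infty]$, it is finite (so $g(Z) \in L^1$) as soon as $\E[g(Z)] \le \beta$, and, since a convex function is bounded on each compact interval, $g(Z) \in L^\infty$ whenever $Z \in \cD \cap L^\infty$.

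Nonemptiness and POS are immediate: $1 \in \tilde\cQ^{g,\beta}$ since $\E[g(1)] = g(1) < \beta$, and strict positivity of every element of $\tilde\cQ^{g,\beta}$ holds by definition. For MIX, I would take $Z \in \cQ^{g,\beta}$, $\tilde Z \in \tilde\cQ^{g,\beta}$, $\lambda \in (0,1)$, note that $\lambda Z + (1-\lambda)\tilde Z \ge (1-\lambda)\tilde Z > 0$ $\as{\P}$ with expectation $1$ (so it lies in $\cD$ and satisfies POS), and then use pointwise convexity $g(\lambda Z + (1-\lambda)\tilde Z) \le \lambda g(Z) + (1-\lambda)g(\tilde Z)$ — whose right-hand side is in $L^1$ and whose left-hand side is bounded below by the affine minorant — to integrate and obtain $\E[g(\lambda Z + (1-\lambda)\tilde Z)] \le \lambda\E[g(Z)] + (1-\lambda)\E[g(\tilde Z)] < \lambda\beta + (1-\lambda)\beta = \beta$, the last inequality being strict because $\E[g(\tilde Z)] < \beta$.

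For INT, I would fix $\tilde Z \in \tilde\cQ^{g,\beta}$ and simply take $\cE := \cD \cap L^\infty$, which is $L^\infty$-dense in itself. Given $Z \in \cE$, the standing observation gives $\E[g(Z)] \in \RR$, so the affine function $\lambda \mapsto \lambda\E[g(Z)] + (1-\lambda)\E[g(\tilde Z)]$ is finite, dominates $\lambda \mapsto \E[g(\lambda Z + (1-\lambda)\tilde Z)]$ by convexity, and equals $\E[g(\tilde Z)] < \beta$ at $\lambda = 0$; hence there is $\lambda \in (0,1)$ with $\E[g(\lambda Z + (1-\lambda)\tilde Z)] \le \beta$, i.e.\ $\lambda Z + (1-\lambda)\tilde Z \in \cQ^{g,\beta}$ (it clearly lies in $\cD$). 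I do not expect a real obstacle in this proof; the only point that needs care — and the one I would state explicitly up front — is the integrability bookkeeping for $g(Z)$ (the affine lower bound valid on all of $\cD$, and boundedness of $g(Z)$ for bounded $Z$), since this is precisely what licenses taking expectations across the pointwise convexity inequality.
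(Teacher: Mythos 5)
Your proof is correct and follows essentially the same route as the paper's: nonemptiness via $1\in\tilde{\mathcal{Q}}^{g,\beta}$ (using $\beta>g(1)$), MIX by integrating the pointwise convexity inequality, and INT with $\cE=\cD\cap L^\infty$ and a small $\lambda$ chosen so that the convex combination of expectations stays below $\beta$. The only difference is your explicit integrability bookkeeping (the affine minorant of $g$ and boundedness of $g(Z)$ for bounded $Z$), which the paper leaves implicit; this is a sound and welcome precaution rather than a change of approach.
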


\begin{proof}
It is clear that $\tilde{\mathcal{Q}}^{g,\beta}$ satisfies Condition POS and $1 \in \tilde{\mathcal{Q}}^{g,\beta}$ since $\beta > g(1)$. To show condition MIX,
	let $Z \in \mathcal{Q}^{g,\beta}$, $\tilde{Z} \in \tilde{\mathcal{Q}}^{g,\beta}$ and $\lambda \in (0,1)$.  By the convexity of $g$,
	\begin{equation*}
	\mathbb{E}[g(\lambda \tilde{Z} + (1-\lambda)Z)] \leq \mathbb{E}[\lambda g(\tilde{Z}) + (1-\lambda)g(Z)] = \lambda \mathbb{E}[g(\tilde{Z})] + (1-\lambda)\mathbb{E}[g(Z)] < \beta.
	\end{equation*}
To show Condition INT, let $\tilde{Z} \in \tilde{\mathcal{Q}}^{g,\beta}$. Set $\cE := \mathcal{D} \cap L^\infty$ and let $Z \in \cE$. Since $\E[g(Z)] < \infty$ and $\E[g(\tilde Z)] < \beta$  there is $\lambda \in (0, 1)$ such that $\lambda \mathbb{E}[g(\tilde{Z})] + (1-\lambda)\mathbb{E}[g(Z)] \leq \beta$. Now convexity of $g$ implies that $\lambda \tilde{Z} + (1-\lambda)Z \in {\mathcal{Q}}^{g,\beta}$.
\end{proof}

\small
\bibliography{rhoarb} 
\bibliographystyle{amsplain}

\end{document}